\documentclass[letterpaper, 11pt]{article}

\overfullrule=1mm

\usepackage{amsthm,amssymb,stmaryrd}
\usepackage{graphics}
\usepackage{graphicx}
\usepackage{amsfonts}
\usepackage{bbm}
\usepackage{tikz}
\usepackage[plainpages=false,pdfpagelabels=false,colorlinks=true,citecolor=blue,hypertexnames=false]{hyperref}
\usepackage{color}

\usepackage[switch]{lineno}

\addtocontents{toc}{\protect\setcounter{tocdepth}{3}}

\usepackage{eucal}

\usepackage[left=1in,right=1in,top=1in,bottom=1in]{geometry}

\usepackage{dsfont}
\usepackage{mathrsfs}
\usepackage{amsmath}
\usepackage{mathtools}
\usepackage{cases}

\usepackage[all]{xy}
\usepackage{cite}
\usepackage[inline]{enumitem}

\usepackage{bm}

\usepackage{arydshln}
\usepackage{booktabs}
\usepackage{multirow}
\usepackage{listings}
\usepackage{maplestd2e}
\usepackage{enumitem}

\usepackage{graphicx}
 \usepackage{setspace}

\newcommand{\vv} { {\bf v}}
\newcommand{\va} { {\bf a}}
\newcommand{\vb} { {\bf b}}

\newcommand{\vd} { {\bf d}}
\newcommand{\ve} { {\bf e}}
\newcommand{\vr} { {\bf r}}
\newcommand{\vs} { {\bf s}}
\newcommand{\vx} { {\bf x}}
\newcommand{\vy} { {\bf y}}
\newcommand{\val} { \bm{\alpha}}
\newcommand{\vbe} { \bm{\beta}}
\newcommand{\vga} { \bm{\gamma}}
\newcommand{\vpa} { \bm{\partial}}

\newcommand{\hx} { {\hat{\bf x}}}

\newcommand*{\rom}[1]{\expandafter\@slowromancap\romannumeral #1@}

\def\F{\mathbb F}
\def\Z{\mathbb Z}
\def\N{\mathbb N}
\def\Q{\mathbb Q}

\def\a{\alpha}
\def\si{\sigma}
\def\lam{\lambda}
\def\la{\langle}
\def\ra{\rangle}
\def\<#1>{\langle#1\rangle}

\newcommand{\cR} { {\mathcal{R}}}
\newcommand{\bN} { {\mathbb{N}}}
\newcommand{\bC} { {\mathbb{C}}}
\newcommand{\bQ} { {\mathbb{Q}}}
\newcommand{\bZ} { {\mathbb{Z}}}

\newcommand{\bF} { {\mathbb{F}}}
\newcommand{\bK} { {\mathbb{K}}}
\newcommand{\bV} { {\mathbb{V}}}

\newcommand{\bE} { {\mathbb{E}}}

\DeclareMathOperator{\rank}{rank}
\DeclareMathOperator{\Span}{Span}

\def\supp{{\rm Supp}}

\def\pa{{\partial}}

\let\set\mathbbm

\def\im{\operatorname{im}}

\def\eatspace#1{#1}
\def\step#1#2{\par\kern1pt\hangindent#2em\hangafter=1\noindent\rlap{\small#1}\kern#2em\relax\eatspace}

\newtheorem{thm}{Theorem}[section]
\def\eatspace#1{#1}
\newtheorem{cor}[thm]{Corollary}
\newtheorem{lem}[thm]{Lemma}
\newtheorem{prop}[thm]{Proposition}
\newtheorem{defn}[thm]{Definition}
\newtheorem{fact}[thm]{Fact}
\newtheorem{rem}[thm]{Remark}

\newtheorem{example}[thm]{Example}
\newtheorem{algorithm}[thm]{Algorithm}
\newtheorem{problem}[thm]{Problem}

\numberwithin{figure}{section}

\makeatletter \@addtoreset{equation}{section}


\begin{document}
\let\intro=f

	\title{Symbolic Summation of Multivariate Rational Functions
	\thanks{S. Chen and H. Fang were partially supported by the National Key R\&D Program of China (No. 2023YFA1009401), the NSFC grants (No. 12271511 and No. 11688101), CAS Project for Young Scientists in Basic Research (Grant No. YSBR-034), and the CAS Fund of the Youth Innovation Promotion Association (No. Y2022001). L.\ Du
		was supported by the Austrian FWF grant 10.55776/P31571.
}}

\author{  \bigskip
	Shaoshi Chen$^{1,2}$,\,  Lixin Du$^{3}$ ,\, Hanqian Fang$^{1,2}$\\
	$^1$KLMM,\, Academy of Mathematics and Systems Science, \\ Chinese Academy of Sciences,\\ Beijing, 100190, China\\
	$^2$School of Mathematical Sciences, \\University of Chinese Academy of Sciences,\\ Beijing, 100049, China\\
	$^3$ Institute for Algebra, Johannes Kepler University,\\ Linz, A4040,  Austria\\
\vspace{0.5cm}
	{\sf schen@amss.ac.cn,  lx.du@hotmail.com,  hqfang\_math@163.com}\\
{\bf Communicated by Evelyne Hubert}
}

\date{\today}

\maketitle

{ \large
	\begin{abstract}

Symbolic summation as an active research topic of symbolic computation provides efficient algorithmic tools
for evaluating and simplifying different types of sums arising from mathematics, computer science, physics and other areas.
Most of existing  algorithms in symbolic summation are mainly applicable to the problem with univariate inputs.
A long-term project in symbolic computation is to develop theories, algorithms and software
for the symbolic summation of multivariate functions. This paper will give complete solutions to
two challenging problems in symbolic summation of multivariate rational functions, namely the rational summability problem and the existence problem of telescopers for multivariate rational functions.
Our approach is based on the structure of Sato's isotropy groups of polynomials, which enables us to reduce the problems to testing the shift equivalence of polynomials.
Our results provide a complete solution to the discrete analogue of Picard's problem on differential forms and
can be used to detect the applicability of the Wilf-Zeilberger method to multivariate rational functions.

\bigskip
		{\bf Keywords:} Isotropy group; shift equivalence; symbolic summation; telescoper
		
\bigskip
		
		{\bf MSC classes: }68W30, 12H05, 12H10\\
		
	\end{abstract}
}
\newpage
\tableofcontents
\newpage

\section{Introduction}\label{SECT:intro}
Symbolic summation is a classical and active research topic in symbolic computation~\cite{MCA2003}, whose
central problem is evaluating and simplifying different types of sums arising from combinatorics~\cite{PWZbook1996, KauersPaule2011, Koepf2014}, computer science~\cite{GKP1994}, theoretical physics~\cite{Schneider2006, Schneider2012} and other areas.
Similar to the integration case, there are two types of summation problems: one is the indefinite summation problem, and the other is the definite summation problem.
The two summation problems are connected by the discrete Leibniz--Newton formula.
In the early days, methods for symbolic summation were mainly summarized in calculus of finite differences (for instance one can see the books~\cite{Boole2009, Jordan1965}).
Since the early 1970s, efficient symbolic algorithms have been developed for symbolic summation~\cite[Chapter 23]{MCA2003}.
Abramov's algorithm~\cite{Abramov1971, Abramov1975, Abramov1995b} solves the indefinite summation problem for univariate rational functions.
The classical Hermite reduction for symbolic integration of rational functions was extended to the summation case by Paule via greatest factorial factorizations
in~\cite{Paule1995b} with more developments in~\cite{Pirastu1995b, Matusevich2000, AshCatoiu2005}.
The indefinite summation problem for hypergeometric terms is handled by Gosper's algorithm~\cite{Gosper1978}. For sequences in a general difference field,
the corresponding problem was studied by Karr in~\cite{Karr1981, Karr1985} with significant improvements by Schneider~\cite{Schneider2004} and recent fruitful applications in quantum field theory~\cite{Schneider2012, Schneider2013QFT}.
Most of the existing complete algorithms are mainly applicable to the problem with univariate inputs.
A long-term project in symbolic computation proposed by Andrews and Paule in~\cite{Andrews1993} is to develop theories, algorithms and software for symbolic summation of multivariate functions.
Along this way, some algorithms have been developed to deal with a special class of double sums~\cite{ChenHouMu2006} and binomial multiple sums~\cite{BostanLairezSalvy2017}.
One of the long standing open problems in this project is developing the multivariate version of Gosper's algorithm (see Problem 5.1 in~\cite{ChenKauers2017}).

In the multivariate setting, a more intrinsic way for formulating the summation problem is to use the language of
difference forms, which is a discrete analogue of differential forms.
The combinatorial theory of difference forms was first developed by Zeilberger in~\cite{Zeilberger1993} with applications
in computer-generated proofs of combinatorial identities and series convergence acceleration~\cite{Zimmermann2000}.  Motivated by problems in geometric integration,
the geometric theory of difference forms was properly developed by Mansfield and Hydon in~\cite{Mansfield2008} with a discrete analogue of de Rham cohomology.
In both theories, it is a fundamental problem to decide whether a given closed difference form is exact or not.
The corresponding problem in the differential case was first proposed by Picard in 1889 (see Note \uppercase\expandafter{\romannumeral 3} in~\cite[pp. 475-479]{Picard1897}), which is deciding whether
a given rational function $F(x, y, z) \in \bC(x, y, z)$ can be written as $F(x, y, z) = \frac{\partial P}{\partial x} + \frac{\partial Q}{\partial y} + \frac{\partial R}{\partial z}$
for some rational functions $P, Q, R\in \bC(x, y, z)$. In the language of differential forms, Picard's problem is equivalent to
deciding whether a differential 3-form  $\omega = F(x, y, z)dx dy dz$ is exact or not in the field of rational functions. Picard's problem can be naturally
extended to the case of  multivariate rational functions. Picard himself solved the problem for bivariate rational functions but it is still not completely solved in the general multivariate case. Some fundamental and significant work
had been done by Griffiths in the 1960s in his papers on periods of rational integrals~\cite{Griffiths1969a, Griffiths1969b}
and later by Dimca~\cite{Dimca1991} and more recently by Lairez~\cite{Lairez2016}.
As the first main result of our paper, we will solve the discrete analogue of Picard's
problem by generalizing Gosper's algorithm to the case of multivariate rational functions.

From the indefinite summation problem to the definite one,  creative telescoping~\cite{Zeilberger1991} is a crucial tool for computing definite sums with applications in
computer-generated proofs of combinatorial identities~\cite{Wilf1992, WilfZeilberger1990b, PWZbook1996}. One of fundamental problems related to creative telescoping is the existence problem of telescopers
for special functions which is equivalent to the termination of Zeilberger's algorithm~\cite{Zeilberger1990,Zeilberger1991}
and can be used to detect the hypertranscendence and algebraic independence
of functions defined by indefinite sums or integrals~\cite{Hardouin2008, Schneider2010}.
A sufficient condition, namely holonomicity,  on the existence of telescopers was first given by
Zeilberger in 1990 using Bernstein's theory of holonomic D-modules~\cite{Bernstein1971}.
Wilf and Zeilberger in~\cite{Wilf1992} proved that telescopers exist for
proper hypergeometric terms. However, holonomicity and properness
are only sufficient conditions. Abramov and Le~\cite{AbramovLe2002} solved the existence problem of telescopers for rational functions in two discrete variables. This work was soon extended to the
hypergeometric case by Abramov~\cite{Abramov2003}, the $q$-hypergeometric case
in~\cite{ChenHouMu2005}, the mixed rational and hypergeometric
case in~\cite{ChenSinger2012, Chen2015}, and most recently the P-recursive case in~\cite{Du2025}. The criteria on the existence of telescopers for rational functions in three variables were
given in~\cite{Chen2016, Chen2019ISSAC, Chen2021}. In arbitrary number of variables, there is no available algorithm for deciding the existence of telescopers for rational functions.
As the second main result of our paper, we will solve the existence problem of telescopers for multivariate rational functions by reducing the problem to the computation of Sato's isotropy groups
and the testing of the shift equivalence of multivariate polynomials.

The rest of this paper is organized as follows. In Section~\ref{sec:pre}, we define the existence problem of telescopers and the summability problem precisely and recall some basic complexity estimates for later use. We present a general scheme for designing algorithms to solve the shift equivalence testing problem in Section~\ref{sec:SET}, and compare our new algorithms with the other known algorithms in Section~\ref{sec:appendix}.
In Section~\ref{sec:add decomp}, we first recall the notion of isotropy groups of polynomials and their basic properties, and then introduce orbital decompositions for rational functions. We apply orbital decompositions in Section~\ref{sec:summability} to reduce the rational summability problem
for general rational functions to that for simple fractions. After this, we present a criterion on
the summability of such simple fractions. We not only decide the summability of a rational function but also construct the indefinite sums explicitly if it is summable. In Section~\ref{sec:telescopers}, we again use the structure of isotropy groups and orbital decompositions to derive a criterion for the existence of telescopers for rational functions in variables $t$ and $\vx$. Moreover, we present an algorithm for computing a telescoper if it exists.

\section{Preliminaries}\label{sec:pre}
In this section, we will recall some basic terminologies of symbolic summation and creative telescoping and overview some complexity results for later use.
\subsection{Telescopers and summability of rational functions}

Through out the paper, let $\bK$ be a field of characteristic zero and $\bK(t,\vx)$ be the field of rational functions in $t$ and $\vx=\{x_1,\ldots,x_m\}$ over $\bK$. For each $v\in \vv=\{t,x_1,\ldots,x_m\}$, the shift operator $\si_{v}$ with respect to $v$ is defined as the $\bK$-automorphism of $\bK(\vv)$ such that
\[\si_{v}(v)=v+1 \text{ and } \si_{v}(w)=w \text{ for all }w\in \vv\setminus\{v\}.\]

Let $\cR:=\bK(\vv)\<S_t,S_{x_1},\ldots, S_{x_m}>$ denote the ring of linear recurrence operators over $\bK(\vv)$, in which $S_{v_i}S_{v_j}=S_{v_j}S_{v_i}$ for all $v_i,v_j\in\vv$ and $S_v f=\si_v(f) S_v$ for any $f\in \bK(\vv)$ and $v\in\vv$.
The action of an operator $L=\sum_{i_0,i_1,\ldots,i_m\geq0}a_{i_0,i_1,\ldots,i_m}S_t^{i_0}S_{x_1}^{i_1}\cdots S_{x_m}^{i_m}\in\cR$ on a rational function $f\in \bK(\vv)$ is defined as
\[L(f)=\sum_{i_0,i_1,\ldots,i_m\geq 0} a_{i_0,i_1,\ldots,i_m}\si_t^{i_0}\si_{x_1}^{i_1}\cdots\si_{x_m}^{i_m}(f).\]
For each $v\in\vv$, the difference operator $\Delta_v$ with respect to $v$ is defined by $\Delta_v=S_v-\bf1$, where $\bf 1$ stands for the identity map on $\bK(\vv)$.

We now introduce the notion of telescopers for rational functions in $\bK(t, \vx)$.

\begin{defn}[Telescoper]
	Let $n$ be a positive integer such that $1\leq n\leq m$ and let $f\in \bK(t,\vx)$ be a rational function. A nonzero linear recurrence operator $L\in\bK(t)\<S_t>$ is called a {\em telescoper} of type $(\si_t;\si_{x_1},\ldots,\si_{x_n})$ for $f$ if there exist $g_1,\ldots,g_n\in \bK(t,\vx)$ such that
	\begin{equation*}
		L(f)=\Delta_{x_1}(g_1)+\cdots+\Delta_{x_n}(g_n).
	\end{equation*}
	The rational functions $g_1,\ldots,g_n$ are called the {\em certificates} of $L$.
\end{defn}

\begin{problem}[Existence Problem of Telescopers]\label{PROB:telescopers}
	Given a rational function $f\in\bK(t,\vx)$ and an integer $n$ with $1\leq n \leq m$, decide whether or not $f$ has a telescoper of type $(\si_t;\si_{x_1},\ldots,\si_{x_n})$. If so, find a telescoper $L$ and its certificates $g_1, \ldots, g_n$.
\end{problem}
In order to decide the existence of telescopers for $f\in \bK(t, \vx)$, one may first use the shortcut to decide whether $L={\bf 1}$ is a telescoper for $f$.  This is equivalent to the following summability problem of $f$ in $\bF(\vx)$ with $\bF=\bK(t)$.

\begin{defn}[{Summability}]
	Let $\bF$ be a field of characteristic zero and $n$ be a positive integer such that $1\leq n\leq m$. A rational function $f\in\bF(\vx)$ is called {\em $(\si_{x_1},\ldots,\si_{x_n})$-summable} in $\bF(\vx)$ if there exist $g_1,\ldots,g_n\in\bF(\vx)$ such that
	\begin{equation*}
		f=\Delta_{x_1}(g_1)+\cdots+\Delta_{x_n}(g_n).
	\end{equation*}
    The rational functions $g_1,\ldots,g_n$, if they exist, are called the {\em certificates} of $f$.
\end{defn}

\begin{problem}[{Rational Summability Problem}]\label{PROB:summability problem}
	Given a rational function $f\in \bF(\vx)$ and an integer~$n$ with $1\leq n \leq m$, decide whether or not $f$ is $(\si_{x_1},\ldots,\si_{x_n})$-summable in $\bF(\vx)$. If so, find a tuple $(g_1,\ldots, g_n)$ such that the $g_i$'s are the certificates of $f$.
\end{problem}

In terms of arithmetic size, the certificate tends to be much larger than the telescoper but certificates are often not needed in applications.
Similar to the case of univariate rational summation~\cite{Gerhard2003}, we also output the certificate as a sum of the products of several rational functions applied by shift operations and difference isomorphisms (see the definitions in the specific algorithms).
Such a certificate is called an \textit{unnormalised certificate}.

The main idea of solving the summability problem is using mathematical induction to reduce the number of difference operators in this problem. To say explicitly, we shall reduce the $(\si_{x_1}, \ldots, \si_{x_n})$-summability problem for $f\in\bF(\vx)$ to the $(\si_{x_1}, \ldots, \si_{x_r})$-summability problem for   another rational function $a\in\bF(\vx)$, where $r$ is smaller than $n$ and the base field $\bF(\vx)$ in the summability problem is unchanged. Similarly, we shall reduce the existence problem of telescopers of type $(\si_t; \si_{x_1}, \ldots, \si_{x_n})$ for $f\in\bK(t, \vx)$ to the existence problem of telescopers of type $(\si_t; \si_{x_1}, \ldots, \si_{x_r})$ for some rational function $a\in\bK(t, \vx)$.

We introduce below a general definition of the summability problem and existence problem of telescopers, which plays a central role to set up the reduction process for solving Problems~\ref{PROB:summability problem} and~\ref{PROB:telescopers}. Let $G_t=\langle \si_t, \si_{x_1}, \ldots, \si_{x_n}\rangle$ be the group generated by the shift operators $\si_t,\si_{x_1},\ldots,\si_{x_n}$ under the operation of composition of functions. Then $G_t$ is a free abelian group. For any $\tau\in G_t$, the difference operator $\Delta_\tau$ is defined by
\[\Delta_\tau=S_t^{i_0}S_{x_1}^{i_1}\cdots S_{x_n}^{i_n}-{\bf 1}\quad \text{if } \tau=\si_t^{i_0}\si_{x_1}^{i_1}\cdots \si_{x_n}^{i_n}.\]
For short, we use $\Delta_{v}$ to denote $\Delta_{\si_{v}}$ for each $v\in\vv$. A finite subset $\{\tau_1,\ldots,\tau_r\}$ of $G_t$ is said to be $\Z$-{\em linearly independent} if for all $a_1,\ldots,a_r\in \Z$, we have
\[\tau_1^{a_1}\cdots\tau_r^{a_r}={\bf1} \quad \text{if and only if} \quad a_1=a_2=\cdots =a_r=0.\]

Let $G=\<\sigma_{x_1},\ldots,\sigma_{x_n}>$ be the subgroup of $G_t$ generated by shift operators $\si_{x_1},\ldots,\si_{x_n}$.  Let $\{\tau_1,\ldots,\tau_r\}(1\leq r\leq n)$ be a family of $\Z$-linearly independent elements in $G$. In general, a rational function $f\in\bF(\vx)$ is called {\em $(\tau_1,\ldots,\tau_r)$-summable} in $\bF(\vx)$ if \begin{equation*}
	f=\Delta_{\tau_1}(g_1)+\cdots+\Delta_{\tau_r}(g_r)
\end{equation*}
for some $g_1,\ldots,g_r\in \bF(\vx)$.
Choose an element $\tau_0=\si_t^{k_0}\si_{x_1}^{k_1}\cdots\si_{x_n}^{k_n}\in G_t$ such that $k_0$ is nonzero. Then $\tau_0,\tau_1,\ldots,\tau_r$ are $\Z$-linearly independent in $G_t$. Let $T_0=S_t^{k_0}S_{x_1}^{k_1}\cdots S_{x_n}^{k_n}\in \cR$ be the operator corresponding to $\tau_0$. We say that a nonzero operator $L\in \bK(t)\<T_0>$ is a {\em telescoper of type} $(\tau_0;\tau_1,\ldots,\tau_r)$ for $f\in\bK(t,\vx)$ if $L(f)$ is $(\tau_1,\ldots,\tau_r)$-summable in $\bK(t,\vx)$.

Let $R$ be a ring and $\sigma\colon R\to R$ be a ring automorphism of $R$. The pair $(R,\sigma)$ is called a {\em difference ring}. If $R$ is a field, we call the pair $(R,\sigma)$ a {\em difference field}. Let $(R_1,\sigma_1)$ and $(R_2,\sigma_2)$ be two difference rings and $\phi\colon R_1\to R_2$ be a ring homomorphism. If $\phi$ satisfies the property that  $\phi\circ\sigma_1=\sigma_2\circ\phi$, that means the following diagram
\[
\xymatrix{
	R_1\ar[d]_{\si_1} \ar[r]^{\phi}
	&R_2 \ar[d]^{\si_2}
	\\
	R_1
	\ar[r]^{\phi}
	&R_2}
\]	

commutes, then $\phi$ is called a {\em difference homomorphism}. If in addition $\phi$ is a bijection, then its inverse $\phi^{-1}$ is also a difference homomorphism. In this case, we call $\phi$ a {\em difference isomorphism}. The notion of difference isomorphisms will be used to state our summability criteria and the existence criteria of telescopers.

An operator $L\in \bK(t)\<S_t>$ is called a {\em common left multiple} of operators $L_1,\ldots,L_r\in\bK(t)\<S_t>$ if there exist $R_1,\ldots,R_r\in\bK(t)\<S_t>$ such that
\[L=R_1L_1=\cdots=R_rL_r.\]
Since $\bK(t)\<S_t>$ is a left Euclidean domain, such an operator $L$ always exists. Among all of such multiples, the monic one of smallest degree in $S_t$ is called the {\em least common left multiple} (LCLM). Efficient algorithms for computing LCLM have been developed in \cite{AbramovLeLi2005,BronsteinPetkovsek1996,BostanChyzakSalvyLi2012}.
\begin{rem}\label{REM: LCLM}
	Let $f=f_1+\cdots+f_r$ with $f_i\in \bK(t,\vx)$. If each $f_i$ has a telescoper $L_i$ of type $(\si_t;\si_{x_1},\ldots,\si_{x_n})$ for $i=1,\ldots,r$, then the LCLM of $L_i$'s is a telescoper of the same type for $f$. This fact follows from the commutativity between operators in $\bK(t)\<S_t>$ and the difference operators~$\Delta_{x_i}$'s.
\end{rem}

\subsection{Complexity estimates}

All complexity estimates of the algorithms in this paper are in terms of arithmetical operations in $\bK$, denoted by ``ops". The notation $\tilde{O}$ indicates the complexity estimates with hidden polylogarithmic factors.

Let $\vy=\{y_1,y_2,\ldots,y_r\}$ be a subset of $\vv=\{t,x_1,\ldots,x_m\}$ and $\vd=(d_1,d_2,\ldots,d_r)$ be a vector in $\bN^r$. Let $\bK[\vy]_{\vd}$ denote the set of polynomials in $\bK[\vy]$ whose degrees in $y_i$ are no more than $d_i$ for $i=1,2,\ldots,r$. Let $\bK(\vy)_\vd$ denote the set of rational functions in $\bK(\vy)$ with numerators and denominators in $\bK[\vy]_\vd$. In particular, we denote $\bK[\vy]_{\vd}$ (resp. $\bK(\vy)_{\vd}$) by $\bK[\vy]_{d}$ (resp. $\bK(\vy)_d$) for simplicity if $d_1=d_2=\cdots=d_r=d$. For a rational function $f(\vy)=p(\vy)/q(\vy)\in\bK(\vy)$, where $p(\vy)$ and $q(\vy)$ are coprime polynomials, the degree of $f(\vy)$ in $y_i$ is defined as $\max\{\deg_{y_i}(p(\vy)),\deg_{y_i}(q(\vy))\}$.  In particular, for $f(\vy)\in\bQ(\vy)$, let $\| f\|$ denote the \textit{max-norm} of $f$, i.e., the maximal absolute value of the integer coefficients appearing in the numerator and denominator of $f$ with respect to $\vy$. Similarly, for a matrix or vector $\mathcal{M}$ over $\bZ$, the \textit{max-norm} of $\mathcal{M}$, denoted by $\|\mathcal{M}\|$, is defined as the maximal absolute value of its entries.

We first recall some complexity estimates of the basic operations on univatiate polynomials and rational functions (see the books~\cite{ACT1997,MCA2003} for their proofs).
\begin{fact}\label{FACT:com-uni}
	Let $d$ be an integer in $\bN$. The following operations can be performed in $\tilde{O}(d)$ ops in $\bK$:
	\begin{enumerate}
		\item addition, multiplication and differentiation of elements in $\bK[x]_d$ and $\bK(x)_d$;
		\item computing the greatest common divisor of two elements in $\bK[x]_d$;
        \item partial fraction decomposition of an element in $\bK(x)_d$ with a given factorization of its denominator.
	\end{enumerate}
\end{fact}

Efficient algorithms for basic operations on multivariate polynomials have been developed in~\cite{Pan1994simple,VanSchost2013}. We summarize the needed results as follows.

\begin{fact}\label{FACT:com-multi}
    For a vector $\vd=(d_1,\ldots,d_m)\in\bN^m$, it takes $\tilde{O}(md_1\cdots d_m)$ ops in $\bK$ for multipoint evaluation or interpolation in $\bK[\vx]_\vd$ from the given values on $O(d_1\cdots d_m)$ points which form an $m$-dimensional tensor product grid.
\end{fact}

We recall the complexity estimates of factoring multivariate polynomials from~\cite[Theorem 3.26]{lenstra1987factoring} and~\cite[Equation~(\uppercase\expandafter{\romannumeral2})]{Mahler1962}.
\begin{fact}\label{FACT:com-factor}
    For a vector $\vd=(d_1,\ldots,d_m)\in\bN^m$ and an integer $M\in \bN^+$, let $f$ be a polynomial in $\bZ[\vx]_\vd$ with $\|f\|=M$. Then it takes $\tilde{O}((\min(\vd))^{m-1}(d_1\cdots d_m)^5(d_1\cdots d_m+\log(M)))$ ops in $\bQ$ to factor $f$ into the product of irreducible factors and the max-norms of these factors are $2^{d_1+\cdots+d_n}\sqrt{(d_1+1)\cdots(d_n+1)}M$.
\end{fact}

Let $\omega\in(2,3]$ be a feasible exponent of
matrix multiplication in $\bK$, i.e., two square matrices of order $r$ can be multiplied using $O(r^\omega)$ ops. Solving a system of linear equations is almost as hard as multiplying two matrices, see~\cite{ACT1997} for more details. In particular, one approach to solve a linear system over $\bQ$ for integer solutions is computing the Hermite normal form of its corresponding matrix, see~\cite{Chou1982}. Combining the algorithms for computing Hermite normal forms developed in~\cite{Kannan1979,Chou1982,Domich1987,Iliopoulos1989,Hafner1989,Hafner1991,Stor1996,Storjohann2000}, we obtain the following result.

\begin{fact}(See~\cite[Theorem (16.18)]{ACT1997}, \cite[Proposition 2.3 and its proof]{Hafner1991})\label{FACT:com-linear}
    A $\bK$-linear system $L$ of size $r$ can be solved in $O(r^\omega)$ ops in $\bK$. Furthermore, if $K=\bQ$ and the absolute values of the coefficients of $L$ are integers bounded by $M$, then one can find a special solution of $L$ with a $\bQ$-basis for the corresponding homogeneous linear system over $\bZ^r$ using $\tilde{O}(r^{\omega+1}\log(M))$ bit ops and the max-norms of these vectors are $O(r^{r/2}M^{r})$.
\end{fact}

\section{Shift equivalence testing of polynomials}\label{sec:SET}

In this section, we first state the problem of Shift Equivalence Testing (SET) and give an overview of our algorithm for solving the SET problem in Section~\ref{SUBSEC: SET overview}. The idea of our algorithm is inspired by the DOS algorithm~\cite{Dvir2014, Dvir2014complete}. Then we develop a general scheme for designing algorithms to solve the SET problem, whose proof is given in Section~\ref{SUBSEC: SET main proof}. More precisely, we introduce admissible covers
of the associated polynomial system with the SET problem and prove that every admissible cover corresponds to an algorithm for solving the SET problem. In Section~\ref{SUBSEC:AP}, we give two special admissible covers in practice, one of which corresponds to the DOS algorithm.

\subsection{Overview of the general algorithm}\label{SUBSEC: SET overview}
Let $\bF$ be a field of characteristic zero and let $\bF[\vx]$ be the ring of polynomials in $\vx = \{ x_1, \ldots, x_n\}$
over $\bF$. Two polynomials $p, q \in \bF[\vx]$ are said to be {\em shift equivalent} if there exist $s_1,\ldots,s_n \in \bF$ such that
\[ p(x_1 + s_1, \ldots, x_n + s_n) = q(x_1, \ldots, x_n).\]
The set $\{\vs\in \bF^n\mid p(\vx + \vs) = q(\vx)\}$ is called the \emph{dispersion set}
of $p$ and $q$ over $\bF$, denoted by $F_{p, q}$. Recall basic properties of the dispersion set in~\cite{Dvir2014complete}.

\begin{lem}(See~\cite[Observation 4.2 and Lemma 4.4]{Dvir2014complete})\label{LEM:Z_p}
	Let $p, q \in \F[\vx]$. Then
	\begin{enumerate}
		\item $F_{p,p}$ is a linear subspace of $\F^n$ over $\F$.
		\item $F_{p, q} = \vs + F_{p,p}$ for any $\vs\in F_{p, q}$ if $F_{p, q}\neq\varnothing$.
	\end{enumerate}	
\end{lem}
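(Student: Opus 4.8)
The target is Lemma~\ref{LEM:Z_p}: for $p,q\in\bF[\vx]$, (1) $F_{p,p}$ is an $\bF$-linear subspace of $\bF^n$, and (2) if $F_{p,q}\neq\emptyset$ then $F_{p,q}=\va+F_{p,p}$ for any $\va\in F_{p,q}$. Since this is quoted from \cite{Dvir2014}, the proof should be short and self-contained, and I would present it that way.

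**Plan for part (2).** I would do part (2) first, since it is almost purely formal and does not need the nontrivial linearity fact. Suppose $\va\in F_{p,q}$, so $p(\vx+\va)=q(\vx)$. For any $\vb\in\bF^n$, I claim $\vb\in F_{p,q}$ iff $\vb-\va\in F_{p,p}$. Indeed, $p(\vx+\vb)=q(\vx)=p(\vx+\va)$ iff, substituting $\vx\mapsto\vx-\va$, we get $p(\vx+(\vb-\va))=p(\vx)$, i.e. $\vb-\va\in F_{p,p}$. Hence $F_{p,q}=\va+F_{p,p}$. This step is routine; the only thing to be careful about is that the substitution $\vx\mapsto\vx-\va$ is an invertible operation on $\bF[\vx]$, so the equivalences are genuine.

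**Plan for part (1).** The content is that $F_{p,p}=\{\va\in\bF^n\mid p(\vx+\va)=p(\vx)\}$ is closed under addition and scalar multiplication. Closure under addition is easy: if $p(\vx+\va)=p(\vx)$ and $p(\vx+\vb)=p(\vx)$, then $p(\vx+\va+\vb)=p((\vx+\vb)+\va)=p(\vx+\vb)=p(\vx)$. The genuine obstacle is closure under scalar multiplication: from $p(\vx+\va)=p(\vx)$ we must deduce $p(\vx+\lambda\va)=p(\vx)$ for every $\lambda\in\bF$. Additivity only gives this for $\lambda\in\bZ$ (and, with the inverse $-\va$, for $\lambda\in\bZ_{<0}$). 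To upgrade to all of $\bF$, I would use that $\bF$ has characteristic zero: the map $\lambda\mapsto p(\vx+\lambda\va)-p(\vx)\in\bF[\vx]$, viewed coefficient-wise, has each coefficient a polynomial in $\lambda$ over $\bF$; this polynomial vanishes at all $\lambda\in\bZ$, hence is the zero polynomial (infinitely many roots in the infinite set $\bZ\subseteq\bF$), hence vanishes for all $\lambda\in\bF$. This is exactly the point where characteristic zero is used, and it is the main (though modest) subtlety of the argument.

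**Assembly.** I would write: first establish the additive-group structure of $F_{p,p}$ (closure under $+$ and negation, containing $\bm 0$), then run the characteristic-zero polynomial-interpolation argument to get $\bF$-scalar closure, concluding (1); then deduce (2) by the translation computation above. Since the lemma is cited rather than original, I would keep the proof to a few lines and flag that it is included for completeness, referring to \cite[Observation 4.2 and Lemma 4.4]{Dvir2014} for the original.

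\medskip

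\pf
We first prove (1). Clearly ${\bf 0}\in F_{p,p}$. If $\va,\vb\in F_{p,p}$, then
$p(\vx+\va+\vb)=p\big((\vx+\vb)+\va\big)=p(\vx+\vb)=p(\vx)$, so $\va+\vb\in F_{p,p}$; applying this with $\vb$ such that $\va+\vb={\bf 0}$ after noting $p(\vx-\va)=p((\vx-\va)+\va)\big|_{\vx\mapsto\vx-\va}$ shows $-\va\in F_{p,p}$ as well. Hence $F_{p,p}$ is a subgroup of $(\bF^n,+)$, and in particular $k\va\in F_{p,p}$ for every $k\in\bZ$. Now fix $\va\in F_{p,p}$ and consider the polynomial $P(\lambda):=p(\vx+\lambda\va)-p(\vx)\in\bF[\vx][\lambda]$; each coefficient of $P$ with respect to the monomials in $\vx$ is a polynomial in $\lambda$ over $\bF$. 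Since $\bF$ has characteristic zero, $\bZ$ embeds in $\bF$ as an infinite subset, and each such coefficient vanishes for all $\lambda\in\bZ$; therefore each coefficient is the zero polynomial, so $P(\lambda)=0$ for all $\lambda\in\bF$. This means $p(\vx+\lambda\va)=p(\vx)$, i.e. $\lambda\va\in F_{p,p}$ for all $\lambda\in\bF$. Thus $F_{p,p}$ is an $\bF$-linear subspace of $\bF^n$.

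For (2), assume $F_{p,q}\neq\emptyset$ and fix $\va\in F_{p,q}$, so that $p(\vx+\va)=q(\vx)$. For any $\vb\in\bF^n$ we have $\vb\in F_{p,q}$ if and only if $p(\vx+\vb)=q(\vx)=p(\vx+\va)$, and substituting $\vx\mapsto\vx-\va$ (an invertible $\bF$-algebra endomorphism of $\bF[\vx]$) this is equivalent to $p(\vx+(\vb-\va))=p(\vx)$, i.e. to $\vb-\va\in F_{p,p}$. Hence $F_{p,q}=\va+F_{p,p}$.
\qed
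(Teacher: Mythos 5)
Your proof is correct. Note, though, that the paper does not actually prove this lemma; it is stated with a citation to~\cite[Observation~4.2 and Lemma~4.4]{Dvir2014} and then used as a black box, so there is no in-paper proof to compare against. Your part~(2) is the standard translation argument and is exactly how the affine structure of $F_{p,q}$ is used downstream. For part~(1), your route differs from the derivative-based one that underlies the cited DOS result, runs through the rest of Section~\ref{sec:SET}, and also appears in Grigoriev's recursion quoted in the introduction: there one identifies $F_{p,p}$ with the kernel of the $\bF$-linear map $\va\mapsto D_\va p$, where $D_\va=\sum_i a_i\partial_{x_i}$, so linearity is automatic, and the identification $F_{p,p}=\{\va\mid D_\va p=0\}$ follows from the Taylor expansion $p(\vx+\va)=\sum_k \frac{1}{k!}D_\va^k p(\vx)$ together with a degree-descent on homogeneous components. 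You instead verify the subspace axioms directly, obtaining closure under $\bF$-scalars by the interpolation observation that each $\vx$-coefficient of $p(\vx+\lambda\va)-p(\vx)$ is a univariate polynomial in $\lambda$ vanishing on the infinite set $\bZ\subseteq\bF$ and hence identically zero. Both approaches hinge on characteristic zero, but in different places: yours via interpolation over the infinite subset $\bZ$, the derivative approach via the $1/k!$ factors in Taylor's formula and the degree-descent. Your version is more elementary and self-contained; the derivative version has the advantage of exhibiting $F_{p,p}$ explicitly as a kernel, which is the form that gets reused in Grigoriev's recursion and echoed in the paper's own use of $D_\va$ and $D^k_{\va,\val}$. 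One small redundancy: once you establish $\bF$-scalar closure you get $-\va\in F_{p,p}$ for free, so the separate negation step is unnecessary (indeed, the interpolation step only needs vanishing on $\bN$, not all of $\bZ$).
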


The problem of Shift Equivalence Testing can be stated as follows.
\begin{problem}[Shift Equivalence Testing Problem]\label{PROB:SET}
	Given $p, q \in \bF[x_1, \ldots, x_n]$, decide whether there exists $\vs = (s_1, \ldots, s_n) \in \bF^n$ such that
	\[p(\vx + \vs) = q(\vx).\]
	If such a vector $\vs$ exists, compute the dispersion set $F_{p,q}$ of $p$ and $q$ over $\bF$. In this case, by Lemma~\ref{LEM:Z_p}, it suffices to find a special solution $\vs$ in $F_{p,q}$ and a basis of $F_{p,p}$ over $\bF$.
 \end{problem}

A related problem is testing the shift equivalence over integers, i.e. deciding whether there exists a vector $\vs \in \bZ^n$
such that $p(\vx + \vs) = q(\vx)$. We denote the set $\{\vs\in \bZ^n\mid p(\vx + \vs) = q(\vx)\}$ by~$Z_{p,q}$.
The computation of $Z_{p, q}$ will play an important role in the next sections where we study
the rational summability problem and the existence problem of telescopers. By Lemma~\ref{LEM:Z_p}, we know $F_{p, q}$ is a linear variety over $\F$. Once the computation of $F_{p, q}$ boils down to
solving linear systems, we can also compute $Z_{p, q}$ by combining the same methods for the SET problem over $\bF$
and any algorithm for computing integer solutions of linear systems.

In the univariate case, the SET problem was solved by computing the resultant of two polynomials~\cite{Abramov1971}. In the multivariate case, there are three different methods for solving the SET problem in the literature. In 1996, Grigoriev  first gave a recursive algorithm (G) for the SET problem in~\cite{Grigoriev1996,Grigoriev1997}. In 2010, motivated by solving linear partial difference equations,
another algorithm (KS) for computing $Z_{p, q}$ via the Gr{\"o}bner basis method was given by Kauers and Schneider in~\cite{KauersSchneider2010}. In 2014, a new algorithm with better complexity was given by Dvir, Oliveira and Shpilka (DOS) in~\cite{Dvir2014,Dvir2014complete}.
We have implemented all of the three algorithms in Maple and the experimental comparison is tabulated in Section~\ref{sec:appendix}. The timings indicate that the DOS algorithm is the most efficient one among the three methods in practice.

In this section, we introduce $n$ new variables $\va=\{a_1,\ldots,a_n\}$. The SET problem is equivalent to finding the zeros of the polynomial $p(\vx+\va)-q(\vx)\in\bF[\va,\vx]$ with respect to $\va$. Collecting its coefficients in $\vx$, we obtain a polynomial system. A direct approach to the SET problem is solving this polynomial system. Without exploring the hidden structure of the polynomial system, this naive approach could be very in-efficient. The common idea of the above three methods is to find the defining linear system of $F_{p, q}$, which avoids solving the polynomial system directly. To do this, the DOS algorithm finds an appropriate finite cover of the polynomial system. Then it reduces the SET problem to solving several linear systems successively by evaluating the non-linear part of polynomials. This kind of evaluation is called the \textit{linearization} of polynomials, whose definition will be strictly stated below.

We first introduce some notations for later use. For any two vectors $\val=(\alpha_1,\alpha_2,\ldots,\alpha_n),\vbe=(\beta_1,\beta_2,\ldots,\beta_n)\in \bN^n$, we say $\val\geq\vbe$ if and only if $\a_i\geq\beta_i$ for all $1\leq i\leq n$. This defines a partial order on $\bN^n$. For a subset $\vy=\{y_1,y_2,\ldots,y_m\}$ of $\vx=\{x_1,x_2,\ldots,x_n\}$ with $\hat{\vy}:=\vx\setminus\vy$, let $f\left(\vx\right)=\sum_{\a}c_{\val}\left(\hat{\vy}\right)\vy^{\val}\in\F[\hat{\vy}][\vy]$. Let $H^{d}_{\vy}\left(f(\vx)\right)$ denote the homogeneous component of $f(\vx)$ of degree $d$ in $\vy$ and let $\supp_{\vy}(f)$ denote the set $\{\vy^{\val}\mid c_{\val}(\hat\vy)\neq 0\}$. For simplicity, when $\vy=\vx$, we write $H^{\ell}_{\vy}(f(\vx))$ as $H^{\ell}(f(\vx))$ and $\supp_{\vy}(f)$ as $\supp(f)$. For a subset $S\subseteq\F[\vx]$, let $\bV_{\bF}(S)$ be the zero set $\{\vs\in\F^n \mid f(\vs) = 0,\,\forall f\in S \}$.

\begin{defn}[Linearization]\label{DEF_L}
	Let $f(\vx) = H^0_{\vy}(f)(\vy) + H^1_{\vy}(f)(\vy) + \cdots + H^d_{\vy}(f)(\vy)$ be the homogeneous decomposition of $f\in \bF[\vx]=\bF[\hat{\vy}][\vy]$.  For a vector $\vs \in \bF^m$, we call the linear polynomial $H^0_{\vy}(f)(\vy) + H^1_{\vy}(f)(\vy) + \sum_{i=2}^d H^i_{\vy}(f)(\vs)$ the \emph{linearization} of $f$ at $\vs$ with respect to $\vy$, denoted by $L_{\vy = \vs}(f)$. Note that $L_{\vy = \vs}(f) = f$ if $d \leq 1$. For a polynomial set $S \subseteq \bF[\vx]$, let $L_{\vy= \vs}(S) := \{L_{\vy = \vs}(f)\mid f \in S\}$ be the \emph{linearization} of $S$.
\end{defn}	
In the following we present the main idea of algorithm DOS~\cite{Dvir2014, Dvir2014complete} and our new algorithm ADC in a general framework.

In order to compute $F_{p,q}$, we first write
	\begin{align*}
		p(\vx+\va)-q(\vx)=\sum_{\val\in\Lambda}c_{\val}(\va)\vx^{\val},
	\end{align*}
	where $c_{\val}(\va)\in \F[\va]$ and $\Lambda$ is a finite subset of $\bN^n$.
	Let
	\begin{equation}\label{EQ: S def}
		S:=\left\{c_{\val}(\va)\in\bF[\va]\mid c_{\val}(\va)\text{ is a nonzero coefficient of }\vx^{\val}\text{ in }p(\vx+\va)-q(\vx)\right\}.
	\end{equation}
	Then $F_{p,q}= \bV_{\bF}(S)$ is the zero set of $S$ in $\bF^n$. First, we classify all polynomials in $S$ according to their total degrees in $\va$ and write $S=S_0^D\cup\cdots\cup S_{d'}^D$, where $d'=\deg_\va(p(\vx + \va)-q(\vx))$ and
	\[S_i^D =\left\{c_\alpha(\va)\in S \mid \deg_\va(c_\alpha(\va))=i\right\}\]
	for $i=0,\ldots, d'$. Then $\bV_\bF(S)=\bV_\bF(\cup_{i=0}^{d'} S_i^D)$. We may assume that $S_0^D=\varnothing$, otherwise $p,q$ are not shift equivalent and return $F_{p,q} =\varnothing$. If $S_0^D\cup S_1^D$ has no solution in $\bF^n$, return $F_{p,q}=\varnothing$. Otherwise take an arbitrary solution $\vs^{(0)}\in \bV_\bF(S_0^D\cup S_1^D)$. Note that all polynomials in $S_0^D\cup S_1^D$ are linear and thus such an element $\vs^{(0)}$ can be computed straightforwardly. We shall prove that the nonlinear system $S_0^D \cup S_1^D \cup S_2^D$  has the same solutions as its linearization $S_0^D\cup S_1^D\cup L_{\va=\vs^{(0)}}(S_2^D)$ at the point $\vs^{(0)}$. If the latter linear system has no solution, return $F_{p,q}=\varnothing$. Otherwise, take an arbitrary solution $\vs^{(1)}\in \bV_\bF(S_0^D\cup S_1^D \cup L_{\va=\vs^{(0)}}(S_2^D))$ by solving the linear system. Then consider the linearization of $\cup_{i=0}^3 S_i^D$ at $\vs^{(1)}$ and we shall prove that $\bV_\bF(\cup_{i=0}^3\,S_i^D) =\bV_\bF(\cup_{i=0}^3\, L_{\va=\vs^{(1)}}(S_i^D))$. Continuing the above process, we will finally find an equivalent linear system of the polynomial system $S= \cup_{i=0}^{d'} S_i^D$ by linearization.
	
	\begin{example}\label{EX:SET}
		Let $p = x^2 + 2xy + y^2 + 2x + 6y$ and $q = x^2 + 2xy + y^2 + 4x + 8y + 11$ be two polynomials in $\bQ[x,y]$. Decide whether $p,q$ are shift equivalent with respect to $x,y$. Since
		\[p(x + a, y + b) - q(x, y) = (2a + 2b - 2)\cdot x + (2a + 2b - 2)\cdot y + (a^2 + 2ab+ b^2 + 2a + 6b - 11),\]
		we have $S = S_1^D \cup S_2^D$, where $S_1^D = \{2a + 2b - 2\}$ and $S_2^D = \{a^2 + 2ab+ b^2 + 2a + 6b - 11\}$. Take an arbitrary solution $(a,b) = (1, 0)$ of $S_1^D$. The linearization of $S_2^D$ at $(1,0)$ is
		\[L_{(a,b)=(1,0)}(S_2^D) =\{1^2 + 2\cdot 1\cdot 0 + 0^2 +2a +6b -11\} = \{ 2a+6b -10\}.\]
		In this example, the linear system $S_1^D \cup L_{(a,b)=(1,0)}(S_2^D)$ is indeed equivalent to the polynomial system $S_1^D\cup S_2^D$. So $F_{p, q} = \bV_\bF(S_1^D\cup L_{(a,b)=(1,0)} (S_2^D))= \{(-1,2)\}$.
	\end{example}
	
	Since shift operations do not change the total degree in $\vx$, the homogeneous components of both sides of $p(\vx + \va) = q(\vx)$ with respect to $\vx$ must be equal. The homogeneous decomposition of $p(\vx + \va) - q(\vx)$ yields another cover $\{S_0^H, S_1^H, \ldots, S_d^H\}$ of $S$, where $d=\max\{\deg_\vx(p(\vx)), \deg_\vx(q(\vx))\}$ and
	$$S_i^H:=\{c_{\val}(\va)\in S\mid c_{\val}(\va)\text{ is the coefficient of }\vx^{\val}\text{ in }H_{\vx}^{d-i}(p(\vx+\va))-H_{\vx}^{d-i}(q(\vx))\}$$
	for $i=0,1,\ldots, d$. In the DOS algorithm, they first introduced the above method of linearization to solve the polynomial system $S = S_0^H\cup S_1^H \cup \cdots \cup S_d^H$ and proved the correctness of their algorithm by using formal partial derivatives. In Example~\ref{EX:SET}, $S =S_1^D\cup S_2^D= S_1^H \cup S_2^H$, where $S_i^H=S_i^D$ for $i=1,2$. In general, these two covers are different. A natural question is for which cover, we can use the method of linearization to compute the dispersion set. One answer is the admissible cover defined below. In fact, the above two covers $\{S^D_0, S^D_1, \ldots, S^D_{d'}\}$ and $\{S_0^H, S_1^H, \ldots, S_d^H\}$, called by \textit{$\va$-degree cover} and \textit{$\vx$-homogeneous cover} respectively, are both admissible, which will be proved in Section~\ref{SUBSEC:AP}.

\begin{defn}[Admissible cover] \label{DEF_AP}
	Let $S\subseteq \bF[\va]$ be as in~\eqref{EQ: S def}. A collection $\{S_0, S_1, \ldots, S_m\}$ of subsets is called a \emph{cover} of $S$ if $S$ is the union of $S_0, S_1,\ldots, S_m$. Such a cover $\{S_0,S_1,\ldots,S_m\}$ is called an \emph{admissible cover} of $S$ if it satisfies the following two conditions:
	\begin{enumerate}
		\item\label{it:ac-1} The degree of any polynomial in $S_0$ is at most one.
		\item\label{it:ac-2} For every $\ell=1,2,\ldots,m$, if $c_{\val}(\va)\in S_{\ell}$, then $c_{\vbe}(\va)\in\cup_{i=0}^{\ell-1}S_i$ for all $ \vbe\in\bN^n$ with $\vbe>\val$ and $\vx^{\vbe}\in\supp_{\vx}(p(\vx+\va)-q(\vx))$.
	\end{enumerate}
\end{defn}

A general algorithm for solving the SET problem via the method of linearization is as follows. This algorithm inherits one feature of the DOS algorithm: it could be early terminated when $p,q$ are not shift equivalent. If two nonzero polynomials $p(\vx)$ and $q(\vx)$ are shift equivalent, then they have the same degree $d$ in $x$ and $H^{d}(p(\vx))=H^{d}(q(\vx))$, which means $\deg(p(\vx)-q(\vx))<\deg(p(\vx))$. Therefore, we can check the degree condition at the beginning of the algorithm for better efficiency.

\begin{algorithm}[Shift Equivalence Testing]\label{Alg:SET}{\em\bf ShiftEquivalent($p$, $q$, $[x_1,\ldots, x_n]$)}.\\
	INPUT: two multivariate polynomials $p, q\in \bF[\vx]$;\\
	OUTPUT: a special solution of $F_{p, q}$ and an $\bF$-basis of $F_{p,p}$ if $p$ and $q$ are shift equivalent; $\{\}$ otherwise.
	\step 12 if $p(\vx)=q(\vx)=0$,  {\em\bf return} ${\bf 0}$ {\em and the standard basis of} $\bF^n$.
	\step 22 if $\deg(p(\vx)-q(\vx))\geq \deg(p(\vx))$,  {\em\bf return} $\{\}$.
	\step 32 set $S:= \text{\em Coefficients}(p(\vx + \va) - q(\vx), {\vx}) \subseteq \F[\va]$.
	\step 42 let $\{S_0, S_1, \ldots, S_m\}$ be an admissible cover of $S$.
	\step 52 set $\vs^{(0)} := \bf0$.
	\step 62 {{\em\bf for}} $\ell = 0, \ldots, m$ {{\em \bf do}}
	\step 73 set $L^{(\ell)} := \cup_{i=0}^{\ell}\; L_{\va=\vs^{(\ell)}}\left(S_i\right)$.
	\step 83 solve the linear system in $\va$ defined by $L^{(\ell)}$.
	\step {9}3 if the linear system $L^{(\ell)}$ has no solution, {{\em \bf return}} $\{\}$.
	\step {10}3 else there is a special solution $\vs'\in \bF^n$ by evaluating each free variable at $0$, set $\vs^{(\ell + 1)}:=\vs'$ .
	\step {11}2 {{\em \bf return}} $\vs^{(m+1)}$ and an $\bF$-basis of the solution space of the homogeneous linear equations induced by $L^{(m)}$.
\end{algorithm}

The correctness of Algorithm~\ref{Alg:SET} is guaranteed by the following theorem.

\begin{thm}\label{THM:general_cover}
	If the cover $\{S_0, S_1, \ldots, S_m\}$ of $S$ is admissible, then
	for all $\ell=0,1,\ldots, m$,  we have either $\bV_\bF\left(\bigcup_{i=0}^{{ \ell-1}} S_i\right) = \varnothing$ or
	\[\bV_\bF\left(\bigcup_{i=0}^{\ell} S_i\right) = \bV_\bF\left(\bigcup_{i=0}^{\ell} L_{\va = \vs^{(\ell)}}(S_i)\right) \quad \text{for any $\vs^{(\ell)} \in \bV_\bF\left(\bigcup_{i=0}^{{ \ell-1}} S_i\right)$}.\]
\end{thm}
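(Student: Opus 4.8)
The plan is to prove the dichotomy by induction on $\ell$. Observe first that, once the asserted equality is known at level $\ell$, the set $\bV_{\bF}\left(\bigcup_{i=0}^{\ell}S_i\right)$ is exhibited as the common zero locus of the degree-$\le 1$ polynomials in $\bigcup_{i=0}^{\ell}L_{\va=\vs^{(\ell)}}(S_i)$, hence is a (possibly empty) affine subspace of $\bF^n$; this will be used at the previous level. The base case $\ell=0$ holds because every polynomial in $S_0$ has degree $\le 1$ in $\va$, so $L_{\va=\vs^{(0)}}(S_0)=S_0$. For the inductive step put $V:=\bV_{\bF}\left(\bigcup_{i=0}^{\ell-1}S_i\right)$; if $V=\varnothing$ we are in the first alternative, so suppose $V$ is a nonempty affine subspace and fix $\vs:=\vs^{(\ell)}\in V$. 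We record two elementary facts. Unwinding Definition~\ref{DEF_L} gives $L_{\va=\vs}(f)(\va)=f(\vs)+\nabla_{\va}f({\bf 0})\cdot(\va-\vs)$ for every $f\in\bF[\va]$. And differentiating $p(\vx+\va)-q(\vx)=\sum_{\val}c_{\val}(\va)\vx^{\val}$ once in $x_i$ and once in $a_i$ and comparing coefficients of $\vx^{\val}$ yields the recursion $\partial_{a_i}c_{\val}(\va)=(\alpha_i+1)\bigl(c_{\val+e_i}(\va)+q_{\val+e_i}\bigr)$, where $q_{\vbe}$ is the coefficient of $\vx^{\vbe}$ in $q$ and $e_i$ is the $i$-th unit vector.

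The heart of the proof is the claim that \emph{$c_{\val}(\vb)=L_{\va=\vs}(c_{\val})(\vb)$ for every $c_{\val}\in\bigcup_{i=0}^{\ell}S_i$ and every $\vb\in V$}. Granting this, the theorem follows: $\bV_{\bF}\left(\bigcup_{i=0}^{\ell}S_i\right)\subseteq V$ trivially, and $\bV_{\bF}\left(\bigcup_{i=0}^{\ell}L_{\va=\vs}(S_i)\right)\subseteq\bV_{\bF}\left(\bigcup_{i=0}^{\ell-1}L_{\va=\vs}(S_i)\right)=V$, the last equality being the theorem at level $\ell-1$ with base point $\vs^{(\ell-1)}:=\vs$ (legitimate because $\vs\in V\subseteq\bV_{\bF}(\bigcup_{i=0}^{\ell-2}S_i)$, the latter set being all of $\bF^n$ when $\ell=1$); since $c_{\val}$ and $L_{\va=\vs}(c_{\val})$ agree on $V$, the two varieties coincide. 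To prove the claim we may assume $\deg_{\va}c_{\val}\ge 2$, since otherwise $L_{\va=\vs}(c_{\val})=c_{\val}$. Then $c_{\val}\notin S_0$, so $c_{\val}\in S_{\ell'}$ for some $1\le\ell'\le\ell$, and the defining property~(2) of an admissible cover forces $c_{\vbe}$ to vanish on $V$ for every $\vbe>\val$ (the condition is vacuous when $\vx^{\vbe}\notin\supp_{\vx}(p(\vx+\va)-q(\vx))$). By the recursion, $\partial_{a_i}c_{\val}\equiv(\alpha_i+1)q_{\val+e_i}$ on $V$, i.e.\ $\nabla_{\va}c_{\val}$ is constant along $V$; since a polynomial with constant gradient on an affine subspace over an infinite field restricts to an affine function there, $c_{\val}|_V$ is affine and $c_{\val}(\vb)=c_{\val}(\vs)+\nabla_{\va}c_{\val}(\vs)\cdot(\vb-\vs)$. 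Comparing with the formula for $L_{\va=\vs}(c_{\val})$, the claim reduces to showing $\bigl(\nabla_{\va}c_{\val}(\vs)-\nabla_{\va}c_{\val}({\bf 0})\bigr)\cdot(\vb-\vs)=0$; by the recursion, the $i$-th entry of that difference vector equals $(\alpha_i+1)(q_{\val+e_i}-p_{\val+e_i})$, which is precisely the coefficient of $\vx^{\val}$ in $\partial_{x_i}(q-p)$. Hence the claim is equivalent to the assertion that the coefficient of $\vx^{\val}$ in $\partial_{\vb-\vs}(q-p)$ is zero, where $\partial_{\vb-\vs}:=\sum_{i=1}^{n}(b_i-s_i)\partial_{x_i}$.

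Establishing this last vanishing — which reconciles the gradient of $c_{\val}$ at the origin, controlling the linearization, with its gradient at $\vs$, controlling the behaviour of $c_{\val}$ on $V$ — is the step I expect to require the real work. The plan: since $V$ is affine and $\vs,\vb\in V$, the entire line $\{\vs+\lambda(\vb-\vs):\lambda\in\bF\}$ lies in $V$, so for each $\lambda$ the polynomial $p\bigl(\vx+\vs+\lambda(\vb-\vs)\bigr)-q(\vx)$ has no monomial $\vx^{\vbe}$ with $\vbe>\val$. Subtracting the instance $\lambda=0$, expanding in powers of $\lambda$, and using that $\bF$ is infinite, every coefficient in $\lambda$ is free of monomials of multidegree $>\val$; in particular the coefficient of $\lambda^{1}$, namely $\sum_i(b_i-s_i)(\partial_{x_i}p)(\vx+\vs)=\bigl(\partial_{\vb-\vs}p\bigr)(\vx+\vs)$, is. A direct check shows that ``having no monomial of multidegree $>\val$'' is preserved under every substitution $\vx\mapsto\vx+\vc$ (the coefficient of $\vx^{\vbe}$ in $g(\vx+\vc)$ is a combination of coefficients of $g$ at multidegrees $\ge\vbe$), whence $\partial_{\vb-\vs}p$ itself has no monomial of multidegree $>\val$, and therefore the coefficient of $\vx^{\val}$ in $\bigl(\partial_{\vb-\vs}p\bigr)(\vx+\vs)$ equals that in $\partial_{\vb-\vs}p$. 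Finally, writing $q=p(\vx+\vs)+\rho$ with $\rho$ free of monomials of multidegree $>\val$ (valid because $\vs\in V$), the polynomial $\partial_{x_i}\rho$ has vanishing coefficient of $\vx^{\val}$ since $\val+e_i>\val$; hence the coefficient of $\vx^{\val}$ in $\partial_{\vb-\vs}q$ equals that in $\bigl(\partial_{\vb-\vs}p\bigr)(\vx+\vs)$, hence equals that in $\partial_{\vb-\vs}p$. Thus the coefficient of $\vx^{\val}$ in $\partial_{\vb-\vs}(q-p)$ vanishes, which proves the claim and completes the induction.
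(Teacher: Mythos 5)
Your proof is correct, but it takes a genuinely different route from the paper's. The paper's argument runs through a chain of technical lemmas (Lemmas~\ref{LEM:DOS_hc}--\ref{LEM:equal_sub}) built around the generalized directional derivatives $D^k_{\va,\val}$ and the homogeneous decomposition $H^k(c_{\val})$ of the coefficient polynomials; the key intermediate step (Lemma~\ref{LEM:equal_sub}) is itself proved by an inner induction and establishes both that $L_{\va=\vr}(c_{\val})=L_{\va=\vs}(c_{\val})$ and that $c_{\val}(\vr)=L_{\va=\vs}(c_{\val})(\vr)$ for $\vr,\vs$ in the partial zero set. You replace all of this by a single elementary observation: differentiating $p(\vx+\va)-q(\vx)=\sum c_{\val}(\va)\vx^{\val}$ in $a_i$ and in $x_i$ gives the recursion $\partial_{a_i}c_{\val}=(\alpha_i+1)(c_{\val+e_i}+q_{\val+e_i})$, which together with admissibility condition~(2) forces $\nabla_{\va}c_{\val}$ to be constant on $V=\bV_{\bF}(\cup_{i=0}^{\ell-1}S_i)$, so $c_{\val}|_V$ is affine (here the characteristic-zero hypothesis, rather than infiniteness alone, is what one really needs); you then identify the affine restriction with the linearization by comparing gradients at $\vs$ and at the origin, which reduces to the vanishing of the $\vx^{\val}$-coefficient of $\partial_{\vb-\vs}(q-p)$, and you establish that vanishing by a Taylor-in-$\lambda$ argument along the line $\vs+\lambda(\vb-\vs)\subseteq V$. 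Your approach also avoids the first conclusion of Lemma~\ref{LEM:equal_sub}: instead of showing that linearizations at $\vs^{(\ell)}$ and $\vs^{(\ell-1)}$ agree, you simply apply the theorem at level $\ell-1$ directly with base point $\vs^{(\ell-1)}:=\vs^{(\ell)}$, which is licit since the theorem is quantified over all admissible base points. What the paper's heavier machinery buys is reuse: Lemmas~\ref{LEM:hc}--\ref{LEM:sub_special_solution} also carry the proofs that the $\va$-degree and $\vx$-homogeneous covers are admissible (Theorems~\ref{THM:separate_tdeg} and~\ref{THM:DOS}); your argument is more self-contained and geometrically transparent (the picture of $c_{\val}$ restricting to an affine function on the affine set $V$ is clean), but tailored specifically to Theorem~\ref{THM:general_cover}.
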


The proof of Theorem~\ref{THM:general_cover} will be given in the next subsection.

\begin{thm}\label{THM:com-SET}
    For a vector $\vd\in\bN^n$, let $p(\vx)$ and $q(\vx)$ be two multivariate polynomials in $\bF[\vx]_{\vd}$. Then Algorithm~\ref{Alg:SET} can test whether $p$ and $q$ are shift equivalent and output a special solution of $F_{p,q}$ with an $\bF$-basis of $F_{p,p}$ if $F_{p,q}\neq\varnothing$ using $\tilde{O}(d_1^{\,\omega} \cdots d_n^{\,\omega})$ ops in $\bF$. Furthermore, combining the algorithms for computing the integer solutions of linear systems, Algorithm~\ref{Alg:SET} can test whether $p$ and $q$ are shift equivalent over $\bZ$ and output a special solution of $Z_{p,q}$ with an $\bQ$-basis of $Z_{p,p}$ in the case of $Z_{p,q}\neq\varnothing$; if $n$ is fixed and $p(\vx),\,q(\vx)\in\bZ[\vx]_{\delta}$ with max-norms bounded by the positive integer $M$, then the max-norms of the output vectors are $O(M^{\delta^{O(1)}})$ and the algorithm costs $\tilde{O}(\delta^{n(\omega+1)}\log(M))$ ops.
\end{thm}
\begin{proof}
    The first three steps take $\tilde{O}(2nd_1^2\cdots d_n^2)$ ops by Fact~\ref{FACT:com-multi}. Since both $\va$-degree cover and $\vx$-homogeneous cover can be obtained by traversing elements in $S$, Step~4 can be performed in $\tilde{O}(d_1^2\cdots d_n^2)$ ops. By the definitions of $\va$-degree covers and $\vx$-homogeneous covers, we have that $m$ is of  the size $\tilde{O}(d_1+\cdots +d_n)$. Note that by Lemma~\ref{LEM:equal_sub} below, Step~7 in the loop can be replaced by setting $L^{(\ell)}$ to be the union of $L^{(\ell-1)}$ and $L_{\va=\vs^{(\ell)}}(S_\ell)$ if $\ell\geq1$, and hence the size of the linear system in Step~8 is no more than $|S_\ell|+n$. As a result, the cost of Step~7 is $\tilde{O}(|S_\ell|nd_1\cdots d_n)$ ops and that of Step~8 is $\tilde{O}((|S_\ell|+n)^\omega)$ ops in each iteration. This implies that the loop takes no more than $\tilde{O}(C)$ ops, where
    \begin{align*}
        C=\sum_{\ell=0}^m\left((|S_\ell|+n)^\omega+|S_\ell|nd_1\cdots d_n\right)
    &\leq\left(\sum_{\ell=0}^m(|S_\ell|+n)\right)^\omega+\left(\sum_{\ell=0}^m|S_\ell|\right)nd_1\cdots d_n\\
    &=(|S|+mn)^\omega+n|S|d_1\cdots d_n.
    \end{align*}
    Since $|S|$ is no more than $d_1d_2\cdots d_n$, the loop needs $\tilde{O}(d_1^{\,\omega}\cdots d_n^{\,\omega})$ ops that dominates the whole costs. This completes the proof of the first claim.

    Now we turn to the case of computing $Z_{p,q}$ by Algorithm~\ref{Alg:SET}. The cost of the algorithm follows from Fact~\ref{FACT:com-linear} since the timing is dominated by Step~8 in the loop. For the max-norms of the output vectors, note that the solution sets of $L^{(\ell)}$ would be updated at most $n+1$ times, so the unavoidable coefficient explosions would occur in at most $n+1$ iterations of the loop. Then we obtain the desired estimate of these max-norms by Fact~\ref{FACT:com-linear}.
\end{proof}

From the above complexity analysis, we can not distinguish the algorithms with $\va$-degree cover and $\vx$-homogeneous cover. In Section~\ref{sec:appendix}, we have implemented both algorithms to compare the practical
efficiency. The experiments show that our ADC algorithm is more efficient than the DOS algorithm for sparse polynomials.

\subsection{Proof of correctness of Theorem~\ref{THM:general_cover}}\label{SUBSEC: SET main proof}

Before proving Theorem~\ref{THM:general_cover}, we need several lemmas to explore the inner structure of polynomials $c_{\val}(\va)$ in $S$. First we give an explicit expression of the non-constant homogeneous components of $c_{\val}(\va)$ and find a recurrence relation among the homogeneous components. Then we explain the role of the admissible cover and the magic of linearization in Algorithm~\ref{Alg:SET}. Finally, we prove Theorem~\ref{THM:general_cover} by induction on $\ell$.

For a vector $\val=(\alpha_1,\alpha_2,\ldots,\alpha_n)\in \bN^n$, let $|\val|:=\sum_{i=1}^n\a_i$ and $\tbinom{|\val|}{\val}:=\frac{|\val|!}{\a_1!\a_2!\cdots\a_n!}$. Let $\partial_{x_i}$ denote the partial derivative with respect to $x_i$ and $\vpa^{\val}$ denote $(\pa_{x_1})^{\a_1}(\pa_{x_2})^{\a_2}\cdots(\pa_{x_n})^{\a_n}$. For $n$ variables $\va=\{a_1,a_2,\ldots,a_n\}$, we use $D_{\va}$ to denote the directional derivative in the direction of~$\va$, i.e., $D_{\va}:=\sum_{i=1}^n a_i\partial_{x_i}$. For $\vs=(s_1,\ldots,s_n)\in\bF^n$, the notation $D_\vs$ means $D_\va|_{\va=\vs}$. Then for any $k\in\bN^+$, $$D_{\va}^k:=(D_{\va})^k=\sum_{|\val|=k}\tbinom{k}{\val}\va^{\val}\vpa^{\val}$$ by the multinomial theorem since $\pa_{x_i}$ and $\pa_{x_j}$ commute.

By the directional derivative and Taylor's expansion, the homogeneous components of polynomials in $S_{\ell}^H$ can be expressed as follows.

\begin{lem}\label{LEM:DOS_hc}(See~\cite[Lemma 3.5]{Dvir2014complete})
	Let $d:=\max\{\deg_{\vx}(p(\vx)),\deg_{\vx}(q(\vx))\}$. For any $k\in\bN$ and $\ell\in\{0,1,\ldots,d\}$, we have
	\begin{align}\label{EQ:DOS_c}
		H_{\vx}^{d-\ell}(p(\vx+\va))-H_{\vx}^{d-\ell}(q(\vx))=\sum_{i=0}^{\ell}\frac{1}{i!}D^i_{\va}\left(H_{\vx}^{d-\ell+i}(p(\vx))\right)-H_{\vx}^{d-\ell}\left(q(\vx)\right)
	\end{align}
	and
	\begin{equation}\label{EQ:DOS_hc}
		H^k_{\va}\left(H_{\vx}^{d-\ell}\left(p\left(\vx+\va\right)\right)-H_{\vx}^{d-\ell}\left(q\left(\vx\right)\right)\right)=\left\{ \begin{aligned}
			&\frac{1}{k!}D^{k}_{\va}\left(H_{\vx}^{d-\ell+k}\left(p\left(\vx\right)\right)\right), &\text{~if~} k\geq 1,\\
			&H_{\vx}^{d-\ell}\left(p\left(\vx\right)\right)-H_{\vx}^{d-\ell}\left(q\left(\vx\right)\right), & \text{~if~} k=0. \\
		\end{aligned}\right.
	\end{equation}
	Moreover, for any $c_{\val}(\va)\in S$ and $k\geq1$, $H_{\va}^k(c_{\val}(\va))$ is the coefficient of $\vx^{\val}$ in $\frac{1}{k!}D^{k}_{\va}\left(H_{\vx}^{|\val|+k}\left(p\right)\right)$.
\end{lem}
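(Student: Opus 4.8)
The plan is to deduce all three assertions from the exact multivariate Taylor expansion of a polynomial, followed by careful bookkeeping of bidegrees in $\vx$ and $\va$. First I would record the basic identity: for every $f \in \bF[\vx]$, viewed with $\va = (a_1, \ldots, a_n)$ a vector of indeterminates,
\[ f(\vx + \va) = \sum_{i \geq 0} \frac{1}{i!}\, D_\va^i(f)(\vx), \]
where the sum is finite (it stops once $i$ exceeds $\deg_\vx f$). Since both sides are $\bF$-linear in $f$, it suffices to verify this on a monomial $\vx^{\vbe}$, where it is exactly the multinomial expansion of $\prod_{j}(x_j + a_j)^{\beta_j}$ combined with the already-recorded formula $D_\va^i = \sum_{|\val| = i}\binom{i}{\val}\va^{\val}\vpa^{\val}$.

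Next I would do the bidegree bookkeeping. If $g \in \bF[\vx]$ is homogeneous of degree $j$ in $\vx$, then $\vpa^{\val}(g)$ is homogeneous of degree $j - |\val|$ (and zero when $|\val| > j$), so $\tfrac{1}{i!}D_\va^i(g)$ is homogeneous of degree $j - i$ in $\vx$ and of degree $i$ in $\va$. Writing $p = \sum_{j=0}^{d} H_\vx^j(p)$ and applying the Taylor identity to each $H_\vx^j(p)$, the summand $\tfrac1{i!}D_\va^i(H_\vx^j(p))$ contributes to the $\vx$-degree-$(d-\ell)$ part precisely when $j - i = d - \ell$, i.e. $j = d - \ell + i$; the constraint $0 \leq j \leq d$ then forces $0 \leq i \leq \ell$. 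Collecting these terms and subtracting $H_\vx^{d-\ell}(q(\vx))$ yields \eqref{EQ:DOS_c}.

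For \eqref{EQ:DOS_hc} I would extract the part of \eqref{EQ:DOS_c} that is homogeneous of degree $k$ in $\va$: by the previous paragraph the $i$-th summand has $\va$-degree exactly $i$ and the term $H_\vx^{d-\ell}(q(\vx))$ has $\va$-degree $0$, so $H_\va^k$ of the right-hand side is the single summand $\tfrac1{k!}D_\va^k(H_\vx^{d-\ell+k}(p))$ when $k \geq 1$ (which is automatically $0$ if $k > \ell$, since then $H_\vx^{d-\ell+k}(p)$ would have degree exceeding $d$), and is $H_\vx^{d-\ell}(p) - H_\vx^{d-\ell}(q)$ when $k = 0$. For the final (\emph{Moreover}) claim, I would use that $\vx^\val$ has $\vx$-degree $|\val|$, so $c_\val(\va)$ is already the coefficient of $\vx^\val$ in $H_\vx^{|\val|}(p(\vx+\va) - q(\vx))$; hence $H_\va^k(c_\val(\va))$ is the coefficient of $\vx^\val$ in $H_\va^k(H_\vx^{|\val|}(p(\vx+\va) - q(\vx)))$, and plugging $d - \ell = |\val|$ and $k \geq 1$ into \eqref{EQ:DOS_hc} identifies this with the coefficient of $\vx^\val$ in $\tfrac1{k!}D_\va^k(H_\vx^{|\val|+k}(p))$.

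The main point requiring care is not any single hard step --- the whole argument is Taylor's formula plus degree counting --- but the consistent handling of the boundary cases: the split between $k = 0$ and $k \geq 1$ in \eqref{EQ:DOS_hc}, together with the vanishing conventions $H_\vx^j(p) = 0$ for $j > d$ and $\vpa^\val(g) = 0$ once $|\val|$ exceeds $\deg_\vx g$. Keeping these straight is what makes the compact formulas in the statement correct without further case distinctions.
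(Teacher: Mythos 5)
Your proposal is correct and follows essentially the same route as the paper's proof: Taylor-expand $p(\vx+\va)$ as $\sum_i \frac{1}{i!}D_\va^i(p)$, observe that $\frac{1}{i!}D_\va^i(H_\vx^j(p))$ is (when nonzero) homogeneous of degree $j-i$ in $\vx$ and degree $i$ in $\va$, and then read off \eqref{EQ:DOS_c}, \eqref{EQ:DOS_hc}, and the final coefficient-extraction claim by matching degrees. The only difference is that you spell out the monomial verification of Taylor's formula and the boundary conventions explicitly, which the paper leaves implicit.
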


\begin{proof}
	Note that $c_{\val}(\va)$ is exactly the coefficient of $\vx^{\val}$ in $H_{\vx}^{|\val|}\left(p\left(\vx+\va\right)\right)-H_{\vx}^{|\val|}\left(q\left(\vx\right)\right)$, so it is sufficient to prove Equations~\eqref{EQ:DOS_c}~and~\eqref{EQ:DOS_hc}. By Taylor's expansion, we have $$p(\vx+\va)=\sum_{i=0}^d\frac{1}{i!}D^i_{\va}(p)(\vx)=\sum_{i=0}^d\sum_{j=0}^d\frac{1}{i!}D^i_{\va}(H_{\vx}^j(p))(\vx).$$ Note that if $D^i_{\va}(H_{\vx}^j(p))$ is not equal to zero, then it is homogeneous of degree $j-i$ in $\vx$. Consequently, we obtain Equation~\eqref{EQ:DOS_c}. Moreover, note that $D^i_{\va}(H_{\vx}^{d-\ell+i}(p))(\vx)$ is homogeneous of degree $i$ with respect to $\va$, so we get Equation~\eqref{EQ:DOS_hc}, which completes the proof.
\end{proof}

Since $$\frac{1}{k!}D^{k}_{\va}\left(H_{\vx}^{|\val|+k}\left(p\left(\vx\right)\right)\right)=\frac{1}{k!}\sum_{|\vbe|=k}\tbinom{k}{\vbe}\va^{\vbe}\vpa^{\vbe}\left(H_{\vx}^{|\val|+k}(p(\vx))\right),$$
extracting the terms $H^k(c_{\val}(\va))\cdot\vx^{\val}$ in the both sides of the equality, we have
\begin{align}\label{EQ:hc}
	H^k(c_{\val}(\va))\cdot\vx^{\val}=\frac{1}{k!}\sum_{|\vbe|=k}\tbinom{k}{\vbe}\va^{\vbe}\vpa^{\vbe}\left([\vx^{\val+\vbe}](p(\vx))\cdot\vx^{\val+\vbe}\right),
\end{align}
where $[\vx^{\val+\vbe}](p(\vx))$ denotes the coefficient of $\vx^{\val+\vbe}$ in $p(\vx)$. Therefore, for any $f(\vx)\in \bF[\vx]$, we can write $D_{\va,\val}^k(f(\vx)):=\sum_{|\vbe|=k}\tbinom{k}{\vbe}\va^{\vbe}\vpa^{\vbe}\left([\vx^{\val+\vbe}](f(\vx))\cdot\vx^{\val+\vbe}\right)$ and use $D_{\vs,\val}^k$ to denote $D_{\va,\val}^k\big{|}_{\va=\vs}$ for $\vs\in\bF^n$. The following lemma is derived straightforwardly.

\begin{lem}\label{LEM:hc}
	Let $k\in\bN^+$ and $c_{\val}(\va)\in S$. Then we have
	$H^k(c_{\val}(\va))\cdot\vx^{\val}=\frac{1}{k!}D_{\va,\val}^k(p(\vx)).$
\end{lem}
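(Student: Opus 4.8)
The plan is to read off the identity directly from Equation~\eqref{EQ:hc}, which has just been derived, by matching it against the formal definition of the operator $D^k_{\va,\val}$. So the only real content lies in the chain that produces \eqref{EQ:hc}, and that chain rests on the last sentence of Lemma~\ref{LEM:DOS_hc}: for $k\geq 1$, the homogeneous component $H^k(c_{\val}(\va))$ is the coefficient of $\vx^{\val}$ in $\frac{1}{k!}D^k_{\va}\big(H^{|\val|+k}_{\vx}(p)\big)$.

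First I would expand this via the multinomial theorem,
\[
\frac{1}{k!}D^k_{\va}\big(H^{|\val|+k}_{\vx}(p(\vx))\big)=\frac{1}{k!}\sum_{|\vbe|=k}\tbinom{k}{\vbe}\va^{\vbe}\,\vpa^{\vbe}\big(H^{|\val|+k}_{\vx}(p(\vx))\big),
\]
and then track which monomials survive. Applying $\vpa^{\vbe}$ lowers the $\vx$-degree by $|\vbe|=k$, so $\vpa^{\vbe}\big(H^{|\val|+k}_{\vx}(p)\big)$ is homogeneous of degree $|\val|$ in $\vx$; moreover, among the monomials $\vx^{\vga}$ of $H^{|\val|+k}_{\vx}(p)$, the expansion of $\vpa^{\vbe}(\vx^{\vga})$ contains $\vx^{\val}$ only when $\vga=\val+\vbe$. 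Extracting the coefficient of $\vx^{\val}$ therefore discards everything except the contributions of the terms $[\vx^{\val+\vbe}](p(\vx))\cdot\vx^{\val+\vbe}$, which is exactly Equation~\eqref{EQ:hc}.

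The remaining step is purely notational: by the definition $D^k_{\va,\val}(f(\vx)):=\sum_{|\vbe|=k}\tbinom{k}{\vbe}\va^{\vbe}\vpa^{\vbe}\big([\vx^{\val+\vbe}](f(\vx))\cdot\vx^{\val+\vbe}\big)$, the right-hand side of \eqref{EQ:hc} is precisely $\frac{1}{k!}D^k_{\va,\val}(p(\vx))$, which gives the claim. I do not expect any genuine obstacle here; the only point that requires a little care is the combinatorial bookkeeping of which partial derivatives of which monomials of $p$ land on $\vx^{\val}$, and that bookkeeping has already been carried out in establishing \eqref{EQ:hc}, so the proof of Lemma~\ref{LEM:hc} is essentially a one-line unwinding of definitions.
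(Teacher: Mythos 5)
Your proposal is correct and follows exactly the route of the paper: the lemma is read off from Equation~\eqref{EQ:hc} by matching its right-hand side against the definition of $D^k_{\va,\val}$, and \eqref{EQ:hc} itself is obtained by expanding $D^k_\va$ via the multinomial theorem in the last part of Lemma~\ref{LEM:DOS_hc} and retaining only the terms $[\vx^{\val+\vbe}](p)\cdot\vx^{\val+\vbe}$ whose derivative $\vpa^{\vbe}$ lands on $\vx^{\val}$. This is precisely the paper's reasoning, which presents the lemma as an immediate consequence of the preceding definition.
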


For the directional derivative, we know $D^k_{\va}(f(\vx))=(D^1_{\va})^k(f(\vx))$. However $D^k_{\va,\val}(f(\vx))$ may be different from $(D^1_{\va,\val})^k(f(\vx))$, as the following example shows.

\begin{example}
	Let $\bF = \bQ$, $p(x,y),q(x,y)\in\bQ[x,y]$ with $p(x,y)=x^3  + y^3$ and $q(x,y)=p(x, y)+1$. Expanding $p(x+a,y+b)-q(x,y)$, we have $p(x+a,y+b)-q(x,y)=3a\cdot x^2+3b\cdot y^2+3a^2\cdot x +3b^2\cdot y+(a^3+b^3-1).$ Then we have $c_{(1,0)}(a,b)=3a^2$,
	\begin{align*}
		D_{(a,b),(1,0)}^1(p(\vx))=&\sum_{i+j=1}\tbinom{1}{(i,j)}a^{i}b^j\cdot\pa_{x}^i\pa_{y}^j\left([x^{1+i}y^{0+j}](p(x,y))\cdot x^{1+i}y^{0+j}\right)\\
		=&\tbinom{1}{(1,0)}a\cdot\pa_{x}\left([x^2](p(x,y))\cdot x^{2}\right)+\tbinom{1}{(0,1)}b\cdot\pa_{y}\left([xy](p(x,y))\cdot xy\right)=0,\\
		D_{(a,b),(1,0)}^2(p(\vx))=&\sum_{i+j=2}\tbinom{2}{(i,j)}a^{i}b^j\cdot\pa_{x}^i\pa_{y}^j\left([x^{1+i}y^{0+j}](p(x,y))\cdot x^{1+i}y^{0+j}\right)\\
		=&\tbinom{2}{(2,0)}a^2\cdot\pa^2_{x}\left([x^3](p(x,y))\cdot x^{3}\right)+\tbinom{2}{(1,1)}ab\cdot\pa_x\pa_{y}\left([x^2y](p(x,y))\cdot x^2y\right)\\
		&+\tbinom{2}{(0,2)}b^2\cdot\pa^2_{y}\left([xy^2](p(x,y))\cdot xy^{2}\right)\\
		=&\frac{2!}{0!2!}a^2\cdot\pa^2_{x}(x^{3})=6a^2x
	\end{align*}
	and $\left(D^1_{(a,b),(1,0)}\right)^2(p(\vx))=D^1_{(a,b),(1,0)}(0)=0$. Therefore, we can check that $H^k\left(c_{(1,0)}(a,b)\right)\cdot x$ is equal to $\frac{1}{k!}D_{(a,b),(1,0)}^k(p(x,y))$ for $k=1,2$, but $D^2_{(a,b),(1,0)}(p(\vx))$ is not equal to $\left(D^1_{(a,b),(1,0)}\right)^2(p(\vx))$.
\end{example}

Now we rewrite the expression of $D_{\va,\val}^k(f)$ and derive a recurrence relation for $D_{\va,\val}^k(f)$.

\begin{lem}\label{LEM:Drr}
	Let $\val\in\bN^n$, $k,\ell\in\bN^+$ and $f\in\bF[\vx]$. Let $\ve_i\in\bN^n$ denote a unit vector with the i-th component being one and others being zero. Then we have:
	\begin{enumerate}
		\item\label{it:exp} $D_{\va,\val}^k(f(\vx))=\sum_{j_1=1}^n\cdots\sum_{j_k=1}^n\va^{\sum_{i=1}^k\ve_{j_i}}\vpa^{\sum_{i=1}^k\ve_{j_i}}\left(\left[\vx^{\val+\sum_{i=1}^k\ve_{j_i}}\right](f(\vx))\cdot\vx^{\val+\sum_{i=1}^k\ve_{j_i}}\right).$
		\item\label{it:rec} $D_{\va,\val}^{k+\ell}(f(\vx))=\sum_{|\vbe|=\ell}\tbinom{\ell}{\vbe}\va^{\vbe}\vpa^{\vbe}\left(D_{\va,\val+\vbe}^k(f(\vx))\right).$
	\end{enumerate}
\end{lem}
\begin{proof}
	\begin{enumerate}
		\item Note that for any $\vbe\in\bN^n$ with $|\vbe|=k$, $\vbe$ can be expressed as a sum of $k$ unit vectors. Moreover, there are $\tbinom{k}{\vbe}$ different k-tuples $(j_1,j_2,\ldots,j_k)$ such that $\vbe=\sum_{i=1}^k\ve_{j_i}$. Then the claim follows from the definition of $D_{\va,\val}^k(f(\vx))$.
		\item Applying~\eqref{it:exp} twice, we have
	\begin{align*}
		&D_{\va,\val}^{k+\ell}(f(\vx))\\
		=&\begin{multlined}[t]
			\sum_{j_1=1}^n\cdots\sum_{j_{k}=1}^n\sum_{j_{k+1}=1}^n\cdots\sum_{j_{k+\ell}=1}^n\va^{\sum_{i=1}^{k}\ve_{j_i}+\sum_{i=k+1}^{k+\ell}\ve_{j_i}}\vpa^{\sum_{i=1}^{k}\ve_{j_i}+\sum_{i=k+1}^{k+\ell}\ve_{j_i}}\\
			\left(\left[\vx^{\val+\sum_{i=1}^{k}\ve_{j_i}+\sum_{i=k+1}^{k+\ell}\ve_{j_i}}\right](f(\vx))\cdot\vx^{\val+\sum_{i=1}^{k}\ve_{j_i}+\sum_{i=k+1}^{k+\ell}\ve_{j_i}}\right)
		\end{multlined}\\
		=&\begin{multlined}[t]       \sum_{j_{k+1}=1}^n\cdots\sum_{j_{k+\ell}=1}^n\va^{\sum_{i=k+1}^{k+\ell}\ve_{j_i}}\vpa^{\sum_{i=k+1}^{k+\ell}\ve_{j_i}}\\
			\left(\sum_{j_1=1}^n\cdots\sum_{j_{k}=1}^n\va^{\sum_{i=1}^{k}\ve_{j_i}}\vpa^{\sum_{i=1}^{k}\ve_{j_i}}\left(\left[\vx^{\val+\sum_{i=1}^{k+\ell}\ve_{j_i}}\right](f(\vx))\cdot\vx^{\val+\sum_{i=1}^{k+\ell}\ve_{j_i}}\right)\right)
		\end{multlined}\\
		=&\sum_{j_{k+1}=1}^n\cdots\sum_{j_{k+\ell}=1}^n\va^{\sum_{i=k+1}^{k+\ell}\ve_{j_i}}\vpa^{\sum_{i=k+1}^{k+\ell}\ve_{j_i}}\left(D_{\va,\val+\sum_{i=k+1}^{k+\ell}\ve_{j_i}}^k(f(\vx))\right).
	\end{align*}
	Then as the proof of~\eqref{it:exp}, we can finally obtain~\eqref{it:rec} by setting $\vbe=\sum_{i=k+1}^{k+\ell}\ve_{j_i}$.
\end{enumerate}
\end{proof}

\begin{example}
	Let $\bF=\bQ$ and $p=x^4+x^2y + y^3\in \bQ[x,y]$. After expanding, we get $p(x+a,y+b)=x^4 + 4a\cdot x^3 + x^2y+y^3 + (6a^2+b)\cdot x^2 + 2a\cdot xy + 3b\cdot y^2+(4a^3+2ab)\cdot x+ (a^2+3b^2)\cdot y +(a^4+a^2b+b^3)$. All terms of $p(x+a,y+b)$ are listed in Figure~\ref{fig:tree}.
	
	\begin{figure}[htpb]
		\begin{spacing}{1.0}
				\begin{center}
					\begin{tikzpicture}[scale=0.65,every node/.style = {
							align=center}
						]
						\node[draw=white] at (-12.5,7.2) {$H_{(x,y)}^4:$};
						\node[draw=white] at (-12.5,5.4) {$H_{(x,y)}^3:$};
						\node[draw=white] at (-12.5,3.6) {$H_{(x,y)}^2:$};
						\node[draw=white] at (-12.5,1.8) {$H_{(x,y)}^1:$};
						\node[draw=white] at (-12.5,0) {$H_{(x,y)}^0:$};
						\node[draw=white] at (10.0,7.2) {\quad};
						
						\node [draw=cyan,fill=cyan!40,rounded corners, rectangle](1) at (0,0){$(a^4+a^2b+b^3)\cdot1$};
						\node [draw=cyan,fill=cyan!40,rounded corners, rectangle](x) at (-2.5,1.8){$(4a^3+2ab)\cdot x$};
						\node [draw=cyan,fill=cyan!40,rounded corners, rectangle](x2) at (-5.0,3.6){$(6a^2+b)\cdot x^2$};
						\node [draw=cyan,fill=cyan!40,rounded corners, rectangle](x3) at (-7.5,5.4){$4a\cdot x^3$};
						\node [draw=cyan,fill=cyan!40,rounded corners, rectangle](x4) at (-10.0,7.2){$1\cdot x^4$};
						
						\node [draw=green,fill=green!40,rounded corners, rectangle] (y) at (2.5,1.8){$(a^2+3b^2)\cdot y$};
						\node [draw=green,fill=green!40,rounded corners, rectangle] (xy) at (0,3.6){$2a\cdot xy$};		
						\node [draw=green,fill=green!40,rounded corners, rectangle] (y2) at (5,3.6){$3b\cdot y^2$};		
						\node [draw=green,fill=green!40,rounded corners, rectangle] (y3) at (7.5,5.4){$1\cdot y^3$};		
						\node [draw=green,fill=green!40,rounded corners, rectangle] (xy2) at (2.5,5.4){$0\cdot xy^2$};		
						\node [draw=green,fill=green!40,rounded corners, rectangle] (x2y) at (-2.5,5.4){$1\cdot x^2y$};		
						
						
						\draw [draw=blue,-](x)->(1);
						\draw [draw=blue,-](x2)->(x);
						\draw [draw=blue,-](x3)->(x2);
						\draw [draw=blue,-](x4)->(x3);
						
						\draw [draw=black!50!green,-](y)->(1);
						\draw [draw=black!50!green,-](xy)->(x);
						\draw [draw=black!50!green,-](y2)->(y) node [right,xshift=10mm,yshift=5mm]{$b\partial_y$};
						\draw [draw=black!50!green,-](x2y)->(x2);
						\draw [draw=black!50!green,-](y3)->(y2);
						\draw [draw=black!50!green,-](x2y)->(xy);
						\draw [draw=black!50!green,-](xy2)->(y2);
						\draw [draw=black!50!green,-](xy2)->(xy);
						\draw [draw=black!50!green,-](xy)->(y) node [left, xshift=-9mm,yshift=6mm]{$a\partial_x$};;
						
					\end{tikzpicture}\vspace{-0.3cm}
				\end{center}
		\end{spacing}
		\caption{Terms of the polynomial $p(x+a,y+b)$}
		\label{fig:tree}
	\end{figure}
    Taking $q(x,y)=0$ in Lemma~\ref{LEM:hc}, we get $D_{(a,b),(i,j)}^k(p(x,y))=k!H^k_{(a,b)}([x^iy^j](p(x+a,y+b)))\cdot x^iy^j$ for all $k\geq 1$. So we can read off $D_{(a,b),(i,j)}^k(p(x,y))$ from Figure~\ref{fig:tree}. For instance,
   $$D_{(a,b),(0,1)}^2 (p(x,y))= 2! \cdot (a^2+3b^2)\cdot y, \, D_{(a,b),(1,1)}^1(p(x,y)) = 2a\cdot xy \text{ and } D_{(a,b),(0,2)}^1(p(x,y))= 3b\cdot y^2.$$
   Taking $k=\ell=1$ in Lemma~\ref{LEM:Drr}~\eqref{it:rec}, we obtain a recurrence relation among these three terms:
   \begin{align*}
   	D_{(a,b),(0,1)}^2(p(x,y)) &= \sum_{i+j=1}\tbinom{1}{(i,j)}a^ib^j\partial_x^i\partial_y^j\left(D_{(a,b),(i,1+j)}^1(p(x,y))\right)\\
   	&=\tbinom{1}{(1,0)}a\partial_x\left(D_{(a,b),(1,1)}^1(p(x,y))\right) + \tbinom{1}{(0,1)}b\partial_y\left(D_{(a,b),(0,2)}^1(p(x,y))\right).
   	\end{align*}
  This implies
   \begin{equation}\label{EQ:rec1}
    	2(a^2 + 3b^2)y=a\partial_x\left(2axy\right) + b\partial_y\left(3by^2\right).
   \end{equation}
   By the definition of $D_{\va,\val}^k(p)$ and Lemma~\ref{LEM:hc}, we get
   \begin{equation}\label{EQ:rec2}
   	2(a^2+3b^2)y=a^2\partial_x^2(x^2y) + 2ab\partial_x\partial_y(0\cdot xy^2) + b^2\partial_y^2(y^3).
   	\end{equation}
   Note that the term $x^4$ does not involve in the above two equations \eqref{EQ:rec1} and~\eqref{EQ:rec2} because $y\nmid x^4$. In this example, the term $x^4$ only affects all terms in the blue branch, such as $3!\cdot 4a^3x=a^3\partial_x^3(x^4)$.

	Without introducing the notation $D_{\va,\val}^k$, by Lemma~\ref{LEM:DOS_hc} (or Lemma 3.5 in~\cite{Dvir2014complete}) we only get ``global'' relations, such as
	\[2!\cdot H^2_{(a,b)}\left(H^1_{(x,y)}(p(x+a,y+b))\right)=D_{(a,b)}^2\left(H_{(x,y)}^{1+2}(p(x,y))\right).\]
	This implies two relations among the rows (instead of the points) in the figure:
	\begin{align*}
		2(2abx + a^2y+3b^2y) &= (a\partial_x + b\partial_y)^2 (x^2y+0\cdot xy^2+y^3)\\
		&=(a\partial_x + b\partial_y)(bx^2 + 2axy+3by^2).
	\end{align*}
\end{example}

From Observation~3.4 in~\cite{Dvir2014complete}, we know if $D_\va^1(f(\vx)) = D_\vb^1(f(\vx))$, then $D_\va^k(f(\vx))= D_\vb^k(f(\vx))$ for all $k\geq 1$. Now we show that for any $\vs\in\bF^n$, $D_{\vs,\val}^k(f(\vx))$ can be determined by $D_{\vs,\vbe}^1(f(\vx))$ for all $\vbe \in \bN^n$ with $\vbe \geq \val$ and $|\vbe|=|\val|+k-1$. This is why we introduce the second condition in the definition of an admissible cover.

\begin{lem}\label{LEM:Da,al}
	Let $\vr,\vs\in\bF^n$, $\val\in\bN^n$, $k\in\bN^+$ and $f(\vx)\in\bF[\vx]$. If $D^1_{\vr,\vbe}(f(\vx))=D^1_{\vs,\vbe}(f(\vx))$ for all $\vbe\in\bN^n$ with $\vbe\geq \val$ and $|\vbe|=|\val|+k-1$, then we have $D^k_{\vr,\val}(f(\vx))=D_{\vs,\val}^k(f(\vx))$.
\end{lem}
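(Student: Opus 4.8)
The plan is to prove this by induction on $k$, using the recurrence relation from Lemma~\ref{LEM:Drr}\eqref{it:rec} with $\ell = 1$ to peel off one application of a directional derivative at a time. The base case $k=1$ is immediate: the hypothesis with $\vbe = \val$ (the only $\vbe \geq \val$ with $|\vbe| = |\val|$) says exactly $D^1_{\va,\val}(f) = D^1_{\vb,\val}(f)$.

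For the inductive step, suppose the claim holds for $k$ and assume $D^1_{\va,\vbe}(f(\vx)) = D^1_{\vb,\vbe}(f(\vx))$ for all $\vbe \geq \val$ with $|\vbe| = |\val| + k$. I would apply Lemma~\ref{LEM:Drr}\eqref{it:rec} with exponents $k$ and $\ell = 1$ to write
\[
D^{k+1}_{\va,\val}(f(\vx)) = \sum_{i=1}^n a_i\, \pa_{x_i}\!\left(D^k_{\va,\val+\ve_i}(f(\vx))\right),
\]
and similarly for $\vb$. So it suffices to show $D^k_{\va,\val+\ve_i}(f(\vx)) = D^k_{\vb,\val+\ve_i}(f(\vx))$ for each $i$. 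To get this from the induction hypothesis applied to the shifted multi-index $\val + \ve_i$, I need: for every $\vbe' \geq \val + \ve_i$ with $|\vbe'| = |\val + \ve_i| + k - 1 = |\val| + k$, the equality $D^1_{\va,\vbe'}(f) = D^1_{\vb,\vbe'}(f)$ holds. But any such $\vbe'$ satisfies $\vbe' \geq \val + \ve_i \geq \val$ and $|\vbe'| = |\val| + k$, so it is covered by the current hypothesis. Hence the induction hypothesis (with $\val$ replaced by $\val + \ve_i$) yields the desired equality, and summing over $i$ gives $D^{k+1}_{\va,\val}(f) = D^{k+1}_{\vb,\val}(f)$.

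The only subtlety — and the one point I would be careful about — is the bookkeeping on the index ranges: one must check that the hypothesis of the induction step ($\vbe$ ranging over multi-indices $\geq \val$ of weight $|\val|+k$) is strong enough to feed the induction hypothesis at each shifted index $\val+\ve_i$, whose required range is $\vbe' \geq \val+\ve_i$ of weight $|\val|+k$. Since $\{\vbe' : \vbe' \geq \val+\ve_i\}$ is a subset of $\{\vbe : \vbe \geq \val\}$ and the weight matches, this works out, but it is worth writing explicitly. Everything else is the formal recurrence of Lemma~\ref{LEM:Drr} together with linearity of the operators $\pa_{x_i}$, so there is no real computational obstacle.
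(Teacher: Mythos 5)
Your inductive set-up and bookkeeping on the index ranges are fine, but the step where you claim ``it suffices to show $D^k_{\va,\val+\ve_i}(f) = D^k_{\vb,\val+\ve_i}(f)$ for each $i$'' and then conclude by ``summing over $i$'' has a genuine gap. The two recurrences you invoke read
\[
D^{k+1}_{\va,\val}(f) = \sum_{i=1}^n a_i\,\partial_{x_i}\!\bigl(D^k_{\va,\val+\ve_i}(f)\bigr),
\qquad
D^{k+1}_{\vb,\val}(f) = \sum_{i=1}^n b_i\,\partial_{x_i}\!\bigl(D^k_{\vb,\val+\ve_i}(f)\bigr),
\]
with \emph{different} outer coefficients $a_i$ and $b_i$. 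Even granting $D^k_{\va,\val+\ve_i}(f) = D^k_{\vb,\val+\ve_i}(f)$ for every $i$, this only gets you to $D^{k+1}_{\va,\val}(f)=\sum_i a_i\,\partial_{x_i}\bigl(D^k_{\vb,\val+\ve_i}(f)\bigr)$, which is a mixed $\va$/$\vb$ expression and is not visibly equal to $\sum_i b_i\,\partial_{x_i}\bigl(D^k_{\vb,\val+\ve_i}(f)\bigr)$. Your final sentence silently assumes these agree, which is exactly what remains to be proved.

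The paper's proof addresses this head on. After replacing $D^k_{\va,\val+\ve_i}$ by $D^k_{\vb,\val+\ve_i}$ via the inductive hypothesis, it expands $D^k_{\vb,\val+\ve_i}$ by definition, interchanges the single outer derivative $\sum_i a_i\partial_{x_i}$ with the inner sum over $|\vga|=k$, and recognizes the result as $\sum_{|\vga|=k}\tbinom{k}{\vga}\vb^{\vga}\vpa^{\vga}\bigl(D^1_{\va,\val+\vga}(f)\bigr)$. Each $\val+\vga$ with $|\vga|=k$ satisfies $\val+\vga\geq\val$ and $|\val+\vga|=|\val|+k$, so the hypothesis is invoked a \emph{second} time to replace $D^1_{\va,\val+\vga}$ by $D^1_{\vb,\val+\vga}$, and finally Lemma~\ref{LEM:Drr}~(2) is applied again with the split $(1,k)$ rather than $(k,1)$ to reassemble the expression into $D^{k+1}_{\vb,\val}(f)$. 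That symmetrization --- peeling a $1$ off one side to push $\va\to\vb$ inside, then peeling a $1$ off the \emph{other} side to push $\va\to\vb$ outside --- is the missing ingredient in your argument.
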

\begin{proof}
	The proof is by induction on $k$. It is clear to see that the lemma is true for $k=1$. Now assume the equality holds for $k$. For $k + 1$, assume that $D^1_{\vr,\vbe}(f(\vx))=D^1_{\vs,\vbe}(f(\vx))$ for all $\vbe\in\bN^n$ with $\vbe\geq \val$ and $|\vbe|=|\val|+(k+1)-1$. We have $D_{\vr,\val}^{k+1}(f(\vx))=\sum_{i=1}^n\vr^{\ve_i}\vpa^{\ve_i}(D_{\vr,\val+\ve_i}^k(f(\vx)))$ by Lemma~\ref{LEM:Drr}~\eqref{it:rec}. Note that for all $\vga\in\bN^n$ with $\vga\geq \val+\ve_i$ and $|\vga|=|\val+\ve_i|+k-1$, we have $\vga\geq\val$ and $|\vga|=|\val|+(k+1)-1$. Thus by assumption we have $D^1_{\vr,\vga}(f(\vx))=D^1_{\vs,\vga}(f(\vx))$. It follows from the inductive hypothesis that $D_{\vr,\val+\ve_i}^k(f(\vx))=D_{\vs,\val+\ve_i}^k(f(\vx))$. So
	\begin{align*}	        D_{\vr,\val}^{k+1}(f(\vx))=&\sum_{i=1}^n\vr^{\ve_i}\vpa^{\ve_i}\left(D_{\vs,\val+\ve_i}^k(f(\vx))\right)\\
		=&\sum_{i=1}^n\vr^{\ve_i}\vpa^{\ve_i}\left(\sum_{|\vga|=k}\tbinom{k}{\vga}\vs^{\vga}\vpa^{\vga}\left([\vx^{\val+\ve_i+\vga}](f(\vx))\cdot\vx^{\val+\ve_i+\vga}\right)\right)\\
		=&\sum_{|\vga|=k}\tbinom{k}{\vga}\vs^{\vga}\vpa^{\vga}\left(\sum_{i=1}^n\vr^{\ve_i}\vpa^{\ve_i}([\vx^{\val+\vga+\ve_i}](f(\vx))\cdot\vx^{\val+\vga+\ve_i})\right)\\
		=&\sum_{|\vga|=k}\tbinom{k}{\vga}\vs^{\vga}\vpa^{\vga}D^1_{\vr,\val+\vga}(f(\vx)).
	\end{align*}
	Because $\val+\vga\geq \val$ and $|\val+\vga|=|\val|+|\vga|=|\val|+(k+1)-1$, we have $D^1_{\vr,\val+\vga}(f(\vx))=D^1_{\vs,\val+\vga}(f(\vx))$ by assumption. Applying Lemma~\ref{LEM:Drr}~\eqref{it:rec} again, the proof is completed.
\end{proof}

Combining Lemma~\ref{LEM:hc} and the above lemma, we get the following lemma immediately.

\begin{lem}\label{LEM:sub_special_solution}
	Let $c_{\val}(\va)\in S$, $k\in\bN$ with $k\geq 2$ and $\vr,\vs\in\F^n$. If \[H^1(c_{\vbe})(\vr)=H^1(c_{\vbe})(\vs)\text{\quad for all $\vbe\in\bN^n$ with $\vbe\geq\val$ and $|\vbe|=|\val|+k-1$,}\] then we have \[H^k(c_{\val})(\vr)=H^k(c_{\val})(\vs).\]
\end{lem}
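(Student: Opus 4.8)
The plan is to reduce Lemma~\ref{LEM:sub_special_solution} to the already-established Lemma~\ref{LEM:Da,al} by exhibiting the homogeneous components $H^k(c_{\val})$ as the coefficients of the operators $D^k_{\va,\val}$. Recall from Lemma~\ref{LEM:hc} that for each $c_{\val}(\va)\in S$ and each $k\geq 1$ we have $H^k(c_{\val}(\va))\cdot\vx^{\val}=\tfrac{1}{k!}D^k_{\va,\val}(p(\vx))$, so that $H^k(c_{\val})$ is literally (up to the factor $k!$) the coefficient of $\vx^{\val}$ in $D^k_{\va,\val}(p(\vx))$, viewed as a polynomial in the auxiliary variables $\va$. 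The point of the lemma is that this coefficient only depends on $\va$ through a prescribed finite list of linear data, namely the values $H^1(c_{\vbe})$ for $\vbe\geq\val$ with $|\vbe|=|\val|+k-1$.

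First I would fix $c_{\val}(\va)\in S$, an integer $k\geq 2$, and points $\vr,\vs\in\F^n$ satisfying the hypothesis $H^1(c_{\vbe})(\vr)=H^1(c_{\vbe})(\vs)$ for every $\vbe\in\bN^n$ with $\vbe\geq\val$ and $|\vbe|=|\val|+k-1$. Using Lemma~\ref{LEM:hc} again with $k=1$, for each such $\vbe$ we have $H^1(c_{\vbe}(\va))\cdot\vx^{\vbe}=D^1_{\va,\vbe}(p(\vx))$; since $\vx^{\vbe}$ is a fixed monomial, the scalar equality $H^1(c_{\vbe})(\vr)=H^1(c_{\vbe})(\vs)$ is equivalent to the polynomial equality $D^1_{\vr,\vbe}(p(\vx))=D^1_{\vs,\vbe}(p(\vx))$. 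Thus the hypothesis of Lemma~\ref{LEM:sub_special_solution} translates verbatim into the hypothesis of Lemma~\ref{LEM:Da,al} with $\va=\vr$, $\vb=\vs$, and $f=p$: namely $D^1_{\vr,\vbe}(p(\vx))=D^1_{\vs,\vbe}(p(\vx))$ for all $\vbe\geq\val$ with $|\vbe|=|\val|+k-1$.

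Then I would invoke Lemma~\ref{LEM:Da,al} directly to conclude $D^k_{\vr,\val}(p(\vx))=D^k_{\vs,\val}(p(\vx))$ as polynomials in $\vx$. Finally, extracting the coefficient of $\vx^{\val}$ from both sides and applying Lemma~\ref{LEM:hc} once more (now with the exponent $k$), the left-hand side yields $k!\,H^k(c_{\val})(\vr)$ and the right-hand side yields $k!\,H^k(c_{\val})(\vs)$; dividing by $k!$ (permissible since $\mathrm{char}\,\bF=0$) gives $H^k(c_{\val})(\vr)=H^k(c_{\val})(\vs)$, which is the desired conclusion.

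I do not anticipate a serious obstacle here: the lemma is essentially a repackaging of Lemma~\ref{LEM:Da,al} through the dictionary provided by Lemma~\ref{LEM:hc}, and the only points that require a moment's care are (i) making sure the translation between the scalar statements about $H^1(c_{\vbe})$ and the operator statements about $D^1_{\cdot,\vbe}(p)$ is an equivalence rather than merely an implication — which it is, because $\vx^{\vbe}\neq 0$ — and (ii) correctly matching the index ranges, i.e. that the condition ``$\vbe\geq\val$ and $|\vbe|=|\val|+k-1$'' appearing in Lemma~\ref{LEM:sub_special_solution} is exactly the condition appearing in Lemma~\ref{LEM:Da,al} for the same $k$. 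Both are immediate, so the proof is short.
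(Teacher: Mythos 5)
Your proposal is correct and is essentially the paper's own argument: the paper states Lemma~\ref{LEM:sub_special_solution} with the one-line justification ``combining Lemma~\ref{LEM:hc} and Lemma~\ref{LEM:Da,al}, we get the following lemma immediately,'' and your write-up simply spells out that combination (translate the $H^1$ hypotheses into $D^1_{\cdot,\vbe}(p)$ equalities via Lemma~\ref{LEM:hc}, apply Lemma~\ref{LEM:Da,al} with $f=p$, then translate back to $H^k$).
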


Now we are ready to show that for an admissible cover, the linearization does not change the zero set of the polynomial system $S$.

\begin{lem}\label{LEM:equal_sub}
	Let $\{S_0, S_1 ,\ldots , S_m\}$ be an admissible cover of $S$ and $\ell\in\{0,1,\ldots, m\}$. If there exist $\vr,\vs\in \bV_{\bF}(\cup_{i=0}^{\ell-1}S_i)$, then for all $c_{\val}(\va)\in\cup_{i=0}^{\ell}S_i$ we have
	\begin{align}\label{EQ:equal_sub}
		L_{\va=\vr}(c_{\val})(\va)=L_{\va=\vs}(c_{\val})(\va).
	\end{align}
	Furthermore, we have $c_{\val}(\vr)=L_{\va=\vs}(c_{\val})(\vr)$.
\end{lem}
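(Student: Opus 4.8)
The plan is to argue by induction on $\ell$. The starting point is the observation that, writing $c_{\val}(\va) = H^0(c_{\val}) + H^1(c_{\val})(\va) + \sum_{k\geq 2}H^k(c_{\val})(\va)$ for the homogeneous decomposition in $\va$, one has $L_{\va=\vr}(c_{\val})(\va) = H^0(c_{\val}) + H^1(c_{\val})(\va) + \sum_{k\geq 2}H^k(c_{\val})(\vr)$ (and the same with $\vs$ in place of $\vr$) directly from Definition~\ref{DEF_L}. Hence \eqref{EQ:equal_sub} is equivalent to $H^k(c_{\val})(\vr) = H^k(c_{\val})(\vs)$ for every $k\geq 2$; moreover substituting $\va = \vr$ into that formula gives $c_{\val}(\vr) = L_{\va=\vr}(c_{\val})(\vr)$, so once \eqref{EQ:equal_sub} is established the ``furthermore'' clause drops out immediately, since $c_{\val}(\vr) = L_{\va=\vr}(c_{\val})(\vr) = L_{\va=\vs}(c_{\val})(\vr)$. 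The base case $\ell = 0$ is trivial: by the first condition of the definition of admissible cover every element of $S_0$ has $\va$-degree at most one, hence coincides with its own linearization, so both assertions hold vacuously.

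For the inductive step I would take $\ell\geq 1$ and $\vr,\vs\in\bV_\bF(\bigcup_{i=0}^{\ell-1}S_i)$. Since $\bigcup_{i=0}^{\ell-2}S_i\subseteq\bigcup_{i=0}^{\ell-1}S_i$, both points lie in $\bV_\bF(\bigcup_{i=0}^{\ell-2}S_i)$ as well, so the inductive hypothesis applies to every $c_{\vbe}(\va)\in\bigcup_{i=0}^{\ell-1}S_i$ and yields $\sum_{k\geq 2}H^k(c_{\vbe})(\vr) = \sum_{k\geq 2}H^k(c_{\vbe})(\vs)$. The first genuine step is to upgrade this to an identity of the \emph{linear} parts: because $\vr,\vs$ are common zeros of $\bigcup_{i=0}^{\ell-1}S_i$ we have $c_{\vbe}(\vr) = c_{\vbe}(\vs) = 0$, and subtracting these two evaluations of the full homogeneous decomposition — using the identity just obtained to cancel the degree-$\geq 2$ tails — forces $H^1(c_{\vbe})(\vr) = H^1(c_{\vbe})(\vs)$ for every $c_{\vbe}(\va)\in\bigcup_{i=0}^{\ell-1}S_i$.

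Next I would fix an arbitrary $c_{\val}(\va)\in\bigcup_{i=0}^{\ell}S_i$. If $c_{\val}(\va)$ already lies in $\bigcup_{i=0}^{\ell-1}S_i$, the inductive hypothesis finishes it; otherwise $c_{\val}(\va)\in S_\ell$, and for each $k\geq 2$ I invoke Lemma~\ref{LEM:sub_special_solution}. Its hypothesis requires $H^1(c_{\vbe})(\vr) = H^1(c_{\vbe})(\vs)$ for all $\vbe\in\bN^n$ with $\vbe\geq\val$ and $|\vbe| = |\val|+k-1$; since $k\geq 2$ any such $\vbe$ satisfies $\vbe>\val$, so either $\vx^{\vbe}\notin\supp_{\vx}(p(\vx+\va)-q(\vx))$, whence $c_{\vbe} = 0$ and the equality is trivial, or, by the second condition of the definition of admissible cover, $c_{\vbe}(\va)\in\bigcup_{i=0}^{\ell-1}S_i$ and the equality holds by the previous paragraph. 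Lemma~\ref{LEM:sub_special_solution} then gives $H^k(c_{\val})(\vr) = H^k(c_{\val})(\vs)$ for all $k\geq 2$, which is \eqref{EQ:equal_sub}, and the ``furthermore'' clause follows as explained above.

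I expect the only delicate point to be the promotion step in the second paragraph: the mere vanishing of $c_{\vbe}$ at $\vr$ and at $\vs$ says nothing about its individual homogeneous components, and it is precisely the combination with the inductive hypothesis on the degree-$\geq 2$ parts that extracts $H^1(c_{\vbe})(\vr) = H^1(c_{\vbe})(\vs)$ — exactly the input that Lemma~\ref{LEM:sub_special_solution} consumes. Everything else is routine bookkeeping with the partial order on $\bN^n$ and the two defining conditions of an admissible cover.
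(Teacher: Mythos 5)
Your proof is correct and follows the same overall strategy as the paper's: induct on $\ell$, reduce \eqref{EQ:equal_sub} to equality of the degree-$\geq 2$ homogeneous parts, use $c_{\vbe}(\vr)=c_{\vbe}(\vs)=0$ together with the inductive hypothesis to promote this to $H^1(c_{\vbe})(\vr)=H^1(c_{\vbe})(\vs)$, and then invoke Lemma~\ref{LEM:sub_special_solution}. One small point in your favour: you apply the inductive hypothesis at level $\ell-1$, which is exactly what admissibility (condition (2) of Definition~\ref{DEF_AP}) licenses, since it only places $c_{\vbe}$ in $\bigcup_{i=0}^{\ell-1}S_i$; the paper instead invokes the hypothesis at level $\ell-k+1$, which tacitly requires $c_{\vbe}\in\bigcup_{i=0}^{\ell-k+1}S_i$ — true for the two concrete covers $\{S_i^H\}$ and $\{S_i^D\}$ but not a consequence of the general definition of an admissible cover. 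So your version is slightly tighter while reaching the same conclusion.
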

\begin{proof}
	Since $c_{\val}(\vr)=L_{\va=\vr}(c_{\val})(\vr)$, it is sufficient to prove Equation~\eqref{EQ:equal_sub} by induction on $\ell$.
	
	For $\ell=0$, we have $\deg(c_{\val}(\va))\leq1$ by Definition~\ref{DEF_AP}, so $L_{\va=\vr}(c_{\val})(\va)=c_{\val}(\va)=L_{\va=\vs}(c_{\val})(\va)$.
	
	For $\ell>0$, suppose the lemma holds for smaller $\ell$. Then it is sufficient to show that $H^k(c_{\val})(\vr)=H^k(c_{\val})(\vs)$ for all $k\in\bN$ with $k\geq 2$. By Lemma~\ref{LEM:sub_special_solution}, we know the proof is completed by showing that $H^1(c_{\vbe})(\vr)=H^1(c_{\vbe})(\vs)$ for $\vbe\in\bN^n$ with $\vbe\geq\val$ and $|\vbe|=|\val|+k-1$. Because $|\vbe|\geq|\val|+2-1>|\val|$, we have $\vbe>\val$. Then we get either $c_{\vbe}(\va)\in \cup_{i=0}^{\ell-1}S_i$ or $c_{\vbe}(\va)=0$ by Definition~\ref{DEF_AP}. For $\ell-k+1\leq\ell-2+1\leq\ell-1$, we have $\vr,\vs\in \bV_{\bF}(\cup_{i=0}^{\ell-1}S_i)\subseteq\bV_{\bF}(\cup_{i=0}^{\ell-k+1}S_i)$. This means $\vr$ and $\vs$ are zeros of all polynomials $c_{\vbe}(\va)\in S_{\ell-k+1}$. Then we have
	\begin{align}\label{EQ:linear-1}
		L_{\va=\vr}(c_{\vbe})(\vr)=c_{\vbe}(\vr)=0 \text{\quad and \quad}
		L_{\va=\vs}(c_{\vbe})(\vs)=0.
	\end{align}
	On the other hand, $\vr,\vs\in\bV_{\bF}(\cup_{i=0}^{(\ell-k+1)-1}S_i)$ because $\bV_{\bF}(\cup_{i=0}^{\ell-k+1}S_i)\subseteq\bV_{\bF}(\cup_{i=0}^{(\ell-k+1)-1}S_i)$. By the inductive hypothesis with $\ell-k+1$, we get $L_{\va=\vr}(c_{\vbe})(\va)=L_{\va=\vs}(c_{\vbe})(\va)$. So
	\begin{align}\label{EQ:linear-2}
		H^0(L_{\va=\vr}(c_{\vbe}))=H^0(L_{\va=\vs}(c_{\vbe})).
	\end{align}
	Note that $H^1(c_{\vbe})(\va)=H^1(L_{\va=\vr}(c_{\vbe}))(\va)=H^1(L_{\va=\vs}(c_{\vbe}))(\va)$. Combining the equations~\eqref{EQ:linear-1} and~\eqref{EQ:linear-2}, we have
	\begin{align*}
		H^1(c_{\vbe})(\vr)&=H^1(L_{\va=\vr}(c_{\vbe}))(\vr)=-H^0(L_{\va=\vr}(c_{\vbe}))\\
		&=-H^0(L_{\va=\vs}(c_{\vbe}))=H^1(L_{\va=\vs}(c_{\vbe}))(\vs)=H^1(c_{\vbe})(\vs),
	\end{align*}
	which completes the proof.
\end{proof}	

\begin{proof}[Proof of Theorem~\ref{THM:general_cover}]
	We shall prove the theorem by induction on $\ell$.
	
	For $\ell=0$, we know that any $c_{\val}(\va)$ in $S_0$ satisfies $\deg(c_{\val}(\va))\leq1$ by Definition~\ref{DEF_AP}. Thus we have $L_{\va=\vs^{(0)}}(S_0)=S_0$ and $\bV_{\F}(S_0)=\bV_{\F}(L_{\va=\vs^{(0)}}(S_0))$.
	
	For $\ell>0$, assume the theorem holds for $\ell-1$, i.e., $\bV_\F(\cup_{i=0}^{ \ell-1} S_i)=\bV_{\F}(\cup_{i=0}^{\ell-1}L_{\va=\vs^{(\ell-1)}}(S_i))$. Taking $\vr,\vs=\vs^{(\ell-1)},\vs^{(\ell)}\in \bV_\F(\cup_{i=0}^{ \ell-2} S_i)$ in Lemma~\ref{LEM:equal_sub}, we know that the linearizations of $c_{\vbe}(\va)$ at $\vs^{(\ell)}$ and $\vs^{(\ell-1)}$ are equal for all $c_{\vbe}(\va)\in\cup_{i=0}^{\ell-1}S_i$. This means $\cup_{i=0}^{\ell-1}L_{\va=\vs^{(\ell)}}(S_i)=\cup_{i=0}^{\ell-1}L_{\va=\vs^{(\ell-1)}}(S_i)$. Then we have
	\[
	\bV_{\F}\left(\cup_{i=0}^{\ell}L_{\va=\vs^{(\ell)}}(S_i)\right)
	\subseteq\bV_{\F}\left(\cup_{i=0}^{\ell-1}L_{\va=\vs^{(\ell)}}(S_i)\right)
	=\bV_{\F}\left(\cup_{i=0}^{\ell-1}L_{\va=\vs^{(\ell-1)}}(S_i)\right)
	=\bV_{\F}\left(\cup_{i=0}^{\ell-1}S_i\right),
	\]
	where the last equality follows from the inductive hypothesis. Note that $\bV_\F(\cup_{i=0}^{ \ell} S_i)$ is also a subset of $\bV_\F(\cup_{i=0}^{ \ell-1} S_i)$. So we only need to prove that for all $\vr\in \bV_\F(\cup_{i=0}^{ \ell-1} S_i)$,
	\begin{align}\label{EQ:leq}
		\vr\in \bV_{\F}\left(\cup_{i=0}^{\ell}S_i\right)\text{\quad if and only if \quad}\vr\in \bV_{\F}\left(\cup_{i=0}^{\ell}L_{\va=\vs^{(\ell)}}(S_i)\right).
	\end{align}
	Because $s^{(\ell)}\in\bV_\F(\cup_{i=0}^{ \ell-1} S_i)$, we have $c_{\val}(\vr)=L_{\va=\vs^{(\ell)}}(c_{\val})(\vr)$ for all $c_{\val}(\va)\in\cup_{i=0}^{ \ell} S_i$ by Lemma~\ref{LEM:equal_sub}. Then the claim~\eqref{EQ:leq} follows immediately.
\end{proof}

The main distinction between the reasoning of our general scheme and the DOS algorithm can be shown in Lemma~\ref{LEM:sub_special_solution}. In~\cite{Dvir2014complete}, Lemma~3.5 proves Equation~\eqref{EQ:DOS_c} which we extend to Lemmas~\ref{LEM:DOS_hc} and~\ref{LEM:hc}. Observation~3.4 in~\cite{Dvir2014complete} is generalized by Lemma~\ref{LEM:Da,al}. The subsequent proof for the correctness of the DOS algorithm can be summarized by the lemma below. Then the remaining steps for proving Theorem~\ref{THM:general_cover} and the correctness of the DOS algorithm are similar.

\begin{lem}\label{LEM:DOS_sub_special_solution}
	Let $d:=\max\{\deg_{\vx}(p(\vx)),\deg_{\vx}(q(\vx))\}$, $i\in\{0,1,\ldots,d\}$, $k\in\bN$ with $k\geq2$ and $\vr,\vs\in\bF^n$. If
	\begin{align*}       H^1(c_{\vbe})(\vr)=H^1(c_{\vbe})(\vs)\text{\quad for all $c_{\vbe}(\va)\in S_{i-k+1}^H$,}
	\end{align*}
	then we have
	\[H^k(c_{\val})(\vr)=H^k(c_{\val})(\vs)\text{\quad for all $c_{\val}(\va)\in S_{i}^H$.}\]
\end{lem}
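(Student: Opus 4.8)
The plan is to prove Lemma~\ref{LEM:DOS_sub_special_solution} as a direct specialization of the machinery already built up for general admissible covers, particularly Lemma~\ref{LEM:sub_special_solution}. Recall that Lemma~\ref{LEM:sub_special_solution} says: if $H^1(c_{\vbe})(\vr)=H^1(c_{\vbe})(\vs)$ for all $\vbe\in\bN^n$ with $\vbe\geq\val$ and $|\vbe|=|\val|+k-1$, then $H^k(c_{\val})(\vr)=H^k(c_{\val})(\vs)$. So for a fixed $c_{\val}(\va)\in S_i^H$, I only need to check that every $\vbe$ appearing in the hypothesis of Lemma~\ref{LEM:sub_special_solution} — i.e. every $\vbe$ with $\vbe\geq\val$ and $|\vbe|=|\val|+k-1$ such that $\vx^{\vbe}\in\supp_\vx(p(\vx+\va)-q(\vx))$ — actually lands in the set $S_{i-k+1}^H$, so that the hypothesis of the present lemma supplies the needed equalities $H^1(c_{\vbe})(\vr)=H^1(c_{\vbe})(\vs)$.

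The key step is therefore the degree bookkeeping that relates $S_j^H$ to the total degree $|\val|$. By definition $S_i^H=\{c_{\val}(\va)\in S\mid \vx^\val\in\supp_\vx(H_\vx^{d-i}(p(\vx+\va))-H_\vx^{d-i}(q(\vx)))\}$, and $c_{\val}(\va)$ is precisely the coefficient of $\vx^\val$ in $H_\vx^{|\val|}(p(\vx+\va))-H_\vx^{|\val|}(q(\vx))$; hence $c_{\val}(\va)\in S_i^H$ forces $|\val|=d-i$, i.e. $i=d-|\val|$. Now take $\vbe\geq\val$ with $|\vbe|=|\val|+k-1$ and $c_{\vbe}(\va)\neq 0$. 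Then $c_{\vbe}(\va)\in S_j^H$ with $j=d-|\vbe|=d-(|\val|+k-1)=(d-|\val|)-(k-1)=i-k+1$, exactly as required. Since the hypothesis of Lemma~\ref{LEM:DOS_sub_special_solution} gives $H^1(c_{\vbe})(\vr)=H^1(c_{\vbe})(\vs)$ for all $c_{\vbe}(\va)\in S_{i-k+1}^H$, the hypothesis of Lemma~\ref{LEM:sub_special_solution} is satisfied, and the conclusion $H^k(c_{\val})(\vr)=H^k(c_{\val})(\vs)$ follows. (If $c_{\vbe}(\va)=0$ for some such $\vbe$, there is nothing to check for that $\vbe$, since then $H^1(c_{\vbe})=0$ trivially agrees at $\vr$ and $\vs$.)

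I expect the only mild subtlety — and the main thing worth spelling out carefully — is confirming that the index shift is consistent, including the edge cases: when $i-k+1<0$ the set $S_{i-k+1}^H$ is empty, but then no $\vbe$ with $|\vbe|=|\val|+k-1$ can have $\vx^\vbe$ in the support of $p(\vx+\va)-q(\vx)$ (since $|\vbe|>d$), so the hypothesis of Lemma~\ref{LEM:sub_special_solution} holds vacuously; and when $c_{\val}(\va)=0$ the statement is trivial. There is no real obstacle here: the result is essentially a repackaging of Lemma~\ref{LEM:sub_special_solution} under the identification $i\leftrightarrow d-|\val|$, and the proof is two or three lines once this correspondence is made explicit. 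This is, in fact, precisely the point the paragraph preceding the lemma is making: Lemma~\ref{LEM:DOS_sub_special_solution} is the ``homogeneous-cover'' instance of the general Lemma~\ref{LEM:sub_special_solution}, recovering the reasoning behind the DOS algorithm's correctness proof.
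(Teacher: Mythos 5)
Your proof is correct and clean. The paper itself does not supply a proof of this lemma: it is stated as a paraphrase of the correctness argument in~\cite{Dvir2014}, whose original proof reasons about whole homogeneous components at once via the identity $D_{\va}^k=(D_{\va}^1)^k$ (Observation~3.4 of~\cite{Dvir2014}); the paper's surrounding figures only \emph{illustrate} that Lemma~\ref{LEM:sub_special_solution} refines this statement rather than proving it. Your write-up turns that illustration into an actual derivation. The degree bookkeeping is exactly right: $c_{\val}\in S_i^H$ gives $|\val|=d-i$, so any $\vbe\geq\val$ with $|\vbe|=|\val|+k-1$ whose monomial lies in the support has $c_{\vbe}\in S_{i-k+1}^H$, while for the remaining $\vbe$ one has $c_{\vbe}=0$ and the needed $H^1$ equality is vacuous; your treatment of the out-of-range case $i-k+1<0$ via the degree cap $|\vbe|\leq d$ is also correct (and could equivalently be settled by Lemma~\ref{LEM:separate_connection}, which forces $H^k(c_{\val})=0$ when $k>d-|\val|=i$). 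It is also worth noting, as you implicitly do, that the hypothesis here is \emph{stronger} than what Lemma~\ref{LEM:sub_special_solution} actually requires, since it supplies agreement on all of $S_{i-k+1}^H$ rather than only on the sub-cone $\{\vbe\geq\val\}$; that gap is precisely the coarseness the paper's figures are drawn to depict.
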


Given two points $\vr$ and $\vs$ in $\bF^n$, we use diagrams in Figures~\ref{FIG:DOS_3D} and~\ref{FIG:General} to explain how Lemma~\ref{LEM:sub_special_solution} makes the statement more precise than this lemma for the case where $d=\deg_{\vx}(p)=\deg_{\vx}(q)=4$ and $n=2$. Let $p(x+a,y+b)-q(x,y)=\sum_{(\alpha,\beta)\in\Lambda}c_{(\alpha,\beta)}(a,b)x^{\alpha}y^{\beta}$. For the $k$-th homogeneous component of $c_{(\alpha,\beta)}$, if its values at $\vr$ and $\vs$ are equal, we draw a point at position $(\alpha,\beta,k)$ in the space. Furthermore, if $H^k(c_{(\alpha,\beta)})(\vr)=H^k(c_{(\alpha,\beta)})(\vs)$ for all $(\alpha,\beta)\in\bN^2$ such that the sum of $\alpha$ and $\beta$ is a fixed constant $i$, which means there are points $(\alpha,\beta,k)$ on the same line, then we draw a segment to connect them to each other. Note that the degree of any polynomial in $S_{d-i}^H$ is no more than $i$, which will be proved exactly in Lemma~\ref{LEM:separate_connection}. Lemma~\ref{LEM:DOS_sub_special_solution} implies that the dark green segment on one triangle face can conclude all the segments on this face with $k\geq2$ in Figure~\ref{FIG:DOS_3D}. More precisely, Lemma~\ref{LEM:sub_special_solution} tells us that in Figure~\ref{FIG:General}, on one triangle face, every point with $k\geq2$ can be deduced from the part of dark green segment which is cut out by two dotted line from this point. For example, Point $A$ can be inferred from Segment $\ell$.

\begin{figure}[htbp]
	\centering
	
	\begin{minipage}{6cm}
		\centering
		\includegraphics[width=1\textwidth]{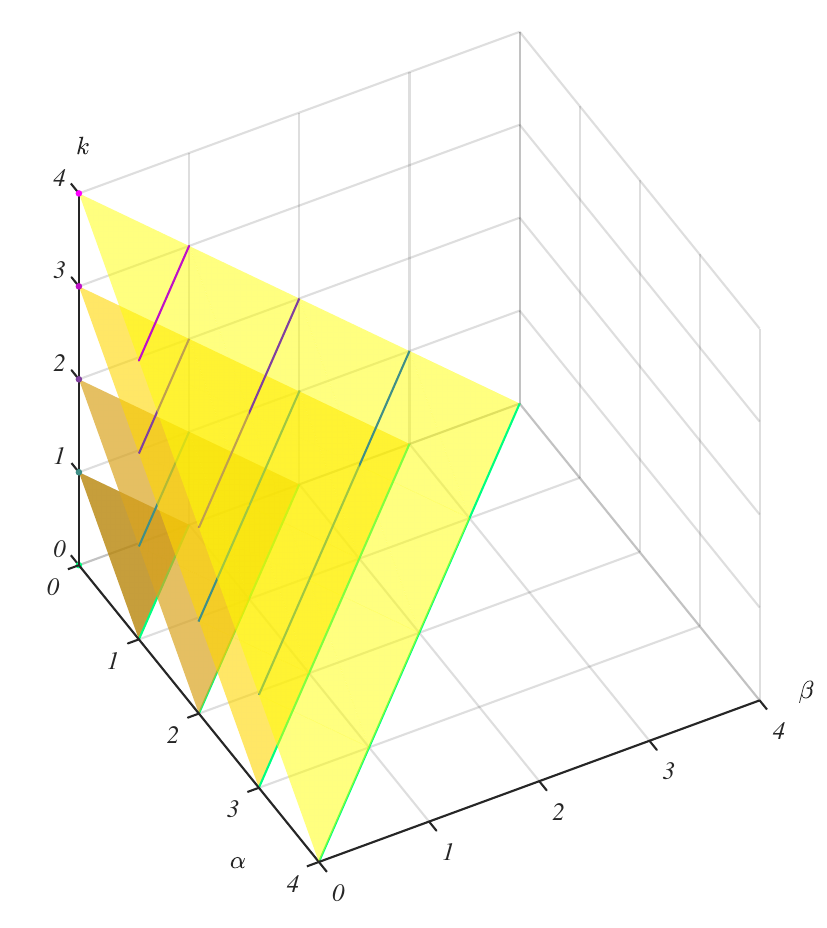}
		\caption{Graph for Lemma~\ref{LEM:DOS_sub_special_solution}} \label{FIG:DOS_3D}
	\end{minipage}
\hfill
	\begin{minipage}{6cm}
		\centering
		\includegraphics[width=1\textwidth]{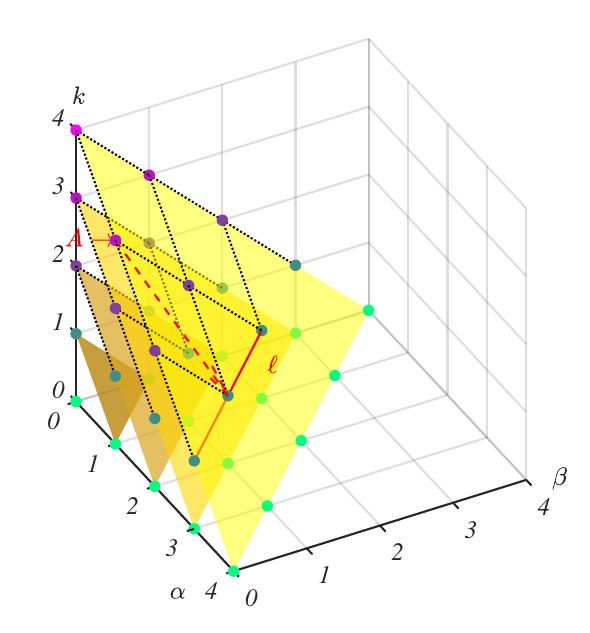}
		\caption{Graph for Lemma~\ref{LEM:sub_special_solution}}\label{FIG:General}
	\end{minipage}
\end{figure}

\subsection{Two special admissible covers}\label{SUBSEC:AP}

By Theorem~\ref{THM:general_cover}, we see that any admissible cover of the polynomial system $S$ corresponds to an algorithm for solving the SET problem via linear system solving. We now present two special admissible covers.
The first admissible cover defined below classifies the polynomials in the set $S$ according to their degrees in $\va$.

\begin{thm}[$\va$-degree cover]\label{THM:separate_tdeg}
	Let $d'$ be the degree of $p(\vx+\va)-q(\vx)$ with respect to $\va$. Let $S^D_i:=\{c_{\val}(\va)\in S\mid\deg(c_{\val}(\va))=i\}$. Then the cover $\{S_0^D,S_1^D,\ldots,S_{d'}^D\}$ of $S$ is admissible.
\end{thm}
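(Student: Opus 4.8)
The plan is to verify the two defining conditions of an admissible cover (Definition~\ref{DEF_AP}) directly for the collection $\{S^D_0, S^D_1, \ldots, S^D_{d'}\}$. The first condition is immediate: by construction every $c_{\val}(\va)\in S^D_0$ has $\va$-degree $0$, hence degree at most one. Thus the entire content of the theorem is the second condition, and after unwinding the definitions it reduces to a single monotonicity claim: \emph{if $c_{\val}(\va)\in S$ and $\vbe>\val$ with $\vx^{\vbe}\in\supp_{\vx}(p(\vx+\va)-q(\vx))$, then $\deg_{\va}(c_{\vbe}(\va))<\deg_{\va}(c_{\val}(\va))$.} Indeed, if $c_{\val}(\va)\in S^D_{\ell}$ with $\ell\geq 1$, this inequality forces $\deg_{\va}(c_{\vbe}(\va))\leq \ell-1$, i.e.\ $c_{\vbe}(\va)\in\bigcup_{i=0}^{\ell-1}S^D_i$, which is exactly what condition~(2) demands.

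To establish the monotonicity claim I would first compute $\deg_{\va}(c_{\val}(\va))$ explicitly. By Lemma~\ref{LEM:hc}, for every $k\geq 1$ we have $H^k(c_{\val}(\va))\cdot\vx^{\val}=\tfrac1{k!}D^k_{\va,\val}(p(\vx))$; expanding the right-hand side via the definition of $D^k_{\va,\val}$ and using that $\vpa^{\vbe}(\vx^{\val+\vbe})$ is a strictly positive rational multiple of $\vx^{\val}$, one obtains an expression
\[
H^k(c_{\val}(\va))=\sum_{|\vbe|=k}\gamma_{\val,\vbe}\,[\vx^{\val+\vbe}](p)\,\va^{\vbe},\qquad \gamma_{\val,\vbe}>0,
\]
a sum over pairwise distinct monomials $\va^{\vbe}$ with strictly positive numerical coefficients times the coefficients $[\vx^{\val+\vbe}](p)$ of $p$. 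Hence \emph{no cancellation occurs}, so $H^k(c_{\val})\neq 0$ if and only if $\vx^{\val+\vbe}\in\supp_{\vx}(p)$ for some $\vbe$ with $|\vbe|=k$. Since moreover $H^0(c_{\val})=[\vx^{\val}](p)-[\vx^{\val}](q)$ and $c_{\val}\neq 0$ (because $c_{\val}\in S$), this yields the closed form
\[
\deg_{\va}(c_{\val}(\va))=\max\Bigl(\{0\}\cup\bigl\{\,|\vga|-|\val|:\vga\in\supp_{\vx}(p),\ \vga>\val\,\bigr\}\Bigr).
\]

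With this formula the monotonicity is a short combinatorial step. Suppose $\vbe>\val$ and $c_{\val}\in S^D_{\ell}$ with $\ell\geq 1$. If no $\vga\in\supp_{\vx}(p)$ satisfies $\vga>\vbe$, then $\deg_{\va}(c_{\vbe})=0<\ell$. Otherwise pick $\vga^{*}\in\supp_{\vx}(p)$ with $\vga^{*}>\vbe$ realizing $\deg_{\va}(c_{\vbe})=|\vga^{*}|-|\vbe|$; since $\vbe>\val$ we also have $\vga^{*}>\val$, so $\vga^{*}$ enters the set defining $\deg_{\va}(c_{\val})$, whence $\deg_{\va}(c_{\val})\geq|\vga^{*}|-|\val|>|\vga^{*}|-|\vbe|=\deg_{\va}(c_{\vbe})$, using $|\vbe|>|\val|$. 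In both cases $\deg_{\va}(c_{\vbe})\leq\ell-1$, which completes the verification of condition~(2) and hence the theorem.

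The step I expect to be the main obstacle is the ``no cancellation'' argument: one must be careful with the bookkeeping behind Lemma~\ref{LEM:hc} and check that $\vpa^{\vbe}$ really sends $\vx^{\val+\vbe}$ onto a positive multiple of $\vx^{\val}$ and cannot produce monomials that interfere across different $\vbe$, so that $\deg_{\va}(c_{\val})$ is governed purely by the shape of $\supp_{\vx}(p)$. Everything after the closed form for $\deg_{\va}(c_{\val})$ is routine. As a side remark, the same closed form explains at once why the $\vx$-homogeneous cover $\{S^H_0,\ldots,S^H_d\}$ of Theorem~\ref{thm:general_cover} is admissible as well, since on each piece $S^H_i$ the quantity $|\val|$ is constant, so grouping by $|\val|$ satisfies condition~(2) for the identical reason.
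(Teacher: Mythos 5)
Your proof is correct and follows essentially the same route as the paper: both rest on the explicit formula of Lemma~\ref{LEM:hc} (Equation~\eqref{EQ:hc}) and the observation that the monomials $\va^{\vbe}$ appearing there are pairwise distinct with strictly positive numerical coefficients, so no cancellation can occur and the $\va$-support of $c_{\val}$ is determined directly by $\supp_{\vx}(p)$. The paper packages this as a short proof by contradiction, while you go one small step further and extract the closed form $\deg_{\va}(c_{\val})=\max\bigl(\{0\}\cup\{|\vga|-|\val|:\vga\in\supp_{\vx}(p),\ \vga>\val\}\bigr)$, from which the required monotonicity (and, as you note, Lemma~\ref{LEM:separate_connection} and the admissibility of the $\vx$-homogeneous cover as well) drops out immediately; this is a clean and slightly more transparent presentation of the same underlying argument.
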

\begin{proof}
	We will check that this cover satisfies the two conditions mentioned in Definition~\ref{DEF_AP}. The condition~\eqref{it:ac-1} can be checked directly by definition. As for~\eqref{it:ac-2}, assume that $c_{\val}(\va)\in S^D_{\ell}$ and $\vbe$ is an arbitrary vector in $\bN^n$ with $\vbe>\val$ and $\vx^{\vbe}\in\supp_{\vx}(p(\vx+\va)-q(\vx))$. We will argue by contradiction that $\deg(c_{\val}(\va))>\deg(c_{\vbe}(\va))$. By assumption, we know that $\deg(c_{\val}(\va))=\ell$. If there is a monomial $\va^{\vga}\in\supp(c_{\vbe}(\va))$ with $|\vga|\geq\ell$, then by Equation~\eqref{EQ:hc}, we have $\vx^{\vga+\vbe}\in\supp(p(\vx))$. Since $|\vga+\vbe-\val|\geq|\vbe|-|\val|>0$, we obtain $\va^{\vga+\vbe-\val}\in\supp(H^{|\vga+\vbe-\val|}(c_{\val}(\va)))\in\supp(c_{\val}(\va))$ by Equation~\eqref{EQ:hc}. However, note that $\vbe>\val$ implies $|\vbe|>|\val|$, so $|\vga+\vbe-\val|=|\vga|+|\vbe|-|\val|>\ell$, which leads to a contradiction to the fact that $\ell$ is the degree of $c_{\val}(\va)$.
\end{proof}

Note that $\vbe>\val$ implies $|\vbe|>|\val|$. This inspires the second admissible cover called the \emph{$\vx$-homogeneous cover}.
Before we prove it, we first present a useful lemma.

\begin{lem}\label{LEM:separate_connection}
	Let $d:=\deg_{\vx}(p(\vx))$. For any $\val\in\bN^n$ with $c_{\val}(\va)\in S$, we have $\deg_{\va}(c_{\val}(\va))\leq d-|\val|$.
\end{lem}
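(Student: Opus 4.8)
The plan is to read off the claim $\deg_\va(c_{\val}(\va)) \le d - |\val|$ directly from the explicit formula for the homogeneous components of $c_{\val}(\va)$ established in Lemma~\ref{LEM:hc}, namely $H^k(c_{\val})(\va)\cdot\vx^{\val} = \frac{1}{k!}D^k_{\va,\val}(p(\vx))$ for $k\ge 1$. First I would fix $\val$ with $c_{\val}(\va)\in S$ and a positive integer $k$, and suppose toward a contradiction — or rather, to establish the degree bound — that $H^k(c_{\val})(\va)$ is nonzero for some $k > d - |\val|$; the goal is to show this forces $p$ to have a monomial of degree exceeding $d = \deg_\vx(p)$.

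The key computation: by Lemma~\ref{LEM:hc} and Equation~\eqref{EQ:hc}, a nonzero $H^k(c_{\val})(\va)$ means that some term $\va^{\vbe}\vpa^{\vbe}\bigl([\vx^{\val+\vbe}](p(\vx))\cdot \vx^{\val+\vbe}\bigr)$ with $|\vbe| = k$ survives, i.e. $[\vx^{\val+\vbe}](p(\vx)) \neq 0$ and $\vpa^{\vbe}(\vx^{\val+\vbe}) \neq 0$ (which holds automatically since $\val+\vbe \ge \vbe$). Hence $\vx^{\val+\vbe}\in\supp_\vx(p(\vx))$, so $|\val+\vbe| = |\val| + k \le \deg_\vx(p) = d$, giving $k \le d - |\val|$. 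Therefore $H^k(c_{\val})(\va) = 0$ for every $k > d - |\val|$, which is exactly the statement that $\deg_\va(c_{\val}(\va)) \le d - |\val|$. (The case $|\val| > d$ is covered as well: then $d - |\val| < 0$, and the same argument shows all positive-degree homogeneous parts vanish and moreover $c_{\val}(\va)$ itself must be zero, so it is not in $S$; alternatively one notes such $c_{\val}$ never arises.)

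I do not expect a genuine obstacle here — the lemma is essentially a bookkeeping consequence of Lemma~\ref{LEM:hc}, which has already done the real work of pinning down the homogeneous components of $c_{\val}$ in terms of the coefficients of $p$. The only point requiring a little care is making sure the passage from "$H^k(c_{\val})\ne 0$" to "there is a surviving monomial $\va^{\vbe}\vpa^{\vbe}(\cdots)$ with $[\vx^{\val+\vbe}](p)\ne 0$" is justified: distinct multi-indices $\vbe$ with $|\vbe|=k$ contribute to $\va^{\vbe}$-terms that cannot cancel each other (they are linearly independent monomials in $\va$), so if the total is nonzero at least one summand is nonzero, and each nonzero summand witnesses a monomial of $p$ of degree $|\val|+k$. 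After that, the degree inequality $|\val|+k\le d$ is immediate, and the lemma follows.
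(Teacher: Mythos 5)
Your argument is correct and is essentially identical to the paper's: both read the degree bound directly off Equation~\eqref{EQ:hc} (equivalently Lemma~\ref{LEM:hc}), noting that a nonzero term forces $[\vx^{\val+\vbe}](p)\neq 0$ with $|\vbe|=k$ and hence $|\val|+k\leq d$. Your added remark about distinct $\va^{\vbe}$-monomials not cancelling is a fine clarification but not needed beyond what the paper states.
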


\begin{proof}
	Note that $[\vx^{\val+\vbe}](p(\vx))\neq0$ yields $|\val+\vbe|\leq d$, so $k=|\vbe|\leq d-|\val|$ in Equation~\eqref{EQ:hc}. That is to say, $H^k(c_{\val}(\va))\cdot\vx^{\val}=0$ if $k>d-|\val|$. Then the conclusion follows.
\end{proof}
	
\begin{thm}[$\vx$-homogeneous cover]\label{THM:DOS}
	Let $d$ be the maximal degree of $p(\vx)$ and $q(\vx)$. Let $S^H_i:=\{c_{\val}(\va)\in S\mid|\val|=d-i\}$ for $i=0,1,\ldots,d$. Then the cover $\{S_0^H,S_1^H,\ldots,S_d^H\}$ of $S$ is admissible.
\end{thm}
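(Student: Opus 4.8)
The plan is to verify directly the two conditions in Definition~\ref{DEF_AP}, exactly in the spirit of the proof of the $\va$-degree cover (Theorem~\ref{THM:separate_tdeg}): condition~(1) will come from Lemma~\ref{LEM:separate_connection}, and condition~(2) from the elementary fact that $\vbe>\val$ forces $|\vbe|>|\val|$. A preliminary observation is that, after normalizing so that $\deg_\vx(p)=\deg_\vx(q)=d$ (which is exactly what the degree test in Step~2 of Algorithm~\ref{Alg:SET} guarantees), every monomial $\vx^\val$ appearing in $p(\vx+\va)-q(\vx)$ satisfies $0\le|\val|\le d$; hence each $c_\val(\va)\in S$ lies in the block $S^H_{d-|\val|}$, with index $d-|\val|\in\{0,1,\ldots,d\}$, so $\{S^H_0,S^H_1,\ldots,S^H_d\}$ really is a cover of $S$.

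For condition~(1): the block $S^H_0$ consists of the coefficients $c_\val(\va)$ with $|\val|=d$. By Lemma~\ref{LEM:separate_connection}, $\deg_\va(c_\val(\va))\le d-|\val|=0$, so every member of $S^H_0$ is a constant, in particular of $\va$-degree at most one. (One can also see this directly: a shift leaves the top $\vx$-homogeneous component unchanged, so $c_\val(\va)=[\vx^\val](H^d_\vx(p)-H^d_\vx(q))$ is a scalar.)

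For condition~(2): fix $\ell\in\{1,\ldots,d\}$ and $c_\val(\va)\in S^H_\ell$, which means $|\val|=d-\ell$; let $\vbe\in\bN^n$ satisfy $\vbe>\val$ and $\vx^\vbe\in\supp_\vx(p(\vx+\va)-q(\vx))$, so that $c_\vbe(\va)\in S$. Since $\vbe>\val$ (componentwise $\ge$ with at least one strict inequality) gives $|\vbe|>|\val|=d-\ell$, i.e.\ $|\vbe|\ge d-\ell+1$, the coefficient $c_\vbe(\va)$ lies in $S^H_{d-|\vbe|}$ with $d-|\vbe|\le\ell-1$; thus $c_\vbe(\va)\in\bigcup_{i=0}^{\ell-1}S^H_i$, which is precisely condition~(2). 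Applying Theorem~\ref{THM:general_cover} then yields the algorithm attached to the $\vx$-homogeneous cover, recovering the DOS algorithm. No genuinely hard step arises in any of this: the argument is pure bookkeeping comparing total degree in $\va$ with total degree in $\vx$. The only place deserving a moment of attention is the normalization $\deg_\vx(p)=\deg_\vx(q)=d$; without it the index set $\{0,\ldots,d\}$ need not capture coefficients coming from monomials of $q$ of $\vx$-degree exceeding $\deg_\vx(p)$, and the bound in Lemma~\ref{LEM:separate_connection} could be vacuous.
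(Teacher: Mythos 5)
Your proof is correct and follows the same overall structure as the paper's: first verify that $\{S_0^H,\ldots,S_d^H\}$ is a cover, then check the two conditions of Definition~\ref{DEF_AP}, with condition~(2) handled by the observation that $\vbe>\val$ forces $|\vbe|>|\val|$. The only genuine difference is your treatment of condition~(1): you read off $\deg_\va(c_\val)\le d-|\val|=0$ directly from Lemma~\ref{LEM:separate_connection}, whereas the paper argues by contradiction, using Equation~\eqref{EQ:hc} and Lemma~\ref{LEM:hc} to extract from a hypothetical monomial $\va^\vbe$ with $|\vbe|\ge 2$ a monomial $\vx^{\val+\vbe-\ve_j}$ of $\vx$-degree exceeding $d$ in $\supp_\vx(p(\vx+\va)-q(\vx))$. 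Your route is shorter and reuses a lemma the paper has already established precisely for this section, so it is arguably the cleaner choice. Your closing remark on the normalization $\deg_\vx(p)=\deg_\vx(q)=d$ is well placed: this is exactly the standing assumption the paper adopts just before Definition~\ref{def_L}, and it is genuinely needed because Lemma~\ref{LEM:separate_connection} is stated with $d=\deg_\vx(p)$, which fails to control the constant contributions from top-degree monomials of $q$ if $\deg_\vx(q)>\deg_\vx(p)$. One further small point in your favor: for the cover claim the paper asserts $\supp_\vx(p(\vx+\va)-q(\vx))\subseteq\supp(p)\cup\supp(q)$, which is not literally true since shifting introduces lower-degree monomials; your argument via the degree bound $\deg_\vx(p(\vx+\va)-q(\vx))\le d$ is the correct justification and avoids this imprecision.
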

\begin{proof}
	We first show that $\{S_0^H,S_1^H,\ldots,S_d^H\}$ is exactly a cover of $S$. It is sufficient to show that $S$ is a subset of $\cup_{i=0}^dS_i^H$. This is true because we have $\vx^{\val}\in\supp_{\vx}(p(\vx+\va)-q(\vx))\subseteq\supp(p(\vx))\cup\supp(q(\vx))$ for arbitrary $c_{\val}(\va)\in S$ and then $0\leq |\val|\leq d$.
	
	Then we check this cover is admissible. If there exists $c_{\val}(\va)\in S^H_0$ with a nonlinear monomial $\va^{\vbe}$, then we have $\vx^{\val+\vbe}\in\supp(p(\vx))$ by Equation~\eqref{EQ:hc}. For $|\vbe|\geq 2$, there is a unit vector $\ve_j\in\bN^n$ such that $\ve_j<\val+\vbe$. Thus $\va^{\ve_j}\vx^{\val+\vbe-\ve_j}\in\supp_{\vx\cup\va}(D_{\va,\val+\vbe-\ve_j}^{1}(p(\vx)))$. By Lemma~\ref{LEM:hc}, $\vx^{\val+\vbe-\ve_j}\in\supp_{\vx}(p(\vx+\va)-q(\vx))$. However, $|\val+\vbe-\ve_j|=|\val|+|\vbe|-|\ve_ j|\geq d+2-1>d$, which leads to a contradiction to the fact that $\deg_{\vx}(p(\vx+\va)-q(\vx))\leq\max\{\deg_{\vx}(p(\vx))),\deg_{\vx}(q(\vx))\}=d$. So the condition~\eqref{it:ac-1} in Definition~\ref{DEF_AP} is satisfied. Finally, for any $\ell=1,2,\ldots,d$, let $c_{\val}(\va)\in S^H_{\ell}$. Then for all $ \vbe\in\bN^n$ with $\vbe>\val$ and $\vx^{\vbe}\in\supp_{\vx}(p(\vx+\va)-q(\vx))$, we have $|\vbe|>|\val|=d-\ell$, so $c_{\vbe}(\va)\in S_{d-|\vbe|}^H\subseteq\cup_{i=0}^{\ell-1}S^H_i$. Therefore, the cover $\{S^H_0,S^H_1,\ldots,S^H_d\}$ also satisfies the condition~\eqref{it:ac-2} in Definition~\ref{DEF_AP} and the proof is complete.
\end{proof}

The cover $\{S^H_0,S^H_1,\ldots,S^H_d\}$ is exactly the one defined in the DOS algorithm and we call it $\vx$-homogeneous cover. As a consequence, we reproved the correctness of the DOS algorithm in our general framework of admissible covers.

After introducing two special admissible covers, we would like to compare them and explain their connection. For simplicity, we always assume $H^{d}_{\vx}(p(\vx + \va)) = H^{d}_{\vx}(q(\vx))$ with $\deg (p(\vx)) = \deg(q(\vx))=d$ in the following discussion. For this special case, $S_0^H=\varnothing$. We get $\deg_{\va}(p(\vx + \va)-q(\vx))=d$ by Equation~\eqref{EQ:hc}. Lemma~\ref{LEM:separate_connection} yields $S_\ell^H\subseteq\cup_{i=0}^{\ell}S_i^D$. Hence the connection between two different covers is like the following figure.

\begin{center}
	\renewcommand\arraystretch{2.5}
	\begin{tabular}{|c:c:c:c|l}
		\multicolumn{1}{c}{$\,\enspace\enspace\enspace\enspace$}     & \multicolumn{1}{c}{$\,\enspace\enspace\enspace\enspace$}     &
		\multicolumn{1}{c}{$\,\enspace\enspace\enspace\enspace$}     &
		\multicolumn{1}{c}{$\,\enspace\enspace\enspace\enspace$} &
		\\
		\cline{4-4}
		\multicolumn{1}{c}{$S$}     & \multicolumn{2}{c}{}     & \multicolumn{1}{|c|}{$\,$} &
		$S_d^D$\\
		\cdashline{4-4}[0.5pt/0.5pt]
		\multicolumn{2}{c}{}     &
		\multicolumn{1}{c}{$\cdots$}     &
		\multicolumn{1}{:c|}{$\vdots$} &
		$\vdots$\\
		\cline{2-2}
		\cdashline{3-4}[0.5pt/0.5pt]
		\multicolumn{1}{c}{}     & \multicolumn{1}{|c:}{$\,$}     &
		\multicolumn{1}{c}{}   &
		\multicolumn{1}{:c|}{$\,$} &
		$S_2^D$\\
		\cline{1-1}
		\cdashline{2-4}[0.5pt/0.5pt]
		\multicolumn{1}{|c}{$\,$}     & \multicolumn{1}{:c:}{$\,$}     &
		\multicolumn{1}{c}{}   &
		\multicolumn{1}{:c|}{$\,$} &
		$S_1^D$\\
		\cdashline{1-4}[0.5pt/0.5pt]
		\multicolumn{1}{|c}{$\,$}     & \multicolumn{1}{:c:}{$\,$}     &
		\multicolumn{1}{c}{}   &
		\multicolumn{1}{:c|}{$\,$} &
		$S_0^D$\\
		\cline{1-4}
		\multicolumn{1}{c}{$S_1^H$}     & \multicolumn{1}{c}{$S_2^H$}     &
		\multicolumn{1}{c}{$\cdots$}     &
		\multicolumn{1}{c}{$S_{d}^H$} &
		\\
	\end{tabular}
\end{center}

Now we give some examples to illustrate the two algorithms induced by the two elaborated admissible covers.
\begin{example}\label{EX:SET_compare}
	Let $\bF = \bQ$, $p_i(x,y,z),q_i(x,y,z)\in\bQ[x,y,z]$ for $i=1,2$ with $p_1(x,y,z)=x^4 + x^2y + y^2$, $q_1(x,y,z)=p_1(x, y + 1, z + 2)+ z$, $p_2(x,y,z)=x^4 + x^3y + xy^2 + z^2$ and $q_2(x,y,z)=p_2(x,y+1,z+2)+xy$.
	\begin{enumerate}
		\item\label{it:p1q1} Compute $F_{p_1,q_1}$. We expand $p_1(x + a, y + b,z+c) - q_1(x, y,z)$ and get that
		\begin{align*}
			&p_1(x + a, y + b,z+c) - q_1(x, y,z)\\
			=&(4a\cdot x^3)+((6a^2+b-1)\cdot x^2+2a\cdot xy)\\
			&+((4a^3+2ab)\cdot x+(a^2+2b-2)\cdot y-z)+(a^4+ a^2b+ b^2-1).
		\end{align*}
		Then we can separate the coefficients of $p_1(x + a, y + b,z+c) - q_1(x, y,z)$ with respect to $x$, $y$ and $z$ in two different methods as following.
		
		\begin{center}
			\newcommand{\tabincell}[2]{
				\begin{tabular}{@{}#1@{}}#2\end{tabular}
			}
			\renewcommand\arraystretch{2}
			\scalebox{0.9}{
				\begin{tabular}{|c:c:c:c|l}
					\multicolumn{1}{c}{$\qquad\qquad\qquad\quad$}     & \multicolumn{1}{c}{$\qquad\qquad\qquad\quad$}     &
					\multicolumn{1}{c}{$\qquad\qquad\qquad\quad$}     &
					\multicolumn{1}{c}{$\qquad\qquad\qquad\quad$} &
					\\
					\cline{4-4}
					\multicolumn{1}{c}{$S$}     & \multicolumn{2}{c}{}     & \multicolumn{1}{|c|}{$a^4 + a^2b + b^2 - 1$} &
					$S_4^D$\\
					\cline{3-3}\cdashline{4-4}[0.5pt/0.5pt]
					\multicolumn{2}{c}{}     &
					\multicolumn{1}{|c}{$4a^3 + 2ab$}    &
					\multicolumn{1}{:c|}{} &
					$S_3^D$\\
					\cline{2-2}
					\cdashline{3-4}[0.5pt/0.5pt]
					\multicolumn{1}{c}{}     & \multicolumn{1}{|c}{$6a^2 + b - 1$}     &
					\multicolumn{1}{:c}{$a^2 + 2b - 2$} &
					\multicolumn{1}{:c|}{} &
					$S_2^D$\\
					\cline{1-1}
					\cdashline{2-4}[0.5pt/0.5pt]
					\multicolumn{1}{|c}{$4a$}     & \multicolumn{1}{:c}{$2a$}     &
					\multicolumn{1}{:c}{} &
					\multicolumn{1}{:c|}{} &
					$S_1^D$\\
					\cdashline{1-4}[0.5pt/0.5pt]
					\multicolumn{1}{|c}{}     & \multicolumn{1}{:c}{}     &
					\multicolumn{1}{:c}{$-1$} &
					\multicolumn{1}{:c|}{$\,$} &
					$S_0^D$\\
					\cline{1-4}
					\multicolumn{1}{c}{$S_1^H$}     & \multicolumn{1}{c}{$S_2^H$}     &
					\multicolumn{1}{c}{$S_3^H$}     &
					\multicolumn{1}{c}{$S_4^H$} &
					\\
				\end{tabular}
			}
		\end{center}
		So we can get $F_{p_1,q_1}=\varnothing$ at once if we use the $\va$-degree cover, while by the $\vx$-homogeneous cover, we will calculate until we get $S_3^H$.
		\item Compute $F_{p_2,q_2}$. We expand $p_2(x + a, y + b,z+c) - q_2(x, y,z)$ and get that
		\begin{align*}
			&p_2(x + a, y + b,z+c) - q_2(x, y,z)\\
			=&((4a+b-1)\cdot x^3+3a\cdot x^2y)\\
			&+((6a^2+ 3ab)\cdot x^2+(3a^2+2b-3)\cdot xy + a\cdot y^2 )\\
			&+((4a^3+ 3a^2b+b^2-1)\cdot x+(a^3+2ab)\cdot y+ (2c-4)\cdot z)\\
			&+(a^4 + a^3b+ ab^2+ c^2 - 4).
		\end{align*}
		Then we can separate the coefficients of $p_2(x + a, y + b,z+c) - q_2(x, y,z)$ with respect to $x$, $y$ and $z$ in two different methods as following.
		\begin{center}
			\newcommand{\tabincell}[2]{
				\begin{tabular}{@{}#1@{}}#2\end{tabular}
			}
			\renewcommand\arraystretch{1.7}
			\scalebox{0.8}{
				\begin{tabular}{|c:c:c:c|lc}
					\multicolumn{1}{c}{$\qquad\qquad\qquad\qquad\qquad$}     & \multicolumn{1}{c}{$\qquad\qquad\qquad\qquad\qquad$}     &
					\multicolumn{1}{c}{$\qquad\qquad\qquad\qquad\qquad$}     &
					\multicolumn{1}{c}{$\qquad\qquad\qquad\qquad\qquad$} &
					\\
					\cline{4-4}		
					\multicolumn{1}{c}{$S$}     & \multicolumn{2}{c}{}     & \multicolumn{1}{|c|}{\tabincell{c}{$a^4 + a^3b+ ab^2+ c^2 - 4$}} & $S_4^D$ & \multicolumn{1}{c}{\tabincell{c}{$\,$\\$\,$}}\\
					\cline{3-3}\cdashline{4-4}[0.5pt/0.5pt]
					\multicolumn{2}{c}{}     &
					\multicolumn{1}{|c}{\tabincell{c}{$4a^3+ 3a^2b+b^2-1$\\$a^3+2ab$}}    &
					\multicolumn{1}{:c|}{} & $S_3^D$ & \multicolumn{1}{c}{\tabincell{c}{$\,$\\$\,$}}\\
					\cline{2-2}
					\cdashline{3-4}[0.5pt/0.5pt]
					\multicolumn{1}{c}{}     & \multicolumn{1}{|c}{\tabincell{c}{$6a^2+ 3ab$\\$3a^2+2b-3$}}     &
					\multicolumn{1}{:c}{} &
					\multicolumn{1}{:c|}{} & $S_2^D$ & \multicolumn{1}{c}{\tabincell{c}{$\,$\\$\,$}}\\
					\cline{1-1}
					\cdashline{2-4}[0.5pt/0.5pt]
					\multicolumn{1}{|c}{\tabincell{c}{$4a+b-1$\\$3a$}}     & \multicolumn{1}{:c}{$a$}     & \multicolumn{1}{:c}{$2c-4$}
					&
					\multicolumn{1}{:c|}{} &
					$S_1^D$ & \multicolumn{1}{c}{\tabincell{c}{$\,$\\$\,$}}\\
					\cdashline{1-4}[0.5pt/0.5pt]
					\multicolumn{1}{|c}{}     & \multicolumn{1}{:c}{}     &
					\multicolumn{1}{:c}{} &
					\multicolumn{1}{:c|}{$\,$} &
					$S_0^D$ & \multicolumn{1}{c}{\tabincell{c}{$\,$\\$\,$}}\\
					\cline{1-4}
					\multicolumn{1}{c}{$S_1^H$}     & \multicolumn{1}{c}{$S_2^H$}     &
					\multicolumn{1}{c}{$S_3^H$}     &
					\multicolumn{1}{c}{$S_4^H$} & \multicolumn{1}{c}{\tabincell{c}{$\,$}} &
					\\
				\end{tabular}
			}
		\end{center}
		So we can get $F_{p_2,q_2}=\varnothing$ if we use $\vx$-homogeneous cover and calculate $S_2^H$, while by $\va$-degree cover, we have to solve $2c-4=0$ needlessly.
	\end{enumerate}	
\end{example}

	\section{Isotropy groups and orbital decompositions}\label{sec:add decomp}
	In this section, we first recall the notion of isotropy groups under shifts, which plays a central role in the summability criteria and existence criteria of telescopers. Then we present different types of partial fraction decompositions of $\bF(\vx)$ with respect to different orbital factorizations as in~\cite{Chen2021}. These decompositions can be computed via algorithms for the SET problem over integers and will be used in the next sections for reducing the rational summability problem and the existence problem of telescopers to simpler cases.
	\subsection{Isotropy groups}\label{sec:isotropy}
	Let $G=\<\si_{x_1}, \ldots, \si_{x_n}>$ be the free abelian group generated by shift operators $\si_{x_1}, \ldots, \si_{x_n}$ and $A$ be a subgroup of $G$. Let $p$ be a multivariate polynomial in $\bF[\vx]$. The set
	\[[p]_A:= \left\{\si(p)\mid \si\in A\right\}\]
	is called the {\em $A$-orbit} of $p$. Two polynomials $p,q\in\bF[\vx]$ are said to be {\em $A$-shift equivalent} or {\em $A$-equivalent} if $[p]_A=[q]_A$, denoted by $p\sim_A q$. The relation $\sim_A$ is an equivalence relation.

	\begin{defn}[Sato's Isotropy Group~\cite{Sato1990}]\label{def:isotropy}
		Let $A$ and $p$ be defined as above. The set
		\begin{equation*}
			A_p:=\{\si\in A\mid \si(p)=p\}.
		\end{equation*}
		is a subgroup of $A$, called the {\em isotropy} group of $p$ in $A$.
	\end{defn}
	
	If two polynomials $p,q$ in $\bF[\vx]$ are $A$-shift equivalent, then $A_p=A_q$. The following remark says that we can test the $A$-equivalence of polynomials and compute a basis of $A_p$ by algorithms for the SET problem over integers in Section~\ref{sec:SET}.
	
	\begin{rem}\label{REM:isotropy}
		\begin{enumerate}
			\item\label{it:iso1} Two polynomials $p,q\in \bF[\vx]$ are $G$-equivalent if and only if there exists a $\si\in G$ such that $\si(p)=q$. Therefore, the $G$-equivalence relation of $p, q$ can be obtained via the computation of $Z_{p,q}$ in Section~\ref{sec:SET}. When $p=q$, the group $G_p$ {is isomorphic to} $Z_{p,p}$. Both of them are free abelian groups and a basis of $G_p$ can be given by a basis of $Z_{p,p}$.
			\item~\label{it:iso2} When $A = \<\si_{x_1}, \ldots, \si_{x_r}>$ with $1\leq r\leq n$, we can view $p,q$ as polynomials in $x_1, \ldots, x_r$ and the other variables are parameters. Then the $A$-equivalence relation of $p,q$ and a basis of the isotropy group $A_p$ are also available by algorithms in Section~\ref{sec:SET}.
			\item In general, let $A = \<\tau_{1}, \ldots, \tau_{r}>$, where $\{\tau_1, \ldots, \tau_r\} (1\leq r\leq n )$ are $\Z$-linearly independent. We will utilize Proposition~\ref{PROP:transformaiton} below to construct a difference isomorphism between $(\F(\vx), \tau_i)$ and $(\F(\vx), \si_{x_i})$ such that $\phi\circ\tau_i = \si_{x_i} \circ \phi$ for $1\leq i \leq r$. Let $B = \<\si_{x_1}, \ldots, \si_{x_r}>$. Then $p$ and $q$ are $A$-equivalent if and only if $\phi(p)$ and $\phi(q)$ are $B$-equivalent. Furthermore, we have $\tau_1^{a_1} \cdots \tau_r^{a_r} \in A_p$ if and only if $\si_{x_1}^{a_r} \cdots \si_{x_r}^{a_r} \in B_{\phi(p)}$ for any $a_1, \ldots, a_r \in \Z$.
		\end{enumerate}
	\end{rem}
	
	A structure property of the quotient group $G/G_p$ is given by Sato~\cite[Lemma A-3]{Sato1990} as follows.
	
	\begin{lem}\label{LEM:G/Gp free}
		$G/G_p$ is a free abelian group.
	\end{lem}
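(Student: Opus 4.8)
The plan is to combine the structure theorem for finitely generated abelian groups with a polynomial‑interpolation argument valid in characteristic zero. First I would reduce to a torsion statement: since $G=\<\si_{x_1},\ldots,\si_{x_n}>$ is free abelian of rank $n$, the quotient $G/G_p$ is a finitely generated abelian group, and such a group is free abelian precisely when it is torsion‑free. Hence it suffices to show that if $\tau\in G$ and $\tau^k\in G_p$ for some positive integer $k$, then already $\tau\in G_p$. Writing $\tau=\si_{x_1}^{a_1}\cdots\si_{x_n}^{a_n}$, the operator $\tau$ acts on $\bF[\vx]$ by the substitution $\vx\mapsto\vx+\va$ with $\va=(a_1,\ldots,a_n)\in\bZ^n$, so the claim to prove becomes the implication
\[
p(\vx+k\va)=p(\vx)\ \Longrightarrow\ p(\vx+\va)=p(\vx).
\]

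The key step is to convert this into a polynomial identity in one extra variable. Introduce an indeterminate $t$ and set $P(t):=p(\vx+t\va)\in\bF[\vx][t]$, the polynomial obtained from $p$ by the substitution $x_i\mapsto x_i+t\,a_i$. Iterating the hypothesis, and using that $\tau$ is invertible on $\bF[\vx]$ (so $\tau^{-k}(p)=p$ as well), gives $p(\vx+mk\va)=p(\vx)$ for every $m\in\bZ$, i.e. $P(mk)=P(0)$ in $\bF[\vx]$ for all $m\in\bZ$. Thus the polynomial $P(t)-P(0)\in\bF[\vx][t]$ vanishes at the infinitely many pairwise distinct values $mk$, $m\in\bZ$ — distinctness is exactly where $\mathrm{char}\,\bF=0$ is used. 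Since $\bF[\vx]$ is an integral domain, a nonzero univariate polynomial over it has at most $\deg$ roots in its fraction field $\bF(\vx)\supseteq\bF$; having infinitely many roots, $P(t)-P(0)$ must be the zero polynomial. Specializing $t=1$ yields $p(\vx+\va)=P(1)=P(0)=p(\vx)$, i.e. $\tau(p)=p$, so $\tau\in G_p$ and the torsion subgroup of $G/G_p$ is trivial.

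I do not expect a serious obstacle here: the only genuine idea is passing to the auxiliary variable $t$, which turns the group‑theoretic torsion condition into an honest polynomial identity, after which the characteristic‑zero hypothesis supplies enough sample points. I would also remark in passing that the same argument applies verbatim with any subgroup $A\leq G$ in place of $G$, showing that $A_p$ is a saturated (pure) subgroup of $A$ and hence that $A/A_p$ is free abelian — the form of the statement actually invoked in the later sections.
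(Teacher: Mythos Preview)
Your argument is correct. Reducing to torsion-freeness via the structure theorem for finitely generated abelian groups is the right move, and the interpolation step is sound: $P(t)=p(\vx+t\va)$ is a genuine polynomial in $t$ over the integral domain $\bF[\vx]$, the hypothesis $\tau^k\in G_p$ iterates to $P(mk)=P(0)$ for all $m\in\bZ$, and characteristic zero guarantees these are infinitely many distinct evaluation points, forcing $P(t)\equiv P(0)$ and hence $\tau\in G_p$. Your closing remark that the same argument works for any subgroup $A\le G$ is also correct and matches how the paper later uses the result (e.g.\ in Lemma~\ref{LEM:Gd/Hd free}).

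As for comparison: the paper does not supply its own proof of this lemma at all --- it simply attributes the statement to Sato~\cite[Lemma~A-3]{Sato1990} and moves on. So you have filled in a genuine gap in the exposition with a short, self-contained argument. The auxiliary-variable trick you use is essentially the standard way to prove that $G_p$ is a \emph{pure} (saturated) subgroup of $G$ in characteristic zero; it is elementary and avoids any appeal to the deeper machinery in Sato's paper.
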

	
	If $p\in \F[\vx]\setminus\bF$ is a non-constant polynomial, then $G_p$ is a proper subgroup of $G$. By Lemma ~\ref{LEM:G/Gp free}, we have $\rank(G_p) < \rank (G)$, where $\rank(G)$ denotes the rank of the free abelian group $G$. This property about the rank of isotropy groups plays a key role in the reduction method of solving rational summability problem and the existence problem of telescopers.
	
	If $n>1$, let $H=\la\si_{x_1},\ldots,\si_{x_{n-1}}\ra$ be the subgroup of $G$ generated by $\si_{x_1},\ldots,\si_{x_{n-1}}$. The isotropy group of $p$ in $H$ is $H_p=\{\tau\in H\mid \tau(p)=p\}$. By Lemma \ref{LEM:G/Gp free}, both $G/G_p$ and $H/H_p$ are free abelian groups. So the ranks of $G_p$ and $H_p$ are strictly less than those of $G$ and $H$ respectively if $p$ has positive degree in $x_1$.
	
	\begin{lem}\label{LEM:Gd/Hd free}
		$G_p/H_p$ is a free abelian group of $\rank (G_p/H_p)\leq1$.
	\end{lem}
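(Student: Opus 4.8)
The plan is to realize $H_p$ as the kernel of a natural homomorphism from $G_p$ onto a subgroup of $\Z$, and then invoke the fact that subgroups of $\Z$ are free abelian of rank at most one.

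First I would record the elementary identity $H_p = H\cap G_p$. Indeed, since $H$ is a subgroup of $G$, we have $H_p=\{\tau\in H\mid \tau(p)=p\}=H\cap\{\si\in G\mid\si(p)=p\}=H\cap G_p$. In particular $H_p$ is a subgroup of $G_p$ and, everything being abelian, the quotient $G_p/H_p=G_p/(G_p\cap H)$ is well defined.

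Next, because $G=\la\si_{x_1},\ldots,\si_{x_n}\ra$ is free abelian on the generators $\si_{x_1},\ldots,\si_{x_n}$ and $H=\la\si_{x_1},\ldots,\si_{x_{n-1}}\ra$ is the direct summand spanned by the first $n-1$ of them, the ``last exponent'' map
\[
\varepsilon\colon G\to\Z,\qquad \si_{x_1}^{a_1}\cdots\si_{x_n}^{a_n}\mapsto a_n,
\]
is a surjective homomorphism with kernel exactly $H$, so that $G/H\cong\Z$. Restricting $\varepsilon$ to $G_p$ gives a homomorphism $G_p\to\Z$ whose kernel is $G_p\cap H=H_p$; equivalently, by the second isomorphism theorem, $G_p/H_p=G_p/(G_p\cap H)\cong \varepsilon(G_p)\le\Z$. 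Hence $G_p/H_p$ is isomorphic to a subgroup of $\Z$.

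Finally, every subgroup of $\Z$ is either trivial or infinite cyclic, and in both cases it is a free abelian group of rank at most one; this is exactly the assertion $\rank(G_p/H_p)\le 1$ of Lemma~\ref{LEM:Gd/Hd free}. There is essentially no real obstacle in this argument: the only points needing (routine) care are the identification $H_p=H\cap G_p$ and the explicit description of $G/H$ as $\Z$, both of which are immediate from the definitions and from $G$ being free abelian with the $\si_{x_i}$ as a basis. (One could alternatively deduce freeness from the general fact that subgroups of free abelian groups are free, but for a subgroup of $\Z$ this is unnecessary.)
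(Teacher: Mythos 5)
Your proof is correct and takes essentially the same approach as the paper: both use the ``last exponent'' homomorphism to $\Z$ with kernel $H$ (or $H_p$ on $G_p$) and conclude that $G_p/H_p$ embeds into $\Z$. The only cosmetic difference is that you define the map on $G_p$ and invoke the isomorphism theorem, while the paper defines it directly on the quotient $G_p/H_p$ and checks well-definedness and injectivity by hand.
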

	\begin{proof}
		Define a group homomorphism $\varphi: G_p/H_p\to\Z$ by \[\si_{x_1}^{k_1}\cdots\si_{x_n}^{k_n}H_p\mapsto k_n.\]
		It can be verified that $\varphi$ is well-defined. For any $\tau_1,\tau_2\in G_p$, if they are in the same coset of $H_p$ in $G_p$, then $\tau_1\tau_2^{-1}\in H_p$. This implies $\tau_1\tau_2^{-1}\in H$ and hence $\varphi(\tau_1H_p)=\varphi(\tau_2H_p)$. Moreover, the converse is true since $G_p\cap H=H_p$. So $\varphi$ is injective. Then we have $G_p/H_p\cong \im \varphi=k\Z$ for some integer $k\in \Z$.
		So $G_p/H_p$ is a free abelian group generated by $\varphi^{-1}(k)$.
	\end{proof}
        \begin{rem}\label{REM:isotropy_normalization}
            Let $p$ be a polynomial in $\bF[\vx]$. By Remark~\ref{REM:isotropy}.\eqref{it:iso1}, one can compute a basis $\{\tau_1,\tau_2,\ldots,\tau_r\}$ of $G_p$. If $\tau_i\in H$ for all $i=1,\ldots, r$, then $G_p=H_p$ and $G_p/H_p =\bf\{1\bf\}$. So $\rank(G_p/H_p)= 0$ and $\{\tau_1,\tau_2,\ldots,\tau_r\}$ is a basis of $H_p$. If $\tau_\ell\notin H$ for some $\ell\in\{1,\ldots,r\}$, then $\rank(G_p/H_p)=1$. Write $\tau_i =\sigma_{x_1}^{b_{i,1}}\cdots\sigma_{x_n}^{b_{i,n}}$ with $b_{i,j}\in \set Z$ for each $i=1,\ldots,r$. Let $B=(b_{i,j})\in \set Z^{r\times n}$. Since $\tau_\ell\notin H$, we have $b_{\ell,n}\neq 0$. Using unimodular row reduction, one can compute a unimodular matrix $U\in \set Z^{r\times r}$ such that $C=UB$, where $C=(c_{i,j})\in \set Z^{r\times n}$ satisfies $c_{1,n}=\gcd(b_{1,n},b_{2,n},\ldots, b_{r,n})\neq 0$ and $c_{i,n} =0$ for all $i=2,\ldots,r$. Let $\sigma_i=\sigma_{x_1}^{c_{i,1}}\cdots\sigma_{x_n}^{c_{i,n}}$ for each $i=1,\ldots,r$. Then $\{\sigma_1,\ldots,\sigma_r\}$ is another basis of $G_p$ because $U$ is an invertible matrix over $\set Z$. Moreover, $G_p/H_p = \<\bar \sigma_1>$ and $\{\sigma_2,\ldots,\sigma_r\}$ is a basis of $H_p$.
        \end{rem}

	\begin{example}\label{Eg:isotropy}Consider polynomials in $\bQ[x,y,z]$. Let $G = \<\si_x, \si_y, \si_z>$ and $H = \<\si_x, \si_y>$.
		\begin{enumerate}
			\item\label{it:isotropy1} For $p = x^2 + 2xy + z^2$, we have $G_p = H_p = \{\bf 1\}$.
			\item\label{it:isotropy2} For $p = (x - 3y)^2(y + z) + 1$, we have $G_p = \<\tau>$ and $H_p = \{\bf 1\}$, where $\tau = \si_x^3\si_y\si_z^{-1}$. So $G_p/H_p = \<\bar \tau>$, where $\bar \tau = \tau H_p$ denotes the coset in $G_p/H_p$ represented by $\tau\in G_p$.
			\item\label{it:isotropy3} Let $p = x + 2y + z$, we have $G_p = \<\tau_1,\tau_2>$ and $H_p = \<\tau_2>$, where $\tau_1 = \si_x \si_y^{- 1}\si_z$ and $\tau_2 = \si_x^2\si_y^{-1}$. So $G_p/H_p =\<\bar \tau_1>$.
		\end{enumerate}
	\end{example}
	\subsection{Orbital decompositions}\label{subsec:orbital decomp}
	A polynomial $p \in \F[\vx]$ is said to be {\em monic} if the leading coefficient of $p$ is $1$ under a fixed monomial order. Let $\hx_1$ denote the $m-1$ variables $x_2, \ldots, x_m$. For any subgroup $A$ of $G = \<\si_{x_1},\ldots,\si_{x_n}>$ and any polynomial $Q$ in $\bF(\hx_1) [x_1]$, one can classify all of the monic irreducible factors in $x_1$ of $Q$ into distinct $A$-orbits which leads to a factorization
	\begin{equation*}
		Q = c \cdot \prod_{i = 1}^I \prod_{j = 1}^{J_i} \tau_{i, j} (d_i)^{e_{i, j}},
	\end{equation*}
	where $c \in \bF(\hx_1)$, $I, J_i, e_{i, j} \in \N$, $\tau_{i, j} \in A$, $d_i\in \bF[\vx]$ being monic irreducible polynomials in distinct $A$-orbits, and for each $i$, $\tau_{i, j} (d_i) \neq\tau_{i, j^\prime} (d_i)$ if $1\leq j \neq j^\prime \leq J_i$. With respect to this fixed representation, we have the unique irreducible partial fraction decomposition for a rational function $f = P/Q \in \F(\vx)$ of the form
	\begin{equation}\label{EQ:orbital_parf}
		f =p + \sum_{i = 1}^I \sum_{j = 1}^{J_i} \sum_{\ell = 1}^{e_{i, j}} \frac{a_{i, j, \ell}}{\tau_{i, j} (d_i)^\ell},
	\end{equation}
	where $p, a_{i, j, \ell} \in \F(\hx_1)[x_1]$ with $\deg_{x_1} (a_{i, j, \ell}) < \deg_{x_1} (d_i)$ for all $i, j, \ell$. Note that the representation in~\eqref{EQ:orbital_parf} depends on the choice of representatives $d_i$ in distinct $A$-orbits. However, the sum $\sum_{j = 1}^{J_i}\frac{a_{i, j, \ell}}{\tau_{i, j} (d_i)^\ell}$ only depends on the multiplicity $\ell$ and the orbit $[d_i]_A$ instead of its representative $d_i$. Based on this fact, we shall formulate a unique decomposition of a rational function with respect to the group $A$. In this sense, we can decompose $\bF(\vx)$ as a vector space over $\bE=\bF(\hx_1)$.
	
	Given an irreducible polynomial $d\in\bF[\vx]$ with $\deg_{x_1}(d)>0$ and $j\in \N^+$, we define a subspace of $\bF(\vx)$
	
	\begin{equation}\label{EQ: subspace V}
		V_{[d]_A,j}=\Span_{\bE}\left\{\ \frac{a}{\tau (d)^j}\ \middle \vert a\in \bE[x_1],
		\tau\in A, \deg_{x_1}(a)<\deg_{x_1}(d)\right\}.
	\end{equation}
	For any fraction in $V_{[d]_A,j}$, the irreducible factors of its denominator are in the same $A$-orbit as $d$. Let $V_0=\bE[x_1]$ denote the set of all polynomials in $x_1$. By the irreducible partial fraction decomposition, any rational function $f\in\bF(\vx)$ can be uniquely written in the form
	\begin{equation}\label{EQ:f add decomp A}
		f=f_0+ \sum_{j}\sum_{[d]_A}f_{[d]_A,j},
	\end{equation}
	where $f_0\in V_0$ and $f_{[d]_A,j}$ are in distinct $ V_{[d]_A,j}$ spaces. Let $T_A$ be the set of all distinct $A$-orbits of monic irreducible polynomials in $\bF[\vx]$, whose degrees with respect to $x_1$ are positive. Then $\bF(\vx)$ has the following direct sum decomposition
	\begin{equation}\label{EQ:add decomp}
		\bF(\vx)=V_0\bigoplus\left(\bigoplus_{j\in \N^+}\bigoplus_{[d]_A\in T_A}V_{[d]_A,j}\right).
	\end{equation}
	\begin{defn}
		The decomposition~\eqref{EQ:add decomp} of $\F(\vx)$ is called the {\em orbital decomposition} of $\bF(\vx)$ with respect to the variable $x_1$ and the group $A$. If $f$ is written in the form~\eqref{EQ:f add decomp A}, we call $f_0$ and $f_{[d]_A, j}$ {\em orbital components} of $f$ with respect to $x_1$ and $A$.
	\end{defn}
	A key feature of subspaces $V_{[d]_A,j}$ is the $A$-invariant property. In the field of univariate rational functions, the orbital decomposition of $\bF(x_1)$ with respect to the group $A=\la\si_{x_1}\ra$ was first given in \cite{Karr1981} by Karr.

	\begin{lem}\label{LEM:G-invariant}
		If $f\in V_{[d]_A,j}$ and $P\in \bF(\hx_1)[A]$, then $P(f)\in V_{[d]_A,j}$.
	\end{lem}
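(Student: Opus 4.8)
The plan is to reduce the claim to a single group element acting on a single spanning element of $V_{[d]_A,j}$. Recall that $V_{[d]_A,j}$ is by definition the $\bE$-linear span (with $\bE=\bF(\hx_1)$) of the fractions $a/\tau(d)^j$ with $\tau\in A$, $a\in\bE[x_1]$ and $\deg_{x_1}(a)<\deg_{x_1}(d)$, and that an element $P\in\bF(\hx_1)[A]$ is a finite $\bE$-linear combination $P=\sum_{\tau\in A}c_\tau\,\tau$ with $c_\tau\in\bE$. Writing $f=\sum_k b_k\,g_k$ with $b_k\in\bE$ and $g_k=a_k/\sigma_k(d)^j$ a spanning element, we get
\[
P(f)=\sum_{\tau}c_\tau\,\tau\!\left(\sum_k b_k g_k\right)=\sum_{\tau,k}c_\tau\,\tau(b_k)\,\tau(g_k).
\]
Since each $\tau$ restricts to an $\bF$-automorphism of the subfield $\bE=\bF(x_2,\dots,x_m)$ (the shifts in $x_2,\dots,x_m$ stabilise it and $\si_{x_1}$ fixes it pointwise), we have $\tau(b_k)\in\bE$, so $P(f)$ is an $\bE$-linear combination of the elements $\tau(g_k)$. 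As $V_{[d]_A,j}$ is an $\bE$-subspace of $\bF(\vx)$, it therefore suffices to show that $\tau\big(a/\sigma(d)^j\big)\in V_{[d]_A,j}$ for all $\tau,\sigma\in A$ and all $a\in\bE[x_1]$ with $\deg_{x_1}(a)<\deg_{x_1}(d)$.

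For such a spanning element, $\tau$ being a ring automorphism gives
\[
\tau\!\left(\frac{a}{\sigma(d)^j}\right)=\frac{\tau(a)}{(\tau\sigma)(d)^j},
\]
and $\tau\sigma\in A$, so the denominator already has the required form. It remains to check that $\tau(a)\in\bE[x_1]$ with $\deg_{x_1}(\tau(a))<\deg_{x_1}(d)$. Writing $\tau=\si_{x_1}^{k_1}\cdots\si_{x_n}^{k_n}$, the operator $\tau$ acts on $\bE[x_1]$ by first applying the above automorphism of $\bE$ to the coefficients and then substituting $x_1\mapsto x_1+k_1$; both operations map $\bE[x_1]$ onto itself and leave the degree in $x_1$ unchanged. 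Hence $\tau(a)\in\bE[x_1]$ and $\deg_{x_1}(\tau(a))=\deg_{x_1}(a)<\deg_{x_1}(d)$, so $\tau\big(a/\sigma(d)^j\big)$ is once again one of the fractions appearing in the definition~\eqref{EQ: subspace V} of $V_{[d]_A,j}$. This proves $P(f)\in V_{[d]_A,j}$.

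I do not expect a genuine obstacle here: the whole content is the bookkeeping that the action of $A$ merely permutes (up to $\bE$-scalars) the spanning set of $V_{[d]_A,j}$. The only point deserving a word of care is that every shift stabilises the subring $\bE[x_1]$ and preserves the $x_1$-degree, which is exactly what keeps the constraint $\deg_{x_1}(a)<\deg_{x_1}(d)$ intact under the group action; this is also the place where the specific shape of the spanning elements (numerator of $x_1$-degree below that of the irreducible denominator) is used.
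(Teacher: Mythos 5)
Your proof is correct and takes essentially the same approach as the paper: expand $P(f)$ as an $\bE$-linear combination of shifts of spanning elements, use that $A$ is a group (so $\tau\sigma\in A$) and that shift operators preserve the degree in $x_1$, then conclude by linearity of the span. Your version is slightly more careful about noting that each $\tau\in A$ restricts to an automorphism of $\bE=\bF(\hx_1)$, which the paper's proof treats as implicit.
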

	\begin{proof}
		Let $f=\sum a_i/\tau_i(d)^j$ and $P=\sum p_\si \si$ with $p_\si\in\bF(\hx_1)$ and $\si\in A$. For any $\si\in A$, we have that $\si\tau_i$ is still in $A$, because $A$ is a group. Since the shift operators do not change the degree and multiplicity of a polynomial, we have $\deg_{x_1}(\si(a_i))<\deg_{x_1}(d)$ and then $\frac{p_\si \si(a_i)}{\si(\tau_i(d))^j}$ is in $V_{[d]_A,j}$. So $P(f)\in V_{[d]_A,j}$ by the linearity of the space.
	\end{proof}
	
	\begin{example}\label{Eg: f1_sum}
		Let $\bF = \bQ$, $\bE = \bQ(y, z)$ and $G = \<\si_x, \si_y, \si_z>$. Consider the rational function $f_1$ in $\bQ(x, y, z)$ of the form
		\[f_1 = \frac{x - z^2}{\underbrace{x^2 + 2xy + z^2}_{d_1 := d_{1,1}}} + \frac{x - y - 2z}{\underbrace{x^2 + 2xy + 2x + z^2}_{d_{1,2}}} + \frac{y + z^2}{\underbrace{x^2 + 2xy + 8x + 2y + z^2 - 2z + 8}_{d_{1,3}}}.\]	
		If $A = \<\si_x>$, then the orbital partial fraction decomposition of $f_1$ is
		\[f_1 = f_{1, 1} + f_{1, 2} + f_{1,3}\text{ with } f_{1,1}=\frac{x - z^2}{d_{1, 1}},~f_{1, 2} = \frac{x - y - 2z}{d_{1, 2}}\text{ and } f_{1, 3} = \frac{y + z^2}{d_{1, 3}},\]
		where $f_{1, i} \in V_{[d_{1, i}]_A, 1}$ for $i = 1, 2, 3$ and $d_1, d_{1, 2}, d_{1, 3}$ are in distinct $\<\si_x>$-orbits. If $A = \<\si_x, \si_y>$, then the orbital partial fraction decomposition of $f_1$ is
		\[f_1 = f_{1, 1} + f_{1, 2}\text{ with } f_{1,1}=\frac{x - z^2}{d_1} + \frac{x - y - 2z}{\si_y(d_1)}\text{ and } f_{1, 2} = \frac{y + z^2}{d_{1, 3}},\]
		where $f_{1, 1}\in V_{[d_1]_A, 1}$, $f_{1, 2}\in V_{[d_{1, 3}]_A, 1}$ and $d_1, d_{1,3}$ are in distinct $\<\si_x, \si_y>$-orbits. If $A = \<\si_x, \si_y, \si_z>$, then $f_1\in V_{[d_1]_A, 1}$ is one component in the orbital decomposition because
		\[f_1 = \frac{x - z^2}{d_1} + \frac{x - y - 2z}{\si_y(d_1)} + \frac{y + z^2}{\si_x\si_y^3\si_z^{-1}(d_1)}.\]
	\end{example}
	
	\begin{example}\label{Eg: f_sum}
		Let $\bF = \bQ$ and $G = \<\si_x, \si_y, \si_z>$. Consider the rational function $f = f_1 + f_2 + f_3$ in $\bQ(x, y, z)$ with $f_1$ given in Example~\ref{Eg: f1_sum},
		\[f_2 = \frac{x + z}{\underbrace{(x - 3y)^2(y + z) + 1}_{d_2}}~\text{and}~f_3 = \left(y + \frac{z}{y^2 + z - 1} - \frac{1}{y^2 + z}\right)\frac{1}{(\underbrace{x + 2y + z}_{d_3})^2}.\]
		If $A = G$, then the orbital partial fraction decomposition of $f$ is
		\[ f = f_1 + f_2 + f_3\quad\text{with}\quad f_i \in V_{[d_i]_G, 1} \text{ for } i = 1, 2\text{ and } f_3 \in V_{[d_3]_G, 2},\]
		where $d_1, d_2, d_3$ are in distinct $\<\si_x, \si_y, \si_z>$-orbits.
	\end{example}

	\section{The rational summability problem}\label{sec:summability}
	In this section, we solve the rational summability problem for multivariate rational functions and design an algorithm for rational summability testing. In Section~\ref{subsec:reduction_sum} we use a special orbital decomposition in Section~\ref{subsec:orbital decomp} to reduce the summability problem of a general rational function to its orbital components and then further to simple fractions by Abramov's reduction. In Section~\ref{subsec:sum criterion}, we use the structure of isotropy groups to reduce the number of variables in the summability problem inductively.

	\subsection{Orbital reduction for summability}\label{subsec:reduction_sum}
	Let $f$ be a rational function in $\bF(\vx)$, where $\vx = \{x_1, \ldots, x_m\}$. Recall that $\hx_1=\{x_2,\ldots,x_m\}$. Let $n$ be an integer such that $1\leq n\leq m$. We consider the $(\si_{x_1},\ldots,\si_{x_n})$-summability problem of $f$ in~$\bF(\vx)$. Let $G=\langle \si_{x_1}, \ldots, \si_{x_n}\rangle$. Taking $\bE=\bF(\hx_1)$ and $A=G$ in equality~\eqref{EQ: subspace V}, we get the subspace $V_{[d]_G,j}$ of $\bF(\vx)$
	
	\begin{equation*}
		V_{[d]_G,j}=\Span_{\bE}\left\{\ \frac{a}{\tau (d)^j}\ \middle \vert a\in \bE[x_1],
		\tau\in G, \deg_{x_1}(a)<\deg_{x_1}(d)\right\}.
	\end{equation*}
    where $j\in \bN^+$ and $d\in \bE[\vx]$ is irreducible with $\deg_{x_1}(d)>0$.
	According to Equation~\eqref{EQ:f add decomp A}, $f$ can be decomposed into the form
	\begin{equation}\label{EQ:f add decomp G}
		f=f_0+ \sum_{j}\sum_{[d]_G}f_{[d]_G,j},
	\end{equation}
	where $f_0\in V_0=\bE[x_1]$ and $f_{[d]_G,j}$ are in distinct $ V_{[d]_G,j}$ spaces. Then we obtain the orbital decomposition~\eqref{EQ:add decomp} of $\bF(\vx)$ with respect to the group $A=G$.
	
	\begin{lem}\label{LEM:red1}
		Let $f\in\bF(\vx)$. Then $f$ is $(\si_{x_1},\ldots,\si_{x_n})$-summable if and only if $f_0$ and each $f_{[d]_G,j}$ are $(\si_{x_1},\ldots,\si_{x_n})$-summable for all $[d]_G\in T_G$ and $j\in \N^+$.
	\end{lem}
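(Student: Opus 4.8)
The plan is to leverage the fact that the orbital decomposition~\eqref{EQ:add decomp G} is not merely a direct sum of $\bE$-subspaces of $\bF(\vx)$ but one in which \emph{every} summand is stable under the shift operators $\si_{x_1},\ldots,\si_{x_n}$, hence under each difference operator $\Delta_{x_i}$; granting this, the lemma follows by projecting a summability certificate componentwise. The \emph{if} direction is then immediate: for a fixed rational function $f$ the decomposition~\eqref{EQ:f add decomp G} has only finitely many nonzero terms, and $\Delta_{x_i}$ is additive, so a finite sum of $(\si_{x_1},\ldots,\si_{x_n})$-summable rational functions is again $(\si_{x_1},\ldots,\si_{x_n})$-summable.

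For the \emph{only if} direction I would first record the stability claim. For each summand $V_{[d]_G,j}$ this is Lemma~\ref{LEM:G-invariant} applied to the element $\si_{x_i}-{\bf 1}$ of the group ring $\bF(\hx_1)[G]$, whose action on $\bF(\vx)$ is exactly $\Delta_{x_i}$, so $\Delta_{x_i}(V_{[d]_G,j})\subseteq V_{[d]_G,j}$. For the polynomial summand $V_0=\bE[x_1]$ it is a one-line check: $\si_{x_1}$ fixes $\bE=\bF(\hx_1)$ and sends $x_1$ to $x_1+1$, while each $\si_{x_i}$ with $i\geq 2$ fixes $x_1$ and maps $\bE$ onto itself, so $\si_{x_i}(\bE[x_1])=\bE[x_1]$ and hence $\Delta_{x_i}(V_0)\subseteq V_0$. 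Consequently $\Delta_{x_i}$ commutes with every canonical projection of the direct sum~\eqref{EQ:add decomp G}. Now assume $f=\Delta_{x_1}(g_1)+\cdots+\Delta_{x_n}(g_n)$ with $g_1,\ldots,g_n\in\bF(\vx)$ and decompose each certificate as $g_i=g_{i,0}+\sum_{j}\sum_{[d]_G}g_{i,[d]_G,j}$ according to~\eqref{EQ:f add decomp G}. Projecting the identity $f=\sum_{i=1}^n\Delta_{x_i}(g_i)$ onto $V_0$ and using the stability just noted gives $f_0=\sum_{i=1}^n\Delta_{x_i}(g_{i,0})$, and projecting it onto $V_{[d]_G,j}$ gives $f_{[d]_G,j}=\sum_{i=1}^n\Delta_{x_i}(g_{i,[d]_G,j})$ for every $[d]_G\in T_G$ and $j\in\N^+$. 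Each of these identities exhibits the corresponding orbital component as $(\si_{x_1},\ldots,\si_{x_n})$-summable, which finishes the argument.

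I do not expect a serious obstacle: the entire content is the $\Delta_{x_i}$-invariance of the orbital decomposition, which reduces to Lemma~\ref{LEM:G-invariant} together with the trivial remark about $V_0$. The only point that deserves a little care is checking that $\Delta_{x_i}$ genuinely respects the direct sum --- that it does not move mass between $V_0$ and the $V_{[d]_G,j}$'s or between two distinct $V_{[d]_G,j}$'s --- and this is exactly what the stability of each individual summand secures.
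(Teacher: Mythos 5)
Your proof is correct and follows essentially the same route as the paper's: linearity of the $\Delta_{x_i}$'s for the \lq\lq if'' direction, and for the \lq\lq only if'' direction, decompose each certificate $g_i$ according to~\eqref{EQ:f add decomp G}, apply Lemma~\ref{LEM:G-invariant} to see that each $\Delta_{x_i}$ preserves the summand $V_{[d]_G,j}$, and invoke uniqueness of the orbital decomposition to match components. The only cosmetic difference is that you phrase the uniqueness step as \lq\lq$\Delta_{x_i}$ commutes with the canonical projections'' and spell out the $\Delta_{x_i}$-stability of $V_0$, which the paper leaves implicit.
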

	\begin{proof}
		The sufficiency is due to the linearity of difference operators $\Delta_{x_i}$. For the necessity, suppose $f=\sum_{i=1}^n\Delta_{x_i}(g^{(i)})$ with $g^{(i)}\in\bF(\vx)$. By the orbital decomposition of rational functions \eqref{EQ:f add decomp G}, we can write $f, g^{(i)}$ in the form
		\begin{equation*}
			f=f_0+\sum_{j}\sum_{[d]_G} f_{{[d]_G,j}}\quad\text{and}\quad g^{(i)}=g^{(i)}_{0}+\sum_{j}\sum_{[d]_G} g^{(i)}_{{[d]_G,j}} \ \text{ for } 1\leq i\leq n.
		\end{equation*}
		By the linearity of $\Delta_{x_i}$, we see
		\begin{equation*}
			f=\sum_{i=1}^n\Delta_{x_i} \left(g^{(i)}_{0}\right)+\sum_{j}\sum_{[d]_G}\left(\sum_{i=1}^n\Delta_{x_i}\left(g^{(i)}_{[d]_G,j}\right)\right).
		\end{equation*}
		By Lemma \ref{LEM:G-invariant}, it is another expression of $f$ with respect to $V_{[d]_G,j}$. Such a decomposition is unique, so $f_0=\sum_{i=1}^n\Delta_{x_i} (g^{(i)}_{0})$ and
		$ f_{[d]_G,j}=\sum_{i=1}^n\Delta_{x_i}(g^{(i)}_{[d]_G,j})$,
		which are $(\si_{x_1},\ldots,\si_{x_n})$-summable.
	\end{proof}
	
	Using Lemma \ref{LEM:red1}, we can reduce the summability problem of a rational function to its orbital components. Note that polynomials in $x_1$ are always $(\si_{x_1})$-summable. Thus Problem \ref{PROB:summability problem} can be reduced to that for rational functions in $V_{[d]_G,j}$, which are of the form
	\begin{equation}\label{EQ:red1}
		f = \sum_\tau\frac{a_{\tau}}{\tau(d)^j},
	\end{equation}
	where $\tau\in G$, $a_\tau\in \bF(\hx_1)[x_1]$, $d\in \bF[\vx]$ with $\deg_{x_1}(a_\tau)<\deg_{x_1}(d)$ and $d$
	is irreducible in $x_1$ over $\bF(\hx_1)$.
	
	Let $\si$ be an automorphism on $\bF(\vx)$ and $a,b\in \bF(\vx)$. Then for any integer $k\in \Z$, we have the reduction formula
	\begin{equation}\label{EQ:red formula}
		\frac{a}{\si^k(b)} = \si(h) - h + \frac{\si^{-k}(a)}{b},
	\end{equation}
	where $h = 0$ if $k = 0$, $h = \sum_{i=0}^{k-1} \frac{\si^{i-k}(a)}{\si^i(b)}$ if $k>0$ and $h = -\sum_{i=0}^{-k-1} \frac{\si^{i}(a)}{\si^{i+k}(b)}$ if $k<0$.
	For any $\tau=\si_{x_1}^{k_1}\cdots\si_{x_n}^{k_n}\in G$, applying the reduction formula~\eqref{EQ:red formula} with $\si=\si_{x_i}$ for $i=1,\ldots,n$, we get
	\begin{equation}\label{EQ:red formula2}
		\frac{a}{\si_{x_1}^{k_1}\cdots\si_{x_n}^{k_n}(b)}=\sum_{i=1}^n\left(\si_{x_i}(h_i)-h_i\right)+\frac{\si_{x_1}^{-k_1}\cdots\si_{x_n}^{-k_n}(a)}{b},
	\end{equation}
	where
	\[h_i = \left\{
	\begin{aligned}
		& \quad \quad 0, &\text{~if~} k_i = 0,\\
		&\sum_{\ell = 0}^{k_i - 1} \frac{\si_{x_i}^{\ell-k_i} \si_{x_{i - 1}}^{- k_{i - 1}} \cdots \si_{x_1}^{- k_1} (a)}{\si_{x_i}^\ell \si_{x_{i + 1}}^{k_{ i + 1}} \cdots \si_{x_n }^{k_n}(b)},&\text{~if~} k_i > 0,\\
		&-\sum_{\ell = 0}^{- k_i -1} \frac{\si_{x_i}^{\ell} \si_{x_{i - 1}}^{- k_{i - 1}} \cdots \si_{x_1}^{- k_1} (a)}{\si_{x_i}^{\ell + k_i} \si_{x_{i + 1}}^{k_{ i + 1}} \cdots \si_{x_n}^{k_n}(b)}, &\text{~if~} k_i < 0.
	\end{aligned}
	\right.\]
	for $i =1,\ldots,n$. The equation~\eqref{EQ:red formula2} is called the $(\si_{x_1}, \ldots, \si_{x_n})$-{\em reduction} formula. Rewriting every fraction of $f$ in~\eqref{EQ:red1} by the reduction formula~\eqref{EQ:red formula2}, we get the following lemma.
	\begin{lem}\label{LEM:red2}
		Let $f\in V_{[d]_G,j}$ be in the form~\eqref{EQ:red1}. Then we can decompose it into the form
		\begin{equation}\label{EQ:red2}
			f=\sum_{i=1}^n\Delta_{x_i}(g_i)+r \text{ with } r=\frac{a}{d^j},
		\end{equation}
		where $g_i\in\bF(\vx)$, $a=\sum_\tau\tau^{-1}(a_\tau)$ with $\deg_{x_1}(a)<\deg_{x_1}(d)$. In particular, $f$ is $(\si_{x_1},\ldots,\si_{x_n})$-summable if and only if $r$ is $(\si_{x_1},\ldots,\si_{x_n})$-summable.
	\end{lem}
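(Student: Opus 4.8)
The plan is to apply the $(\si_{x_1},\ldots,\si_{x_n})$-reduction formula~\eqref{EQ:red formula2} to each individual summand of $f$ and then to collect the resulting difference terms using the additivity of the operators $\Delta_{x_i}$.

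Concretely, I would fix a term $a_\tau/\tau(d)^j$ of $f$ as in~\eqref{EQ:red1} and write $\tau=\si_{x_1}^{k_1}\cdots\si_{x_n}^{k_n}$ with $k_i\in\Z$. Since each $\si_{x_i}$ is a ring automorphism of $\bF(\vx)$, one has $\tau(d)^j=\tau(d^j)=\si_{x_1}^{k_1}\cdots\si_{x_n}^{k_n}(d^j)$, so formula~\eqref{EQ:red formula2} applies directly with numerator $a_\tau$ and denominator $d^j$, producing $h_1^{(\tau)},\ldots,h_n^{(\tau)}\in\bF(\vx)$ such that
\[\frac{a_\tau}{\tau(d)^j}=\sum_{i=1}^n\Delta_{x_i}\bigl(h_i^{(\tau)}\bigr)+\frac{\tau^{-1}(a_\tau)}{d^j}.\]
Summing over all $\tau$ occurring in~\eqref{EQ:red1} and pulling the sums inside each $\Delta_{x_i}$, I obtain~\eqref{EQ:red2} with $g_i:=\sum_\tau h_i^{(\tau)}$ and $a:=\sum_\tau\tau^{-1}(a_\tau)$, so that $r=a/d^j$.

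Two things then need to be checked. First, that $a\in\bF(\hx_1)[x_1]$ with $\deg_{x_1}(a)<\deg_{x_1}(d)$: here I would observe that $\si_{x_1}$ sends $x_1$ to $x_1+1$ and fixes $\bF(\hx_1)$, whereas for $i\geq 2$ the operator $\si_{x_i}$ fixes $x_1$ and acts on $\bF(\hx_1)$, so in either case the image of a polynomial in $\bF(\hx_1)[x_1]$ is again such a polynomial with the same degree in $x_1$; applying this to $\tau^{-1}$ gives $\deg_{x_1}(\tau^{-1}(a_\tau))=\deg_{x_1}(a_\tau)<\deg_{x_1}(d)$ for every $\tau$, hence $\deg_{x_1}(a)<\deg_{x_1}(d)$ (with $\deg_{x_1}(0)=-\infty$ if $a$ vanishes). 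Second, the summability equivalence: by construction $f-r=\sum_{i=1}^n\Delta_{x_i}(g_i)$ is $(\si_{x_1},\ldots,\si_{x_n})$-summable, and since a difference of two $(\si_{x_1},\ldots,\si_{x_n})$-summable functions is again summable, $f$ is summable if and only if $r=f-(f-r)$ is.

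The argument is essentially bookkeeping built on top of the already-established reduction formula~\eqref{EQ:red formula2}; the only point demanding a little care is verifying that the shift operators $\tau^{-1}$ preserve both membership in $\bF(\hx_1)[x_1]$ and the degree in $x_1$, which is what lets the bound $\deg_{x_1}(a)<\deg_{x_1}(d)$ be inherited from the bounds $\deg_{x_1}(a_\tau)<\deg_{x_1}(d)$.
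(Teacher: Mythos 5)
Your proof is correct and follows exactly the route the paper indicates: apply the reduction formula~\eqref{EQ:red formula2} term by term to each fraction $a_\tau/\tau(d)^j$, collect the $\Delta_{x_i}$ pieces, and note that shift operators preserve both membership in $\bF(\hx_1)[x_1]$ and degree in $x_1$. The paper leaves this as an immediate consequence of~\eqref{EQ:red formula2} without writing out the bookkeeping; you have simply made those details explicit, including the summability equivalence via $f-r$ being manifestly summable.
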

	
	\begin{example}\label{Eg: f1_red}
		Consider the rational function $f_1\in \bQ(x, y, z)$ given in Example~\ref{Eg: f1_sum}. Then $f_1\in V_{[d_1]_G, 1}$ and it can be written as
		\[f_1 = \frac{x - z^2}{d_1} + \frac{x - y - 2z}{\si_y(d_1)} + \frac{y + z^2}{\si_x\si_y^3\si_z^{-1}(d_1)},\]
		where $d_1 = x^2 + 2xy + z^2.$ By applying the $(\si_x, \si_y, \si_z)$-reduction formula, we have
		\begin{equation*}
			f_1 = \Delta_x (u_1) + \Delta_y (v_1) + \Delta_z (w_1) + r_1 \text{ with } r_1 = \frac{2x - 1}{d_1},
		\end{equation*}
		where
		\[u_1 = \frac{y + z^2}{\si_y^3\si_z^{-1} (d_1)},~v_1 =  \frac{x-y + 1 - 2z}{d_1} + \sum_{\ell = 0}^2\frac{y + \ell - 3 + z^2}{\si_y^\ell\si_z^{-1}(d_1)}, ~w_1 = -  \frac{y   -3 + z^2}{\si_z^{-1} (d_1)}.\]
		Then $f_1$ is $(\si_x, \si_y, \si_z)$-summable if and only if $r_1$ is $(\si_x, \si_y, \si_z)$-summable.
	\end{example}
	
	The results in Lemmas \ref{LEM:red1} and~\ref{LEM:red2} are summarized as follows.
	The following lemma  reduces the rational summability problem from general rational functions to simple fractions.
	\begin{cor}\label{COR:summable}
		Let $f\in\bF(\vx)$. Then we can decompose $f$ into the form
		\begin{equation}\label{EQ:red mm}
			f=\sum_{i=1}^n\Delta_{x_i}(g_i)+r \text{ with } r=\sum_{i=1}^I\sum_{j=1}^{J_i}\frac{a_{i,j}}{d_i^j},
		\end{equation}
		where $g_i\in\bF(\vx)$, $a_{i,j}\in\bF(\hx_1)[x_1]$, $d_i\in\bF[\vx]$ with $\deg_{x_1}(a_{i, j})<\deg_{x_1}(d_i)$ and the $d_i$'s are monic irreducible polynomials in
		distinct $G$-orbits. Furthermore, $f$ is $(\si_{x_1},\ldots,\si_{x_n})$-summable if and only if each $a_{i, j}/d_i^j$ is $(\si_{x_1},\ldots,\si_{x_n})$-summable for all $i,j$ with $1\leq i\leq I$ and $1\leq j\leq J_i$.
	\end{cor}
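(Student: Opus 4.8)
The plan is to assemble the statement from the two reduction lemmas already proved—Lemma~\ref{LEM:red1} and Lemma~\ref{LEM:red2}—so that what remains is bookkeeping about orbits, multiplicities, and representatives. First I would write the orbital decomposition \eqref{EQ:f add decomp G} of $f$ with respect to $G=\langle\sigma_{x_1},\ldots,\sigma_{x_n}\rangle$, namely $f=f_0+\sum_j\sum_{[d]_G}f_{[d]_G,j}$ with $f_0\in V_0=\bF(\hx_1)[x_1]$ and $f_{[d]_G,j}\in V_{[d]_G,j}$. Since $f$ has only finitely many irreducible factors in its denominator, only finitely many pairs $([d]_G,j)$ contribute, so all sums are finite. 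By Lemma~\ref{LEM:red1}, $f$ is $(\sigma_{x_1},\ldots,\sigma_{x_n})$-summable if and only if $f_0$ and every $f_{[d]_G,j}$ is; and since a polynomial in $x_1$ is always $\Delta_{x_1}$-exact (solving $\Delta_{x_1}(\tilde g)=f_0$ degree by degree in $x_1$), the part $f_0$ imposes no obstruction and can be folded into the $\Delta_{x_1}$ term.

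Next, I would fix once and for all one monic irreducible representative $d$ of each orbit $[d]_G$ that occurs, and apply Lemma~\ref{LEM:red2} to each component $f_{[d]_G,j}$, which has the shape \eqref{EQ:red1}. Running the $(\sigma_{x_1},\ldots,\sigma_{x_n})$-reduction formula \eqref{EQ:red formula2} on each fraction of $f_{[d]_G,j}$ yields
\[
f_{[d]_G,j}=\sum_{i=1}^n\Delta_{x_i}\bigl(g_i^{[d]_G,j}\bigr)+\frac{a_{[d]_G,j}}{d^j},
\]
with $a_{[d]_G,j}\in\bF(\hx_1)[x_1]$ of $x_1$-degree less than $\deg_{x_1}(d)$, and $f_{[d]_G,j}$ is $(\sigma_{x_1},\ldots,\sigma_{x_n})$-summable if and only if $a_{[d]_G,j}/d^j$ is. Summing over all contributing $([d]_G,j)$, adding $f_0=\Delta_{x_1}(\tilde g)$, and setting $g_i$ to be the finite sum of all the $g_i^{[d]_G,j}$—with $\tilde g$ added to $g_1$—produces $f=\sum_{i=1}^n\Delta_{x_i}(g_i)+r$ with $r=\sum_{[d]_G,j}a_{[d]_G,j}/d^j$. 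Re-indexing the orbits appearing as $[d_1]_G,\ldots,[d_I]_G$ and letting $j$ run over $1,\ldots,J_i$ gives exactly \eqref{EQ:red mm} with $a_{i,j}:=a_{[d_i]_G,j}$. The claimed equivalence then follows by chaining: $f$ summable $\iff$ each orbital component summable (Lemma~\ref{LEM:red1}, the polynomial part being harmless) $\iff$ each $a_{i,j}/d_i^j$ summable (Lemma~\ref{LEM:red2}).

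The one subtlety—and what I view as the only real obstacle—is well-definedness under the choice of orbit representatives. The orbital decomposition \eqref{EQ:f add decomp G} is canonical only up to which $d$ is picked in each orbit, and the remainder produced by Lemma~\ref{LEM:red2} is attached to that specific $d$; so the representatives $d_1,\ldots,d_I$ must be chosen before invoking Lemma~\ref{LEM:red2} and then kept fixed. Once this is done, the uniqueness of the orbital decomposition together with the $G$-invariance of the spaces $V_{[d]_G,j}$ (Lemma~\ref{LEM:G-invariant}, already the engine behind Lemma~\ref{LEM:red1}) guarantees there is no cross-cancellation between distinct orbits or distinct multiplicities, so the equivalence is genuinely term-by-term in the $a_{i,j}/d_i^j$.
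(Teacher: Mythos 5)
Your proof is correct and follows the same route as the paper: the paper presents Corollary~\ref{COR:summable} explicitly as a ``summary'' of Lemmas~\ref{LEM:red1} and~\ref{LEM:red2}, combined with the standing observation that polynomials in $x_1$ are always $\Delta_{x_1}$-exact, which is exactly the chain you assemble. Your additional remarks (fixing orbit representatives before invoking Lemma~\ref{LEM:red2}, and noting the statement's $\Delta_{\sigma_{x_1}}$ should read $\Delta_{\sigma_{x_i}}$) are helpful bookkeeping but do not change the substance.
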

	\subsection{Summability criteria}\label{subsec:sum criterion}
	By Corollary~\ref{COR:summable}, we reduce the rational summability problem to that for simple fractions
	\begin{equation}\label{EQ:simple fraction}
		f = \frac{a}{d^j},
	\end{equation}
	where $j\in\N^+$, $a\in \bF(\hx_1)[x_1]$ and $d\in \bF[\vx]$ is irreducible with $\deg_{x_1}(a)<\deg_{x_1}(d)$.
	In this section, we shall present a criterion on the summability for such simple fractions.
	
	For the univariate summability problem, we recall the following well-known result in~\cite{Abramov1975, Abramov1995,Matusevich2000,Paule1995b,Pirastu1995b,AshCatoiu2005}. Since the univariate case is the base of our induction method, we give a proof for the sake of completeness.
	\begin{lem}\label{Lem:n=1}
		Let $f\in \F(\vx)$ be of the form~\eqref{EQ:simple fraction}. Then $f$ is $ (\si_{x_1})$-summable in $\bF(\vx)$ if and only if $a=0$.
	\end{lem}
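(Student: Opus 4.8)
The ``if'' direction is immediate: $a=0$ gives $f=0=\Delta_{x_1}(0)$. For the converse the plan is to run Abramov's classical pole-cancellation argument, organised through the orbital partial fraction decomposition of Section~\ref{subsec:orbital decomp} with $A:=\langle\si_{x_1}\rangle$. First I would pass to the coefficient field $\bE:=\F(\hx_1)$ and view $f=a/d^{j}$ as a univariate rational function in $x_1$ over $\bE$. One may assume $\deg_{x_1}(d)\ge 1$, since otherwise the hypothesis $\deg_{x_1}(a)<\deg_{x_1}(d)$ already forces $a=0$; and by Gauss's lemma $d$ remains irreducible in $\bE[x_1]$.

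The structural fact I would establish next is that the shifts $\si_{x_1}^{k}(d)$, $k\in\Z$, are pairwise non-associate in $\bE[x_1]$, i.e.\ the $A$-orbit of $d$ is infinite. Indeed, a relation $\si_{x_1}^{m}(d)=u\,d$ with $u\in\bE^{\ast}$ and $m\neq 0$ forces $u=1$ upon comparing leading coefficients in $x_1$ (these lie in $\F[\hx_1]$ and are fixed by $\si_{x_1}$), hence $d(x_1+m,\hx_1)=d(x_1,\hx_1)$, which is impossible once $\deg_{x_1}(d)\ge 1$ because the root set of $d$ in $x_1$ would then be invariant under translation by $m$.

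Next, assume $f=\Delta_{x_1}(g)$ with $g\in\F(\vx)$, and take the orbital decomposition~\eqref{EQ:add decomp} of $g$ with respect to $x_1$ and $A$. Since $\Delta_{x_1}=\si_{x_1}-\mathbf{1}$ lies in $\bF(\hx_1)[A]$, Lemma~\ref{LEM:G-invariant} shows that $\Delta_{x_1}$ respects this decomposition block by block; comparing the block of $\Delta_{x_1}(g)=a/d^{j}$ supported on the orbit $[d]_A$, and writing that block of $g$ as $\sum_{k\in\Z}\sum_{\ell\ge 1}\frac{b_{k,\ell}}{\si_{x_1}^{k}(d)^{\ell}}$ with $b_{k,\ell}\in\bE[x_1]$, $\deg_{x_1}(b_{k,\ell})<\deg_{x_1}(d)$, and only finitely many nonzero, uniqueness of partial fractions yields the recurrences $\si_{x_1}(b_{k-1,\ell})-b_{k,\ell}=0$ for all $(k,\ell)\neq(0,j)$ together with $\si_{x_1}(b_{-1,j})-b_{0,j}=a$. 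Solving these is routine: for each $\ell$ the homogeneous relation $\si_{x_1}(b_{k-1,\ell})=b_{k,\ell}$ (valid for all $k$ if $\ell\neq j$, and for all $k\neq0$ if $\ell=j$), combined with injectivity of $\si_{x_1}$ and finiteness of the support, propagates $b_{k,\ell}=0$ first to all $k\ge 0$ by iterating upward from $b_{0,\ell}$ and then to all $k\le -1$ by iterating downward; in particular $b_{0,j}=b_{-1,j}=0$, whence $a=\si_{x_1}(b_{-1,j})-b_{0,j}=0$.

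I expect the only point requiring care to be the block-comparison step: one must check that $\Delta_{x_1}$ does not mix the orbit $[d]_A$ with the polynomial part $V_0$ or with the other orbit blocks, so that the $[d]_A$-block of $\Delta_{x_1}(g)$ is exactly $a/d^{j}$. This is precisely the $A$-invariance of the subspaces $V_{[d]_A,\ell}$ provided by Lemma~\ref{LEM:G-invariant} together with the uniqueness of the orbital decomposition, so once those are invoked the remainder reduces to the elementary recurrence manipulation sketched above.
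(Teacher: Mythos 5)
Your proof is correct and follows essentially the same route as the paper's: both restrict $g$ to the $[d]_A$-orbital block, use the fact that the shifts $\si_{x_1}^k(d)$ are pairwise distinct once $\deg_{x_1}(d)\ge 1$, and argue from the finiteness of the shift-support that the resulting telescoping recurrence forces all coefficients (and hence $a$) to vanish. The only stylistic difference is that you propagate zeros forward and backward through the recurrence $\si_{x_1}(b_{k-1})=b_k$, whereas the paper (assuming $a\neq 0$) looks at the two extreme indices and notes that their distinct shifts of $d$ cannot both equal $d$ — the same finiteness argument in contrapositive form.
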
	
	\begin{proof}
		The sufficiency is trivial since $f=\Delta_{x_1}(0)$. To show the necessity, suppose $f$ is $ (\si_{x_1})$-summable but $a\neq 0$. Since $f=a/d^j\in V_{[d]_G,j}$, by the proof of Lemma~\ref{LEM:red1} we can further assume $f=\Delta_{x_1}(g)$ for some $g\in V_{[d]_G,j}$. Write $g$ in the form $g=\sum_{i=\ell_0}^{\ell_1}a_i/\si_{x_1}^i(d)^j$ with $a_{\ell_0}a_{\ell_1}\neq0$. Then
		\begin{equation*}
			f=\Delta_{x_1}(g)=\sum_{i=\ell_0}^{\ell_1+1}\frac{\tilde{a}_i}{\si_{x_1}^i(d)^j},
		\end{equation*}
		where $\tilde{a}_i=\si_{x_1}(a_{i-1})-a_i$ for $\ell_0+1\leq i\leq\ell_1$, $\tilde{a}_{\ell_0}=-a_{\ell_0}$ and $\tilde{a}_{\ell_1+1}=\si_{x_1}(a_{\ell_1})$. Note that $\tilde a_{\ell_0}$ and $\tilde a_{\ell_{1}+1}$ are nonzero. For any integer $i\in \Z$, $\si_{x_1}^i(d)$ is still an irreducible polynomial. However, there is only one irreducible factor in the denominator of $f=a/d^j$. So we must have $\si_{x_1}^i(d)=d$ for some nonzero integer $i$. It implies that $d$ is free of $x_1$. This is a contradiction because $d$ has positive degree in $x_1$.
	\end{proof}
	For the multivariate summability problem with $n>1$, let $G=\<\si_{x_1},\ldots,\si_{x_n}>$ and $H=\<\si_{x_1},\ldots,\si_{x_{n-1}}>$. The isotropy groups of the polynomial $d$ in $G$ and $H$ are denoted by $G_d$ and $H_d$, respectively, i.e.,
	\[G_d=\{\tau\in G\mid \tau(d) = d\}\quad \text{and} \quad H_d=\{\tau\in H\mid \tau(d)=d\}.\]
	By Lemma \ref{LEM:Gd/Hd free}, we know either $\rank (G_d/H_d)=0$ or $\rank (G_d/H_d)=1$.
	
	When $\rank(G_d/H_d)=0$, the summability problem in $n$ variables can be reduced to that in $n-1$ variables.

	\begin{lem}\label{THM:rank=0}
		Let $f = a/d^j\in \F(\vx)$ be of the form~\eqref{EQ:simple fraction}. If $n>1$ and $\rank(G_d/H_d)=0$, then $f$ is $(\si_{x_1},\ldots,\si_{x_n})$-summable in $\bF(\vx)$ if and only if $f$ is $(\si_{x_1},\ldots,\si_{x_{n-1}})$-summable in $\bF(\vx)$.
	\end{lem}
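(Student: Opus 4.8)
We only need the forward implication: if $f$ is $(\si_{x_1},\ldots,\si_{x_n})$-summable then it is $(\si_{x_1},\ldots,\si_{x_{n-1}})$-summable; the converse is trivial, taking the $n$-th certificate to be $0$. The plan is, first, to use (the proof of) Lemma~\ref{LEM:red1} to reduce to certificates inside the single orbital subspace: since $f=a/d^j\in V_{[d]_G,j}$, we may write $f=\sum_{i=1}^{n}\Delta_{x_i}(g_i)$ with all $g_i\in V_{[d]_G,j}$.

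The crucial point is that the hypothesis $\rank(G_d/H_d)=0$, together with Lemma~\ref{LEM:Gd/Hd free}, forces $G_d=H_d$, hence $G_d\subseteq H:=\langle\si_{x_1},\ldots,\si_{x_{n-1}}\rangle$. Therefore the homomorphism $\psi\colon G\to\Z$ sending $\si_{x_1}^{k_1}\cdots\si_{x_n}^{k_n}$ to $k_n$ kills $G_d$, so it induces a well-defined function $\phi$ on the orbit $[d]_G\cong G/G_d$ with $\phi(d)=0$, $\phi(\si_{x_n}(\tau(d)))=\phi(\tau(d))+1$, and $\phi(\si_{x_i}(\tau(d)))=\phi(\tau(d))$ for $1\le i\le n-1$. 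Grouping the irreducible partial fractions $b/\tau(d)^j$ according to the value of $\phi$ on $\tau(d)$ gives a $\Z$-grading $V_{[d]_G,j}=\bigoplus_{k\in\Z}V^{(k)}$ on which $\si_{x_1},\ldots,\si_{x_{n-1}}$ (and hence $\Delta_{x_1},\ldots,\Delta_{x_{n-1}}$) act preserving the degree, while $\si_{x_n}$ (resp.\ $\si_{x_n}^{-1}$) raises (resp.\ lowers) the degree by $1$ and still maps $V_{[d]_G,j}$ into itself by Lemma~\ref{LEM:G-invariant}. Moreover $f\in V^{(0)}$.

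The rest is a descent on the width $k_{\max}-k_{\min}$ of the $\phi$-support of $g_n$. Suppose $g_n\neq 0$. If $k_{\max}\neq -1$, comparing the degree-$(k_{\max}+1)$ homogeneous parts of the two sides of $f=\sum_{i=1}^{n}\Delta_{x_i}(g_i)$ — the left side being $0$ in that degree since $f\in V^{(0)}$, and $g_n^{(k_{\max}+1)}=0$ by maximality — yields $\si_{x_n}(g_n^{(k_{\max})})=-\sum_{i=1}^{n-1}\Delta_{x_i}(g_i^{(k_{\max}+1)})$; applying $\si_{x_n}^{-1}$ shows that $g_n^{(k_{\max})}$, and hence $\Delta_{x_n}(g_n^{(k_{\max})})$, is a $(\si_{x_1},\ldots,\si_{x_{n-1}})$-boundary, which can be absorbed into $g_1,\ldots,g_{n-1}$, replacing $g_n$ by $g_n-g_n^{(k_{\max})}$ with strictly smaller $\phi$-support. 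If $k_{\max}=-1$, then $g_n$ is supported in negative degrees, so $k_{\min}\neq 0$, and the symmetric argument applied to the degree-$k_{\min}$ part peels off $g_n^{(k_{\min})}$. Iterating, $g_n$ is reduced to $0$, leaving $f=\sum_{i=1}^{n-1}\Delta_{x_i}(g_i)$, i.e.\ $f$ is $(\si_{x_1},\ldots,\si_{x_{n-1}})$-summable.

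I expect the heart of the argument — and the main obstacle — to be the setup in the middle paragraph: recognizing that $\rank(G_d/H_d)=0$ is exactly the condition that produces a $\Z$-grading of $V_{[d]_G,j}$ respected by the shift operators in the stated way (well-definedness of $\phi$ on the orbit, directness of $\bigoplus_k V^{(k)}$, degree-preservation of $\Delta_{x_1},\ldots,\Delta_{x_{n-1}}$, and the degree-$\pm1$ behaviour of $\si_{x_n}^{\pm1}$). Once this grading is in place the descent is routine bookkeeping; the only care needed is that a boundary term peeled off the top or bottom of $g_n$ can be transferred to the other certificates while staying inside $V_{[d]_G,j}$, which uses Lemma~\ref{LEM:G-invariant} and the commutativity of $\si_{x_n}^{\pm1}$ with $\Delta_{x_i}$ for $i<n$. (As a sanity check, the same scheme with $n=1$, $H=\{\mathbf 1\}$ recovers Lemma~\ref{Lem:n=1}.)
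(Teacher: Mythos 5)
Your proof is correct, and it takes a genuinely different route from the paper's.  Both arguments start the same way (Lemma~\ref{LEM:red1} puts all certificates inside $V_{[d]_G,j}$) and both hinge on the same structural consequence of $\rank(G_d/H_d)=0$, namely $G_d=H_d\subseteq H$, which you package as a $\Z$-grading on $V_{[d]_G,j}$ by the $\si_{x_n}$-exponent and which the paper phrases as ``all $\si_{x_n}^\ell(\mu)$ lie in distinct $H$-orbits.''  From there, however, the strategies diverge.  The paper argues by contradiction: it first normalizes $g_n$ modulo $(\si_{x_1},\ldots,\si_{x_{n-1}})$-boundaries using the $(n-1)$-variable Abramov-type reduction, getting a finite telescoping expression $\sum_{\ell=0}^{\rho}\lam_\ell/\si_{x_n}^\ell(\mu)^j$, substitutes back, and then observes that both extremal orbits $[\mu]_H$ and $[\si_{x_n}^{\rho+1}(\mu)]_H$ would have to coincide with $[d]_H$, forcing $\rho+1=0$, a contradiction.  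Your descent instead compares homogeneous parts of $f=\sum_i\Delta_{x_i}(g_i)$ degree by degree: the top (or, when $k_{\max}=-1$, bottom) graded piece of $g_n$ is shown directly to be an $H$-boundary, and you peel it off while transferring $\Delta_{x_n}(g_n^{(k_{\max})})$ into the other certificates (legitimate, since $\Delta_{x_n}$ commutes with $\Delta_{x_i}$ for $i<n$ and $\si_{x_n}^{\pm1}$ preserves $V_{[d]_G,j}$ by Lemma~\ref{LEM:G-invariant}).  This avoids the contradiction hypothesis and the auxiliary reduction of $g_n$, and it is constructive: it explicitly builds the $(\si_{x_1},\ldots,\si_{x_{n-1}})$-certificates rather than merely concluding they must exist.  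The one bookkeeping point you handle correctly and should keep explicit in a write-up is the $k_{\max}=-1$ case, where the degree-$(k_{\max}+1)=0$ part of $f$ is not zero, so you must peel from the bottom instead; your observation that $k_{\min}<0$ there closes the case cleanly, and the width $k_{\max}-k_{\min}$ strictly decreases at each step, so the descent terminates.
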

	\begin{proof}
		The sufficiency is obvious by definition. For the necessity, suppose $f$ is $(\si_{x_1},\ldots,\si_{x_n})$-summable but not $(\si_{x_1},\ldots,\si_{x_{n-1}})$-summable. By the orbital decomposition of $f$ in \eqref{EQ:f add decomp G} and Lemma~\ref{LEM:red1}, we get
		\begin{equation}\label{EQ:rk=0_summable}
			f=\Delta_{x_1}(g_1)+\cdots+\Delta_{x_n}(g_n)
		\end{equation}
		with $g_1,\ldots,g_n$ in the same subspace $V_{[d]_G,j}$ as $f$. As an analogue to \eqref{EQ:red2} in $n-1$ variables $x_1,\ldots,x_{n-1}$, we can decompose $g_n$ as
		\begin{equation}\label{EQ:rk=0_decomp}
			g_n=\sum_{i=1}^{n-1}\Delta_{x_i}(u_i)+\sum_{\ell=0}^{\rho}\frac{\lam_\ell}{\si_{x_n}^{\ell}(\mu)^j},
		\end{equation}
		where $u_i\in \F(\vx)$, $\rho\in\N$, $\lam_\ell\in\F(\hx_1)[x_1]$, $\mu\in \F[\vx]$ with $\deg_{x_1}(\lam_{\ell})<\deg_{x_1}(d)$ and $\mu$ is in the same $G$-orbit as $d$.
		
		Furthermore, we can assume $\lam_{0}\lam_{\rho}\neq0$ and each nonzero $\lam_\ell/\si_{x_n}^{\ell}(\mu)^j$ is not $(\si_{x_1},\ldots,\si_{x_{n-1}})$-summable. Substituting $g_n$ in \eqref{EQ:rk=0_decomp} into \eqref{EQ:rk=0_summable}, we get
		
		\begin{equation}\label{EQ:rk=0_red}
			f+\sum_{\ell=0}^{\rho+1}\frac{\tilde{\lam}_\ell}{\si_{x_n}^{\ell}(\mu)^j}=\sum_{i=1}^{n-1}\Delta_{x_i}(h_i),
		\end{equation}
		where $\tilde{\lam}_{0}=\lam_0$, $\tilde{\lam}_{\rho+1}=-\si_{x_n}(\lam_\rho)$, $\tilde{\lam}_\ell=\lam_\ell-\si_{x_n}(\lam_{\ell-1})$ for all $1\leq\ell\leq\rho$ and $h_i=g_i+\Delta_{x_n}(u_i)$ for all $1\leq i\leq n-1$.

		Since $\rank(G_d/H_d)=0$ and $G_d=G_\mu$, it follows that all $\si_{x_n}^\ell(\mu)$ with $\ell\in \Z$ are in distinct $H$-orbits.
		In particular, $[\mu]_H,[\si_{x_n}(\mu)]_H\ldots,[\si_{x_n}^{\rho+1}(\mu)]_H$ are distinct $H$-orbits. On the other hand, the left hand side of \eqref{EQ:rk=0_red} is $(\si_{x_1},\ldots,\si_{x_{n-1}})$-summable, but $\tilde{\lam}_{0}/\mu^j$ is not $(\si_{x_1},\ldots,\si_{x_{n-1}})$-summable according to the assumption. By Lemma \ref{LEM:red1} (in $n-1$ variables), the only choice is that $\mu\sim_H d$. Similarly, $\si_{x_n}^{\rho+1}(\mu)\sim_H d$ and hence $\mu\sim_H \si_{x_n}^{\rho+1}(\mu)$. This leads to a contradiction since $\rho$ is a non-negative integer.
	\end{proof}

	\begin{lem}\label{LEM:exchange bases}
		Let $f\in \F(\vx)$ and $K$ be a subgroup of $G=\<\si_{x_1},\ldots,\si_{x_{ n}}>$ with rank $r~(1\leq r\leq n)$. If $\{\si_{i}\}_{i=1}^{r}$ and $\{\tau_i\}_{i=1}^{r}$ are two bases of $K$, then $f$ is $(\si_1,\ldots,\si_r)$-summable if and only if $f$ is $(\tau_1,\ldots,\tau_r)$-summable.
	\end{lem}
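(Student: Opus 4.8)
The plan is to recognise the set of $(\si_1,\ldots,\si_r)$-summable rational functions as an object attached only to the group $K$, not to any chosen basis. For a subgroup $K\leq G$ set
\[
I_K:=\sum_{\rho\in K}\Delta_\rho(\bF(\vx))\subseteq\bF(\vx),
\]
the internal sum of the additive subgroups $\Delta_\rho(\bF(\vx))=\{\Delta_\rho(h):h\in\bF(\vx)\}$ (each is a subgroup because $\Delta_\rho$ is additive). This $I_K$ is manifestly independent of any choice of generators of $K$. The heart of the proof will be the claim that for \emph{every} basis $\{\si_1,\ldots,\si_r\}$ of $K$ one has
\[
I_K=\Delta_{\si_1}(\bF(\vx))+\cdots+\Delta_{\si_r}(\bF(\vx)),
\]
that is, $f\in I_K$ exactly when $f$ is $(\si_1,\ldots,\si_r)$-summable. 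Granting this, $f$ is $(\si_1,\ldots,\si_r)$-summable if and only if $f\in I_K$ if and only if $f$ is $(\tau_1,\ldots,\tau_r)$-summable, which is the assertion of the lemma.

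To prove the claim, the inclusion ``$\supseteq$'' is trivial, and for ``$\subseteq$'' it suffices to show $\Delta_\rho(h)\in\sum_{i=1}^r\Delta_{\si_i}(\bF(\vx))$ for every $\rho\in K$ and $h\in\bF(\vx)$. I would first record three elementary identities, valid for all $\tau,\rho\in G$ and $h\in\bF(\vx)$ and each obtained by expanding $\Delta_\rho(h)=\rho(h)-h$:
\[
\Delta_{\bf 1}(h)=0,\qquad \Delta_{\tau\rho}(h)=\Delta_\tau\big(\rho(h)\big)+\Delta_\rho(h),\qquad \Delta_{\tau^{-1}}(h)=-\Delta_\tau\big(\tau^{-1}(h)\big).
\]
From the first two, a telescoping sum shows that for a single generator $\si_i$ and any $e\in\Z$ one has $\Delta_{\si_i^{e}}(h)=\Delta_{\si_i}(h_e)$, with $h_e=\sum_{k=0}^{e-1}\si_i^{k}(h)$ when $e\ge0$ and $h_e=-\sum_{k=1}^{-e}\si_i^{-k}(h)$ when $e<0$.

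Next I would use that, since $\{\si_1,\ldots,\si_r\}$ generates $K$, any $\rho\in K$ can be written $\rho=\si_1^{e_1}\cdots\si_r^{e_r}$ with $e_i\in\Z$. Peeling off the rightmost factor with the cocycle identity gives $\Delta_\rho(h)=\Delta_{\si_1^{e_1}\cdots\si_{r-1}^{e_{r-1}}}\big(\si_r^{e_r}(h)\big)+\Delta_{\si_r^{e_r}}(h)$; the second term lies in $\Delta_{\si_r}(\bF(\vx))$ by the telescoping step, and the first lies in $\sum_{i=1}^{r-1}\Delta_{\si_i}(\bF(\vx))$ by an induction on the number of generators (base case $r=1$ being exactly the telescoping step). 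Hence $\Delta_\rho(h)\in\sum_{i=1}^{r}\Delta_{\si_i}(\bF(\vx))$, which establishes the claim and the lemma.

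I do not expect a genuine obstacle here: the whole argument rests on the cocycle identity together with telescoping, and everything takes place inside $\bF(\vx)$ because each $\si_i$ is an $\bF$-automorphism. The only points needing care are bookkeeping --- correctly treating negative exponents via $\Delta_{\tau^{-1}}$, and phrasing the ``peel off one factor'' reduction as a clean induction. An equivalent route would be to note that the transition between two bases of $K$ is given by a matrix in $GL_r(\Z)$ and to push $\Delta$ through this change of coordinates, but this is just a repackaging of the same cocycle computation.
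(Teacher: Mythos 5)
Your proposal is correct and is essentially the paper's argument, just repackaged. The paper proves the operator identity $\tau-{\bf 1}=\sum_{i=1}^r(\si_i-{\bf 1})\tilde\si_i$ in $\bF[K]$ (its Lemma~\ref{Formula}, established by peeling off a generator and telescoping powers, exactly as in your cocycle-plus-telescoping step) and then applies it to each $\tau_i$ in $f=\sum_i\Delta_{\tau_i}(h_i)$; your definition of $I_K=\sum_{\rho\in K}\Delta_\rho(\bF(\vx))$ and the claim $I_K=\sum_{i=1}^r\Delta_{\si_i}(\bF(\vx))$ is the same computation with an invariant object named up front, which is a pleasant conceptual gloss but not a different route.
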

	To prove the basis change property of the summability problem in Lemma~\ref{LEM:exchange bases}, we first show the following lemma. It can be seen as a variant of the reduction formula~\eqref{EQ:red formula2}. Since it is useful in computation, we give a detailed proof by induction.
	\begin{lem}\label{Formula}
		Let $\si_1, \ldots, \si_r$ be elements in $G$ and $K=\la \si_1,\ldots,\si_r\ra$ be the subgroup of $G$ generated by $\si_1,\ldots, \si_r$. Then for every $\tau\in K$,
		\begin{equation*}
			{\tau}-{\bf 1}=(\si_1-{\bf 1})\tilde{\si}_1+\cdots+({\si_r}-{\bf 1})\tilde{\si}_r,
		\end{equation*}
		for some $\tilde{\si}_i\in \bF[K]$.
	\end{lem}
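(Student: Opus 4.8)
The plan is to use that $K$, as a subgroup of the free abelian group $G$, is itself free abelian, so that every $\tau\in K$ can be written in the ordered form $\tau=\si_1^{k_1}\cdots\si_r^{k_r}$ with $k_1,\ldots,k_r\in\Z$, and that consequently the group algebra $\bF[K]$ is commutative. I would then reduce everything to the one-generator case by repeatedly applying the elementary identity $AB-{\bf 1}=(A-{\bf 1})B+(B-{\bf 1})$, which holds for any $A,B\in\bF[K]$.

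First I would settle the case $r=1$, i.e. $\tau=\si^k$, by distinguishing the sign of $k$. For $k=0$ there is nothing to do. For $k>0$, a short induction on $k$ gives $\si^k-{\bf 1}=(\si-{\bf 1})({\bf 1}+\si+\cdots+\si^{k-1})$. For $k<0$, I would rewrite $\si^k-{\bf 1}=-\si^k(\si^{-k}-{\bf 1})=(\si-{\bf 1})\bigl(-\si^k({\bf 1}+\si+\cdots+\si^{-k-1})\bigr)$, using that $\si^k$ is a unit of $\bF[K]$. In every case the cofactor lies in $\bF[K]$, so $\si^k-{\bf 1}=(\si-{\bf 1})P$ for some $P\in\bF[K]$.

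Next I would induct on $r$, the base case being the previous paragraph. For the inductive step, write $\tau'=\si_2^{k_2}\cdots\si_r^{k_r}\in\la\si_2,\ldots,\si_r\ra$, so that $\tau=\si_1^{k_1}\tau'$ and hence $\tau-{\bf 1}=(\si_1^{k_1}-{\bf 1})\tau'+(\tau'-{\bf 1})$. Applying the one-generator case to $\si_1^{k_1}-{\bf 1}$ and the induction hypothesis (for the subgroup $\la\si_2,\ldots,\si_r\ra$) to $\tau'-{\bf 1}$, then absorbing the trailing factor $\tau'$ into the first cofactor, produces an expansion $\tau-{\bf 1}=\sum_{i=1}^r(\si_i-{\bf 1})\tilde{\si}_i$ with every $\tilde{\si}_i\in\bF[K]$, which is the claim.

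I do not expect a genuine obstacle here. The only points that need care are: (i) commutativity of $\bF[K]$, used so that the ordered form of $\tau$ makes sense and so that powers of a single generator may be commuted past $\si-{\bf 1}$; and (ii) in the negative-exponent case, checking that the cofactor stays inside $\bF[K]$ rather than a larger ring of operators — which is automatic since $K$ is a group, so $\si^k\in K$ is invertible in $\bF[K]$. Both follow immediately from $K$ being an abelian group.
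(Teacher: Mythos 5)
Your argument is correct and matches the paper's proof in all essentials: both handle the one-generator case by a sign split on the exponent (your cofactor $-\si^k({\bf 1}+\si+\cdots+\si^{-k-1})$ for $k<0$ is exactly the paper's $\mu=-\sum_{i=0}^{-k-1}\si^{i+k}$), and both then induct on $r$ via the identity $\tau-{\bf 1}=(\si_1^{k_1}-{\bf 1})\si_2^{k_2}\cdots\si_r^{k_r}+(\si_2^{k_2}\cdots\si_r^{k_r}-{\bf 1})$, which is your $AB-{\bf 1}=(A-{\bf 1})B+(B-{\bf 1})$. One small remark: you invoke that $K$ is free abelian, but that is stronger than needed and in fact $\si_1,\ldots,\si_r$ need not be a basis of $K$; all that is used is that $K$ is abelian and generated by $\si_1,\ldots,\si_r$, so that the ordered product form $\tau=\si_1^{k_1}\cdots\si_r^{k_r}$ exists (possibly non-uniquely) and $\bF[K]$ is commutative.
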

	
	\begin{proof}
		We prove this lemma by induction on the number of $\si_i$.
		If $r=1$, then $\tau=\si_1^{k_1}$ for some $k_1\in\Z$. We have $\si_1^{k_1}-{\bf1}=(\si_1-{\bf 1})\mu$, where $\mu = 0$ if $k_1 = 0$, $\mu=\sum_{i=0}^{k_1-1}\si_1^{i}$ if $k_1>0$ and $\mu=-\sum_{i=0}^{-k_1-1}\si_1^{i+k_1}$ if $k_1<0$. If $r\geq 2$, assume that the conclusion holds for $r-1$. Write $\tau=\si_1^{k_1}\cdots\si_r^{k_r}$ for some $k_1,\ldots,k_r\in\Z$. Then
		
		\[\tau-{ \bf 1}=\left(\si_1^{k_1}-{ \bf 1}\right)\si_2^{k_2}\cdots\si_r^{k_r}+\left(\si_2^{k_2}\cdots\si_r^{k_r}-{ \bf 1}\right).\]
		If $\si_2^{k_2}\cdots\si_r^{k_r}={ \bf 1}$, then we are done. Otherwise, by the inductive hypothesis, we get ${\tau}-{\bf 1}=(\si_1-{\bf 1})\tilde{\si}_1+\cdots+({\si_r}-{\bf 1})\tilde{\si}_r$ for some $\tilde{\si}_1,\ldots,\tilde{\si}_r\in \bF[K]$. In fact, the above argument gives the following explicit expression
		\[\tilde\si_i =\left\{ \begin{aligned}
			&\quad \quad 0 & \text{if } k_i = 0,\\
			&\sum_{\ell = 0}^{k_i -1} \si_{i}^{\ell} \si_{i + 1}^{k_{i + 1}}\cdots\si_{r}^{k_r} & \text{if } k_i >0,\\
			&-\sum_{\ell = 0}^{-k_i -1} \si_{i}^{\ell + k_i} \si_{i + 1}^{k_{i + 1}}\cdots\si_{r}^{k_r} & \text{if } k_i< 0,
		\end{aligned}\right.\]
		for $i = 1, \ldots, r$.
	\end{proof}
	
	\begin{proof}[Proof of Lemma~\ref{LEM:exchange bases}]
		Suppose $f$ is $(\tau_1, \ldots, \tau_r)$-summable. This means
		\begin{equation}\label{EQ: tau-sum}
			f = \Delta_{\tau_1} (h_1) + \cdots + \Delta_{\tau_r} (h_r)
		\end{equation}
		for some $h_1, \ldots, h_r\in \bF(\vx)$. For each $i = 1, \ldots, r$, since $\tau_i \in \<\si_1, \ldots, \si_r>$, it follows from Lemma~\ref{Formula} that ${\tau_i}-{\bf 1}=(\si_1-{\bf 1})\tilde{\si}_{i, 1}+\cdots+({\si_r}-{\bf 1})\tilde{\si}_{i, r}$ for some
		$\tilde{\si}_{i, j}\in \bF[K]$ with $K$ being the subgroup generated by $\si_1, \ldots, \si_r$. Applying this operator to $h_i$ yields that
		\begin{equation}\label{EQ: tau->si}
			\Delta_{\tau_i} (h_i) = \Delta_{\si_1} (h_{i, 1}) + \cdots + \Delta_{\si_r} (h_{i, r}),
		\end{equation}
		where $h_{i, j} = \tilde\si_{i, j} (h_i)$ for $j = 1, \ldots, r$. Combining Equations~\eqref{EQ: tau-sum} and \eqref{EQ: tau->si}, we have
		\begin{equation*}
			f = \sum_{i = 1}^r \Delta_{\tau_i} (h_i) = \sum_{i = 1}^r \sum_{j = 1}^r \Delta_{\si_j} (h_{i, j}) = \sum_{j = 1}^r \Delta_{\si_j}\left(\sum_{i = 1}^r h_{i, j}\right),
		\end{equation*}
		where the last equality follows from the linearity of $\Delta_{\si_j}$. Thus $f$ is $(\si_1, \ldots, \si_r)$-summable. Similarly, the other direction is also true.
	\end{proof}

	\begin{thm} \label{THM:rank=1}
		Let $f = a/d^j\in \F(\vx)$ be of the form~\eqref{EQ:simple fraction}. Let $\{\tau_i\}_{i=1}^r(1\leq r< n)$ be a basis of $G_d$ (take $\tau_1={\bf 1}$, if $G_d=\{\bf1\}$). Then
		$f$ is $(\si_{x_1},\ldots,\si_{x_n})$-summable if and only if
		\[a=\Delta_{\tau_1}(b_1)+\cdots+\Delta_{\tau_r}(b_r)\]
		for some $b_i\in \F(\hx_1)[x_1]$ with $\deg_{x_1}(b_i)<\deg_{x_1}(d)$ for all $1\leq i\leq r$.
	\end{thm}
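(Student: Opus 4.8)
The plan is to prove both implications, the ``if'' direction being short and the ``only if'' direction the substantive one, which I would split into two steps: first reduce $(\si_{x_1},\dots,\si_{x_n})$-summability of $f$ to $(\tau_1,\dots,\tau_r)$-summability of $f$ (with certificates still in $\bF(\vx)$), and then transfer this to the numerator $a$. Throughout write $\bE=\bF(\hx_1)$ and recall that $f=a/d^j$ lies in the subspace $V_{[d]_G,j}$ of the orbital decomposition of $\bF(\vx)$ with respect to $x_1$ and $G=\<\si_{x_1},\dots,\si_{x_n}>$. For sufficiency, assume $a=\sum_{i=1}^r\Delta_{\tau_i}(b_i)$ with the stated $b_i$. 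Since $\tau_i\in G_d$ we have $\tau_i(d)=d$, so $\Delta_{\tau_i}(b_i/d^j)=\Delta_{\tau_i}(b_i)/d^j$ and hence $f=\sum_{i=1}^r\Delta_{\tau_i}(b_i/d^j)$ is $(\tau_1,\dots,\tau_r)$-summable; applying Lemma~\ref{Formula} to the generating set $\si_{x_1},\dots,\si_{x_n}$ of $G$ to rewrite each ${\tau_i}-{\bf 1}$ as $\sum_{k=1}^n(\si_{x_k}-{\bf 1})\tilde\si_{i,k}$ with $\tilde\si_{i,k}\in\bF[G]$, and summing, expresses $f$ as $\sum_{k=1}^n\Delta_{x_k}(\cdot)$, so $f$ is $(\si_{x_1},\dots,\si_{x_n})$-summable. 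The convention $G_d=\{{\bf 1}\}$, $\tau_1={\bf 1}$ is consistent here: the hypothesis then reads $a=0$, i.e.\ $f=0$.

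For necessity, suppose $f$ is $(\si_{x_1},\dots,\si_{x_n})$-summable. By Lemma~\ref{LEM:G/Gp free} the quotient $G/G_d$ is free abelian; hence, by the structure theory of subgroups of a free abelian group of finite rank (all elementary divisors equal $1$ because the quotient is torsion-free), there is a basis $\rho_1,\dots,\rho_n$ of $G$ with $\{\rho_1,\dots,\rho_s\}$ a basis of $G_d$, where $s=\rank G_d$. By Lemma~\ref{LEM:exchange bases} applied with $K=G$, $f$ is $(\rho_1,\dots,\rho_n)$-summable. I would then peel off $\rho_n,\rho_{n-1},\dots,\rho_{s+1}$ one at a time: when the ambient group is $\<\rho_1,\dots,\rho_k>$ with $k>s$, the isotropy group of $d$ in it is still $G_d=\<\rho_1,\dots,\rho_s>\subseteq\<\rho_1,\dots,\rho_{k-1}>$, the relevant quotient of isotropy groups has rank $0$, and the argument of Lemma~\ref{THM:rank=0} --- which uses only the orbital decomposition relative to a subgroup (Section~\ref{subsec:orbital decomp}) and the fact that this containment forces the shifts $\rho_k^\ell(d')$ (for $d'$ in the $\<\rho_1,\dots,\rho_k>$-orbit of $d$) to lie in pairwise distinct $\<\rho_1,\dots,\rho_{k-1}>$-orbits --- lets us drop $\rho_k$. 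After $n-s$ steps $f$ is $(\rho_1,\dots,\rho_s)$-summable, and a second use of Lemma~\ref{LEM:exchange bases} (now with $K=G_d$) makes it $(\tau_1,\dots,\tau_r)$-summable for the prescribed basis; when $s=0$ one peels off every $\rho_i$ and obtains $f=0$, hence $a=0=\Delta_{\bf 1}(0)$.

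For the second step, write $f=\sum_{i=1}^r\Delta_{\tau_i}(g_i)$ with $g_i\in\bF(\vx)$ and apply the orbital decomposition of $\bF(\vx)$ with respect to $x_1$ and the subgroup $A=G_d$. Since the subspaces $V_{[d']_{G_d},j'}$ are $G_d$-invariant (Lemma~\ref{LEM:G-invariant} with $A=G_d$) and the decomposition is unique, the projection argument in the proof of Lemma~\ref{LEM:red1} shows we may take $g_i\in V_{[d]_{G_d},j}$. But $\tau(d)=d$ for every $\tau\in G_d$, so $V_{[d]_{G_d},j}=\{\,b/d^j : b\in\bE[x_1],\ \deg_{x_1}(b)<\deg_{x_1}(d)\,\}$; writing $g_i=b_i/d^j$ and using $\tau_i(d)=d$ once more gives $a/d^j=f=\sum_{i=1}^r\Delta_{\tau_i}(b_i)/d^j$, whence $a=\Delta_{\tau_1}(b_1)+\cdots+\Delta_{\tau_r}(b_r)$ with the $b_i$ of the required shape.

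The main obstacle is the peeling in the first step: one must check that the proof of Lemma~\ref{THM:rank=0} survives verbatim when the standard shifts $\si_{x_i}$ are replaced by an arbitrary basis $\rho_1,\dots,\rho_n$ of $G$, which needs the orbital-decomposition machinery and Lemma~\ref{LEM:G-invariant} in the generality of an arbitrary subgroup of $G$, the key point being that $G_d\subseteq\<\rho_1,\dots,\rho_{k-1}>$ is exactly the rank-zero hypothesis required at each stage. An alternative that avoids changing the basis is a direct induction on $n$ that splits on $\rank(G_d/H_d)\in\{0,1\}$: the value $0$ is disposed of by Lemma~\ref{THM:rank=0} together with the inductive hypothesis, and the value $1$ is the crux, handled by the same peeling idea after choosing $\tau_r$ with minimal positive $\si_{x_n}$-exponent.
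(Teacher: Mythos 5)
Your proof is correct, but it takes a genuinely different route from the paper's. The paper proves necessity by induction on $n$ with a case split on $\rank(G_d/H_d)\in\{0,1\}$: the rank-$0$ case is disposed of via Lemma~\ref{THM:rank=0}, while the rank-$1$ case is the technical heart, requiring a carefully chosen $\tau_r$ with minimal positive $\si_{x_n}$-exponent and a computation modulo the subspace $W$ spanned by the $\Delta_{\tau_1},\ldots,\Delta_{\tau_{r-1}}$ applied to degree-bounded numerators. Your approach dissolves the rank-$1$ case entirely: because $G/G_d$ is free (Lemma~\ref{LEM:G/Gp free}), $G_d$ is a direct summand of $G$, so there is a $\bZ$-basis $\rho_1,\ldots,\rho_n$ of $G$ extending a basis of $G_d$; then every peeling step from $\<\rho_1,\ldots,\rho_k>$ down to $\<\rho_1,\ldots,\rho_{k-1}>$ has rank-$0$ isotropy quotient, since the isotropy of $d$ in $\<\rho_1,\ldots,\rho_k>$ equals $G_d\subseteq\<\rho_1,\ldots,\rho_{k-1}>$ for all $k>s$. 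The price is restating Lemma~\ref{THM:rank=0} (and the base case Lemma~\ref{Lem:n=1}, the reduction formula, and Lemma~\ref{LEM:red1}) for an arbitrary basis of $G$ rather than the standard shifts; your identification of the crux --- that the $\rho_k^\ell(\mu)$ fall into pairwise distinct $\<\rho_1,\ldots,\rho_{k-1}>$-orbits, which holds because $\rho_1,\ldots,\rho_n$ is a free basis and the isotropy group lies inside $\<\rho_1,\ldots,\rho_{k-1}>$ --- is exactly what is needed, and the rest of that proof transfers. Your final step, passing from ``$f$ is $(\tau_1,\ldots,\tau_r)$-summable'' to the displayed formula for $a$ by running the argument of Lemma~\ref{LEM:red1} against the orbital decomposition with $A=G_d$ (where $V_{[d]_{G_d},j}$ collapses to the single space $\{b/d^j:\deg_{x_1}b<\deg_{x_1}d\}$), is a clean observation the paper does not isolate: there the same conclusion is embedded in the $W$-modular computation. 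The trade-off is that the paper's inductive structure mirrors (and justifies) the recursive Algorithm~\ref{Alg:summable}, which reduces one ordinary shift at a time, whereas your basis-adapted peeling gives a conceptually simpler proof of the criterion but would need Lemma~\ref{LEM:exchange bases} and Proposition~\ref{PROP:transformaiton} to be turned into the same algorithm.
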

	\begin{proof}
		The sufficiency follows from the fact that $f=\sum_{i=1}^{r}\Delta_{\tau_i}(b_i/d^j)$ and Lemma~\ref{Formula}. For the necessity, we proceed by induction on $n$.
		If $n=1$, then $G_d$ is a trivial group and the univariate case follows from Lemma~\ref{Lem:n=1}. If $n>1$, suppose the inductive hypothesis is true for $n-1$ as follows.

		{\em If $\{\theta_i\}_{i=1}^{s}$ is a basis of $H_d$, then $f$ is $(\si_{x_1},\ldots,\si_{x_{n-1}})$-summable if and only if $a=\sum_{i=1}^{s}\Delta_{\theta_i}(b_i)$ for some $b_i\in \F(\hx_1)[x_1]$ with $\deg_{x_1}(b_i)<\deg_{x_1}(d)$ for all $1\leq i\leq s$.}
		
		Now we proceed by a case distinction according to the rank of $ G_d/H_d$ which is either $0$ or $1$ by Lemma~\ref{LEM:Gd/Hd free}.
		If $\rank (G_d/H_d)=0$, then $H_d=G_d$. The conclusion follows from Lemma \ref{THM:rank=0} and the inductive hypothesis.
		If $\rank (G_d/H_d)=1$, by Lemma \ref{LEM:exchange bases}, we may assume that $\{\tau_i\}_{i=1}^{r}$ is a basis of $G_d$ such that
		$H_d=\la\tau_1,\ldots,\tau_{r-1}\ra$ and $G_d/H_d=\la\bar{\tau}_r\ra$. Here $\bar\tau_r$ represents the element $\tau_r H_d$ with $\tau_r\in G_d$. Then we can choose $\tau_r=\si_{x_1}^{-k_1}\cdots\si_{x_{n-1}}^{-k_{n-1}}\si_{x_n}^{k_n}$ such that $k_n$ is a positive integer. Otherwise, replace $\tau_r$ by $\tau_r^{-1}$. Since $\bar\tau_r$ is a generator of $G_d/H_d$, we have that $k_n$ is the smallest positive integer such that $\si_{x_n}^{k_n}(d)\sim_H d$.
		
		By the decomposition~\eqref{EQ:add decomp}, we can assume $f=\Delta_{x_1}(g_1)+\cdots+\Delta_{x_n}(g_n)$ with $g_i\in V_{[d]_G,j}$. In here, using Lemma~\ref{LEM:red2}, $g_n$ can be decomposed as
		\begin{equation*}
			g_n=\sum_{i=1}^{n-1}\Delta_{x_i}(u_i)+\sum_{\ell=0}^{k_n-1}\frac{\lam_\ell}{\si_{x_n}^{\ell}(d)^j},
		\end{equation*}
		where $u_i\in \F(\vx)$ and $\lam_\ell\in\F(\hx_1)[x_1]$ with $\deg_{x_1}(\lam_{\ell})<\deg_{x_1}(d)$. Then we have
		\begin{equation}\label{EQ:rk=1_red}
			f-\Delta_{x_n}\left(\sum_{\ell=0}^{k_n-1}\frac{\lam_\ell}{\si_{x_n}^{\ell}(d)^j}\right)=\sum_{i=1}^{n-1}\Delta_{x_i}(h_i),
		\end{equation}
		where $h_i=g_i+\Delta_{x_n}(u_i)$.
		Note that $\si_{x_n}^{k_n}(d)=\si_{x_1}^{k_1}\cdots\si_{x_{n-1}}^{k_{n-1}}(d)$
		and apply the reduction formula~\eqref{EQ:red formula2} to simplify \eqref{EQ:rk=1_red}.
		We get
		\begin{equation}\label{EQ:rk=1_red2}
			\tilde{f}:=\sum_{\ell=0}^{k_n-1}\frac{\tilde{\lam}_\ell}{\si_{x_n}^{\ell}(d)^j}=\sum_{i=1}^{n-1}\Delta_{x_i}(\tilde{h}_i),
		\end{equation}
		where $\tilde{h}_i\in\F(\vx)$, $\tilde{\lam}_0=a+\lam_{0}-\si_{x_1}^{-k_1}\cdots\si_{x_{n-1}}^{-k_{n-1}}\si_{x_n}(\lam_{k_n-1})$ and
		$\tilde{\lam}_\ell=\lam_\ell-\si_{x_n}(\lam_{\ell-1})$ for $1\leq\ell\leq k_n-1$.
		
		Note that $[d]_H,[\si_{x_n}(d)]_H,\ldots,[\si_{x_n}^{k_n-1}(d)]_H$ are distinct $H$-orbits due to the minimality of $k_n$. From the equation~\eqref{EQ:rk=1_red2}, $\tilde{f}$ is $(\si_{x_1},\ldots,\si_{x_{n-1}})$-summable. So by Lemma~\ref{LEM:red1}, each $\frac{\tilde{\lambda}_\ell}{\si_{x_n}^{\ell}(d)^j}$
		is $(\si_{x_1},\ldots,\si_{x_{n-1}})$-summable for $0\leq \ell\leq k_n-1$. Let $W$ denote the vector subspace of $\bF(\vx)$ over $\bF$ consisting of all elements in the form of $\sum_{i=1}^{r-1}\Delta_{\tau_i}(b_i)$ with $b_i\in \bF(\hx_1)[x_1]$ and $\deg_{x_1}(b_i)<\deg_{x_1}(d)$. (If $r=1$, take $W=\{0\}$.)
		If two rational functions $g,h\in\bF(\vx)$ satisfy the property that $g-h\in W$, we say $g, h$ are congruent modulo $W$, denoted by $g\equiv h$ (mod $W$). Since $H_{d}=H_{\si_{x_n}^\ell(d)}$, we apply the inductive hypothesis to conclude that
		
		\begin{equation*}
			\left\{ \begin{array}{ll}
				0 \equiv a + \lam_{0}-\si_{x_1}^{-k_1}\cdots\si_{x_{n-1}}^{-k_{n-1}}\si_{x_n}(\lam_{k_n-1})&(\text{mod } W)\\
				0 \equiv \lambda_1 - \sigma_{x_n}(\lambda_{0})&(\text{mod } W)\\
				\qquad\qquad\vdots&\\
				0 \equiv \lambda_{k_n-1} - \sigma_{x_n}(\lambda_{k_n-2})&(\text{mod } W).
			\end{array} \right.
		\end{equation*}
		Since $W$ is $G$-invariant, it follows from the equations that
		\[a\equiv\si_{x_1}^{-k_1}\cdots\si_{x_{n-1}}^{-k_{n-1}}\si_{x_n}^{k_n}(\lam_{0})-\lambda_0 \equiv \Delta_{\tau_r}(\lambda_0)\quad(\text{mod } W).\]
		This completes the proof.
	\end{proof}
	\begin{rem}
		For the bivariate case with $n=2$, Theorem~\ref{THM:rank=1} coincides with the known criterion in~\cite[Theorem 3.3]{HouWang2015} and~\cite[Theorem 3.7]{ChenSinger2014}. In this case, $\rank(G_d)\leq 1$ and $H_d=\{\bf 1\}$. If $\rank { (G_d)}=0$, then $a/d^j$ is $(\si_{x_1},\si_{x_2})$-summable in $\bF(\vx)$ if and only if $a=0$. If $\rank{ (G_d)}=1$ and $G_d$ is generated by $\tau=\si_{x_1}^{\ell_1}\si_{x_2}^{-\ell_2}\in G$ for some $\ell_2\neq 0$, then $a/d^j$ is $(\si_{x_1},\si_{x_2})$-summable if and only if $a=\si_{x_1}^{\ell_1}\si_{x_2}^{-\ell_2}(b)-b$ for some $b\in\bF(\hx_1)[x_1]$ with $\deg_{x_1}(b)<\deg_{x_1}(d)$.
	\end{rem}
	\begin{example}\label{Eg:shift}
		Let $f = 1/(x_1^s + \cdots+x_n^s)\in \Q(x_1,\ldots,x_n)$ with $s,n \in \N \setminus\{0\}$. Let $G_d$ be the isotropy group of $d=x_1^s + \cdots+x_n^s$ in $G=\la \si_{x_1},\ldots,\si_{x_n}\ra$. Then we can decide for all cases the $(\si_{x_1},\ldots, \si_{x_n})$-summability of $f$ in $\bQ(x_1, \ldots, x_n)$.
		\begin{enumerate}
			\item If $s=1$ and $n>1$, then $d$ is irreducible. The rank of $G_d$ is $n-1$ and one basis is given by $\tau_1,\ldots,\tau_{n-1}$ with $\tau_i=\si_{x_i}\si_{x_{i+1}}^{-1}$ for $i=1,\ldots,n-1$. Since $1=\tau_1(x_1)-x_1$, it follows that $f$ is $(\si_{x_1},\ldots,\si_{x_n})$-summable. In fact, we have
			\[\frac{1}{x_1+\cdots+x_n}=\Delta_{x_1}\left(
			\frac{x_1}{x_1+\cdots+x_n}\right)+\Delta_{x_2}\left(
			\frac{-x_1-1}{x_1+\cdots+x_n}\right).\]
			This means $f$ is $(\si_{x_1},\si_{x_2})$-summable, so is $(\si_{x_1},\ldots, \si_{x_n})$-summable.
			\item If $s\geq1$ and $n=1$, then $f = 1/x_1^s$. Since the isotropy group of $x_1$ in $\<\si_{x_1}>$ is $\{\bf 1\}$, by Theorem~\ref{THM:rank=1}, we get that $f$ is not $ (\si_{x_1})$-summable.
			\item If $s>1$ and $n=2$, then $f= 1/(x_1^s + x_2^s) = \sum_{j = 1}^s a_j/(x_1-\beta_jx_2)$, where $\beta_j$'s are distinct roots of $z^s = -1$ and $a_j = {1}/{(s (\beta_jx_2)^{s - 1})}$. There exists $j\in\{1, \ldots, s\}$ such that $\beta_j\notin \Z$. Then for $d_j= x_1 - \beta_jx_2$, we have $G_{d_j}=\{\bf 1\}$. So $a_j/d_j$ is not $(\si_{x_1},\si_{x_2})$-summable in $\bC(x_1, x_2)$ and by Lemma~\ref{LEM:red1}, neither is $f$. Hence $f$ is not $(\si_{x_1},\si_{x_2})$-summable in $\bQ(x_1, x_2)$. This result has appeared in~\cite[Example 3.8]{ChenSinger2014}.
			\item If $s>1$ and $n>2$, then $d$ is irreducible. Since $G_d=\{\bf 1\}$, by Theorem~\ref{THM:rank=1}, it follows that $f$ is not $(\si_{x_1},\ldots,\si_{x_n})$-summable.
		\end{enumerate}
	\end{example}
	Now we transfer the $(\tau_1, \ldots, \tau_r)$-summability problem to the $(\si_{x_1}, \ldots, \si_{x_r})$-summability problem.
	\begin{prop}\label{PROP:transformaiton}
		Let $\{\tau_i\}_{i=1}^{r}(1\leq r\leq n)$ be a family of linearly independent elements in $G=\<\si_{x_1},\ldots,\si_{x_n}>$. Then there exists an $\bF$-automorphism $\phi$ of $\bF(\vx)$ such that $\phi$ is a difference isomorphism between the difference fields $(\bF(\vx),\tau_i)$ and $(\bF(\vx),\si_{x_i})$ for all $i=1,\ldots,r$. Therefore, for any $f\in\bF(\vx)$, $f$ is $(\tau_1,\ldots,\tau_r)$-summable in $\bF(\vx)$ if and only if $\phi(f)$ is $(\si_{x_1},\ldots,\si_{x_r})$-summable in $\bF(\vx)$.
	\end{prop}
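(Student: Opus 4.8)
The plan is to realize $\phi$ as an invertible linear change of the variables $\vx$, chosen so that conjugation by $\phi$ carries the translation encoding $\si_{x_i}$ to the translation encoding $\tau_i$. Concretely, write each $\tau_i=\si_{x_1}^{a_{i,1}}\cdots\si_{x_n}^{a_{i,n}}$ and set $\va_i:=(a_{i,1},\ldots,a_{i,n})\in\bZ^n$, so that $\tau_i$ acts on $\bF(\vx)$ as the substitution $\vx\mapsto\vx+\va_i$ while $\si_{x_i}$ acts as $\vx\mapsto\vx+\ve_i$. Since the $\tau_i$ are $\bZ$-linearly independent, the vectors $\va_1,\ldots,\va_r$ are linearly independent over $\bQ$, hence over $\bF$ (recall $\bF$ has characteristic zero, so $\bQ\subseteq\bF$); extend them to a basis $\va_1,\ldots,\va_n$ of $\bF^n$ and let $L\in GL_n(\bF)$ be the matrix whose $i$-th column is $\va_i$. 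Define $\phi$ to be the $\bF$-algebra homomorphism of $\bF[\vx]$ determined by $\phi(x_j)=(L\vx)_j=\sum_{k=1}^n L_{jk}x_k$, extended to $\bF(\vx)$; since $L$ is invertible, $\phi$ is an $\bF$-automorphism of $\bF(\vx)$, with inverse given by the substitution matrix $L^{-1}$. (This is exactly the shape of the automorphism $\phi_3$ constructed in the introductory example, with $L$ having columns $\tau_1$, $\tau_2$ and $\ve_3$.)

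Next I would verify the intertwining relation $\phi\circ\tau_i=\si_{x_i}\circ\phi$ for $1\leq i\leq r$ by evaluating both sides on an arbitrary $h\in\bF(\vx)$: viewing elements of $\bF(\vx)$ as functions of $\vx$, one has $(\si_{x_i}\circ\phi)(h)=h(L\vx+L\ve_i)=h(L\vx+\va_i)$ and $(\phi\circ\tau_i)(h)=(\tau_i(h))(L\vx)=h(L\vx+\va_i)$, which agree because the $i$-th column of $L$ is $\va_i$. Hence $\phi$ is a difference isomorphism between $(\bF(\vx),\tau_i)$ and $(\bF(\vx),\si_{x_i})$ for every $i\leq r$, and, by the discussion of difference homomorphisms in Section~\ref{sec:pre}, $\phi^{-1}$ is a difference isomorphism in the opposite direction.

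The summability equivalence is then purely formal: if $f=\Delta_{\tau_1}(g_1)+\cdots+\Delta_{\tau_r}(g_r)$, applying $\phi$ and using $\phi\circ\Delta_{\tau_i}=\phi\circ(\tau_i-{\bf 1})=(\si_{x_i}-{\bf 1})\circ\phi=\Delta_{x_i}\circ\phi$ termwise gives $\phi(f)=\sum_{i=1}^r\Delta_{x_i}(\phi(g_i))$, so $\phi(f)$ is $(\si_{x_1},\ldots,\si_{x_r})$-summable; the converse follows by applying $\phi^{-1}$ in the same way.

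I do not expect a genuine obstacle here: the content is bookkeeping. The two points that deserve care are (i) one must take the matrix whose \emph{columns}, not rows, are the exponent vectors $\va_i$, so that conjugating $\vx\mapsto\vx+\ve_i$ by $\vx\mapsto L\vx$ really yields $\vx\mapsto\vx+\va_i$; and (ii) the extension of $\va_1,\ldots,\va_r$ to a basis must be carried out over $\bF$ (or $\bQ$), not over $\bZ$, since a $\bZ$-linearly independent family need not extend to a $\bZ$-basis of $\bZ^n$. One should also record explicitly that $\phi$ fixes $\bF$ pointwise and extends from $\bF[\vx]$ to $\bF(\vx)$ precisely because $L$ is invertible, which is what makes $\phi$ an automorphism of the rational function field rather than merely an endomorphism of the polynomial ring.
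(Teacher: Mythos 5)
Your proof is correct and takes essentially the same approach as the paper: both realize $\phi$ as an invertible $\bF$-linear change of variables whose matrix has the exponent vectors $\va_i$ as columns (the paper writes the transposed form $(\phi(x_1),\ldots,\phi(x_m))=(x_1,\ldots,x_m)A$ with $A$ having the $\va_i$ as rows, which is the same map), and both then check the intertwining identity $\phi\circ\tau_i=\si_{x_i}\circ\phi$ by direct substitution. One small notational mismatch worth fixing: in the paper's setting $\vx=\{x_1,\ldots,x_m\}$ may have $m>n$ variables while $G=\<\si_{x_1},\ldots,\si_{x_n}>$, so the paper pads each $\va_i$ with zeros to length $m$ and extends to a basis of $\Q^m$; your $L\in GL_n(\bF)$ only touches $x_1,\ldots,x_n$, so you should explicitly set $\phi(x_j)=x_j$ for $j>n$ (equivalently use $\mathrm{diag}(L,I_{m-n})$) to get an automorphism of the full field $\bF(\vx)$.
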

	\begin{proof}
		Assume $\tau_i=\si_{x_1}^{\a_{i,1}}\cdots\si_{x_m}^{\a_{i,m}}$ with $\a_{i, j} = 0$ if $j>n$ and write $\val_i=(\a_{i,1},\ldots,\a_{i,m})\in \Z^m$ viewed as a vector in $\Q^m$ for $i=1,\ldots,r$.
		Then $\val_1,\ldots,\val_r$ are linearly independent over $\Q$. So we can find the other vectors $\val_{r+1},\ldots,\val_m$ such that $\{\val_1,\ldots,\val_m\}$ forms a basis of $\Q^m$. Let $\val_i=(\a_{i,1},\ldots,\a_{i,m})$ for $i=r+1,\ldots,m$ and $A={(\a_{i,j})}\in\Q^{m\times m}$. Then $A$ is an invertible matrix. Thus we define an $\F$-automorphism
		$\phi:\bF(\vx) \to\bF(\vx)$ by
		\[(\phi(x_1),\ldots,\phi(x_m)):=(x_1,\ldots,x_m)A.\]
		Let $u_j:=\phi(x_j)=\sum_{i=1}^m \a_{i,j}x_i$ for all $1\leq j\leq m$. Then $\phi$ satisfies the relation $\phi\circ\tau_i=\si_{x_i}\circ\phi$ for all $i=1,\ldots,r$, which means the following diagrams
		\begin{center}
			\vspace{-0.48cm}
			\begin{minipage}[t]{0.38\textwidth}
				\[\xymatrix{
					\bF(\vx) \ar[d]_{\tau_1} \ar[r]^{\phi}
					&\bF(\vx) \ar[d]^{\si_1}
					\\
					\bF(\vx)
					\ar[r]^{\phi}
					&\bF(\vx)}\]
			\end{minipage}
			\begin{minipage}[t]{0.1\textwidth}
				\centering
				\vspace{1.15cm}
				$\cdots$
			\end{minipage}
			\centering
			\begin{minipage}[t]{0.38\textwidth}
				\[\xymatrix{
					\bF(\vx) \ar[d]_{\tau_r} \ar[r]^{\phi}
					&\bF(\vx) \ar[d]^{\si_r}
					\\
					\bF(\vx)
					\ar[r]^{\phi}
					&\bF(\vx)}\]
			\end{minipage}
		\end{center}
		are commutative.
		This is true since for any $f\in\bF(x_1,\ldots,x_m)$, we have
		\begin{align*}
			\phi\left(\tau_i (f(x_1,\ldots,x_m)\right) &= \phi\left(f(x_1+\a_{i,1},\ldots,x_m+\a_{i,m})\right) \\
			& =f(u_1+\a_{i,1},\ldots,u_m+\a_{i,m})
		\end{align*}
		and
		\begin{align*}
			\si_{x_i}\left(\phi(f(x_1,\ldots,x_m)\right) &=\si_{x_i}\left(f\left(u_1,\ldots,u_m\right)\right) \\
			& =f\left(u_1+\a_{i,1},\ldots,u_m+\a_{i,m}\right).\quad\quad
		\end{align*}
		It follows that
		\begin{equation*}
			f=\sum_{i=1}^{r}\Delta_{\tau_i}(g_i)\quad\text{if and only if}\quad \phi(f)=\sum_{i=1}^r\Delta_{{x_i}}(\phi(g_i))
		\end{equation*}
		whenever $f, g_1,\ldots,g_r\in\bF(\vx)$. This proves our assertion.
	\end{proof}
	Combining Theorem~\ref{THM:rank=1} and Proposition~\ref{PROP:transformaiton}, the summability problem~\ref{PROB:summability problem} in $n$ variables can be
	reduced to that in fewer variables. So we can design the following recursive algorithm for testing $(\si_{x_1}, \ldots, \si_{x_n})$-summability of multivariate rational functions. Furthermore, the $(\tau_1,\ldots,\tau_n)$-summability problem can also be solved via the transformation in Proposition~\ref{PROP:transformaiton}.
	
	Recall that a map $\phi:\bF(\vx) \to \bF(\vx)$ is called a $\bQ$-\emph{affine} map if $\phi(f(\vx)) = f(\vx\cdot A + \vb)$, where $A$ is an invertible matrix in $\text{GL}_m(\bQ)$ and $\vb$ is a vector in $\bQ^m$. Note that the identity map, all shift operations and all difference isomorphisms constructed in Proposition~\ref{PROP:transformaiton} are $\bQ$-affine maps. The composition of two $\bQ$-affine maps is still a $\bQ$-affine map. If $f$ is $(\si_{x_1},\ldots,\si_{x_n})$-summable in $\bF(\vx)$, Algorithm~\ref{Alg:summable} will output \textit{unnormalised certificates} for $f$ in the form
	\begin{equation*}
	g=\sum_{\ell=1}^\rho\prod_{k=1}^{K_\ell}\psi_{\ell,k}(u_{\ell,k}),
	\end{equation*}
	where $u_{\ell,k}\in\bF(\vx)$ and the $\psi_{\ell,k}$'s are $\bQ$-affine maps.

	\begin{algorithm}[Constructive Testing of the Rational Summability]\label{Alg:summable}\quad\\{\em\bf IsSummable($f$, $[x_1, \ldots, x_n]$)}.\\
		INPUT: a multivariate rational function $f\in \bF(\vx)$ and a list $[x_1, \ldots, x_n]$ of variable names;\\
		OUTPUT: unnormalised certificates $g_1, \ldots, g_n$ for $f$ if $f$ is $(\si_{x_1},\ldots,\si_{x_n})$-summable in $\bF(\vx)$; {\em false} otherwise.
		
		\step 12 using shift equivalence testing and partial fraction decomposition, decompose $f$ into $f = f_0 + \sum_{j\in\N^+ }\sum_{[d]_G}f_{[d]_G,j}$ as in Equation~\eqref{EQ:f add decomp G}.
		\step 22 apply the reduction to $f_0$ and each nonzero component $f_{[d]_G,j}$ such that
		\begin{equation*}
			f = \Delta_{x_1} (g_1) + \cdots + \Delta_{x_n} (g_n) + r \text{ with } r = \sum_{i = 1}^I \sum_{j = 1}^{J_i} \frac{a_{i, j}}{d_i^j},
		\end{equation*}
		where $a_{i, j}/d_i^j$ is the remainder of $f_{[d_i]_G, j}$ described in Lemma~\ref{LEM:red2}.
		\step 32 if $r = 0$, then {\em \bf return} $g_1, \ldots, g_n$.
		\step 42 {{\em\bf for}} $i = 1,\ldots,I$ {{\em\bf do}}
		\step 53 by Remark~\ref{REM:isotropy}, one can compute a basis $\tau_{i, 1}, \ldots, \tau_{i, r_i}$ for the isotropy group $G_{d_i}$ of $d_i$.
		\step 63 {{\em\bf for}} $j = 1, \ldots, J_i$ {{\em\bf do}}
		\step 74 if $n=1$ or $G_{d_i} = \{\bf 1\}$ then
		\step 85 {{\em\bf return}} {\em false} if $a_{i,j} \neq 0$.
		\step 94 else
		\step {10}5 find an $\bF$-automorphism $\phi_i$ of $\bF(\vx)$ given in Proposition~\ref{PROP:transformaiton} such that $\phi_i\circ\tau_{i, \ell} = \si_{x_\ell} \circ \phi_i$ for $\ell = 1, \ldots, r_i$.
		\step {11}5 set $\tilde a_{i, j} = \phi_i(a_{i, j})$.
		\step {12}5 execute \text{\em\bf IsSummable($\tilde a_{i, j}$, $[x_1, \ldots, x_{r_i}]$)}.
		\step {13}5 if $\tilde a_{i, j}$ is $(\si_{x_1}, \ldots, \si_{x_{r_i}})$-summable in $\bF(\vx)$, let
		\begin{equation*}
			\tilde a_{i, j} = \Delta_{x_1}\left(\tilde b_{i, j}^{(1)}\right) + \cdots + \Delta_{x_{r_i}} \left(\tilde b_{i, j}^{(r_i)}\right);
		\end{equation*}
		{\em\bf return} {\em false} otherwise.
		\step {14}5 applying $\phi_i^{-1}$ to the previous equation yields that
		\begin{equation*}
			a_{i, j} = \Delta_{\tau_{i, 1}} \left(b_{i, j}^{(1)}\right) + \cdots + \Delta_{\tau_{i, r_i}} \left(b_{i, j}^{(r_i)}\right),
		\end{equation*}
		where $(b_{i, j}^{(1)}, \ldots, b_{i, j}^{(r_i)}) = (\phi_i^{-1}(\tilde b_{i, j}^{(1)}), \ldots, \phi_i^{-1}(\tilde b_{i, j}^{(r_i)} ))$.
		\step {15}5 using Lemma~\ref{Formula}, compute $h_{i, j}^{(1)}, \ldots, h_{i, j}^{(n)}\in \bF(\vx)$ such that
		\begin{equation*}
			\frac{a_{i, j}}{d_i^j} = \sum_{\ell = 1}^{r_i} \Delta_{\tau_{i, \ell}} \left(\frac{b_{i, j}^{(\ell)}}{d_i^j}\right) =\sum_{\ell = 1}^n \Delta_{x_\ell} \left(h_{i, j}^{(\ell)}\right).
		\end{equation*}
		\step {16}{5} update $g_\ell = g_\ell + h_{i, j}^{(\ell)}$ for $\ell = 1, \ldots, n$.
		\step {17}{2} {{\em\bf return}} $g_1, \ldots, g_n$.
	\end{algorithm}

We now analyse the complexity of Algorithm~\ref{Alg:summable} for $\bF=\bQ$. The following theorem shows that the rational summability problem can be solved in polynomial time.

\begin{thm}\label{THM:com-summable}
    Let $\delta$ and $M$ be two positive integers and $f(\vx)$ be a multivariate rational function in $\bQ(\vx)_{\delta}$ with $\|f\|=M$. If $m$ and $M$ are fixed, i.e., $m,M\in O(1)$, then the total runtime of Algorithm~\ref{Alg:summable} is $O(\delta^{O(1)})$ ops in $\bQ$.
\end{thm}
\begin{proof}
In each recursion, the input rational function $\tilde{a}_{i,j}$ in Step~12 is in $\bQ(\vx)_{O(\delta^{O(1)})}$ with max-norm $O((M\delta)^{\delta^{O(1)}})$ by Theorem~\ref{THM:com-SET} and the timing of the first three steps is dominated by that of the irreducible factorization of the denominator of the input, which is $O(\delta^{O(1)})$ ops in $\bQ$ according to Fact~\ref{FACT:com-factor}. Since there are at most $ m$ recursive calls and $m \in O(1)$, the total runtime is $O(\delta^{O(1)})$ ops in $\bQ$.
\end{proof}

	\begin{example}\label{Eg:f}
		Let $G = \<\si_x, \si_y, \si_z>$ and $f = f_1 + f_2 + f_3\in \bQ(x, y, z)$ be the same as in Example~\ref{Eg: f_sum}.
		\begin{enumerate}
			\item\label{it:f1} After the $(\si_x, \si_y, \si_z)$-reduction for $f_1$, see Example~\ref{Eg: f1_red}, we get
			\begin{equation}\label{EQ:f1}
				f_1 = \Delta_{x} (u_1) + \Delta_y (v_1) + \Delta_{z} (w_1) + r_1 \text{ with } r_1 = \frac{2x - 1}{d_1},
			\end{equation}
			where $u_1, v_1, w_1\in \Q(x, y, z)$ and $d_1 = x^2 + 2xy + z^2$. By Example~\ref{Eg:isotropy}~(\ref{it:isotropy1}), the isotropy group $G_{d_1} = \{ \bf 1\}$ is trivial. By Theorem~\ref{THM:rank=1}, we see that $r_1$ is not $(\si_x, \si_y, \si_z)$-summable because its numerator $a_1 = 2x - 1$ is not zero. Hence $f_1$ is not $(\si_x, \si_y, \si_z)$-summable.
			
			\item\label{it:f2} For $f_2 = a_2 / d_2$ with $a_2 = x + z$ and $d_2 = (x - 3y)^2(y + z) + 1$, we know from Example~\ref{Eg:isotropy}~(\ref{it:isotropy2}) that a basis of $G_{d_2}$ is $\{\si_x^3\si_y\si_z^{-1}\}$. For any $\{\mu,\nu\} \subseteq \{x, y, z\}$, since the isotropy group of $d_2$ in $\<\si_\mu, \si_\nu>$ is trivial, we get that $f_2$ is not $(\si_\mu, \si_\nu)$-summable in $\bQ(x, y, z)$. By Theorem~\ref{THM:rank=1}, we see that $f_2$ is $(\si_x, \si_y, \si_z)$-summable in $\bQ(x, y, z)$ if and only if $a_2$ is $ (\tau)$-summable in $\bQ(x, y, z)$ with $\tau = \si_x^3\si_y\si_z^{-1}$. Choose one $\bQ$-automorphism $\phi_2$ of $\bQ(x, y, z)$ given in Proposition~\ref{PROP:transformaiton} as follows
			\[\phi_2(h(x, y, z)) = h(3x, x + y, - x + z),\]
			for any $h\in \bQ(x, y, z)$. Then $\phi_2\circ \tau = \si_x \circ \phi_2$. Hence $a_2$ is $ (\tau)$-summable in $\bQ(x, y, z)$ if and only if $\phi_2 (a_2)$ is $ (\si_x)$-summable in $\bQ(x, y, z)$. Since
			\begin{equation}\label{EQ:phi2a2}
				\phi_2(a_2) = 2x + z = \Delta_x ((x - 1)(x + z))
			\end{equation}
			is $ (\si_x)$-summable, it follows that $f_2$ is $(\si_x, \si_y, \si_z)$-summable. In fact, applying $\phi_2^{-1}$ to Equation~\eqref{EQ:phi2a2} yields that
			\[a_2 = x + z = \Delta_{\tau} (b) \text{ with } b = \frac{1}{9}(x - 3)(2x + 3z).\]
			By Lemma~\ref{Formula}, we have
			\begin{equation}\label{EQ:f2}
				f_2 = \Delta_{\tau}\left(\frac{b}{d_2}\right) = \Delta_{x} (u_2) + \Delta_{y} (v_2) + \Delta_{z} (w_2),
			\end{equation}
			where $u_2 = \sum_{\ell =0 }^2\si_x^\ell \si_y\si_z^{-1} \left(\frac{b}{d_2}\right)$, $v_2 = \si_z^{-1}\left(\frac{b}{d_2}\right)$ and $w_2 = - \si_z^{-1}\left(\frac{b}{d_2}\right)$.
			
			\item\label{it:f3} For $f_3 = a_3/d_3^2$ with $a_3 = y + z/(y^2 + z - 1) - 1/(y^2 + z)$ and $d_3 = x + 2y + z$, we know from Example~\ref{Eg:isotropy}~(\ref{it:isotropy2}) that a basis of $G_{d_3}$ is $\{\tau_{1}, \tau_{2}\}$, where $\tau_{1} = \si_x^{2}\si_y^{-1}$ and $\tau_{2} = \si_x\si_z^{-1}$. To decide the $(\si_x, \si_y, \si_z)$-summability of $f_3$, we construct a $\bQ$-automorphism $\phi_3$ of $\bQ(x, y, z)$ such that $\phi_3\circ \tau_{1} = \si_x \circ \phi_3$ and $\phi_3\circ \tau_{2} = \si_y \circ \phi_3$ as follows
			\[\phi_3(h(x, y, z)) = h(2x + y, - x, - y + z),\]
			for any $h\in \bQ(x, y, z)$. Then it remains to decide the $(\si_x, \si_y)$-summability of
			\[\phi_3(a_3) = - x + \frac{z - y}{\underbrace{x^2 - y + z - 1}_{\si_y(\tilde d)}} - \frac{1}{\underbrace{x^2 - y + z}_{\tilde d}}\]
			in $\bQ(x, y, z)$.
			So we use the $(\si_x, \si_y)$-reduction to reduce $\phi_3(a_3)$ and obtain
			\begin{equation}\label{EQ:phi3a3}
				\phi_3(a_3)= \Delta_x \left(\tilde b_1\right) + \Delta_y \left(\tilde b_2\right) + \frac{z - y}{x^2 - y + z},
			\end{equation}
			where $\tilde b_1= -\frac{1}{2}x (x - 1)$ and $\tilde b_2= \frac{z - y + 1}{x^2 - y + z}$. Since the isotropy group of $\tilde d$ in $\<\si_x,\si_y>$ is trivial, $\phi_3(a_3)$ is not $(\si_x, \si_y)$-summable. Hence $f_3$ is not $(\si_x, \si_y, \si_z)$-summable. Even so, in this case, using the above calculation, we can further decompose $f_3$ into a summable part and a remainder. We now show how to do this. Starting from the decomposition~\eqref{EQ:phi3a3} of $\phi_3(a_3)$ with respect to the $(\si_x, \si_y)$-summability problem, we apply $\phi_3^{-1}$ to both sides of this decomposition to obtain that
			\[a_3= \Delta_{\tau_{1}} \left(b_1\right) + \Delta_{\tau_{2}}\left(b_2\right) + \frac{z}{y^2 + z},\]
			where $b_1 = \phi_3^{-1}(\tilde b_2) = -\frac{1}{2}y(y + 1)$ and $b_2 = \phi_3^{-1}(\tilde b_2) = \frac{z + 1}{y^2 + z}$. By Lemma~\ref{Formula} with $\tau = \tau_1, \tau_2$, we have
			\begin{align}\label{EQ:f3}
				f_3 =\frac{a_3}{d_3^2}&= \Delta_{\tau_{1}} \left(\frac{b_1}{d_3^2}\right) + \Delta_{\tau_{2}} \left(\frac{b_2}{d_3^2}\right) + \underbrace{\frac{z}{(y^2 + z)d_3^2}}_{r_3}\nonumber\\
				&= \Delta_{x} (u_3) + \Delta_{y} (v_3) + \Delta_{z} (w_3) + r_3,
			\end{align}
			where $u_3 = \sum_{\ell = 0}^1 \si_x^\ell \si_y^{-1} \left(\frac{b_1}{d_3^2}\right) + \si_z^{-1}\left(\frac{b_2}{d_3^2}\right)$, $v_3 = -\si_y^{-1}\left(\frac{b_1}{d_3^2}\right)$ and $w_3 = - \si_z^{-1}\left(\frac{b_2}{d_3^2}\right)$.
			\item For $f = f_1 + f_2 + f_3$, from Example~\ref{Eg: f_sum} we know that $f_1, f_2, f_3$ are in distinct $V_{[d]_G, j}$ spaces. Since $f_1$ is not $(\si_x, \si_y, \si_z)$-summable, it follows from Lemma~\ref{LEM:red1} that $f$ is not $(\si_x, \si_y, \si_z)$-summable. Moreover, combining Equations \eqref{EQ:f1}, \eqref{EQ:f2} and~\eqref{EQ:f3}, we decompose $f$ into
			\begin{equation*}
				f = \Delta_x (u) + \Delta_y (v) + \Delta_z (w) + r \text{ with } r = \frac{2x - 1}{d_1} + \frac{z}{(y^2 + z)d_3^2},
			\end{equation*}
			where $u=\sum_{i = 1}^2u_i$, $v=\sum_{i = 1}^2v_i$ and $w=\sum_{i = 1}^2w_i$ are rational functions in $\bQ(x, y, z)$.
		\end{enumerate}
	\end{example}
	As we discussed in the above example, given a rational function $f\in \bF(\vx)$, we can compute rational functions $g_1, \ldots, g_n, r\in\bF(\vx)$ such that
    \begin{equation}\label{EQ:sumadddecomp}
        f=\Delta_{x_1} (g_1) + \cdots + \Delta_{x_n} (g_n) + r
    \end{equation}
	satisfying the property that $f$ is $(\si_{x_1}, \ldots, \si_{x_n})$-summable if and only if $r = 0$. This process can be achieved by induction on $n$. However, this remainder $r$ is not unique, which depends on the choice of the difference isomorphisms $\phi_i$'s. It remains an open problem to make the remainder $r$ minimal in terms of degrees and arithmetic sizes. Moreover, better choices of the isomorphism might also lead to more efficient reductions.

		\section{The existence problem of telescopers}\label{sec:telescopers}
		Similar to the summability problem, there are mainly two steps of solving the existence problem~\ref{PROB:telescopers} of telescopers. First we use the orbital decomposition and Abramov's reduction to simplify the existence problem in Section~\ref{subsec:reduction_tele}. Then in Section~\ref{subsec:tele criterion}, we use the exponent separation introduced in~\cite{Chen2016} to further reduce the existence problem to simple fractions and use the summability criteria in Section~\ref{subsec:sum criterion} to derive the existence criteria.
		\subsection{Orbital reduction for existence of telescopers}\label{subsec:reduction_tele}
		Let $f$ be a rational function in $\bK(t,\vx)$, where $\vx = \{x_1, \ldots, x_m\}$. Let $n$ be an integer such that $1\leq n\leq m$. We consider the existence problem of telescopers of type $(\si_t;\si_{x_1},\ldots,\si_{x_n})$ for the rational function $f$ in $\bK(t,\vx)$. Let $G_t=\<\si_t,\si_{x_1},\ldots,\si_{x_n}>$ be the free abelian group generated by the shift operators $\si_t,\si_{x_1},\ldots,\si_{x_n}$. Taking $\bE=\bK(t,\hx_1)$ and $A=G_t$ in {Equality}~\eqref{EQ: subspace V},
		we get	
		\begin{equation*}
			V_{[d]_{G_t},j}=\Span_{\bE}\left\{\ \frac{a}{\tau (d)^j}\ \middle \vert a\in \bE[x_1],
			\tau\in G_t, \deg_{x_1}(a)<\deg_{x_1}(d)\right\},
		\end{equation*}
	     where $j\in \bN^+$ and $d\in \bE[\vx]$ is irreducible with $\deg_{x_1}(d)>0$.
		Then $f$ can be decomposed as
		\begin{equation}\label{EQ:f add decomp Gt}
			f = f_0 + \sum_{j}\sum_{[d]_{G_t}}f_{[d]_{G_t},j},
		\end{equation}
		where $f_0\in V_0=\bE[x_1]$ and $f_{[d]_{G_t},j}$ are in distinct $ V_{[d]_{G_t},j}$ spaces.
		It induces the following orbital decomposition of $\bK(t,\vx)$ with respect to the group $G_t$
		\begin{equation*}
			\bK(t,\vx)=V_0\bigoplus\left(\bigoplus_{j\in\N^+}\bigoplus_{[d]_{G_t}\in T_{G_t}}V_{[d]_{G_t},j}\right)
		\end{equation*}
		as a vector space over $\bK(t,\hx_1)$. This orbital decomposition is $G_{t}$-invariant. Moreover, for any $L$ in $\bK(t)\<S_t>$, if $f\in V_{[d]_{G_t},j}$, then $L(f)\in V_{[d]_{G_t},j}$. Note that such an operator $L$ commutes with the difference operator $\Delta_{x_i}$ for $i=1,\ldots,n$. So by Remark~\ref{REM: LCLM} and the similar argument as in the proof of Lemma~\ref{LEM:red1}, we arrive at the following lemma.

		\begin{lem}\label{LEM:red-t1}
			Let $f\in \bK(t,\vx)$. Then $f$ has a telescoper of type $(\si_t;\si_{x_1},\ldots,\si_{x_n})$ if and only if $f_0$ and each $f_{[d]_{G_t},j}$ have a telescoper of the same type for all $[d]_{G_t}\in T_{G_t}$ and $j\in \N^+$.
		\end{lem}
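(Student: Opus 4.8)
The plan is to follow the proof of Lemma~\ref{LEM:red1} almost verbatim, with two modifications: the summability statement is replaced by the existence of a telescoper, and in the ``if'' direction the individual telescopers are combined by taking a least common left multiple rather than simply added. The two facts recalled in the paragraph preceding the lemma do all the structural work: the orbital decomposition of $\bK(t,\vx)$ with respect to $G_t$ is $G_t$-invariant, so each $V_{[d]_{G_t},j}$ (and, by the same reasoning, $V_0=\bK(t,\hx_1)[x_1]$) is stable both under every $\si_{x_i}$ and under every operator $L\in\bK(t)\<S_t>$; and for a fixed $f$ only finitely many orbital components are nonzero.

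For sufficiency, assume $f_0$ and each nonzero $f_{[d]_{G_t},j}$ have telescopers $L_0,\ L_{[d]_{G_t},j}\in\bK(t)\<S_t>$ of type $(\si_t;\si_{x_1},\ldots,\si_{x_n})$. Since this is a finite family, its LCLM $L$ exists and is nonzero; writing $L$ as a left multiple of $L_0$ and of each $L_{[d]_{G_t},j}$ and using that operators in $\bK(t)\<S_t>$ commute with $\Delta_{x_1},\ldots,\Delta_{x_n}$ (this is exactly Remark~\ref{REM: LCLM}), $L$ is a telescoper of the same type for $f_0$ and for each $f_{[d]_{G_t},j}$. Adding the resulting certificates and invoking the linearity of the $\Delta_{x_i}$ shows that $L$ is a telescoper for $f=f_0+\sum_j\sum_{[d]_{G_t}}f_{[d]_{G_t},j}$.

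For necessity, let $L\in\bK(t)\<S_t>$ be a telescoper for $f$, say $L(f)=\Delta_{x_1}(g_1)+\cdots+\Delta_{x_n}(g_n)$ with $g_i\in\bK(t,\vx)$, and decompose each $g_i$ as in~\eqref{EQ:f add decomp Gt}. Applying $L$ termwise to the decomposition~\eqref{EQ:f add decomp Gt} of $f$ and $\Delta_{x_i}$ termwise to that of $g_i$, the $G_t$-stability of $V_0$ and of the spaces $V_{[d]_{G_t},j}$ turns both sides of $L(f)=\sum_i\Delta_{x_i}(g_i)$ into expansions of the same element along the same direct sum. Matching components through the uniqueness of the orbital decomposition gives $L(f_0)=\sum_{i=1}^n\Delta_{x_i}(g_i^{(0)})$ and $L(f_{[d]_{G_t},j})=\sum_{i=1}^n\Delta_{x_i}(g_{i,[d]_{G_t},j})$ for all $[d]_{G_t}$ and $j$, so $L$ is a telescoper of the required type for $f_0$ and for each orbital component. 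I do not expect a genuine obstacle here; the only points needing care are to ensure that the two sides of $L(f)=\sum_i\Delta_{x_i}(g_i)$ are expanded along the \emph{same} direct-sum decomposition, which is precisely what the $G_t$-invariance guarantees, and that the LCLM is taken over the finite set of nonzero components.
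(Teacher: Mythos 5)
Your proposal is correct and matches the paper's intended argument: the paper explicitly invokes Remark~\ref{REM: LCLM} for the sufficiency direction and ``the similar argument as in the proof of Lemma~\ref{LEM:red1}'' for the necessity direction, which is exactly what you carry out. You correctly identify the two facts that make the transfer work — $L$-stability of each $V_{[d]_{G_t},j}$ together with $\si_{x_i}$-stability, and the finiteness of the nonzero orbital components — so there is nothing to add.
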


		Since $f_0\in V_0=\bK(t,\hx_1)[x_1]$ is always $ (\si_{x_1})$-summable, it follows that $L=1$ is a telescoper for~$f_0$. For $f\in V_{[d]_{G_t},j}$, it can be written as
		
		\begin{equation}\label{EQ:red-t1}
			f = \sum_\tau\frac{a_{\tau}}{\tau(d)^j},
		\end{equation}
		where $\tau\in G_t$, $a_\tau\in \bK(t,\hx_1)[x_1]$, $d\in \bK[t,\vx]$ with $\deg_{x_1}(a_\tau)<\deg_{x_1}(d)$ and $d$
		is irreducible in $x_1$ over $\bK(t,\hx_1)$. Each $\tau\in G_t$ is in the form of $\tau=\si_t^{k_0}\si_{x_1}^{k_1}\cdots\si_{x_n}^{k_n}$ for some $k_0,k_1,\ldots,k_n\in \Z$. Using the $(\si_{x_1}, \ldots, \si_{x_n})$-reduction formula~\eqref{EQ:red formula2}, we get the following decomposition.
		\begin{lem}\label{LEM:red-t2}
			Let $f\in V_{[d]_{G_t},j}$ be in the form~\eqref{EQ:red-t1}. Then we can decompose it into the form
			\begin{equation*}
				f=\sum_{i=1}^n\Delta_{x_i}(g_i)+r \text{ with } r=\sum_{\ell=0}^\rho\frac{a_\ell}{\si_t^\ell(\mu)^j},
			\end{equation*}
			where $\rho\in \N$, $g_i\in\bK(t, \vx)$, $a_\ell \in\bK(t,\hx_1)[x_1]$, $\mu\in\bK[t,\vx]$, $\deg_{x_1}(a_\ell)<\deg_{x_1}(d)$, $\mu$ is in the same $G_t$-orbit as $d$, and $\si_{t}^\ell(\mu)$, $\si_{t}^{\ell^\prime}(\mu)$ are not $G$-equivalent for $0\leq\ell\neq\ell^\prime\leq\rho$. Therefore $f$ has a telescoper of type $(\si_t;\si_{x_1},\ldots,\si_{x_n})$ if and only if $r$ has a telescoper of the same type.
		\end{lem}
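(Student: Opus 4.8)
The plan is to follow the two-step pattern of Lemma~\ref{LEM:red2}, but to carry out the reduction in two passes so that the $\sigma_t$-direction is separated from the $\sigma_{x_1},\ldots,\sigma_{x_n}$-directions. Write each $\tau\in G_t$ occurring in \eqref{EQ:red-t1} as $\tau=\sigma_t^{k_0}\sigma_{x_1}^{k_1}\cdots\sigma_{x_n}^{k_n}$, so that $\tau(d)=\sigma_{x_1}^{k_1}\cdots\sigma_{x_n}^{k_n}\bigl(\sigma_t^{k_0}(d)\bigr)$. First I would apply the $(\sigma_{x_1},\ldots,\sigma_{x_n})$-reduction formula \eqref{EQ:red formula2} to $a_\tau/\tau(d)^j$ with base $\sigma_t^{k_0}(d)^j$: this removes all $\sigma_{x_i}$-exponents and produces a sum $\sum_{i=1}^n\Delta_{x_i}(\,\cdot\,)$ together with a single term $\sigma_{x_1}^{-k_1}\cdots\sigma_{x_n}^{-k_n}(a_\tau)/\sigma_t^{k_0}(d)^j$. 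Since every operator involved is a shift, it preserves degrees, and as $\sigma_t$ does not touch $x_1$, the new numerator still has $x_1$-degree $<\deg_{x_1}(d)=\deg_{x_1}\!\bigl(\sigma_t^{k_0}(d)\bigr)$. Summing over all $\tau$ yields
\[
 f=\sum_{i=1}^n\Delta_{x_i}(g_i)+\sum_{k\in\Z}\frac{c_k}{\sigma_t^{k}(d)^j},
\]
a finite sum with $\deg_{x_1}(c_k)<\deg_{x_1}(d)$.

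Next I would group the denominators $\sigma_t^{k}(d)$ according to the relation $\sim_G$. Whenever $\sigma_t^{k}(d)\sim_G\sigma_t^{k'}(d)$ there is $\tau\in G$ with $\tau\bigl(\sigma_t^{k'}(d)\bigr)=\sigma_t^{k}(d)$, and one more application of \eqref{EQ:red formula2} rewrites $c_k/\sigma_t^{k}(d)^j$ as $\sum_{i=1}^n\Delta_{x_i}(\,\cdot\,)+\tau^{-1}(c_k)/\sigma_t^{k'}(d)^j$, again keeping the $x_1$-degree bound. I then pick one representative $\sigma_t^{m}(d)$ in each $\sim_G$-class occurring among the finitely many $k$ (in the rank-$0$ case of $G_{t,d}/G_d$ the $\sigma_t^{k}(d)$ are already pairwise $G$-inequivalent; in the rank-$1$ case, allowed by Lemma~\ref{LEM:Gd/Hd free}, they recur with some period $p$, i.e.\ $\sigma_t^{p}(d)\sim_G d$), merge the numerators lying over each representative, discard vanishing ones, and set $\mu:=\sigma_t^{m_0}(d)$ for the least representative $m_0$, $\ell:=m-m_0$, $\rho:=m_{\max}-m_0$. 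This gives $r=\sum_{\ell=0}^{\rho}a_\ell/\sigma_t^{\ell}(\mu)^j$ with $\mu$ in the $G_t$-orbit of $d$, $\deg_{x_1}(a_\ell)<\deg_{x_1}(d)$, and the $\sigma_t^{\ell}(\mu)$ pairwise $G$-inequivalent by construction.

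For the last assertion, since any $L\in\bK(t)\langle S_t\rangle$ commutes with every $\Delta_{x_i}$, applying $L$ to $f=\sum_{i=1}^n\Delta_{x_i}(g_i)+r$ gives $L(f)=L(r)+\sum_{i=1}^n\Delta_{x_i}\bigl(L(g_i)\bigr)$; hence $L(f)$ is $(\sigma_{x_1},\ldots,\sigma_{x_n})$-summable if and only if $L(r)$ is, i.e.\ $L$ is a telescoper of type $(\sigma_t;\sigma_{x_1},\ldots,\sigma_{x_n})$ for $f$ precisely when it is one for $r$. In particular $f$ has such a telescoper if and only if $r$ does.

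The routine content is the bookkeeping with the two reduction formulas and the check that shift operators preserve the $x_1$-degree constraint throughout. The step deserving care is the second pass: one must verify that the surviving $\sigma_t$-shifts of $d$ fall into genuinely distinct $G$-orbits and that only finitely many remain, which is exactly where the structure of the quotient $G_{t,d}/G_d$ (Lemma~\ref{LEM:Gd/Hd free}) enters.
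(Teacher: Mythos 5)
Your proposal follows essentially the intended route (the paper presents the lemma as a direct consequence of the $(\sigma_{x_1},\ldots,\sigma_{x_n})$-reduction formula and gives no separate proof): first strip off all $\sigma_{x_i}$-factors from each $\tau$, landing on a sum $\sum_k c_k/\sigma_t^k(d)^j$, then merge those $\sigma_t$-shifts that happen to be $G$-equivalent by one more application of \eqref{EQ:red formula2}, and conclude the telescoper equivalence from the commutativity of $\bK(t)\langle S_t\rangle$ with the $\Delta_{x_i}$. The degree check and the finiteness argument via $G_{t,d}/G_d$ are correct.

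One bookkeeping detail must be made explicit, because as written your recipe can violate the lemma's conclusion. The lemma requires that $\sigma_t^\ell(\mu)$ and $\sigma_t^{\ell'}(\mu)$ be pairwise $G$-inequivalent for \emph{all} $0\le\ell\neq\ell'\le\rho$, not just for the indices $\ell$ with $a_\ell\neq0$. With your choices $\mu:=\sigma_t^{m_0}(d)$, $\rho:=m_{\max}-m_0$, this holds only if the representatives $m$ are chosen inside a window of length at most $k_0$, where $k_0$ is the minimal positive integer with $\sigma_t^{k_0}(d)\sim_G d$ in the rank-$1$ case. Merely "picking one representative per class" (for instance, the smallest $k$ occurring in each class) can produce a spread $m_{\max}-m_0\ge k_0$: e.g.\ with period $k_0=5$ and surviving exponents $\{0,3,7\}$ one gets $\rho=7$, yet $\sigma_t^0(\mu)\sim_G\sigma_t^5(\mu)$. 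The fix is to normalize each representative to its residue $m\in\{0,\ldots,k_0-1\}$ (replacing $\sigma_t^m(d)$ by $\sigma_t^{m-k_0}(d)$ via another shift in $G$ whenever needed), so that $\rho<k_0$ automatically. In the rank-$0$ case, as you note, all $\sigma_t^k(d)$ are already pairwise $G$-inequivalent and no normalization is required.
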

		
		\begin{example}\label{Eg:f_red_t}
			Let $\bK = \bQ$, $G_t = \<\si_t, \si_x, \si_y, \si_z>$ and $G = \<\si_x, \si_y, \si_z>$.
			\begin{enumerate}
				\item\label{it:f1-t} Consider the rational function $f$ in $\bQ(t, x, y, z)$ of the form
				\[f = \frac{2x - 1}{d} + \frac{y}{\si_t(d)} + \frac{1}{\si_t^3\si_x\si_y\si_z(d)}\]
				where $d = x^2 + 2xy + z^2 + t$. Then $f\in V_{[d]_{G_t}, 1}$ and applying the $(\si_x, \si_y, \si_z)$-reduction formula to $f$ yields
				\begin{equation}\label{EQ:red-t2-eg1}
					f = \Delta_x (u_0) + \Delta_y (v_0) + \Delta_z (w_0) + \frac{2x - 1}{d} + \frac{y}{\si_t(d)} + \frac{1}{\si_t^3(d)},
				\end{equation}
				where
				\[u_0 = \frac{1}{\si_t^3\si_y\si_z(d)}, ~v_0 = \frac{1}{\si_t^3\si_z(d)} \text{ and } w_0 = \frac{1}{\si_t^3(d)}.\]
				Since there is no nonzero integer $s$ such that $\si_t^s(d)$ and $d$ are $G$-equivalent, the equation~\eqref{EQ:red-t2-eg1} gives a required decomposition for $f$ in Lemma~\ref{LEM:red-t2}.
				\item\label{it:f2-t} Consider the rational function $f$ in $\bQ(t, x, y, z)$ of the form
				\[f = \frac{1}{t (t + y + 2z) d} + \frac{y + z - 1}{(t + 3z) \si_t(d)} - \frac{y + z}{(t + 3z) \si_t\si_x^3\si_y^2(d)},\]
				where $d = 3y + (x + z)^2 + t$. Then $f\in V_{[d]_{G_t}, 1}$ and applying the $(\si_x, \si_y, \si_z)$-reduction formula to $f$ yields that
				\begin{equation}\label{EQ:red-t2-eg2}
					f = \Delta_x(u_0) + \Delta_y (v_0) + \Delta_z (w_0) + \frac{1}{t (t + y + 2z) d} + \frac{1}{(t + 3z) \si_t(d)},
				\end{equation}
				where \[u_0 = - \sum_{\ell = 0}^2 \frac{y + z}{(t + 3z) \si_t\si_x^\ell\si_y^2(d)}, ~v_0 = -\sum_{\ell = 0}^1\frac{y + \ell - 2 + z}{(t + 3z) \si_t\si_y^\ell(d)} \text{ and } ~w_0 = 0.\] Since the isotropy group of $d$ in $G_t$ is $G_{t, d} = \<\si_t^3\si_y^{-1}, \si_x\si_z^{-1}>$, the minimal positive integer $s$ such that $\si_t^{s}(d)$ and $d$ are $G$-equivalent is $s = 3$. So $d$ and $\si_t(d)$ are not $G$-equivalent. Thus the equation~\eqref{EQ:red-t2-eg2} gives a required decomposition for $f$ in Lemma~\ref{LEM:red-t2}.
			\end{enumerate}
		\end{example}

		\subsection{Criteria on the existence of telescopers}\label{subsec:tele criterion}
		Combining Lemmas \ref{LEM:red-t1} and~\ref{LEM:red-t2}, we reduce the existence problem~\eqref{PROB:telescopers} to that for rational functions in the form
		\begin{equation}\label{EQ:red-t3}
			f=\sum_{i=0}^I\frac{a_i}{\si_t^i(d)^j},
		\end{equation}
		where $j\in\N^+$, $a_i\in\bK(t,\hx_1)[x_1]$, $d\in\bK[t,\vx]$, $\deg_{x_1}(a_i)<\deg_{x_1}(d)$ and $d$ is irreducible such that $\si_t^i(d)$ and $\si_t^{i^\prime}(d)$ are not $G$-equivalent for $0\leq i\neq i^\prime\leq I$.
		
		Let $G_t=\<\si_t,\si_{x_1},\ldots,\si_{x_n}>$ and $G=\<\si_{x_1},\ldots,\si_{x_n}>$ be a subgroup of $G_t$.
		Let $G_d$ and $G_{t,d}$ be the isotropy groups of the polynomial $d$ in $G$ and $G_t$, respectively.
		By Lemma~\ref{LEM:Gd/Hd free}, the quotient group $G_{t,d}/G_d$ is free and of rank $0$ or $1$.
		
		In the case of $\rank (G_{t,d}/G_d)=0$, the existence problem of telescopers is equivalent to the summability problem.
		\begin{lem}\label{LEM:red-ts1}
			Let $f\in\bK(t,\vx)$ be in the form~\eqref{EQ:red-t3}. If $\rank (G_{t,d}/G_d) =0$, then $f$ has a telescoper of type $(\si_t;\si_{x_1},\ldots,\si_{x_n})$ if and only if each $a_i/\si_t^i(d)^j$ is $(\si_{x_1},\ldots,\si_{x_n})$-summable in $\bK(t, \vx)$ for $0\leq i\leq I$.
		\end{lem}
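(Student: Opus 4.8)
The plan is to prove the two implications separately; the forward direction is immediate and the reverse one carries all the content. For the ``if'' direction: if each $a_i/\si_t^i(d)^j$ is $(\si_{x_1},\ldots,\si_{x_n})$-summable, then by linearity of the operators $\Delta_{x_p}$ the sum $f=\sum_{i=0}^I a_i/\si_t^i(d)^j$ is itself $(\si_{x_1},\ldots,\si_{x_n})$-summable, so $L={\bf 1}$ is a telescoper for $f$.

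For the ``only if'' direction I would start from a telescoper $L=\sum_{k=0}^\rho \ell_k\si_t^k$ with $\ell_k\in\bK(t)$, $L\neq 0$, and $L(f)=\Delta_{x_1}(g_1)+\cdots+\Delta_{x_n}(g_n)$ for some $g_p\in\bK(t,\vx)$. Expanding yields
\[
L(f)=\sum_{k=0}^\rho\sum_{i=0}^I\ell_k\,\frac{\si_t^k(a_i)}{\si_t^{k+i}(d)^j}=\sum_{m}\frac{A_m}{\si_t^m(d)^j},\qquad A_m:=\sum_{k+i=m}\ell_k\,\si_t^k(a_i).
\]
The key observation, and the only place where the hypothesis $\rank(G_{t,d}/G_d)=0$ is used, is that the shifts $\si_t^m(d)$ with $m\in\Z$ lie in pairwise distinct $G$-orbits, where $G=\langle\si_{x_1},\ldots,\si_{x_n}\rangle$: since $\rank(G_{t,d}/G_d)=0$ we have $G_{t,d}=G_d\subseteq G$, so every element of $G_{t,d}$ has zero $\si_t$-exponent, and if $\tau(\si_t^m(d))=\si_t^{m'}(d)$ with $\tau\in G$ then $\tau\,\si_t^{m-m'}\in G_{t,d}$, whence $m=m'$. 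Since moreover $\deg_{x_1}(A_m)<\deg_{x_1}(d)=\deg_{x_1}(\si_t^m(d))$ and each $\si_t^m(d)$ is irreducible, the fractions $A_m/\si_t^m(d)^j$ are precisely the nonzero orbital components of $L(f)$ with respect to $x_1$ and $G$ in the sense of \eqref{EQ:f add decomp G}; hence by Lemma~\ref{LEM:red1} each $A_m/\si_t^m(d)^j$ is $(\si_{x_1},\ldots,\si_{x_n})$-summable.

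From here I would run a triangular (lowest-order) argument to peel off the individual numerators. Since $\si_t^{\pm k}$ is a field automorphism commuting with every $\Delta_{x_p}$, and multiplication by $\ell_k\in\bK(t)$ likewise commutes with every $\Delta_{x_p}$, a fraction $a_i/\si_t^i(d)^j$ is summable if and only if $\ell_k\,\si_t^k(a_i)/\si_t^{k+i}(d)^j$ is summable for any fixed $k$. Suppose for contradiction that some $a_i/\si_t^i(d)^j$ is not summable; let $i_0$ be the least such index and $k_0$ the least index with $\ell_{k_0}\neq 0$. In $A_{i_0+k_0}=\sum_{k+i=i_0+k_0}\ell_k\si_t^k(a_i)$ only terms with $k\geq k_0$, i.e. $i\leq i_0$, survive, and for $i<i_0$ the corresponding fraction $\ell_k\si_t^k(a_i)/\si_t^{i_0+k_0}(d)^j$ is summable by minimality of $i_0$; subtracting these from the summable fraction $A_{i_0+k_0}/\si_t^{i_0+k_0}(d)^j$ leaves $\ell_{k_0}\si_t^{k_0}(a_{i_0})/\si_t^{i_0+k_0}(d)^j$ summable, and dividing by $\ell_{k_0}\in\bK(t)^{\times}$ and then applying $\si_t^{-k_0}$ shows $a_{i_0}/\si_t^{i_0}(d)^j$ is summable, a contradiction. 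Hence every $a_i/\si_t^i(d)^j$ is $(\si_{x_1},\ldots,\si_{x_n})$-summable, which completes the proof.

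The step I expect to require the most care is the orbit-separation step: one must be certain that after applying an arbitrary operator from $\bK(t)\langle S_t\rangle$ no two of the resulting denominator shifts $\si_t^m(d)$ can fall into a common $G$-orbit, which is exactly what $\rank(G_{t,d}/G_d)=0$ guarantees and is what licenses the use of the uniqueness of the orbital decomposition (note that for $\rank(G_{t,d}/G_d)=1$ this collapse genuinely occurs, which is why the separate criterion of Theorem~\ref{thm:telescopers} is needed there). Everything after that — the triangular bookkeeping and the observation that $\bK(t)$-scalars and the shift $\si_t$ commute with the $\Delta_{x_p}$ — is routine.
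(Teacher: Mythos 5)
Your proposal is correct and follows essentially the same route as the paper's proof: expand $L(f)$, invoke $\rank(G_{t,d}/G_d)=0$ to conclude the shifts $\si_t^m(d)$ lie in pairwise distinct $G$-orbits, apply Lemma~\ref{LEM:red1} to make each orbital component $A_m/\si_t^m(d)^j$ summable, and then peel off the individual $a_i/\si_t^i(d)^j$ using the triangular structure. The only cosmetic difference is that the paper normalizes so that $e_0\neq0$ and runs a direct induction on $i$, whereas you argue by contradiction with a least nonsummable index $i_0$ paired with the least nonzero coefficient index $k_0$; these are interchangeable.
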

		\begin{proof}
			Suppose that each $a_i/\si_t^i(d)^j$ is $(\si_{x_1},\ldots,\si_{x_n})$-summable for $0\leq i\leq I$. By the linearity of the difference operators $\Delta_{x_i}$, we see that $L={ \bf 1}$ is a telescoper for $f$. Conversely, assume that $L=\sum_{\ell=0}^\rho e_\ell S_t^\ell$ with $e_\ell\in\bK(t)$ is a telescoper of type $(\si_t;\si_{x_1},\ldots,\si_{x_n})$ for $f$. Without loss of generality, we may suppose that $e_0\neq0$. Then we have
			\begin{equation*}
				L(f)=\sum_{\ell=0}^\rho \sum_{i=0}^Ie_{\ell}\si_t^\ell\left(\frac{a_i}{\si_t^i(d)^j}\right)=\sum_{\ell=0}^{I+\rho}\left(\frac{\sum_{i=0}^\ell e_i \si_t^i(a_{\ell-i})}{\si_t^\ell(d)^j}\right)
			\end{equation*}
			is $(\si_{x_1},\ldots,\si_{x_n})$-summable where $e_\ell=0$ if $\ell>\rho$ and $a_i=0$ if $i>I$.
			Since $\rank (G_{t,d}/G_d)=0$,
			all $\si_t^\ell(d)$ with $\ell\in \Z$ are in distinct $G$-orbits. By Lemma~\ref{LEM:red1}, for any $\ell$ with $0\leq\ell\leq\rho$, there exist $g_{\ell,1},\ldots,g_{\ell,n}\in\bK(t,\vx)$ such that
			\begin{equation}\label{EQ:ind}
				\frac{\sum_{i=0}^\ell e_i \si_t^i(a_{\ell-i})}{\si_t^\ell(d)^j}=\Delta_{x_1}(g_{\ell,1})+\cdots+\Delta_{x_n}(g_{\ell,n}).
			\end{equation}
			
			To show that each $a_i/\si^i_{t}(d)^j$ is $(\si_{x_1},\ldots,\si_{x_n})$-summable for $0\leq i\leq I$, we proceed by induction. For $i=0$, substituting $\ell=0$ into \eqref{EQ:ind}, we get $a_0/d^j=\Delta_{x_1}(g_{0,1}/{e_0})+\cdots+\Delta_{x_n}(g_{0,n}/{e_0})$. Suppose that $a_i/\si^i_{t}(d)^j$ is $(\si_{x_1},\ldots,\si_{x_n})$-summable for $i=0,\ldots,s-1$ with $s\leq I$. Taking $\ell=s$ in Equation~\eqref{EQ:ind} yields that
			\[\frac{a_s}{\si_t^s(d)^j}=\Delta_{x_1}\left(\frac{g_{s,1}}{e_0}\right)+\cdots+\Delta_{x_n}\left(\frac{g_{s,n}}{e_0}\right)-\frac{1}{e_0}\sum_{i=1}^se_i\si_t^i\left(\frac{a_{s-i}}{\si_t^{s-i}(d)^j}\right).\]
			By the inductive hypothesis, we have ${a_{s-i}}/{\si_t^{s-i}(d)^j}$ is $(\si_{x_1},\ldots,\si_{x_n})$-summable for $1\leq i\leq s$. Note that $e_i\in\bK(t)$ is free of $\vx$. Due to the commutativity between $\si_t$ and $\si_{x_i}$ for $i=1,\ldots,n$, we get $\frac{1}{e_0}\sum_{i=1}^se_i\si_t^i\left(\frac{a_{s-i}}{\si_t^{s-i}(d)^j}\right)$ is $(\si_{x_1},\ldots,\si_{x_n})$-summable. Hence $a_s/\si^s_t(d)^j$ is also $(\si_{x_1},\ldots,\si_{x_n})$-summable.
		\end{proof}

		\begin{example}
			We continue Example~\ref{Eg:f_red_t}~(\ref{it:f1-t}) and write $f\in\bQ(t, x, y, z)$ as
			\[	f = \Delta_x (u_0) + \Delta_y (v_0) + \Delta_z (w_0) + r \text{ with } r =\frac{2x - 1}{d} + \frac{y}{\si_t(d)} + \frac{1}{\si_t^3(d)},\]
			where $u_0, v_0, w_0\in\bQ(t, x, y, z)$ and $d= x^2 + 2xy + z^2 +t $. Note that the isotropy groups $G_{t, d}$ and $G_{d}$ are trivial. The first term $(2x-1)/d$ of $r$ is not $(\si_x, \si_y, \si_z)$-summable in $\bQ(t, x, y, z)$ by the similar reason as in Example~\ref{Eg:f}~(\ref{it:f1}). Since $\rank(G_{t, d}/G_d) = 0$, we know from Lemma~\ref{LEM:red-ts1} that $r$ does not have any telescoper of type $(\si_t; \si_x, \si_y, \si_z)$ and neither does $f$.
		\end{example}
		
		\begin{lem}\label{LEM:red-ts2}
			Let $f = \sum_{i = 0}^I a_i/\si_t^i(d)^j\in\bK(t,\vx)$ be in the form~\eqref{EQ:red-t3}. If $\rank (G_{t,d}/G_d)=1$, then $f$ has a telescoper of type $(\si_t;\si_{x_1},\ldots,\si_{x_n})$ if and only if each $a_i/\si_t^i(d)^j$ has a telescoper of the same type for $0\leq i\leq I$.
		\end{lem}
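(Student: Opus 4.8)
Here is a plan for proving Lemma~\ref{LEM:red-ts2}. The plan is: prove the sufficiency by a least‑common‑left‑multiple argument, and prove the necessity by first recording the periodicity of the $G$‑orbits of the denominators $\si_t^i(d)$ that is forced by $\rank(G_{t,d}/G_d)=1$, then applying the orbital decomposition of Lemma~\ref{LEM:red1} to $L(f)$ for a telescoper $L$, and finally decoupling the resulting relations so as to manufacture a telescoper for each individual simple fraction. For the sufficiency, suppose each $a_i/\si_t^i(d)^j$ has a telescoper $L_i\in\bK(t)\langle S_t\rangle$ of type $(\si_t;\si_{x_1},\ldots,\si_{x_n})$. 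Writing $f=\sum_{i=0}^I a_i/\si_t^i(d)^j$ as a finite sum and invoking Remark~\ref{REM: LCLM}, the LCLM of $L_0,\ldots,L_I$ is a telescoper for $f$ of the same type; the only fact used is that operators in $\bK(t)\langle S_t\rangle$ commute with every $\Delta_{x_k}$.

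For the necessity I would first make the periodicity explicit. Since $\rank(G_{t,d}/G_d)=1$, the homomorphism from $G_{t,d}$ to $\Z$ sending an element to its exponent of $\si_t$ has kernel $G_{t,d}\cap G=G_d$ and image $s_0\Z$ for a unique integer $s_0\ge 1$; this is the same computation as in the proof of Lemma~\ref{LEM:Gd/Hd free}. Hence $s_0$ is the least positive integer with $\si_t^{s_0}(d)\sim_G d$, say $\si_t^{s_0}(d)=\tau_0(d)$ with $\tau_0\in G$, and since $\si_t$ commutes with $\tau_0$ one gets, for all $k,k'\in\Z$, that $[\si_t^{k}(d)]_G=[\si_t^{k'}(d)]_G$ if and only if $k\equiv k'$ modulo $s_0$. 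Because in the form~\eqref{EQ:red-t3} the $\si_t^i(d)$ with $0\le i\le I$ are pairwise non‑$G$‑equivalent, this forces $I<s_0$, so $0,1,\ldots,I$ lie in distinct residue classes modulo $s_0$.

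Next, let $L=\sum_{\ell=0}^{\rho}e_\ell S_t^\ell$ be a telescoper for $f$; dividing out a power of $S_t$, which preserves being a telescoper (again by commutativity with the $\Delta_{x_k}$), I may assume $e_0\ne 0$. If $\rho=0$ then $f$ itself is $(\si_{x_1},\ldots,\si_{x_n})$‑summable and, the $\si_t^i(d)$ being in distinct $G$‑orbits, Lemma~\ref{LEM:red1} already gives that each $a_i/\si_t^i(d)^j$ is summable, so $L=\mathbf 1$ is the desired telescoper for each. For $\rho\ge 1$, expand $L(f)=\sum_{k=0}^{I+\rho}c_k/\si_t^k(d)^j$ with $c_k=\sum_{\ell+i=k}e_\ell\si_t^\ell(a_i)$; this is $(\si_{x_1},\ldots,\si_{x_n})$‑summable. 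Grouping the terms by the residue $r$ of $k$ modulo $s_0$ and applying Lemma~\ref{LEM:red1}, every $G$‑orbital component
\[
\Phi_r=\sum_{i=0}^{I}M_i^{(r)}\!\left(\frac{a_i}{\si_t^i(d)^j}\right),\qquad M_i^{(r)}:=\sum_{\substack{0\le\ell\le\rho\\ \ell\equiv r-i\,(\mathrm{mod}\,s_0)}}e_\ell S_t^\ell ,
\]
is $(\si_{x_1},\ldots,\si_{x_n})$‑summable for $r=0,\ldots,s_0-1$; here the $M_i^{(r)}$ are the ``slices'' of $L$, so $\sum_{r}M_i^{(r)}=L\ne 0$ for each $i$.

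It remains to extract, for each $i$, a single nonzero $L_i\in\bK(t)\langle S_t\rangle$ with $L_i(a_i/\si_t^i(d)^j)$ summable, i.e.\ a telescoper; I expect this to be the main obstacle. I would argue by induction on $I$, the base case $I=0$ being trivial. For $I\ge 1$, the $s_0\ge I+1$ summability relations $\Phi_r\equiv 0$ (modulo summability) are governed by the array $(M_i^{(r)})_{0\le r\le s_0-1,\,0\le i\le I}$, whose entries depend only on $r-i$ modulo $s_0$, i.e.\ it is the truncation of an $s_0\times s_0$ circulant over the Ore domain $\bK(t)\langle S_t\rangle$; using that $\bK(t)\langle S_t\rangle$ is a left and right Euclidean domain, I would form a left combination of the $\Phi_r$ that kills the columns $i<I$ modulo summability while remaining nonzero on the column $i=I$, thereby producing a telescoper $L_I$ for $a_I/\si_t^I(d)^j$; then $\mathrm{LCLM}(L,L_I)$ is a telescoper for $f-a_I/\si_t^I(d)^j=\sum_{i=0}^{I-1}a_i/\si_t^i(d)^j$, to which the induction hypothesis applies. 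The delicate point is precisely this non‑commutative bookkeeping with the slice operators — guaranteeing that the combination annihilating the lower columns does not annihilate the top column; conceptually this reflects that $V_{[d]_{G_t},j}$, taken modulo $\sum_k\operatorname{im}\Delta_{x_k}$, is an induced $\bK(t)\langle S_t\rangle$‑module over $\bK(t)\langle S_t^{s_0}\rangle$, in which an element supported on a single $G$‑orbit summand is torsion over the large ring exactly when its base component is torsion over the small ring. Everything on the summability side is linear and handled by Lemma~\ref{LEM:red1}, and the overall structure parallels, one level up, the rank‑$0$ argument in Lemma~\ref{LEM:red-ts1}.
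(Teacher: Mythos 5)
Your proposal follows the same route as the paper's proof. The sufficiency via Remark~\ref{REM: LCLM} is identical. For the necessity, your setup — identifying the minimal $s_0$ (the paper's $k_0$) with $\si_t^{s_0}(d)\sim_G d$, noting $I<s_0$, performing the $s_0$‑exponent separation of $L$, invoking Lemma~\ref{LEM:red1} to split $L(f)$ into orbital components, and arriving at the circulant‑truncated system of summability relations $\Phi_r\equiv 0$ indexed by residues $r\bmod s_0$ — reproduces the paper's construction of $\mathcal{L}_{k_0}\cdot\mathcal{V}\equiv 0$ with $\mathcal{L}_{k_0}$ the $k_0\times k_0$ circulant over $\bK(t)\<S_t>$ in the slice operators $L_0,\ldots,L_{k_0-1}$.

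The gap is exactly where you flag it: producing, for each column $i$, a single nonzero operator acting on that column alone. Your sketch — iterated Euclidean eliminations over the Ore ring that ``kill the columns $i<I$ while remaining nonzero on $i=I$'', plus an induction on $I$ — is plausible in outline but the claim that the combination cannot annihilate the top column is precisely the nontrivial assertion and is not established. Over a noncommutative Ore domain, row‑reduction of a circulant does not come for free with a nonvanishing statement for the resulting diagonal entries; this is the content of~\cite[Proposition 4.3]{Chen2016}, which the paper invokes to obtain $\mathcal{M}\cdot\mathcal{L}_{k_0}=\mathrm{diag}(T_0,\ldots,T_{k_0-1})$ with every $T_i\neq 0$. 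Without that result (or a self‑contained proof of it), your argument does not close: you would need to justify, not merely assert, that the eliminations preserve nontriviality on the remaining column, and your appeal to an ``induced module over $\bK(t)\<S_t^{s_0}>$'' intuition is not a substitute for that verification. If you supply Proposition 4.3 of Chen2016 (or prove the relevant nonvanishing directly), the rest of your plan goes through and matches the paper's proof.
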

		\begin{proof}
			Sufficiency follows from Remark~\ref{REM: LCLM}. The proof of necessity is a natural generalization from the trivariate case~\cite[lemma 5.3]{Chen2016} to the multivariate case. Suppose $L = \sum_{i = 0}^\ell e_i S_t^i\in \bK(t)\<S_t>$ is a telescoper for $f$. Since $\rank(G_{t,d}/G_d) = 1$, there is a minimal positive integer $k_0$ such that $\si_t^{k_0} (d) = \si_{x_1}^{k_1}\cdots\si_ {x_n}^{k_n}(d)$ for some integers $k_1,\ldots, k_n$. In the expression~\eqref{EQ:red-t3}, we require that $\si_t^i(d)$ and $\si_t^{i^\prime}(d)$ are not $G$-equivalent for any $0\leq i\neq i^\prime\leq I$. By the minimality of $k_0$, we may assume $f = \sum_{i = 0}^{k_0 - 1}a_i/\si_t^i(d)^j$. The $k_0$-exponent separation of $L$ (see~\cite[Section 4]{Chen2016}) is defined as follows
			\[L = L_0 + L_1 + \cdots + L_{k_0 - 1},\]
			where $L_i = \sum_{j = 0}^{\ell} e_{jk_0 + i} S_t^{jk_0 + i}$ and $e_i = 0$ if $i > \ell$. Since $L(f)$ is $(\si_{x_1}, \ldots, \si_{x_n})$-summable, by Lemma~\ref{LEM:red1} each orbital component of $L(f)$ is summable. So we have
			\begin{equation}\label{EQ:exp_sep_eqs}
				\left\{\begin{aligned}
					L_0\frac{a_0}{d^j}&+ L_{k_0-1}\frac{a_1}{\si_t(d)^j} + \cdots +L_{1}\frac{a_{k_0 - 1} }{\si_t^{k_0 - 1}(d)^j}&\equiv 0\\
					L_1\frac{a_0}{d^j}&+\quad~L_0\frac{a_1}{\si_t(d)^j} + \cdots + L_{2}\frac{a_{k_0-1}}{ \si_x^{k_0 - 1}(d)^j} &\equiv 0\\
					& \hspace{2.0cm}\cdots&\\
					L_{k_0 - 1}\frac{a_0}{d^j}&+ L_{k_0-2}\frac{a_1}{\si_t(d)^j}+\cdots + L_{0}\frac{a_ {k_0-1}}{\si_t^{k_0-1}(d)^j}&\equiv 0,
				\end{aligned}\right.
			\end{equation}
			where $f \equiv 0$ means that $f$ is $(\si_{x_1}, \ldots, \si_{x_n})$-summable in $\bK(t, \vx)$. Taking
			\[\mathcal{V} = \left[\frac{a_0}{d^j}, \frac{a_1}{\si_t(d)^j},\ldots,\frac{a_{k_0-1}}{\si_t ^{k_0-1}(d)^j}\right]^T,\]
			then Equation~\eqref{EQ:exp_sep_eqs} can be written as
			\begin{equation*}
				\mathcal{L}_{k_0} \cdot \mathcal{V} \equiv 0,
			\end{equation*}
			where
			\[\mathcal{L}_{k_0}= \left[\begin{matrix}
				L_0 & L_{k_0 - 1}&L_{k_0 - 2}&\cdots& L_{1}\\
				L_1 & L_0 &L_{k_0 - 1}&\cdots&L_2\\
				L_2 & L_1 &L_0 &\cdots&L_3\\
				\vdots&\vdots&\vdots&&\vdots\\
				L_{k_0 - 1}&L_{k_0 - 2}& L_{k_0 - 3}&\cdots& L_0
			\end{matrix}\right].\]
			According to~\cite[Proposition 4.3]{Chen2016}, there exist nonzero operators $T_0, \ldots, T_{k_0 - 1}\in \bK(t)\<S_t>$ and a matrix $\mathcal{M}$ over $\bK(t)\<S_t>$ such that
			\begin{equation*}
				\mathcal{M}\cdot\mathcal{L}_{k_0} = \text{diag}(T_0, \ldots, T_{k_0 - 1}).
			\end{equation*}
			For each $0 \leq i \leq k_0 - 1$, we know that $T_i$ is a telescoper of type $(\si_t; \si_{x_1}, \ldots,\si_{x_n})$ for $a_i/\si_t^i(d)^j$, because the operators in $T_i \in \bK(t)\<S_t>$ commute with the difference operators $\Delta_{x_1}, \ldots, \Delta_{x_n}$.
		\end{proof}
		Now we consider the existence problem of telescopers for simple fractions in the form
		\begin{equation}\label{EQ:red-ts}
			f=\frac{a}{d^j}
		\end{equation}
		where $j\in\N^+$, $a\in\bK(t,\hx_1)[x_1]$, $d\in\bK[t,\vx]$, $\deg_{x_1}(a)<\deg_{x_1}(d)$ and $d$ is irreducible such that $\rank (G_{t,d}/G_d)=1$.

		\begin{thm}\label{THM:telescopers}
			Let $f\in\bK(t,\vx)$ be as in~\eqref{EQ:red-ts}. Let $\{\tau_0,\tau_1,\ldots,\tau_r\}(1\leq r<n)$ be a basis of $G_{t,d}$ such that $G_{t,d}/G_d=\<\bar\tau_0>$ and $\{\tau_1,\ldots,\tau_r\}$ is a basis of $G_d$ (take $\tau_1={\bf1}$, if $G_d=\{{\bf 1}\}$). Let $\tau_0=\si_t^{k_0}\si_{x_1}^{-k_1}\cdots\si_{x_n}^{-k_n}$ for some $k_i\in\set Z$ and set $T_0=S_t^{k_0}S_{x_1}^{-k_1}\cdots S_{x_n}^{-k_n}$. Then $f$ has a telescoper of type $(\si_t;\si_{x_1},\ldots,\si_{x_n})$ if and only if there exists a nonzero operator $L\in\bK(t)\<T_0>$ such that \[L(a)=\Delta_{\tau_1}(b_1)+\cdots\Delta_{\tau_r}(b_r)\] for some $b_i\in\bK(t,\hx_1)[x_1]$ with $\deg_{x_1}(b_i)<\deg_{x_1}(d)$ for $1\leq i\leq r$.
		\end{thm}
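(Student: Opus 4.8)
The plan is to prove the equivalence in two stages, the second being a direct appeal to the summability criterion (Theorem~\ref{THM:rank=1}). First I would establish the intermediate statement: $f$ has a telescoper of type $(\si_t;\si_{x_1},\ldots,\si_{x_n})$ if and only if there is a nonzero operator $L\in\bK(t)\<T_0>$ for which $L(f)$ is $(\si_{x_1},\ldots,\si_{x_n})$-summable in $\bK(t,\vx)$; then I would compute $L(f)$ explicitly and apply Theorem~\ref{THM:rank=1} to the resulting simple fraction.

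The tool behind the first stage is the congruence $T_0^m(f)\equiv\si_t^{mk_0}(f)$ modulo $(\si_{x_1},\ldots,\si_{x_n})$-summable functions. Indeed, writing $T_0=S_t^{k_0}S_{x_1}^{-k_1}\cdots S_{x_n}^{-k_n}$, one has $T_0^m(f)-\si_t^{mk_0}(f)=\si_t^{mk_0}\bigl(\Delta_{\mu_m}(f)\bigr)$ with $\mu_m=\si_{x_1}^{-mk_1}\cdots\si_{x_n}^{-mk_n}\in G=\<\si_{x_1},\ldots,\si_{x_n}>$, and Lemma~\ref{Formula} rewrites $\Delta_{\mu_m}$ as $\Delta_{x_1}\tilde\si_{m,1}+\cdots+\Delta_{x_n}\tilde\si_{m,n}$ with $\tilde\si_{m,\ell}\in\bK(t)[G]$; applying $\si_t^{mk_0}$, which commutes with every $\Delta_{x_\ell}$, keeps this summable. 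Hence, for $L=\sum_m\ell_m T_0^m$ the element $L(f)$ differs by a summable function from $L'(f)$, where $L'=\sum_m\ell_m S_t^{mk_0}\in\bK(t)\<S_t>$ (the $\ell_m$ being free of $\vx$), and since $k_0\neq 0$ the exponents $mk_0$ are pairwise distinct, so $L'\neq0$ whenever $L\neq0$. This already yields the ``if'' direction: a nonzero $L\in\bK(t)\<T_0>$ with $L(f)$ summable produces a nonzero $L'\in\bK(t)\<S_t>$ with $L'(f)$ summable, i.e.\ a telescoper for $f$.

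The converse is the delicate point, since a telescoper $\widetilde L=\sum_{i=0}^{\rho}e_iS_t^i\in\bK(t)\<S_t>$ need not be a polynomial in $S_t^{k_0}$. I would separate it by residues modulo $k_0$: write $\widetilde L=\sum_{s=0}^{k_0-1}\widetilde L_s$, where $\widetilde L_s$ collects the terms $e_iS_t^i$ with $i\equiv s\pmod{k_0}$. Because $G_{t,d}/G_d=\<\bar\tau_0>$ and $\tau_0$ has $\si_t$-exponent $k_0$, the integer $k_0$ is the least positive one with $\si_t^{k_0}(d)\sim_G d$; therefore $\si_t^0(d),\ldots,\si_t^{k_0-1}(d)$ lie in pairwise distinct $G$-orbits, and $\widetilde L_s(f)$ lies in the orbital component $V_{[\si_t^s(d)]_G,j}$ of $\widetilde L(f)$. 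As these are distinct components, Lemma~\ref{LEM:red1} shows every $\widetilde L_s(f)$ is $(\si_{x_1},\ldots,\si_{x_n})$-summable. Picking $s_0$ with $\widetilde L_{s_0}\neq0$ and applying $\si_t^{-s_0}$, which preserves summability, turns $\widetilde L_{s_0}(f)$ into a summable function of the form $\sum_m c_m\,\si_t^{mk_0}(f)$ with coefficients $c_m=\si_t^{-s_0}(e_{mk_0+s_0})\in\bK(t)$ not all zero; the congruence above then shows that $L:=\sum_m c_m T_0^m\in\bK(t)\<T_0>$ is nonzero and satisfies that $L(f)$ is summable.

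For the second stage, fix any $L=\sum_m\ell_m T_0^m\in\bK(t)\<T_0>$. Since $\tau_0\in G_{t,d}$ fixes $d$, one has $T_0^m(f)=\tau_0^m(a)/d^j$, so $L(f)=L(a)/d^j$ with $L(a)=\sum_m\ell_m\tau_0^m(a)$; this polynomial lies in $\bK(t,\hx_1)[x_1]$ and has $x_1$-degree less than $\deg_{x_1}(d)$, because $\tau_0$ changes neither the $x_1$-degree nor membership in $\bK(t,\hx_1)[x_1]$ and the $\ell_m$ lie in $\bK(t)$. Thus $L(f)$ is a simple fraction of the form~\eqref{EQ:simple fraction} over $\bF=\bK(t)$, and Theorem~\ref{THM:rank=1}, applied with the basis $\{\tau_1,\ldots,\tau_r\}$ of $G_d$, gives that $L(f)$ is $(\si_{x_1},\ldots,\si_{x_n})$-summable if and only if $L(a)=\Delta_{\tau_1}(b_1)+\cdots+\Delta_{\tau_r}(b_r)$ for some $b_i\in\bK(t,\hx_1)[x_1]$ with $\deg_{x_1}(b_i)<\deg_{x_1}(d)$. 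Chaining this with the intermediate equivalence of the first stage proves the theorem. The main obstacle is the residue-class separation in the converse half of the first stage; its correctness rests on identifying $k_0$ as the minimal $G$-period of $d$ under $\si_t$, so that the pieces $\widetilde L_s(f)$ really are orbital components and Lemma~\ref{LEM:red1} applies.
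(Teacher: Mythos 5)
Your proof is correct and takes essentially the same route as the paper's: both directions use the residue-class (exponent) separation modulo $k_0$, Lemma~\ref{LEM:red1} for the orbital decomposition, the commutation $\tau_0(d)=d$ to identify $L(f)$ with $L(a)/d^j$ (the paper does this via the reduction formula~\eqref{EQ:red formula2}, you do it via the slicker observation $T_0^m(f)=\tau_0^m(a)/d^j$), and Theorem~\ref{THM:rank=1} to conclude. The only organizational difference is that you isolate the intermediate equivalence ``$f$ has a telescoper of type $(\si_t;\si_{x_1},\ldots,\si_{x_n})$ iff $f$ has one of type $(\tau_0;\si_{x_1},\ldots,\si_{x_n})$'' as a separate step, which the paper folds directly into its necessity/sufficiency computations.
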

		\begin{proof}
			Firstly, suppose that $L_0=\sum_{\ell=0}^\rho e_\ell T_0^\ell\in\bK(t)\<T_0>$ is a nonzero operator such that $L_0(a)=\sum_{i=1}^r\Delta_{\tau_i}(b_i)$ for some $b_i\in\bK(t,\hx_1)[x_1]$ with $\deg_{x_1}(b_i)<\deg_{x_1}(d)$. Set $L=\sum_{\ell=0}^\rho e_{\ell}S_t^{\ell k_0}$. Then
			\begin{align}
				L(f)&=\sum_{\ell=0}^\rho\frac{e_\ell \si_t^{\ell k_0}(a)}{\si_t^{\ell k_0}(d)^j}=\sum_{\ell=0}^\rho\frac{e_\ell \si_t^{\ell k_0}(a)}{\si_{x_1}^{\ell k_1}\cdots\si_{x_n}^{\ell k_n}(d)^j}\nonumber\\
				&=\sum_{i=1}^n\Delta_{x_i}(g_i)+\frac{\sum_{\ell=0}^\rho e_{\ell}\si_{t}^{\ell k_0}\si_{x_1}^{-\ell k_1}\cdots\si_{x_n}^{-\ell k_n}(a)}{d^j}\quad\text{for some }g_i\in\bK(t,\vx)\nonumber\\
				\label{EQ:trans1_sep}&=\sum_{i=1}^n\Delta_{x_i}(g_i)+\frac{L_0(a)}{d^j}\\
				&=\sum_{i=1}^n\Delta_{x_i}(g_i)+\frac{1}{d^j}\sum_{i=1}^r\left(\tau_i(b_i)-b_i)\right)\nonumber\\
				&=\sum_{i=1}^n\Delta_{x_i}(g_i)+\sum_{i=1}^r\left(\tau_i\left(\frac{b_i}{d^j}\right)-\frac{b_i}{d^j}\right)\nonumber\\
				\label{EQ:trans2_tel}&=\sum_{i=1}^n\Delta_{x_i}(g_i+h_i)\quad\text{for some } h_i\in\bK(t,\vx).
			\end{align}
			The last equality follows from Lemma~\ref{Formula}.
			
			Conversely, let $L$ be a telescoper of type $(\si_t;\si_{x_1},\ldots,\si_{x_n})
			$ for $f$. By the $k_0$-exponent separation (see~\cite[Section 4]{Chen2016}) of $L$ and Lemma~\ref{LEM:red1}, without loss of generality, we may assume $L=\sum_{\ell=0}^\rho e_\ell S_t^{\ell k_0}\in\bK(t)\<S_t>$ is a telescoper for $f$. Then
			\begin{equation*}
				L\left(\frac{a}{d^j}\right)=\sum_{\ell=0}^\rho\frac{e_\ell\si_{t}^{\ell k_0}(a)}{\si_{x_1}^{\ell k_1}\cdots\si_{x_n}^{\ell k_n}(d)^j}=\sum_{i=1}^n\Delta_{x_i}(h_i)+\frac{1}{d^j}h
			\end{equation*}
			for some $h_1,\ldots,h_n,h\in\bK(t,\vx)$ with
			\begin{equation}\label{EQ:t1}
				h=\sum_{\ell=0}^\rho e_{\ell}\si_t^{\ell k_0}\si_{x_1}^{-\ell k_1}\cdots\si_{x_n}^{-\ell k_n}(a)=\sum_{\ell=0}^{\rho}e_{\ell}\tau_0^{\ell}(a).
			\end{equation}
			Since $L(a/d^j)$ is $(\si_{x_1},\ldots,\si_{x_n})$-summable and $\{\tau_1,\ldots,\tau_r\}$ is a basis of $G_d$, by Theorem~\ref{THM:rank=1} with $\bF=\bK(t)$ we get
			\begin{equation}\label{EQ:t2}
				h=\Delta_{\tau_1}(b_1)+\cdots+\Delta_{\tau_r}(b_r)
			\end{equation}
			for some $b_i\in\bK(t,\hx_1)[x_1]$ with $\deg_{x_1}(b_i)<\deg_{x_1}(d)$ for $1\leq i\leq r$. Combining Equations~\eqref{EQ:t1} and~\eqref{EQ:t2} yields that $a$ has a telescoper $L_0=\sum_{\ell=0}^\rho e_\ell T_{0}^{\ell}$ of type $(\tau_0;\tau_1,\ldots,\tau_r)$.
		\end{proof}
		\begin{prop}\label{PROP:separation}
			Let $\tau\in G_t\setminus G$ and $f=a/b$ with $a,b\in\bK[t,\vx]$ and $\gcd(a,b)=1$. Then there exist $e_0,\ldots,e_r\in\bK(t)$, not all zero, such that $\sum_{i=0}^re_i\tau^i(f)=0$ if and only if $b=b_1b_2$ with $b_1\in\bK[t]$ and $b_2\in\bK[t,\vx]$ satisfying that $\tau(b_2)=b_2$.
		\end{prop}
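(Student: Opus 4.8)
The plan is to derive both implications of Proposition~\ref{PROP:separation} from a single reformulation. Write $\tau=\si_t^{k_0}\si_{x_1}^{k_1}\cdots\si_{x_n}^{k_n}$; since $\tau\in G_t\setminus G$ we have $k_0\neq0$. Put $\vec v=(k_0,k_1,\ldots,k_n)$, so that $\tau(g)(t,\vx)=g\big((t,\vx)+\vec v\big)$ for every $g\in\bK(t,\vx)$. Two features of $\tau$ will be used throughout: it restricts to a bijection of $\bK[t]$ (namely $\si_t^{k_0}$), and it preserves total degree in $\vx$. Factor $b=b_1b_2$ in the UFD $\bK[t,\vx]$, where $b_1$ is the product, with multiplicities, of the irreducible factors of $b$ that lie in $\bK[t]$ and $b_2$ collects the remaining ones; then $b_1\in\bK[t]$, every irreducible factor of $b_2$ has positive degree in some $x_i$, and $b_2$ is primitive in $\vx$ over $\bK[t]$. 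Since $\tau$ preserves these properties, $\tau^i(b_1)=\si_t^{ik_0}(b_1)\in\bK[t]$ and $\tau^i(b_2)$ enjoys the same properties as $b_2$, for every $i\in\Z$. The reformulation is: \emph{a relation $\sum_i e_i\tau^i(f)=0$ with $e_i\in\bK(t)$ not all zero exists if and only if the $\bK(t)$-subspace $V:=\Span_{\bK(t)}\{\tau^i(f):i\ge0\}$ of $\bK(t,\vx)$ is finite dimensional}; indeed, given such a relation, let $s$ be the largest index with $e_s\neq0$, and then either $f=0$ or $s\ge1$ and $\tau^s(f)\in\Span_{\bK(t)}\{f,\ldots,\tau^{s-1}(f)\}$, which forces $V=\Span_{\bK(t)}\{f,\ldots,\tau^{s-1}(f)\}$ and shows $V$ is $\tau$-stable; the converse is immediate. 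The degenerate case $a=0$ forces $b$ to be a unit, so $b_2=1$ and both sides of the equivalence hold trivially; I assume $a\neq0$ from now on.

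For the ``if'' direction, assume $\tau(b_2)=b_2$. Then $\tau^i(f)=\frac{\tau^i(a)}{\si_t^{ik_0}(b_1)\,b_2}$ for all $i$. Since $\deg_{\vx}\tau^i(a)=\deg_{\vx}a$, the polynomials $\tau^i(a)$ all lie in the finite dimensional $\bK(t)$-space of polynomials in $\vx$ of $\vx$-degree at most $\deg_{\vx}a$, so there exist $c_0,\ldots,c_M\in\bK(t)$, not all zero, with $\sum_{i=0}^M c_i\tau^i(a)=0$. Putting $e_i:=c_i\,\si_t^{ik_0}(b_1)$ --- not all zero, because $b_1\neq0$ makes every $\si_t^{ik_0}(b_1)$ nonzero --- gives $\sum_{i=0}^M e_i\tau^i(f)=\frac1{b_2}\sum_{i=0}^M c_i\tau^i(a)=0$, the desired relation.

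For the ``only if'' direction, assume $V$ is finite dimensional, say $V=\Span_{\bK(t)}\{f,\tau(f),\ldots,\tau^N(f)\}$, so that $\tau^i(f)\in V$ for every $i\ge0$. Because $\gcd(a,b)=1$ and each $\si_t^{ik_0}(b_1)$ is a unit of $\bK(t)[\vx]$, Gauss's lemma identifies the reduced denominator of $\tau^i(f)$ in $\bK(t)[\vx]$ with $\tau^i(b_2)$ up to a factor in $\bK(t)^\times$. On the other hand every element of $V$, put over a common denominator, has denominator dividing $D:=\lcm\big(b_2,\tau(b_2),\ldots,\tau^N(b_2)\big)$. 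Hence $\tau^i(b_2)\mid D$ for all $i\ge0$, so, writing $S$ for the set of irreducible factors of $b_2$, we get $\tau^i(S)\subseteq\bigcup_{j=0}^N\tau^j(S)$ for all $i$, that is, $\mathcal S:=\bigcup_{i\ge0}\tau^i(S)$ is finite. Then $\tau$ permutes the finite set $\mathcal S$ (being injective on it), so there is $m\ge1$ with $\tau^m(p)$ equal to $p$ up to a scalar for every $p\in S$; matching irreducible factorizations gives $\tau^m(b_2)=u\,b_2$ for some $u\in\bK^\times$. Finally, for an arbitrary point $\vy_0$ set $q(s):=b_2(\vy_0+s\vec v)\in\bK[s]$; from $\tau^m(b_2)=u\,b_2$ we obtain $q(s+m)=u\,q(s)$, and comparing leading coefficients forces $u=1$ whenever $\deg_s q\ge1$, whereupon $q$ is periodic, hence constant, a contradiction. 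So $q$ is constant for every $\vy_0$, i.e. $b_2(\vy_0+\vec v)=b_2(\vy_0)$ identically, which says exactly $\tau(b_2)=b_2$.

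The main obstacle is the closing step of the ``only if'' direction: passing from the soft fact ``the $\tau$-orbit closure $\mathcal S$ of the irreducible factors of $b_2$ is finite and $\tau$-stable'' to the sharp equality $\tau(b_2)=b_2$, with period exactly one rather than some $m>1$. This is precisely where characteristic zero enters, through the elementary fact that a univariate polynomial over an infinite field satisfying $q(s+m)=q(s)$ must be constant, applied along every line in the direction $\vec v$; the accompanying bookkeeping of the scalar ambiguities in $\tau^m(p)=p$ and $\tau^m(b_2)=u\,b_2$ is settled by comparing leading coefficients, using that $\tau$ is a translation. The remaining ingredients --- Gauss's lemma for tracking reduced denominators, the dimension count bounding $\bK(t)$-spans of degree-bounded polynomial families, and the permutation argument on $\mathcal S$ --- are routine.
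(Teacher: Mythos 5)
Your proposal is correct, and the forward (``if'') direction is essentially the same as the paper's, but the reverse (``only if'') direction takes a genuinely different route. The paper assumes a relation $\sum e_i\tau^i(f)=0$ holds, picks an irreducible factor $p$ of the primitive part $b_2$ with $\tau(p)\neq p$, invokes Sato's structural result (Lemma~\ref{LEM:G/Gp free}, that $G_t/G_{t,p}$ is free, hence torsion-free) to conclude the entire $\tau$-orbit $\{\tau^i(p)\}_{i\in\Z}$ is infinite, then selects a factor at the ``edge'' of this orbit within $b_2$ and observes that its image under $\tau^r$ appears in only one term of the putative relation, contradicting the vanishing. Your argument replaces this with three steps that avoid the isotropy-group machinery entirely: (i) recast the relation as finite-dimensionality of $V=\Span_{\bK(t)}\{\tau^i(f)\}$; (ii) use the common-denominator bound to show the orbit closure $\mathcal S$ of the irreducible factors of $b_2$ is finite, so $\tau$ permutes $\mathcal S$ and some power gives $\tau^m(b_2)=ub_2$; (iii) close with the leading-coefficient/periodicity argument along lines in direction $\vec v$, which pins down $u=1$ and $m=1$ using only that $\bK$ is infinite of characteristic zero. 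What each approach buys: the paper's proof is shorter and reuses a lemma already established elsewhere in the paper; yours is more self-contained and elementary, substituting an explicit ``periodic polynomial is constant'' argument for the freeness of $G_t/G_{t,p}$. Both are sound; your step (iii) is the place where characteristic zero is used transparently rather than being hidden inside the torsion-freeness of Sato's quotient, which is a nice clarification.
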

		\begin{proof}
			First we suppose $b=b_1b_2$ with $b_1,b_2$ satisfying the above conditions. Then for any $i\in\N$,
			\begin{equation}
				\tau^i(f)=\frac{\tau^i(a)}{\tau^i(b_1b_2)}=\frac{\tau^i(a)}{\tau^i(b_1)b_2}=\frac{\tau^i(a/b_1)}{b_2}.
			\end{equation}
			Note that $b_1\in\bK[t]$ and the total degrees of the polynomials $\tau^i(a)$ in $\vx$ are the same as that of $a$. Thus all shifts of $a/b_1$ lie in a finite dimensional linear space over $\bK(t)$. So there exist $e_0,e_1,\ldots,e_r\in\bK(t)$, not all zero, such that $\sum_{i=0}^re_i\tau^i(a/b_1)=0$. This implies $\sum_{i=0}^r e_i \tau^i(f)=0$.
			
			Conversely, suppose $\sum_{i=0}^re_i\tau^i(f)=0$. Let $b_1$ and $b_2$ be the content and primitive part of $b$ as a polynomial in $\vx$ over $\bK(t)$. If $b_2\in\bK$, then we are done. Now we assume that $b_2\notin\bK$. Then all of its irreducible factors have positive total degrees in $\vx$. Assume that there exists an irreducible polynomial $p$ such that $\tau(p)\neq p$. By Lemma~\ref{LEM:G/Gp free}, the quotient group $G_t/G_{t,p}$ is free, so is torsion free. So for any integer $i\neq 0$, $\tau^{i}(p)\neq p$. Among all of such irreducible factors of $b_2$, we can find one factor $p$ such that $\tau^{i}(p)\nmid b_2$ for any integer $i<0$. Let $s$ be the largest integer such that $\tau^{s}(p)\mid b_2$. Then the irreducible polynomial $\tau^{r+s}(p)$ divides $\tau^r(b_2)$, but $\tau^{r+s}{ (p)}\nmid\tau^i(b_2)$ for any $0\leq i\leq r-1$. Otherwise $\tau^{r+s-i}(p)\mid b_2$, which contradicts the choice of $s$. Therefore we have $\sum_{i=0}^re_i\tau^i(f)\neq 0$, since $p$ depends on $\vx$ and the coefficients $e_i$ are in $\bK(t)$. This leads to a contradiction. So every irreducible factor $p$ of $b_2$ satisfies the property that $\tau(p)=p$. This implies that $\tau(b_2)=b_2$.
		\end{proof}

              \begin{lem}\label{LEM:annihilator}
            Let $\tau\in G_t\setminus G$ and $f=a/(b_1b_2)$ with $b_1\in\bK[t]$, $a,b_2\in\bK[t,\vx]$ and $\tau(b_2)=b_2$.
               Then we can compute $e_0,\ldots,e_r\in\bK[t]$, not all zero, such that $\sum_{i=0}^re_i\tau^i(f)=0$.
            \end{lem}
            \begin{proof}
             By Proposition~\ref{PROP:telescopers} below, we can construct a difference isomorphism between $(\bK(t,\vx), \tau)$ and $(\bK(t,\vx), \si_t)$ such that $\varphi\circ\tau = \si_{t} \circ \varphi$ and $\varphi(\bK[t])\subseteq \bK[t]$. Then $\sigma_t(\varphi(b_2)) =\varphi(\tau(b_2)) = \varphi(b_2)$ and for all $e_i(t)\in \bK[t]$,
             \[\sum_{i=0}^r e_i(t)\tau^i(f)=0\quad\text{if and only if}\quad\sum_{i=0}^r e_i(\varphi(t))\si_t^i(\varphi(f))=0.\] So we only need to consider the case $\tau = \sigma_t$. Now suppose that $f=a/(b_1b_2)$ with $b_1\in\bK[t]$, $a,b_2\in\bK[t,\vx]$ and $\sigma_t(b_2)=b_2$. It suffices to find a nonzero operator $L\in \bK[t]\<S_t>$ such that $L(f)=0$. We write $a=\sum_{i=0}^sa_it^i$ with $a_i\in \bK[\vx]$. For each $0\leq i\leq s$, let $L_i=t^i\sigma_t(b_1)S_t - (t+1)^ib_1$. Then $L_i$ is an operator in $\bK[t]\<S_t>$ such that $L_i(t^i/b_1) = 0$.  Since $\sigma_t(b_2) = b_2$ and $\sigma_t(a_i)=a_i$, we have $L_i(a_it^i/(b_1b_2)) = (a_i/b_2)L_i(t^i/b_1) = 0$. Let $L\in \bK[t]\<S_t>$ be the LCLM of $L_i$ for all $i=0,\ldots,s$ and write $L = R_iL_i$ with $R_i\in\bK[t]\<S_t>$. Then $L$ is a nonzero operator and
            \[L(f) = L\left(\sum_{i=0}^s\frac{a_it^i}{b_1b_2}\right) = \sum_{i=0}^sL\left(\frac{a_it^i}{b_1b_2}\right)=\sum_{i=0}^sR_iL_i\left(\frac{a_i t^i}{b_1b_2}\right)=0.\]
            This completes the proof.
            \end{proof}

		\begin{rem}\label{REM:tele-basis}
	For the bivariate case with $m=n=1$, our existence criterion coincides with the known result in \cite[Theorem 1]{AbramovLe2002} and~\cite[Theorem 4.11]{ChenSinger2012}. Let $f=a/d^j\in\bK(t,\vx)$, where $j\in\N^+$, $a\in\bK(t,\hx_1)[x_1]$, $d\in\bK[t,\vx]$, $\deg_{x_1}(a)<\deg_{x_1}(d)$ and $d$ is irreducible. Let $G_t=\<\si_t;\si_{x_1}>$ and $G=\<\si_{x_1}>$. Since the degree of $d$ in $x_1$ is positive, we have $G_d=\{{\bf1}\}$. If $\rank(G_{t,d}/G_d)=0$, then by Lemma~\ref{LEM:red-ts1} and Theorem~\ref{THM:rank=1}, $f$ has a telescoper of type $(\si_t,\si_{x_1})$ if and only if $a=0$. If $\rank(G_{t,d}/G_d)=1$, then there exists $\tau=\si_t^s\si_{x_1}^k\in G_{t,d}$ with $s>0$ such that $G_{t,d}/G_d=\<\bar\tau>$.  By Theorem~\ref{THM:telescopers} and Proposition~\ref{PROP:separation}, $f$ has a telescoper of type $(\si_t;\si_{x_1})$ if and only if $a=c/b$ with $c\in\bK[t,\vx]$, $b\in\bK[t,\hx_1]$, $\gcd(b,c)=1$, where $b$ can be written as $b=b_1b_2$ with $b_1\in\bK[t]$ and $b_2\in\bK[t,\hx_1]$ such that $\tau(b_2)=b_2$. Since $m=n=1$, we have $b\in\bK[t]$ and hence $f$ always has a telescoper of type $(\si_t;\si_{x_1})$.
		\end{rem}

		\begin{example}\label{ex:shift}
			Let $f = 1/(t^s + x_1^s + \cdots+x_n^s)\in \Q(t, x_1,\ldots,x_n)$ with $s,n \in \N \setminus\{0\}$. Then $d=t^s + x_1^s + \cdots+x_n^s$ is irreducible over $\Q$ if $n >1$. Let $G_{t,d}$ and $G_d$ be the isotropy groups of $d$ in $G_t=\la\si_t, \si_{x_1},\ldots,\si_{x_n}\ra$ and $G=\la \si_{x_1},\ldots,\si_{x_n}\ra$, respectively. Then we can decide the existence of telescopers of type $(\si_t; \si_{x_1},\ldots, \si_{x_n})$ for all cases of $f$.
			\begin{enumerate}
				\item If $s=1$, then $d$ is irreducible. Since $G_{t, d} =\<\tau>$ with $\tau = \si_t\si_{x_1}^{-1}$ and $G_d =\{\bf 1\}$, we have $G_{t,d}/G_d =\<\bar\tau>$ and $\rank(G_{t,d}/G_d) = 1$. Observing that $\tau - 1$ is an annihilator of the numerator of $f$, by Theorem~\ref{THM:telescopers} we get $f$ has a telescoper. In fact, $L(f) = \Delta_{x_1}(f) + \Delta_{x_2}(0) +\cdots +\Delta_{x_n} (0)$, where $L = S_t - 1$.
				\item If $s>1$ and $n=1$, then $f = 1/(t^s+x_1^s) = \sum_{j = 1}^s a_j/(t-\beta_jx_1) $, where $\beta_j$'s are distinct roots of $z^s = -1$ and $a_j = {1}/{(s (\beta_jx_2)^{s - 1})}$. There exists $j\in\{1, \ldots, s\}$ such that $\beta^j \notin \Z$. Then for $d_j = t - \beta_jx_1$, we have $G_{t, d_j} = G_{d_j} = \{\bf 1\}$. So $a_j/d_j$ is not $\si_{x_1}$-summable in $\bC(t, x_1)$ and neither is $f$. By Lemma~\ref{LEM:red-ts1}, we get that $f$ does not have any telescoper of type $(\si_t, \si_{x_1})$ in $\bC(t)\<S_t>$. Hence $f$ does not have any telescoper of the same type in $\bQ(t)\<S_t>$.
				\item If $s>1$ and $n>1$, then $d$ is irreducible. Since $G_d=\{\bf 1\}$, $f$ is not $(\si_{x_1},\ldots,\si_{x_n})$-summable by Theorem~\ref{THM:rank=1}. Since $G_{t,d} = \{\bf 1\}$ and $\rank(G_{t,d}/G_d) = 0$, by Lemma~\ref{LEM:red-ts1}, we conclude that $f$ does not have any telescoper.
			\end{enumerate}
		\end{example}
		
		\begin{prop}\label{PROP:telescopers}
			Let $\{\tau_0,\tau_1,\ldots,\tau_r\}(1\leq r\leq n)$ be a family of $\Z$-linearly independent elements in $G_t$ such that $\tau_0\in G_t\setminus G$ and $\{\tau_1,\ldots,\tau_r\}\subseteq G$. Then there exists a $\bK$-automorphism $\varphi$ of $\bK(t,\vx)$ such that $\varphi$ is a difference isomorphism between the difference fields $(\bK(t,\vx),\tau_0)$ and $(\bK(t,\vx),\si_t)$, and simultaneously a difference isomorphism between $(\bK(t,\vx),\tau_i)$ and $(\bK(t,\vx),\si_{x_i})$ for all $i=1,\ldots,r$. Furthermore, $\varphi(\bK(t))\subseteq \bK(t)$ and hence for any $f\in\bK(t,\vx)$, $f$ has a telescoper of type $(\tau_0;\tau_1,\ldots,\tau_r)$ if and only if $\varphi(f)$ has a telescoper of type $(\si_t;\si_{x_1},\ldots,\si_{x_r})$.
		\end{prop}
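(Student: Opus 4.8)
The plan is to mimic the proof of Proposition~\ref{PROP:transformaiton}, realizing $\varphi$ as an invertible $\bK$-linear change of the variables $t,x_1,\dots,x_m$, but subject to the extra requirement that $\varphi$ preserve the subfield $\bK(t)$. This last point is exactly what is needed to keep the telescoper coefficients inside $\bK(t)$ and is the only genuinely new ingredient compared to Proposition~\ref{PROP:transformaiton}.

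First I would fix notation. Since $\tau_0\in G_t\setminus G$, write $\tau_0=\si_t^{k_0}\si_{x_1}^{k_1}\cdots\si_{x_n}^{k_n}$ with $k_0\neq 0$, and write $\tau_i=\si_{x_1}^{a_{i,1}}\cdots\si_{x_n}^{a_{i,n}}$ for $1\le i\le r$. Attach to each $\tau_j$ its exponent vector $\nu_j\in\Z^{m+1}$, with coordinates indexed by $t,x_1,\dots,x_m$; thus $\nu_0=(k_0,k_1,\dots,k_n,0,\dots,0)$ and, for $i\ge1$, $\nu_i=(0,a_{i,1},\dots,a_{i,n},0,\dots,0)$. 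As $G_t$ is free abelian, the hypothesis that $\{\tau_0,\dots,\tau_r\}$ is $\Z$-linearly independent says exactly that $\nu_0,\dots,\nu_r$ are $\Q$-linearly independent in $\Q^{m+1}$. The vectors $\nu_1,\dots,\nu_r$ all lie in the hyperplane $W=\{\,v\in\Q^{m+1}\mid \text{the $t$-coordinate of $v$ vanishes}\,\}$, so I would extend them to a $\Q$-basis $\nu_1,\dots,\nu_m$ of $W$; since $\nu_0\notin W$ (its $t$-coordinate is $k_0\neq 0$), the list $\nu_0,\nu_1,\dots,\nu_m$ is a basis of $\Q^{m+1}$. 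Let $A$ be the invertible $(m+1)\times(m+1)$ matrix whose rows are $\nu_0,\nu_1,\dots,\nu_m$, and define the $\bK$-automorphism $\varphi$ of $\bK(t,\vx)$ by $(\varphi(t),\varphi(x_1),\dots,\varphi(x_m)):=(t,x_1,\dots,x_m)\,A$. Reading off the entry in the $t$-slot and noting that $\nu_0$ is the only row of $A$ with a nonzero $t$-coordinate gives $\varphi(t)=k_0\,t$; hence $\varphi$ maps $\bK(t)$ onto itself, with $\varphi^{-1}(t)=t/k_0$.

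Next I would check the intertwining relations exactly as in Proposition~\ref{PROP:transformaiton}. Writing $v_0=t$, $v_j=x_j$ and $u_j=\varphi(v_j)=\sum_i A_{ij}v_i$, applying $\si_t$ (respectively $\si_{x_i}$) to $\varphi(h)=h(u_0,\dots,u_m)$ shifts each $u_j$ by the $j$-th entry of the row $\nu_0$ (respectively $\nu_i$), which is precisely the shift that $\tau_0$ (respectively $\tau_i$) induces on the arguments of $h$. Therefore $\varphi\circ\tau_0=\si_t\circ\varphi$ and $\varphi\circ\tau_i=\si_{x_i}\circ\varphi$ for $1\le i\le r$, and since $\varphi$ is a bijection it is a difference isomorphism between $(\bK(t,\vx),\tau_0)$ and $(\bK(t,\vx),\si_t)$ and between $(\bK(t,\vx),\tau_i)$ and $(\bK(t,\vx),\si_{x_i})$ for each such $i$.

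For the final equivalence I would argue as follows. Given a nonzero telescoper $L=\sum_{\ell}e_\ell T_0^\ell\in\bK(t)\langle T_0\rangle$ of type $(\tau_0;\tau_1,\dots,\tau_r)$ for $f$, say $L(f)=\sum_{\ell}e_\ell\tau_0^\ell(f)=\sum_{i=1}^r\Delta_{\tau_i}(g_i)$, I would set $\widetilde L:=\sum_{\ell}\varphi(e_\ell)S_t^\ell$; since $\varphi$ restricts to an automorphism of $\bK(t)$, each $\varphi(e_\ell)\in\bK(t)$ and $\widetilde L\neq 0$, so $\widetilde L\in\bK(t)\langle S_t\rangle$ is nonzero. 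Iterating $\varphi\circ\tau_0=\si_t\circ\varphi$ gives $\varphi\circ\tau_0^\ell=\si_t^\ell\circ\varphi$, whence $\widetilde L(\varphi(f))=\sum_\ell\varphi(e_\ell)\si_t^\ell(\varphi(f))=\varphi\bigl(L(f)\bigr)=\sum_{i=1}^r\Delta_{x_i}(\varphi(g_i))$, which is $(\si_{x_1},\dots,\si_{x_r})$-summable; thus $\widetilde L$ is a telescoper of type $(\si_t;\si_{x_1},\dots,\si_{x_r})$ for $\varphi(f)$. Running the same computation with $\varphi^{-1}$ (again an automorphism of $\bK(t)$) in place of $\varphi$ gives the converse. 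I expect the only point that needs care — and the one distinguishing this argument from Proposition~\ref{PROP:transformaiton} — to be the constraint that makes $\varphi(t)$ a scalar multiple of $t$ alone: one must take the completing vectors $\nu_{r+1},\dots,\nu_m$ inside $W$ and then observe that $A$ is still invertible, which is immediate because $\nu_0$ has nonzero $t$-coordinate while $\nu_1,\dots,\nu_m$ span $W$.
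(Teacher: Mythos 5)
Your proof is correct and follows essentially the same route as the paper's: both build $\varphi$ from the invertible $(m+1)\times(m+1)$ matrix whose rows are the exponent vectors of $\tau_0,\dots,\tau_r$, completed to a basis of $\Q^{m+1}$ by vectors with vanishing $t$-component so that $\varphi(t)$ is a scalar multiple of $t$, and then verify the intertwining relations and that $\varphi$ preserves $\bK(t)$. Your explicit introduction of the hyperplane $W$ and of $\widetilde L=\sum_\ell\varphi(e_\ell)S_t^\ell$ just spells out a little more carefully what the paper leaves at ``we can further assume $a_{i,0}=0$'' and the displayed equivalence.
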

		\begin{proof}
			Let
			$\tau_{i}=\si_t^{\a_{i,0}}\si_{x_1}^{\a_{i,1}}\cdots\si_{x_m}^{\a_{i,m}}$, where $\a_{i, j} = 0$ if $j>n$. Define $\val_i=(\a_{i,0},\a_{i,1},\ldots,\a_{i,m})\in\Z^{m+1}$ for $i=0,1,\ldots,r$. Since $\val_0,\val_1,\ldots,\val_r$ are linearly independent over $\Q$, we can find vectors $\val_{r+1},\ldots,\val_m\in\Q^{m+1}$ such that $\{\val_0,\val_1,\ldots,\val_m\}$ is a basis of $\Q^{m+1}$ over $\Q$. Write $\val_i=(\a_{i,0},\a_{i,1},\ldots ,\a_{i,m})$ for $i=r+1,\ldots,m$. Since $\tau_0\in G_t\setminus G$ and $\{\tau_1,\ldots,\tau_r\}\subseteq G$, we have $\a_{0,0}\neq0$ and $\a_{i,0}=0$ for $i=1,\ldots,r$. So we can further assume that $\a_{i,0}=0$ for $i=r+1,\ldots,m$. Let $A=(\a_{i,j})\in\Q^{(m+1)\times(m+1)}$ which is invertible. Let $\varphi$ be a $\bK$-automorphism of $\bK(t,\vx)$ defined by
			\[(\varphi(t),\varphi(x_1),\ldots,\varphi(x_m)):=(t,x_1,\ldots,x_m)A.\]
			Then $\varphi(t)=\a_{0,0}\cdot t$ and $\varphi(x_j)=\a_{0,j}\cdot t+\sum_{i=1}^m\a_{i,j}\cdot x_i$ for $j=1,\ldots,m$. It can be checked that
			$\varphi\circ\tau_0=\si_t\circ\varphi\text{ and } \varphi\circ\tau_i=\si_{x_i}\circ\varphi\text{ for } 1\leq i\leq r$. This means the following diagrams are commutative.
			\begin{center}
				\centering
				\vspace{-0.40cm}
				\begin{minipage}[t]{0.38\textwidth}
					$$\xymatrix{
						\bK(t,\vx) \ar[d]_{\tau_0} \ar[r]^{\varphi}
						&\bK(t,\vx) \ar[d]^{\si_t}
						\\
						\bK(t,\vx)
						\ar[r]^{\varphi}
						&\bK(t,\vx)}$$
				\end{minipage}
				\begin{minipage}[t]{0.1\textwidth}
					\centering
					\vspace{1cm}
					$\cdots$
				\end{minipage}
				\begin{minipage}[t]{0.38\textwidth}
					$$\xymatrix{
						\bK(t,\vx) \ar[d]_{\tau_i} \ar[r]^{\varphi}
						&\bK(t,\vx) \ar[d]^{\si_{x_i}}
						\\
						\bK(t,\vx)
						\ar[r]^{\varphi}
						&\bK(t,\vx)}$$
				\end{minipage}
			\end{center}
			Note that $\varphi(\bK(t))\subseteq\bK(t)$. It follows that
			\begin{equation*}
				\sum_{\ell=0}^\rho{e_\ell(t)\tau_{0}^\ell(f)=\sum_{i=1}^r\Delta_{\tau_i}(g_i)}\quad\text{if and only if}\quad\sum_{\ell=0}^\rho e_\ell(\a_{0,0}t)\si_t^\ell(\varphi(f))=\sum_{i=1}^r\Delta_{{x_i}}(\varphi(g_i)),
			\end{equation*}
			whenever $e_\ell(t)\in\bK(t)$ and $f,g_i\in\bK(t,\vx)$. This completes our proof.
		\end{proof}
        The way of constructing a difference isomorphism in Proposition~\ref{PROP:telescopers} is almost the same as that in Proposition~\ref{PROP:transformaiton}. The only difference is that in Proposition~\ref{PROP:telescopers}, we require $\varphi(\bK(t))\subseteq \bK(t)$.

		Let $f=a/d^j$ be in the form~\eqref{EQ:red-ts} with $\rank (G_{t,d}/G_d)=1$. By Theorem~\ref{THM:telescopers}, there are two cases according to whether $G_d$ is trivial or not. If $G_d=\{{\bf1}\}$, then $a/d^j$ has a telescoper of type $(\si_t;\si_{x_1},\ldots,\si_{x_n})$ if and only if there exists a nonzero operator $L\in\bK(t)\<T_0>$ such that $L(a)=0$. This problem is solved by Proposition~\ref{PROP:separation}. If $G_d$ is nontrivial, we can apply the transformation in Proposition~\ref{PROP:telescopers} to reduce the existence problem of telescopers to that of fewer variables. Moreover, the general existence of telescopers of type $(\tau_0;\tau_1,\ldots,\tau_n)$ for rational functions has also been solved.
		
		Similar to the treatment in Algorithm~\ref{Alg:summable}, if $f$ has a telescoper of type $(\si_t;\si_{x_1},\ldots,\si_{x_n})$, Algorithm~\ref{alg:telescoper} will output \textit{unnormalised certificates} for $f$ in the form
		\begin{equation*}
		g=\sum_{\ell=1}^\rho\prod_{k=1}^{K_\ell}\psi_{\ell,k}(v_{\ell,k}),
		\end{equation*}
		where $v_{\ell,k}\in\bF(t,\vx)$ and the $\psi_{\ell,k}$'s are $\bQ$-affine maps.

		\begin{algorithm}[Constructive Testing of the Existence of Telescopers]\label{alg:telescoper}
            \quad\\{\em\bf IsTelescoperable($f$, $[x_1,\ldots, x_n]$, $t$)}.\\
			INPUT: a multivariate rational function $f\in \bK(t, \vx)$, a set $\{x_1, \ldots, x_n\}$ of variable names and a variable name $t$ for telescoping;\\
			OUTPUT: a telescoper $L$ and its unnormalised certificates $g_1, \ldots, g_n$ if $f$ has a telescoper of type $(\si_t;\si_{x_1},\ldots,\si_{x_n})$; {\em false} otherwise.
			
			\step 12 using shift equivalence testing and irreducible partial fraction decomposition, decompose $f$ into $f = f_0 + \sum_{j\in\N^+ }\sum_{[d]_{G_t}}f_{[d]_{G_t},j}$ as in Equation~\eqref{EQ:f add decomp Gt}.
			\step 22 apply the reduction to $f_0$ and each nonzero component $f_{[d]_{G_t},j}$ such that
			\begin{equation*}
				f = \Delta_{x_1} (g_1) + \cdots + \Delta_{x_n} (g_n) + r \text{ with } r = \sum_{i = 1}^I \sum_{j = 1}^{J_i} \sum_{\ell = 0}^{s_{i, j}}\frac{a_{i, j, \ell}}{\si_t^{\ell}(d_i)^j},
			\end{equation*}
			where $\sum_{\ell = 0}^{s_{i, j}}\frac{a_{i, j, \ell}}{\si_t^{\ell}(d_i)^j}$ is the remainder of $f_{[d_i]_{G_t}, j}$ described in Lemma~\ref{LEM:red-t2}.
			\step 32 if $r = 0$, then {{\em\bf return}} $L = { \bf 1}$ and $g_1, \ldots, g_n$.
			\step 42 {{\em\bf for}} $i = 1,\ldots,I$ {{\em\bf do}}
			\step 53 using Remark~\ref{REM:isotropy_normalization}, one can find elements $\tau_{i, 0}, \tau_{i, 1}, \ldots, \tau_{i, r_i}\in G_{t,d_i}$ such that $G_{t,d_i}/G_{d_i} = \<\bar \tau_{i, 0}>$ and	
			\{$\tau_{i, 1}, \ldots, \tau_{i, r_i}\}$ forms a basis for $G_{d_i}$.
			\step 63 {{\em\bf for}} $j = 1, \ldots, J_i$, $0 = 1, \ldots, s_{i, j}$ {{\em\bf do}}
			\step 74 if $\rank (G_{t,d_i}/G_{d_i}) = 0$ then
			\step 85 execute Algorithm~\ref{Alg:summable} with \text{\em\bf IsSummable($r_{i, j, \ell}$, $[x_1,\ldots,x_n]$)}, where $r_{i, j, \ell} := \frac{a_{i, j, \ell}}{\si_t^{\ell}(d_i)^j}$.
			\step 95 if $r_{i, j, \ell}$ is $(\si_{x_1}, \ldots, \si_{x_n})$-summable in $\bF(\vx)$, let
			\begin{equation*}
				r_{i, j, \ell} = \Delta_{x_1} \left(h_{i, j, \ell}^{(1)}\right) + \cdots + \Delta_{x_n} \left(h_{i, j, \ell}^{(n)}\right)
			\end{equation*}
			and set $L_{i, j, \ell} = { \bf 1}$; {\em\bf return} {\em false} otherwise.
			\step {10}4 if $\rank (G_{t,d_i}/G_{d_i}) = 1$ then
			\step {11}5 choose $\tau_{i, 0} = \si_{t}^{k_{i, 0}} \si_{x_1}^{-k_{i, 1}}\cdots \si_{x_n}^{-k_{i, n}}$ with $k_{i, 0} > 0$.
			\step {12}5 set $T_{i,0} = S_t^{k_{i, 0}}S_{x_1}^{-k_{i, 1}}\cdots S_{x_n}^{k_{i, n}}$.
			\step {13}5 if $n=1$ or $G_{d_i} = \{\bf 1\}$ then
			\step {14}6 using Proposition~\ref{PROP:separation}, check whether there exists a nonzero operator $~\bar L_{i, j, \ell} (t, T_{i, 0}) \in \bK(t)\<T_{i, 0}>$ such that $\bar L_{i, j, \ell} (t, T_{i, 0}) (a_{i, j, \ell}) = 0$. If so, use Lemma~\ref{LEM:annihilator} to find such an operator $\bar L_{i, j, \ell} (t, T_{i, 0})$ and set $L_{i, j, \ell} (t, S_t) = \bar L_{i, j, \ell} (t, S_t^{k_{i, 0}})$. By Equation~\eqref{EQ:trans1_sep} we obtain
			\begin{equation*}
				L_{i, j, \ell} \left(\frac{a_{i, j, \ell}}{\si_t^\ell(d_i)^j}\right) =\sum_{\lambda = 1}^n \Delta_{x_\lambda} \left(h_{i, j, \ell}^{(\lambda)}\right) + \underbrace{\frac{\bar L_{i, j, \ell} (a_{i, j, \ell})}{\si_t^\ell(d_i)^j} }_{= 0};
			\end{equation*}
			{\em\bf return} {\em false} otherwise.
			
			\step {15}5 else
			\step {16}6 find a $\bK$-automorphism $\varphi_i$ of $\bK(t, \vx)$ given in Proposition~\ref{PROP:telescopers} such that $\varphi_i\circ\tau_{i,0}=\si_t\circ\varphi_i\text{ and } \varphi_i\circ\tau_{i,\lambda}=\si_{x_{\lambda}}\circ\varphi_i\text{ for } \lambda = 1, \ldots, r_i$.
			
			\step {17}6 set $\tilde a_{i, j, \ell} = \varphi_i (a_{i, j, \ell})$.
			\step {18}6 execute \text{\em\bf IsTelescoperable($\tilde a_{i, j, \ell}$, $[x_1, \ldots, x_{r_i}], t$)}.
			\step {19}6 if $\tilde a_{i, j, \ell}$ has a telescoper of type $(\si_t; \si_{x_1}, \ldots, \si_{x_{r_i}})$, let
			\begin{equation*}
				\tilde L_{i, j, \ell} (t, S_t) (\tilde a_{i, j, \ell})= \sum_{\lambda = 1}^{r_i} \Delta_{x_\lambda} \left(\tilde b_{i, j, \ell}^{(\lambda)}\right);
			\end{equation*}
			{\em \bf return} {\em false} otherwise.
			\step {20}6 apply $\varphi_{i}^{-1}$ to both sides of the previous equation to get
			\begin{equation*}
				\bar L_{i, j, \ell} (t, T_{i, 0}) (a_{i, j, \ell}) = \sum_{\lambda = 1}^{r_i} \Delta_{\tau_{i, \lambda}} \left(b_{i, j, \ell}^{(\lambda)}\right),
			\end{equation*}
			where $\bar L_{i, j, \ell} (t, T_{i, 0}) = \tilde L_{i, j, \ell} ({t}/{k_{i, 0}}, T_{i, 0})$ and $ b_{i, j, \ell}^{(\lambda)} = \varphi_i^{-1}(\tilde b_{i, j, \ell}^{(\lambda)})$ for all $\lambda = 1, \ldots, r_i$.
			\step {21}{6} set $L_{i, j, \ell} (t, S_t) = \bar L_{i, j, \ell} (t, S_t^{k_{i, 0}})$ and by Equations \eqref{EQ:trans1_sep} and~\eqref{EQ:trans2_tel} we obtain
			\begin{align*}
				L_{i, j, \ell} \left(\frac{a_{i, j, \ell}}{\si_t^\ell(d_i)^j}\right) & = \sum_{\lambda = 1}^n \Delta_{x_\lambda} \left(u_{i, j, \ell}^{(\lambda)}\right) + \frac{\bar L_{i, j, \ell} (a_{i, j, \ell})}{\si_t^\ell(d_i)^j}\nonumber\\
				& = \sum_{\lambda = 1}^n \Delta_{x_\lambda} \left(h_{i, j, \ell}^{(\lambda)}\right)
			\end{align*}
			for some $u_{i, j, \ell}^{(\lambda)}, h_{i, j, \ell}^{(\lambda)}\in \bK(t, \vx)$.
			\step {22}2 let $L\in \bK(t)\<S_t>$ be the LCLM of $L_{i, j, \ell}$ for all $i, j, \ell$ and write
			\[ L = R_{i, j, \ell} L_{i, j, \ell}\]
            for some $R_{i,j,\ell} \in \bK(t)\<S_t>$.
			\step {23}2 update $g_\lambda = L(g_\lambda) + \sum_{i = 1}^I \sum_{j = 1}^{J_i} \sum_{\ell = 0}^{s_{i, j}} R_{i, j, \ell} (h_{i, j, \ell}^{(\lambda)})$ for all $\lambda = 1, \ldots, n$.
			\step {24}2 {{\em\bf return}} $L$ and $g_1, \ldots, g_n$.
		\end{algorithm}

		\begin{example}\label{Eg:f_t}
			Let $\bK = \bQ$ and $f\in \bQ(t, x, y, z)$. We will decide constructively the existence of telescopers of type $(\si_t; \si_x, \si_y, \si_z)$ for various cases of $f$. Let $G_t = \<\si_t, \si_x, \si_y, \si_z>$ and $G = \<\si_x, \si_y, \si_z>$.
			\begin{enumerate}
				\item\label{it:f1_t} For $f = \frac{1}{(t + 1) (t + 2z) d}$ with $d = (t - 3y + x)^2(t + y)(t + z) + 1$, a basis of the isotropy group $G_{t, d}$ is $\{\tau_0\}$, where $\tau_0 = \si_t\si_x^{-4}\si_y^{-1}\si_z^{-1}$. Then $G_{t, d}/G_d = \<\bar\tau_0>$. Since $\rank (G_{t, d}/G_d) = 1$ and $G_d = \{\bf 1\}$ is a trivial group, we know from Theorem~\ref{THM:telescopers} that $f$ has a telescoper of type $(\si_t; \si_x, \si_y, \si_z)$ if and only if there exists a nonzero operator $L_0\in \bQ(t)\<T_0>$ with $T_0 = S_tS_x^{-4}S_y^{-1}S_z^{-1}$ such that
				\[L_0 (a) = 0, \text{ where } a = fd = \frac{1}{(t + 1) (t + 2 z)}.\]
				Note that the prime part of the denominator $b = (t + 1) (t + 2z)$ of $a$ with respect to the variables $\{x, y, z\}$ is $b_2 = t + 2z$ and $\tau_0(b_2) \neq b_2$. By Proposition~\ref{PROP:separation}, there does not exist any operator $L_0\in \bQ(t)\<T_0>$ such that $L_0 (a) = 0$. So $f$ does not have any telescoper of type $(\si_t; \si_x, \si_y, \si_z)$.
				\item\label{it:f2_t} For $f = \frac{1}{(t + 1)d}$ with $d$ being the same as in Example~\ref{Eg:f_t}~(\ref{it:f1_t}), it is easy to check that for $a = \frac{1}{t + 1}$,
				\begin{equation*}
					L_0(a) = 0\text{ with } L_0 = T_0 - \frac{t + 1}{t + 2},
				\end{equation*}
				where $T_0 = S_tS_x^{-4}S_y^{-1}S_z^{-1}$. So by Theorem~\ref{THM:telescopers}, $f$ has a telescoper $L$ of type $(\si_t; \si_x, \si_y, \si_z)$. In fact, we can take $L = S_t - \frac{t + 1}{t + 2}$. Then
				\begin{align*}
					L(f) &= \frac{\si_t(a)}{\si_t(d)} - \frac{t + 1}{t + 2}\cdot \frac{a}{d} = \frac{\si_t(a)}{\si_x^4\si_y\si_z(d)} - \frac{t + 1}{t + 2}\cdot \frac{a}{d}\\
					& = \Delta_x (u) + \Delta_y (v) + \Delta_z (w) + \underbrace{\frac{\tau_0(a) - ((t + 1)/(t + 2)) a}{d}}_{=L_0(a)/d = 0}\\
					& = \Delta_x (u) + \Delta_y (v) + \Delta_z (w),
				\end{align*}
				where $u = \sum_{\ell = 0}^3\frac{\si_t(a)}{\si_x^{\ell}\si_y\si_z(d)}$, $v = \frac{ \si_t(a)}{\si_z(d)}$, and $w = \frac{\si_t(a)}{d}$. Additionally, this is a non-trivial example in two respects. Firstly, since $G_d = \{\bf 1\}$, this rational function $f$ is not $(\si_x, \si_y, \si_z)$-summable in $\bQ(t, x, y, z)$. Secondly, for any $\{\mu, \nu\}\subseteq\{x, y, z\}$, since the isotropy group of $d$ in $\<\si_t, \si_\mu, \si_\nu>$ is trivial and $f$ is not $(\si_\mu, \si_\nu)$-summable, by Lemma~\ref{LEM:red-ts1}, $f$ does not have any telescoper in $\bQ(t)\<S_t>$ of type $(\si_t; \si_\mu, \si_\nu)$.
				
				\item\label{it:r1_t} We continue Example~\ref{Eg:f_red_t}~(\ref{it:f2-t}) and write $f$ in the form
				\[f = \Delta_x (u_0) + \Delta_y (v_0) + \Delta_z (w_0) + r_1 + r_2,\]
				where $u_0, v_0, w_0\in\bQ(t, x, y, z)$ and $r_1=\frac{1}{t (t + y + 2z) d}$, $r_2 = \frac{1}{ (t + 3z) \si_t(d)}$ with $d = 3y + (x + z)^2 + t$.
				\begin{enumerate}[label = (\alph*)]
					\item For $r_1 = a_1/d$ with $a_1 = 1/(t (t + y + 2z))$, we find that a basis of $G_{t, d}$ is $\{\tau_0, \tau_1\}$, where $\tau_0 = \si_t^3\si_y^{-1}$ and $\tau_1 = \si_x\si_z^{-1}$. Then by Theorem~\ref{THM:telescopers}, $r_1$ has a telescoper of type $(\si_t; \si_x, \si_y, \si_z)$ if and only if $a_1$ has a telescoper of type $(\tau_0; \tau_1)$. Choose one $\bQ$-automorphism $\phi_1$ of $\bQ(t, x, y, z)$ given in Proposition~\ref{PROP:telescopers} as follows
					\[\phi_1(h(t, x, y, z)) = h(3t, x, -t + y, - x + z),\]
					for any $h\in \bQ(x, y, z)$. Then $\phi_1\circ \tau_0 = \si_t \circ \phi_1$ and $\phi_1\circ \tau_1 = \si_x \circ \phi_1$. So $a_1$ has a telescoper of type $(\tau_0;\tau_1)$ if and only if $\phi_1(a_1)$ has a telescoper of type $(\si_t; \si_x)$. A direct calculation yields that
					\[\phi_1(a_1) = \frac{1}{3t(\underbrace{2t + y - 2x + 2z}_{\tilde d})}.\]
					Again consider the isotropy group of $\tilde d$ in $\<\si_t, \si_x>$, which is generated by $\tilde\tau_0 = \si_t\si_x^2$. Since $(\tilde\tau_0 - \frac{t}{t + 1}) (\frac{1}{3t}) = 0$, by similar arguments used in Example~\ref{Eg:f_t}~(\ref{it:f2_t}), we see that $\phi_1(a_1)$ has a telescoper $\tilde L_1\in \bQ(t)\<S_t>$ of type $(\si_t; \si_x)$ and in particular we find
					\begin{equation}\label{EQ:Lphi1a1}
						\tilde L_1 (t, S_t) (\phi_1 (a_1)) =  \Delta_x(\tilde b_1)
					\end{equation}
					with $\tilde L_1 = S_t - \frac{t}{t + 1}$ and $\tilde b_1 = - \frac{1}{3(t  + 1)(2t + y - 2x + 2 + 2z)}$.
					So by Theorem~\ref{THM:telescopers}, $r_1$ has a telescoper $L\in \bQ(t)\<S_t>$ of type $(\si_t; \si_x, \si_y, \si_z)$. In fact, we can find an explicit expression for $L$. Applying $\phi_1^{-1}$ to Equation~\eqref{EQ:Lphi1a1} yields that
					\[\bar L_1 (t, T_0) (a_1) = \Delta_{\tau_1}(b_1),\]
					where $T_0 = S_t^3S_y^{-1}$, $\bar L_1(t, T_0) = \tilde L_1 (\frac{t}{3}, T_0) = T_0 - \frac{t}{t + 3}$, $b_1 =\phi_1^{-1}(\tilde b_1) =-\frac{1}{(t + 3)(t + y + 2z + 2)}$. Let $L_1 (t, S_t) = \bar L_1 (t, S_t^3) = S_t^3 - \frac{t}{t + 3}$. Then we have
					\begin{align*} L_1 (r_1) &= \frac{\si_t^3(a_1)}{\si_t^3(d)} - \frac{t}{t + 3}\cdot\frac{a_1}{d} = \frac{\si_t^3(a_1)}{\si_y(d)} - \frac{t}{t + 3}\cdot\frac{a_1}{d}\\
						&= \Delta_{x} (0) + \Delta_y (v_1) + \Delta_z (0) + \frac{\bar L _1 (a_1)}{d}\text{ with } v_1 = \frac{\si_t^3\si_y^{-1}(a_1)}{d}
					\end{align*}
					and using Lemma~\ref{Formula} with $\tau = \tau_1$, we get
					\[\frac{\bar L_1(a_1)}{d} = \Delta_{\tau_1}\left(\frac{b_1}{d}\right) = \Delta_x (u_1) + \Delta_y (0) + \Delta_z (w_1)\]
					with $u_1 =\si_z^{-1}\left(\frac{b_1}{d}\right)$ and $w_1 = -\si_z^{-1}\left(\frac{b_1}{d}\right)$. Hence $L_1$ is a telescoper of type $(\si_t; \si_x, \si_y, \si_z)$ for $r_1$ and \begin{equation*}
						L_1 (r_1) = \Delta_x (u_1) + \Delta_y (v_1) + \Delta_z (w_1).
					\end{equation*}
					\item Similarly, for $r_2 = a_2/ \si_t(d)$ with $a_2 = 1/(t + 3z)$, we apply the algorithm {\em IsTelescoperable} to $r_2$. The result is true and we obtain
					\begin{equation*}
						L_2 (r_2) = \Delta_x (u_2) + \Delta_y (v_2) + \Delta_z (w_2),
					\end{equation*}
					where $L_2 = S_t^3 - 1$, $u_2 = \si_z^{-1}\left(\frac{b_2}{\si_t(d)}\right)$, $v_2 = \frac{\si_t^3\si_y^{-1}(a_2)}{\si_t(d)}$ and $w_2 = -\si_z^{-1}\left(\frac{b_2}{\si_t(d)}\right)$ with $b_2 = -\frac{1}{t + 3z + 3}$.
					\item For $r = r_1 + r_2$, using the LCLM algorithm to compute the least common multiple $L$ of $L_1$ and $L_2$ in $\bQ(t)\<S_t>$, we obtain
					\[ L =  R_1 L_1 = R_2 L_2 = S_t^6 - \frac{2(t + 3)}{t + 6} S_t^3 + \frac{t}{t + 6}\]
					with $R_1 =S_t^3 -\frac{t + 3}{t + 6}$ and $R_2 = S_t^3 -\frac{t}{t + 6}$. Then
					\[L(r) =  \Delta_x (u) + \Delta_y (v) + \Delta_z (w),\]
					where $u =\sum_{i = 1}^2 R_i(u_i)$, $v =\sum_{i = 1}^2 R_i(v_i)$ and $u =\sum_{i = 1}^2 R_i(w_i)$ are rational functions in $\bQ(t, x, y, z)$. Updating $u = u + L(u_0)$, $v = v + L(v_0)$ and $w = w + L(w_0)$, we get
					\[L(f) = \Delta_x (u) + \Delta_y (v) + \Delta_z (w).\]
					So $L$ is a telescoper of type $(\si_t; \si_x, \si_y, \si_z)$ for $f$.
				\end{enumerate}
			\end{enumerate}
		\end{example}

  \subsection{Examples and applications}

    Creative telescoping is a powerful tool for proving combinatorial identities algorithmically~\cite{PWZbook1996}. The following example shows an application of telescopers for multivariate rational functions.

    \begin{example}\label{Eg:application}
        We show that $$F(t)=\sum_{x=0}^t\sum_{y=0}^t\sum_{z=0}^tf(t,x,y,z)=0,$$
        where
        $$f(t,x,y,z)=\frac{(2y-t)(2x-t)(2z-t)}{(y+t+1)(-2t+y-1)(x+t+1)(-2t+x-1)(z+t+1)(-2t+z-1)}.$$
        Applying Algorithm~\ref{alg:telescoper} to $f$, we find that $f$ has a telescoper $L=S_t-1$ of type $(\si_t;\si_x,\si_y,\si_z)$ with certificates $(u,v,w)$, where
        $$u=\frac{(-2y + t + 1)(-2z + t + 1)(8t^2 - 2tx - x^2 + 19t - 2x + 11)}{(x + t + 1)(2t - x + 2)(2t - x + 3)(y + t + 2)(2t - y + 3)(z + t + 2)(2t - z+3)},$$
        $$v=\frac{(-2x + t)(-2z + t + 1)(8t^2 - 2ty - y^2 + 19t - 2y + 11)}{(x + t + 1)(2t - x + 1)(y+t + 1)(2t - y + 2)(2t  - y+3)(z + t + 2)(2t - z+ 3)}$$
        and
        $$w=\frac{(-2x + t)(-2y + t)(8t^2 - 2tz - z^2 + 19t - 2z + 11)}{(x + t + 1)(2t - x + 1)(y+t + 1)(2t - y + 1)(z + t + 1)(2t - z + 2)(2t - z + 3)}.$$
        Thus we have
        \begin{align*}
        &\sum_{x=0}^t\sum_{y=0}^t\sum_{z=0}^t\left(f(t+1,x,y,z)-f(t,x,y,z)\right)\\
        =&\sum_{x=0}^t\sum_{y=0}^t\sum_{z=0}^t\left(\Delta_x(u)+\Delta_y(v)+\Delta_z(w)\right)\\
        =&\sum_{y=0}^t\sum_{z=0}^t(u(t,t+1,y,z)-u(t,0,y,z))
        +\sum_{x=0}^t\sum_{z=0}^t(v(t,x,t+1,z)-v(t,x,0,z))\\
        &+\sum_{x=0}^t\sum_{y=0}^t(w(t,x,y,t+1)-w(t,x,y,0)).
        \end{align*}
        Then applying $L$ to $F(t)$ yields
        \begin{align*}
        &F(t+1)-F(t)\\
        =&\sum_{y=0}^t\sum_{z=0}^t\underbrace{(u(t,t+1,y,z)-u(t,0,y,z)+f(t+1,t+1,y,z))}_{g_1}\\
        &+\sum_{x=0}^t\sum_{z=0}^t\underbrace{(v(t,x,t+1,z)-v(t,x,0,z)+f(t+1,x,t+1,z))}_{g_2}\\
        &+\sum_{x=0}^t\sum_{y=0}^t\underbrace{(w(t,x,y,t+1)-w(t,x,y,0)+f(t,x,y,t+1))}_{g_3}\\        &+\sum_{x=0}^tf(t+1,x,t+1,t+1)+\sum_{y=0}^tf(t+1,t+1,y,t+1)+\sum_{z=0}^tf(t+1,t+1,t+1,z)\\
        &+f(t+1,t+1,t+1,t+1).
        \end{align*}
        So we reduce the triple sum to double sums and single sums. One can check that $g_1=0$. By Algorithm~\ref{Alg:summable}, one can find that $g_2$ is $\si_x$-summable and $g_3$ is $(\si_x,\si_y)$-summable. Similarly, we further reduce the double sums to the single sums.
        Applying the Algorithm~\ref{Alg:summable} (specialized to the univariate case) again, we simplify the single sums and finally obtain that $F(t+1)-F(t)=0$. Since the initial value of $F(t)$ is $F(0)=f(0,0,0,0)=0$, we conclude that $F(t) = 0$. This completes the proof.
    \end{example}

    Under some assumptions, there are several packages to compute the creative telescoping in more than two variables. The Mathematica package ``HolonomicFunctions" developed by Koutschan contains two functions ``CreativeTelescoping" and ``FindCreativeTelescoping" to construct telescopers for holonomic functions in different ways~\cite{koutschan2010holonomicfunctions}. Another Mathematica function ``FindRecurrence", the core of the Mathematica package ``MultiSum" by Wegschaider, is designed to find telescopers for proper hypergeometric functions~\cite{Wegschaider1997}. For rational functions in three variables, an effective algorithm has been presented in~\cite{chen2022AAM} to compute their minimal telescopers.

    Experiments show that Algorithm~\ref{alg:telescoper} is more efficient to test the existence of telescopers and construct one telescoper. For example, for the rational function $$f(t,x,y)=\frac{4t+2}{(45t + 5x + 10y + 47)(45t + 5x + 10y + 2)(63t-5x + 2y + 58)(63t-5x + 2y-5)},$$
    Algorithm~\ref{alg:telescoper} tells us that it has a telescoper of type $(\si_t;\si_x,\si_y)$ and outputs a telescoper and its corresponding certificates within two seconds in Maple, while the algorithm in~\cite{chen2022AAM} takes about three minutes and the other three functions of the two Mathematica packages use much more timings as shown in~\cite{chen2022AAM}. Given a rational function, one could use our algorithm to pre-check the existence of its telescopers and find a telescoper if such a telescoper exists. After that one may apply the other efficient methods to find a telescoper with lower degree in $S_t$.

\section{Implementations and timings}\label{sec:appendix}
		
		\lstset{basicstyle=\ttfamily,breaklines=true,columns=flexible}
		\DefineParaStyle{Maple Bullet Item}
		\DefineParaStyle{Maple Heading 1}
		\DefineParaStyle{Maple Warning}
		\DefineParaStyle{Maple Heading 4}
		\DefineParaStyle{Maple Heading 2}
		\DefineParaStyle{Maple Heading 3}
		\DefineParaStyle{Maple Dash Item}
		\DefineParaStyle{Maple Error}
		\DefineParaStyle{Maple Title}
		\DefineParaStyle{Maple Text Output}
		\DefineParaStyle{Maple Normal}
		\DefineCharStyle{Maple 2D Output}
		\DefineCharStyle{Maple 2D Input}
		\DefineCharStyle{Maple Maple Input}
		\DefineCharStyle{Maple 2D Math}
		\DefineCharStyle{Maple Hyperlink}
		
We have implemented Algorithms \ref{Alg:SET}, \ref{Alg:summable} and~\ref{alg:telescoper} in the computer algebra system Maple 2020. In this section, we compare the efficiency of the algorithms for solving the SET problem and illustrate the usage of our package ``RationalWZ" by several examples. Our maple code and more examples are available for download at
\begin{center}
	\href{http://www.mmrc.iss.ac.cn/~schen/RationalWZ-2022.html}{\url{http://www.mmrc.iss.ac.cn/~schen/RationalWZ-2022.html}}
\end{center}

 We have implemented the G algorithm, the KS algorithm, the DOS algorithm, and the algorithm applying $\va$-degree cover to Algorithm~\ref{Alg:SET} (ADC) in Maple 2020 with
		$\bF=\bQ$.
		
		Fixing one admissible cover, there are two methods to calculate it and then to implement Algorithm~\ref{Alg:SET}. A direct method is expanding $p(\vx+\va)-q(\vx)$ with $2n$ variables to get the set of its coefficients in $\vx$ and then the admissible cover, while another is obtaining the members of the admissible cover successively by computing partial derivatives dynamically. For efficiency, the DOS algorithm and the ADC algorithm are realized by partial derivatives and expansion respectively.
		
		The test suite was generated as follows.
		
		Let $n,d,\mu,d'\in\bN$ and $d'<d$. Let $\vx=\{x_1,x_2,\ldots,x_n\}$. We first generated randomly a $\mu$-term polynomial $p(\vx)$ of degree $d$, as well as a polynomial $dis(\vx)$ of degree $d'$. Then we generated randomly a vector $\vs=(s_1,s_2,\ldots,s_n)\in{\bZ}^n$ and let $q(\vx)=p(\vx+\vs)+dis(\vx)$. Therefore, the calculation is most likely to terminate after computing $\bV_{\bF}(\cup_{i=0}^{d-1-d'}S^H_{i})$. By setting $0\leq s_i \leq 99$, we can avoid the case where the memory is not enough to complete the computation.
		
		Note that, in all the tests, the algorithms take the expanded forms of examples given above as input. All timings are measured in seconds on a macOS Monterey (Version 12.0.1) MacBook Pro with 32GB Memory and Apple M1 Pro Chip.

		For a selection of random polynomials and vectors for different
		choices of $n,\mu,d,d'$ as above, we first tabulate the timings of the
		G algorithm, the KS algorithm, the DOS algorithm and the ADC algorithm. Note that $d'=-\infty$ means $dis=0$, implying that $p$ is shift equivalent to $q$.
		
		\begin{center}
			\begin{tabular}{cccc|cccc}
				\cline{1-8}
				$n$ & $\mu$  & $d$  & $d'$ & G            & KS    & DOS        & ADC     \\\cline{1-8}
				3 & 10  & 15 & 13     & 5.476                 & 2.090                  & 0.014     & 0.008              \\
				3 & 10  & 15 & 10     & 0.243                 & 1.124                  & 0.023     & 0.020                  \\
				3 & 10  & 15 & 5      & 21.719                & 1.809                  & 0.050     & 0.032                \\
				3 & 10  & 15 & 0      & 573.178               & 2.576                  & 0.068     & 0.039                \\
				3 & 10  & 15 & $-\infty$    & 18.491          & 0.714                  & 0.043     & 0.036                  \\
				3 & 100 & 15 & 13     & 0.205                 & 10.025                 & 0.044     & 0.028               \\
				3 & 100 & 15 & 10     & 0.482                 & 9.997                 & 0.046      & 0.046             \\
				3 & 100 & 15 & 5      & 22.114                & 11.317                 & 0.061     & 0.062  \\
				3 & 100 & 15 & 0      & 2152.378              & 19.470                 & 0.083     & 0.069       \\
				3 & 100 & 15 & $-\infty$   & 1200.473          & 13.640                 & 0.085    & 0.068   \\
				\cline{1-8}
			\end{tabular}
		\end{center}

		The experimental results illustrate that the DOS algorithm and the ADC algorithm outperform the other two algorithms. Furthermore, we conducted experiments in more complicated cases.
		
		\begin{center}
			\begin{tabular}{cccc|ccc}
				\cline{1-7}
				$n$ & $\mu$  & $d$  & $d'$ & DOS    & ADC \\ \cline{1-7}
				5 & 100   & 40 & 35  & 199.177          & 59.889                     \\
				5 & 100   & 40 & 30  & 24.684                      &  90.159                     \\
				5 & 100   & 40 & 20  & 379.835                     & 95.761                    \\
				5 & 100   & 40 & 10  & 681.189  & 665.885      \\
				5 & 100   & 40 & 0   & 182.671 & 67.261                    \\
				5 & 100   & 40 & $-\infty$ & 709.223 & 77.880    \\
				5 & 10000 & 20 & 18  & 2.724                       &  122.744                    \\
				5 & 10000 & 20 & 15  & 3.088                       & 163.258                    \\
				5 & 10000 & 20 & 10  & 5.290                       &  139.685                   \\
				5 & 10000 & 20 & 5   & 10.755                      &  125.359                   \\
				5 & 10000 & 20 & 0   & 23.949                      &  151.010 \\
				5 & 10000 & 20 & $-\infty$ & 24.562                      &  136.187                 \\
				\cline{1-7}
			\end{tabular}	
		\end{center}
		
		The experimental results indicate that the ADC algorithm outperforms the other for most of non-dense testing examples, while the DOS algorithm has a clear advantage for dense ones. It may be because the timing of expansion grows fast with the number of terms in the input polynomial. In conclusion, we present an algorithm, the ADC algorithm, which is complementary to the DOS algorithm for solving the SET problem.

From the runtime comparison, we implemented the ADC algorithm in the package RationalWZ. In our setting, the base field $\bF$ can be $\bQ$ or the field of rational functions $\bQ(u_1,\ldots,u_s)$.
The following instructions show how to load the modules.
\begin{lstlisting}
> read "RationalWZ.mm";
> with(ShiftEquivalenceTesting):
> with(OrbitalDecomposition):
> with(RationalReduction):
> with(RationalSummation):
> with(RationalTelescoping):
\end{lstlisting}
\begin{example} Compute the dispersion set (over $\bZ$) of two polynomials.
\begin{enumerate}
    \item For $p,q\in \bQ[x,y]$ in Example~\ref{EX:SET}, we type
    \begin{lstlisting}
> ShiftEquivalent(x^2 + 2*x*y + y^2 + 2*x + 6*y, x^2 + 2*x*y + y^2 + 4*x + 8*y + 11, [x, y])
    \end{lstlisting}
    \begin{maplelatex}
    \mapleinline{inert}{2d}{[-1, 2]}{\[[-1,2]\]}
    \end{maplelatex}
    which implies $Z_{p,q}=\{(-1,2)\}$. So $p(x-1, y+2)=q(x,y)$.
    \item For $p,q\in \bQ[x,y,z]$ in Example~\ref{EX:SET_compare}~(\ref{it:p1q1}) , we type
    \begin{lstlisting}
> ShiftEquivalent(x^4 + x^3*y + x*y^2 + z^2, x^4 + x^3*(y + 1) + x*(y + 1)^2 + (z + 2)^2 + x*y, [x, y, z])
    \end{lstlisting}
    \mapleresult
    \begin{maplelatex}
    \mapleinline{inert}{2d}{[]}{\[[\,]\]}
    \end{maplelatex}
    which implies $Z_{p,q}=\varnothing$. So $p,q$ are not shift equivalent.
\end{enumerate}
\end{example}
\begin{example} Decide the $(\si_x,\si_y,\si_z)$-summability of a rational function $f\in \bQ(x,y,z)$. Let $f_3$, $r_3$ be the same as in Example~\ref{Eg:f}~(\ref{it:f3}).
\begin{enumerate}
    \item Applying the function ``IsSummable" to $f=f_3$, we see that $f$ is not $(\si_x,\si_y,\si_z)$-summable.
    \begin{lstlisting}
> IsSummable((y + z/(y^2 + z - 1) - 1/(y^2 + z))/(x + 2*y + z)^2, [x, y, z])
    \end{lstlisting}
\mapleresult
\mapleresult
\begin{maplelatex}
\mapleinline{inert}{2d}{false}{\[{\it false}\]}
\end{maplelatex}
\item Applying the function ``IsSummable" to $f=f_3-r_3$, we see that $f$ is $(\si_x,\si_y,\si_z)$-summable and its certificates are as follows:
\begin{lstlisting}
> IsSummable((y + z/(y^2 + z - 1) - 1/(y^2 + z))/(x + 2*y + z)^2 - z/((y^2 + z)*(x + 2*y + z)^2), [x, y, z])
\end{lstlisting}
\begin{maplelatex}
\mapleinline{inert}{2d}{true, [-(1/2)*y*(y-1)/(x-2+2*y+z)^2-(1/2)*y*(y-1)/(x-1+2*y+z)^2+z/((y^2+z-1)*(x-1+2*y+z)^2), (1/2)*y*(y-1)/(x-2+2*y+z)^2, -z/((y^2+z-1)*(x-1+2*y+z)^2)]}{\[{\it true},\,\left[-{\frac {y \left( y-1 \right) }{2\, \left( x-2+2\,y+z \right) ^{2}}}-{\frac {y \left( y-1 \right) }{2\, \left( x-1+2\,y+z \right) ^{2}}}+{\frac {z}{ \left( {y}^{2}+z-1 \right)  \left( x-1+2\,y+z \right) ^{2}}}\\
\mbox{},{\frac {y \left( y-1 \right) }{2\, \left( x-2+2\,y+z \right) ^{2}}},-{\frac {z}{ \left( {y}^{2}+z-1 \right)  \left( x-1+2\,y+z \right) ^{2}}}\right]\]}
\end{maplelatex}
\end{enumerate}
\end{example}

\begin{example}
Decide the existence of telescopers of type $(\si_t;\si_x,\si_y,\si_z)$ for a rational function $f\in\bQ(t,x,y,z)$.
\begin{enumerate}
    \item Applying the function ``IsTelescoperable" to $f$ in Example~\ref{Eg:f_t}~(\ref{it:f1_t}), we see that $f$ does not have a telescoper in $\bQ(t)\<S_t>$ of type $(\si_t;\si_x,\si_y,\si_z)$.
    \begin{lstlisting}
> IsTelescoperable(1/((t + 1)*(t + 2*z)*((t - 3*y + x)^2*(t + y)*(t + z) + 1)), [x, y, z], t, 'St')
\end{lstlisting}
\mapleresult
\mapleresult
\begin{maplelatex}
\mapleinline{inert}{2d}{T1 := false}{\[\,{\it false}\]}
\end{maplelatex}
\item Applying the function ``IsTelescoperable" to $f=r_1$ in Example~\ref{Eg:f_t}~(\ref{it:r1_t}), we see that $f$ has a telescoper $L=-\tfrac{t}{t+3}+S_t^3$ of type $(\si_t;\si_x,\si_y,\si_z)$ and its certificates are as follows: \begin{lstlisting}
> IsTelescoperable(1/(t*(t + y + 2*z)*(3*y + (x + z)^2 + t)), [x, y, z], t, 'St')
\end{lstlisting}
\mapleresult
\mapleresult
\begin{maplelatex}
\mapleinline{inert}{2d}{T1 := true, -t/(t+3)+St^3, [-(1/6)/(((1/3)*t+1)*((1/2)*t+(1/2)*y+z)*(x^2+2*x*(z-1)+(z-1)^2+t+3*y)), 1/((t+3)*(t+2+y+2*z)*(x^2+2*x*z+z^2+t+3*y)), (1/6)/(((1/3)*t+1)*((1/2)*t+(1/2)*y+z)*(x^2+2*x*(z-1)+(z-1)^2+t+3*y))]}{\[{\it true},\,-{\frac {t}{t+3}}+{{\it St}}^{3},\,\left[-{\frac {1}{6\left( {\frac {t}{2}}+{\frac {y}{2}}+z \right)\left( {\frac {t}{3}}+1 \right) \left({x}^{2}+2\,x \left( z-1 \right) +\, \left( z-1 \right) ^{2}+\,t+3\,y\right)}},\right.\]}
\mapleinline{inert}{2d}{...}{\[\quad\quad\left.{\frac {1}{ \left( t+3 \right) \\
\mbox{} \left( t+2+y+2\,z \right)  \left( {x}^{2}+2\,xz+{z}^{2}+t+3\,y \right) }},\frac {1}{6\left( {\frac {t}{3}}+1 \right)\left( {\frac {t}{2}}+{\frac {y}{2}}+z \right)\left(\,{x}^{2}+2\,x \left( z-1 \right)\\
\mbox{}+\, \left( z-1 \right) ^{2}+\,t+3\,y\right)}\right] \]}
\end{maplelatex}

\end{enumerate}
\end{example}

\section{Conclusion and future work}	

In this paper, we constructively solve the summability problem and the existence problem of telescopers for multivariate rational functions, and present a new efficient algorithm for solving the shift equivalence testing (SET) problem of multivariate polynomials.

Our algorithm can compute a telescoper for a given multivariate rational function if the existence of telescopers is guaranteed, but the computed telescoper may not be of minimal order. So a natural question is how to compute the minimal telescoper (which is unique if it is monic) for multivariate rational function if it exists. Similar to the trivariate case~\cite{chen2022AAM}, we may first need an additive decomposition to decompose a rational function $f$ as a sum of a summable function and a remainder $r$, as shown in Example~\ref{Eg:f}, such that $f$ is summable if and only if the remainder $r$ is zero. Then we need to deal with the problem that the sum of two remainders in the additive decomposition may not be a remainder. The similar problem appears and has been solved in the case for trivariate rational functions~\cite[Section 4]{chen2022AAM} and the case for bivariate hypergeometric terms~\cite[Section 5]{Chen2015b}.

For the efficiency, we may need to consider how to choose a ``minimal'' remainder, which may depend on the choices of difference isomorphisms. Choosing a ``good'' admissible cover may help us to discover a more efficient SET algorithm. In theory, we use an irreducible partial fraction decomposition in summation algorithms, but in practice, an incomplete partial fraction decomposition would be enough, like in the univariate case~\cite{Paule1995b,Arreche24}.

In the future research, we hope to explore more summation algorithms for other classes of functions, like multivariate hypergeometric terms~\cite{Wegschaider1997}. This would be an extension of Gosper's algorithm~\cite{Gosper1978} which only works for the summation of univariate hypergeometric terms and has many applications in proving combinatorial identities~\cite{PWZbook1996}. Some algorithms have been developed for special bivariate hypergeometric terms~\cite{ChenHouMu2006} and for multiple binomial sums~\cite{BostanLairezSalvy2017}.

In the differential case, telescopers always exist for D-finite functions~\cite{Zeilberger1990}. One interesting problem is how to find a telescoper~\cite{BostanLairezSalvy2017,CKS2012}, especially the minimal one. Another problem is the integrability problem proposed by Picard~\cite{Picard1933,Picard1897,Picard1899}, which is a continuous analogue of the summability problem. Given a rational function $f\in \bF(x_1,\ldots, x_m)$, the integrability problem is deciding whether there exist rational functions $g_1,\ldots,g_m\in\bF(x_1,\ldots, x_m)$ such that
\[f = \partial_{x_1}(g_1) + \cdots + \partial_{x_m}(g_m),\]
where $\partial_{x_i}$ is the usual partial derivative with respect to $x_i$. When $m=1$, it can be solved by Ostrogradsky-Hermite reduction~\cite{Ostrogradsky1845,Hermite1872}. When $m=2$, it was solved by Picard~\cite[p. 475--479]{Picard1897}. In more than two variables, there is no complete algorithm for deciding the integrability of rational functions. Under a regularity assumption, some related results are listed in~\cite{BostanLairezSalvy2017}.

\section*{Acknowledgement}
We would like to thank the anonymous referees for their careful readings and many constructive suggestions and thank the editors for their nice
suggestions on the way of presenting our main results.  We would also like to express our gratitude to Yisen Wang for the insightful discussions in the complexity estimates.

\bibliographystyle{plain}

	\end{document}